\def\ba{\begin{array}}
\def\ea{\end{array}}
\def\0{{\bf 0}}
\def\f{{\bf f}}
\def\s{{\bf s}}
\def\G{{\bf G}}
\def\U{{\mathbb U}}
\newcommand{\be}{\begin{equation}}
\newcommand{\ee}{\end{equation}}
\newcommand{\bea}{\begin{eqnarray}}
\newcommand{\eea}{\end{eqnarray}}
\newcommand{\bes}{\begin{equation*}}
\newcommand{\ees}{\end{equation*}}
\newcommand{\beas}{\begin{eqnarray*}}
\newcommand{\eeas}{\end{eqnarray*}}
\newtheorem*{rep@theorem}{\rep@title}
\newcommand{\newreptheorem}[2]{%
\newenvironment{rep#1}[1]{%
 \def\rep@title{#2 \ref{##1} (restated)}%
 \begin{rep@theorem}}%
 {\end{rep@theorem}}}
\newtheorem{thm}{Theorem}
\newtheorem*{thm*}{Theorem}
\newtheorem{lem}[thm]{Lemma}
\newtheorem*{lem*}{Lemma}
\newtheorem{prop}[thm]{Proposition}
\newtheorem{defn}[thm]{Definition}
\newtheorem{rem}[thm]{Remark}
\newtheorem{dfn}{Definition}
\title{Joint moments of higher order derivatives of CUE characteristic polynomials II: Structures, recursive relations, and applications }
\author[1]{Jonathan P. Keating\thanks{keating@maths.ox.ac.uk}}
\author[2]{Fei Wei\thanks{weif@mail.tsinghua.edu.cn
}}
\affil[1]{Mathematical Institute,
University of Oxford, Oxford OX2 6GG, UK}
\affil[2]{Yau Mathematical Sciences Center, Tsinghua University}
\date{\today}
\begin{document}

\maketitle

\begin{abstract}

In a companion paper \cite{jon-fei}, we established asymptotic formulae for the joint moments of higher order derivatives of the characteristic polynomials of CUE random matrices. The leading order coefficients of these asymptotic formulae are expressed as partition sums of derivatives of determinants of Hankel matrices involving I-Bessel functions, with column indices shifted by Young diagrams. In this paper, we continue the study of these joint moments and establish more properties for their leading order coefficients, including structure theorems and recursive relations. We also build a connection to a solution of the $\sigma$-Painlev\'{e} III$'$ equation. In the process, we give recursive formulae for the Taylor coefficients of the Hankel determinants formed from I-Bessel functions that appear and find differential equations that these determinants satisfy. The approach we establish is applicable to determinants of general Hankel matrices whose columns are shifted by Young diagrams.

\end{abstract}

{\bf Key words:} Joint moments, higher order derivatives, CUE characteristic polynomials, Young diagrams, Hankel determinants, I-Bessel functions, $\sigma$-Painlev\'{e} III$'$ equation.

\section{Introduction}

There are many deep connections between the theory of Painlev\'{e} equations and random matrix theory. For example,
Tracy and Widom \cite{tracy1994fredholm} expressed the limit distribution of the largest eigenvalue, suitably normalized, of GUE (Gaussian Unitary Ensemble) random matrices in terms of a solution of the Painlev\'{e} II differential equation. In a series of works \cite{forrester2001application,forrester2002application}, Forrester and Witte applied the $\tau$-function theory of Painlev\'{e} equations to study certain averages with respect to the probability density functions of various random matrix ensembles, including the LUE (Laguerre Unitary Ensemble), JUE (Jacobi Unitary Ensemble) and CUE (Circular Unitary Ensemble -- the space of unitary matrices of a given size, endowed with the Haar measure). 
Recently in \cite{basor2019representation}, Basor, et al.~used the Riemann-Hilbert method to establish connections between solutions of the $\sigma$-Painlev\'{e} III$'$ and V equations and the joint moments of the characteristic polynomials from the CUE and its first order derivative. Using these connections, they established recursive formulae, structure results, and extensions of the joint moments.  For more on the roles of solutions of the Painlev\'{e} equations in many other aspects of random matrix theory, we refer readers to (e.g., \cite{forrester2015painleve,ItsAlexander}).

As a continuation of our previous work \cite{jon-fei}, the motivation in the present paper is to use the theory of Painlev\'{e} equations to study joint moments of higher order derivatives of characteristic polynomials from the CUE when the matrix size goes to infinity. Specifically, we give explicit recursive formulae for the joint moments and study their properties. Our methods are combinatorial, and so are quite different from the Riemann-Hilbert method used in \cite{basor2019representation} for the first-order derivative. We also give applications of our methods, e.g., we give recursive formulae for the Taylor coefficients of a Hankel determinant defined in terms of I-Bessel functions. This determinant is known to be associated with a solution of the  $\sigma$-Painlev\'{e} III$'$ equation.  

We refer readers to the introduction to the companion paper \cite{jon-fei} for further context and a more extensive review of the previous literature.

\subsection{Main results}

Let $A\in \U(N)$ be taken from the Circular Unitary Ensemble (CUE) of random matrices. Let $\Lambda_{A}(s)$ be the characteristic polynomial of $A$ given by
\[
\Lambda_A(s) = \prod_{n=1}^N (1-se^{-\i\theta_n}),
\]
where we set $\i^2=-1$ to avoid confusion with the index $i$ and  $e^{\i\theta_1}, \ldots,e^{\i\theta_{N}}$ are the eigenvalues of $A$. Define
\[
Z_{A}(s) := e^{-\pi \i N/2}e^{\i \sum_{n=1}^{N}\theta_{n}/2}
s^{- N/2}\Lambda_{A}(s),
\]
where for $s^{-N/2}$, when $N$ is an odd integer, we use the branch of the square-root function that is positive for positive real $s$.
$Z_A(s)$ satisfies a functional equation 
$Z_A(s) = (-1)^N Z_{A^*}(1/s)$, where $A^*$ is the conjugate transpose of $A$. This implies that $Z_A(e^{\i \theta})$ is real when $\theta$ is real. 

In \cite{jon-fei}, we give explicit formulae for the leading term in the asymptotic expression of for $\int_{\U(N)} |Z_A^{(n_{1})} (1)|^{2M} |Z_A^ {(n_{2})}(1)|^{2k-2M} dA_N$ for arbitrary non-negative integers $n_1,n_2$ and $k,M$ with $M\leq k$. In particular, the leading order coefficient of the asymptotic formula given in \cite[Theorem 24]{jon-fei} is 
expressed as partition sums of derivatives of determinants of Hankel matrices of I-Bessel functions whose columns are shifted
by Young diagrams. To better understand the asymptotic formula, it is necessary to investigate the structures of these determinants. We focus below mainly on the representative case $n_1=2,n_2=0$. Our methods extend directly to the general cases (see Section \ref{section:Generalization}). 

\begin{prop}[Theorem 3 of \cite{jon-fei}]
\label{intro:prop1}
Let $k\geq 1$, $0\leq M\leq k$ be integers. Then we have
\bea
F_2(M,k) &:=& \lim_{N\rightarrow \infty} \frac{1}{N^{k^2+4M}}
\int_{\U(N)} |Z_A''(1)|^{2M} |Z_A(1)|^{2k-2M} dA_N \nonumber \\
&=& (-1)^{\frac{k(k-1)}{2}} 
\sum_{l=0}^{2M} \binom{2M}{l}
\left( \frac{d}{dx} \right)^{4M-2l}  \left( e^{-\frac{x}{2}} x^{-l-\frac{k^2}{2} } f_{l}(x) \right) \Bigg|_{x= 0},
\label{expression of bkm}
\eea
where 
\bea
\label{410definition of f_{l}}
f_{l}(x)= \sum_{\substack{l_1+\cdots+l_k=l \\ l_1\geq 0,\ldots,l_k\geq 0}} \binom{l}{l_1,\ldots,l_k}\det \Big(I_{i+j+1+2l_{j+1}}(2\sqrt{x})\Big)_{i,j=0,\ldots,k-1},
\eea
and $I_{n}(x)=(x/2)^{n}\sum_{j=0}^{\infty}\frac{x^{2j}}{2^{2j}(n+j)!j!}$ is the modified Bessel function of the first kind.
\end{prop}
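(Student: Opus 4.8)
The plan is to realise the left-hand side as a partial-derivative evaluation of a single multivariate ``master'' average, find a closed form for that average, and pass to the large-$N$ limit. Introduce
\[
R_N(u_1,\dots,u_k;v_1,\dots,v_k):=\int_{\U(N)}\prod_{i=1}^{k}Z_A(u_i)\prod_{j=1}^{k}\overline{Z_A(v_j)}\,dA_N .
\]
Since $|Z_A''(1)|^{2M}|Z_A(1)|^{2k-2M}$ is a product of $k$ holomorphic and $k$ antiholomorphic factors, of which $M$ of each kind carry two derivatives, we have
\[
\int_{\U(N)}|Z_A''(1)|^{2M}|Z_A(1)|^{2k-2M}\,dA_N=\Big(\prod_{i=1}^{M}\partial_{u_i}^{2}\Big)\Big(\prod_{j=1}^{M}\partial_{v_j}^{2}\Big)R_N\Big|_{u_\bullet=v_\bullet=1}.
\]
The first step is to insert the classical closed form for $R_N$: after stripping the deterministic prefactor $e^{-\pi\i N/2}e^{\i\sum_n\theta_n/2}s^{-N/2}$, the Weyl integration formula together with the orthogonality of Schur functions over $\U(N)$ (equivalently the Bump--Gamburd/CFKRS determinantal evaluation) expresses $R_N$ as a $k\times k$ determinant divided by a Vandermonde product in the variables $u_i,v_j$.

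The second step is the large-$N$ asymptotics under the normalisation $N^{-(k^2+4M)}$: the exponent $k^2$ is the Keating--Snaith order of the $2k$-th moment of $|Z_A(1)|$, and each of the $2M$ second derivatives contributes a further factor of order $N^{2}$. Because all arguments collide at $s=1$, the natural limiting object is a generating function in an auxiliary variable $x$ (the $\sigma$-Painlev\'e III$'$ ``time''): under the appropriate $1/N$ rescaling near $s=1$, the determinantal formula degenerates to a Hankel determinant whose entries are the scaling limits of its minors, namely the I-Bessel functions $I_{i+j+1}(2\sqrt{x})$. The factor $e^{-x/2}$ and the power $x^{-k^2/2}$ arise from the scaling limits of the prefactor $s^{-N/2}$ and of the Vandermonde normalisation, while the ``$+1$'' offset in the index reflects the reality/functional-equation symmetry $Z_A(s)=(-1)^NZ_{A^*}(1/s)$ at $s=1$; the derivatives of $Z_A$ in $s$ translate into derivatives in $x$ evaluated at $x=0$, which accounts for the operator $(d/dx)^{4M-2l}$ and the evaluation $|_{x=0}$.

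The third step is to execute the $4M$ differentiations on this limiting expression. The key mechanism is the index-raising identity
\[
\frac{d}{dx}\Big(x^{-\nu/2}I_\nu(2\sqrt{x})\Big)=x^{-(\nu+1)/2}I_{\nu+1}(2\sqrt{x}),
\]
which says that each $x$-derivative of a normalised Bessel entry $g_\nu:=x^{-\nu/2}I_\nu(2\sqrt x)$ raises its order by one. Applying the product rule across the $k$ columns of the Hankel determinant, a derivative either falls on the scalar prefactor or raises the Bessel index in one column; summing over the distribution of the index-raising derivatives among the columns produces the multinomial coefficient $\binom{l}{l_1,\dots,l_k}$ and the column-shifted indices $i+j+1+2l_{j+1}$, while $\binom{2M}{l}$ records the split of the derivative budget between prefactor and determinant. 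The accompanying power $x^{-l}$ exactly cancels the powers of $x$ generated by the total $2l$-fold index raising, so that $x^{-l}f_l(x)$ is a clean combination of determinants of the normalised entries $g_\nu$. The global sign $(-1)^{k(k-1)/2}$ is most naturally accounted for by the column reversal needed to put the shifted matrix into the standard Hankel orientation.

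The hard part will be twofold. First, one must rigorously interchange the $N\to\infty$ limit with the $4M$ differentiations: this requires showing that the rescaled $\U(N)$ formula converges to the Bessel--Hankel determinant together with all of its $x$-derivatives, uniformly on a neighbourhood of $x=0$, so that the limit of the derivative equals the derivative of the limit, which in turn needs quantitative control of the error terms in the Bessel asymptotics of the moment formula. Second is the exact combinatorial matching of constants: a factor of two relating the order of differentiation in $s$ to the index shift $2l_{j+1}$ must be tracked carefully, and one must verify that the prefactor powers and integer orders emerge with precisely the stated exponents. An alternative, less hands-on route is to specialise the general structure theorem \cite[Theorem 24]{jon-fei} to $n_1=2,\,n_2=0$, in which case most of steps one and two are already provided and only the third step need be carried out in full.
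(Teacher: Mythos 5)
This statement is not proved in the present paper at all: it is imported verbatim as Theorem~3 of the companion paper \cite{jon-fei} (the general case being \cite[Theorem~24]{jon-fei}), so there is no in-paper argument to measure you against. Your closing suggestion --- specialise \cite[Theorem~24]{jon-fei} to $n_1=2$, $n_2=0$ --- is in fact exactly what the authors do, and as a roadmap your three steps (confluent master average $R_N$, determinantal evaluation, Bessel scaling limit, derivative bookkeeping) are consistent with how such formulae are derived in the literature. But as it stands the proposal is an outline, not a proof: the two items you yourself flag as ``the hard part'' (uniform convergence permitting the interchange of $N\to\infty$ with the $4M$ differentiations, and the exact matching of constants) are the entire substance of the theorem, and neither is carried out.

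There is also a concrete flaw in your step three. The mechanism you describe --- each single $x$-derivative raises one Bessel index by one via $\frac{d}{dx}\bigl(x^{-\nu/2}I_\nu(2\sqrt{x})\bigr)=x^{-(\nu+1)/2}I_{\nu+1}(2\sqrt{x})$, summed over columns --- would distribute $2l$ unit shifts among $k$ columns and hence produce a multinomial $\binom{2l}{2l_1,\ldots,2l_k}$ together with all the odd-shift terms $i+j+1+m_{j+1}$ with $\sum m_j=2l$, not the stated $\binom{l}{l_1,\ldots,l_k}$ with shifts locked to even values $2l_{j+1}$. The even shifts and the $l$-fold (not $2l$-fold) multinomial come from the second-derivative structure of $Z_A''$: one must differentiate an object in which a single derivative of an auxiliary variable raises the index by exactly $2$, i.e.\ organise the computation around $g_\beta(x,t)=\sum_{n\ge 0}\frac{t^n}{n!}I_{2n+\beta}(2\sqrt{x})$ and $f_l=\partial_t^l\det(g_{i+j+1})\big|_{t=0}$ (cf.\ Lemma~\ref{NOTE4lem2} and Proposition~\ref{derivative and translations} in this paper). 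You acknowledge that ``a factor of two must be tracked carefully,'' but that tracking is not a clean-up detail --- it changes which determinants appear and with what weights, so the identity you would obtain from the mechanism as written is not the identity being claimed.
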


In this paper, we explore more intrinsic properties of $f_l(x)$. Let
\bea\label{definition of tauk}
\tau_{k}(x)=\det\Big(I_{i+j+1}(2\sqrt{x})\Big)_{i,j=0,1,\ldots,k-1}.
\eea
It is known that $\tau_k(x)$ is closely related to the $\tau$-function of a certain $\sigma$-Painlev\'{e} III$'$ equation \cite[(4.20)]{forrester2002application}. Specifically, it was shown in \cite{forrester2006boundary} that
\be
x^{-k^2/2}  \tau_k(x) = (-1)^{k(k-1)/2} \frac{G^2(k+1)}{G(2k+1)} e^x \exp\left( - \int_0^{4x} \frac{\sigma_{\text{III}'}(s) + k^2}{s} ds \right),
\label{tauk-solution-relation}
\ee
where $\sigma_{\text{III}'}(s)$ satisfies the particular $\sigma$-Painlev\'{e} III$'$ equation
\be
\label{painleve equation}
(s \sigma_{\text{III}'}'')^2 + \sigma_{\text{III}'}' (4\sigma_{\text{III}'}'-1) (\sigma_{\text{III}'} -s \sigma_{\text{III}'}') - \frac{k^2}{16} = 0
\ee
with boundary condition $\sigma_{\text{III}'}(s) \sim -k^2+\frac{s}{8} + O(s^2)$ when $s \rightarrow 0$, and $G$ is the Barnes G-function \cite{barnes1900theory}.

Our first main result expresses $f_l(x)$ in terms of derivatives of $\tau_k(x)$.

\begin{thm}\label{theorem2 in note 4}
Let $l\geq 1$ be an integer, then
\bea\label{expression of fl}
f_{l}(x) = \frac{1}{x^l} \sum_{m=0}^l x^m P_{m}(x) \frac{d^m \tau_k(x)}{dx^m} ,
\eea
where $P_{m}(x)=\sum_{j=0}^{l-m} c_{j,m}^{(l)} (k) x^{j}$, and $c_{j,m}^{(l)}(k)$ are polynomials of $k$ of degree at most $3l-2(m+j)$ with coefficients depending on $j, l, m$.
\end{thm}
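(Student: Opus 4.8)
The plan is to strip the Bessel functions down to a single function and its derivatives, reduce the problem to a statement about one Hankel determinant $D_k$, and then induct on $l$ so that the recurrence for $I$-Bessel functions trades the ``column shifts'' in the definition of $f_l$ for ordinary $x$-derivatives of $D_k$. First I normalise: writing $I_n(2\sqrt x)=x^{n/2}\tilde g_n(x)$ with $\tilde g_n(x)=\sum_{j\ge0}x^j/((n+j)!\,j!)$, one checks $\tilde g_n'=\tilde g_{n+1}$, so with $\phi:=\tilde g_1$ every entry is a derivative, $\tilde g_{i+j+1}=\phi^{(i+j)}$. Pulling $x^{i/2}$ out of row $i$ and $x^{(j+1)/2+l_{j+1}}$ out of column $j$ in each determinant gives
\[
\tau_k(x)=x^{k^2/2}D_k,\qquad f_l(x)=x^{k^2/2+l}E_l,\qquad D_k=\det\big(\phi^{(i+j)}\big)_{i,j=0}^{k-1},
\]
where $E_l=\sum_{l_1+\cdots+l_k=l}\binom{l}{l_1,\dots,l_k}\det(\phi^{(i+j+2l_{j+1})})$. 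Letting $\partial_j$ denote differentiation of the $j$-th column only (so $\partial_j$ shifts $\phi^{(i+j)}\mapsto\phi^{(i+j+1)}$), these commute, $\tfrac{d}{dx}=\sum_j\partial_j$ on $D_k$, and the multinomial theorem collapses $E_l$ into $E_l=\big(\sum_{j}\partial_j^2\big)^l D_k$. The one structural input is that $\phi$ obeys the second order ODE $x\phi''+2\phi'-\phi=0$; differentiating it $m$ times gives the recurrence $x\phi^{(m+2)}=\phi^{(m)}-(m+2)\phi^{(m+1)}$, which is the only identity I will use.

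The reduction is by induction on $l$, the base being the first-order identity
\[
E_1=\Big(\sum_j\partial_j^2\Big)D_k=\tfrac{k}{x}\big(D_k-2D_k'\big).
\]
To prove a statement of the form $E_l=\sum_{m=0}^{l}r_{l,m}(x,k)\,D_k^{(m)}$ I would use that, after eliminating $\phi'',\phi''',\dots$ by the ODE, each of $D_k,D_k',D_k'',\dots$ and every $E_l$ is a homogeneous polynomial of degree $k$ in $(\phi,\phi')$ with Laurent-polynomial coefficients in $x$; they all lie in one $(k+1)$-dimensional space, and the real content of the theorem is the sharper claim that $E_l$ lies in the span of only $D_k,\dots,D_k^{(l)}$. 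Applying one factor $x\sum_j\partial_j^2$ and the recurrence $x\phi^{(i+j+2)}=\phi^{(i+j)}-(i+j+2)\phi^{(i+j+1)}$ column by column produces $kD_k$, a term $-2D_k'$, and a weighted derivative $\sum_j\det(\mathrm{col}\,j\mapsto(i+j)\phi^{(i+j+1)})$ in which the weight splits as a row weight $i$ plus a column weight $j$.

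The hard part is exactly these weighted determinant sums. The column weights and the \emph{undifferentiated} row weights collapse by the cofactor identity $\sum_j\det(\mathrm{col}\,j\mapsto\mathrm{diag}(d_i)\,\mathrm{col}_j)=(\sum_i d_i)\det$ (so $\mathrm{diag}(i)$ contributes $\tr=\binom{k}{2}$), but a row-weighted \emph{differentiated} column $\sum_j\det(\mathrm{col}\,j\mapsto\mathrm{diag}(i)\,\partial_j\mathrm{col}_j)$ does \emph{not} reduce by symmetry or Euler-scaling alone — every such attempt is circular. This is the main obstacle, and it genuinely needs the particular ODE: one must show the row-weighted and column-weighted sums combine into $(2k-2)D_k'$ (equivalently, that the two boundary ``bumped Wronskians'' $\partial_{k-1}^2D_k$ and $\partial_{k-2}\partial_{k-1}D_k$ making up $E_1$ leave the residue $\mathcal N_{\mathrm{pen}}-\mathcal N_{\mathrm{last}}=(k-2)D_k-(k-1)D_k'$ once the $\mathrm{diag}(i)$ terms are removed). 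I would isolate this as a lemma and prove it either from the Hankel symmetry of $(\phi^{(i+j)})$ plus Laplace expansion, or via the generating-function reformulation $\sum_l\tfrac{t^l}{l!}E_l=\det(\partial_x^{i+j}u)$ where $u$ is the heat flow of $\phi$ and satisfies the clean PDE $2t\,u_{xxx}+x\,u_{xx}+2u_x-u=0$; a single first-order-in-$t$ reduction of that determinant should yield the whole family of needed identities at once.

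Finally, with $E_l=\sum_{m\le l}r_{l,m}(x,k)D_k^{(m)}$ in hand, substitute $D_k=x^{-k^2/2}\tau_k$ and expand $D_k^{(m)}$ by Leibniz; each derivative of $x^{-k^2/2}$ contributes a falling factorial of $k^2/2$, i.e.\ a factor polynomial in $k$ of degree $2$. Re-expressing $f_l=x^{k^2/2+l}E_l$ then produces precisely $f_l=x^{-l}\sum_{m=0}^{l}x^m P_m(x)\tfrac{d^m\tau_k}{dx^m}$. The bound $\deg_k c^{(l)}_{j,m}\le 3l-2(m+j)$ drops out of the same bookkeeping: the $r_{l,m}$ carry $k$-degree at most $l$ from the $l$ reduction steps, the conversion to $\tau_k$ adds at most $2$ per transferred derivative, and higher $x$-power or $\tau$-derivative order spends that budget; the extremal $k^{3l}$ term sits at $m=j=0$, already visible in $f_1=x^{-1}\big((k^3+kx)\tau_k-2kx\,\tau_k'\big)$.
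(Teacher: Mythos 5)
Your normalisation is correct and the base case checks out: with $I_n(2\sqrt x)=x^{n/2}\tilde g_n(x)$, $\tilde g_n'=\tilde g_{n+1}$, one does get $f_l=x^{k^2/2+l}E_l$ with $E_l=(\sum_j\partial_j^2)^lD_k$, your ODE $x\phi''+2\phi'-\phi=0$ is the entry-level recursion (\ref{recursiveformula2}) in disguise, and $E_1=\tfrac kx(D_k-2D_k')$ agrees with Lemma \ref{F10}. Your generating function $\det(\partial_x^{i+j}u)$ with $u$ the heat flow of $\phi$ is, after undoing the normalisation, exactly the paper's $G_k(x,t)$ from (\ref{definition of G1}), and your PDE $2tu_{xxx}+xu_{xx}+2u_x-u=0$ is the recursion (\ref{412recursive formula for gbeta}). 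So the framework is sound and is essentially a renormalised version of the paper's.

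However, there is a genuine gap, and you have located it yourself without closing it: the reduction of the row-weighted bumped determinants $\sum_j\det(\mathrm{col}\,j\mapsto\mathrm{diag}(i)\,\partial_j\mathrm{col}_j)$, and more generally of all the weighted sums that arise when the ODE is applied to an already-bumped determinant at the second and later stages of the induction. Saying ``I would isolate this as a lemma and prove it either from the Hankel symmetry plus Laplace expansion, or via the generating-function reformulation'' is a plan, not a proof, and this lemma is where essentially all of the paper's work lives: it is the operator $S_h$ of Definition \ref{defn of translation operator}, and its reduction requires Theorem \ref{thm1onHankel deteminants}, the invertible lower-Hessenberg linear system over hook diagrams of Theorem \ref{gerenel linear systems}, and the two-page computations of Lemmas \ref{lemmaonS1} and \ref{lemmaonS2} feeding Proposition \ref{411proponS}. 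The difficulty is not confined to a single identity for $E_1$: for $l\ge 2$ you must apply $x\sum_j\partial_j^2$ to objects that are already linear combinations of determinants with several shifted columns, and controlling which shifted determinants occur (the paper's point is that hook diagrams suffice, which is not obvious) is the heart of the matter. Separately, the degree bound $\deg_k c^{(l)}_{j,m}\le 3l-2(m+j)$ is not ``bookkeeping that drops out'': in the paper it is propagated through a two-parameter induction on the auxiliary quantities $f^{(i)}_{l,q}$ with the explicit degree hypothesis $i+2(i+l-s-j)$ in (\ref{exforflq}), and your sketch gives no mechanism that forces the stated linear-in-$l$ budget rather than something growing faster under repeated application of the reduction. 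Until the weighted-sum lemma is stated and proved and the induction for $l\ge2$ is actually run with the degree count attached, the argument establishes only the case $l=1$.
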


In the following, we explain how to use properties of solutions of the $\sigma$-Painleve III$'$ equation to compute $F_2(M,k)$. Recall that in the first order derivative case (see \cite[Theorem 1.1]{bailey2019mixed}), 
\bea
F_1(M,k) &:=& \lim_{N\rightarrow \infty} \frac{1}{N^{k^2 + 2M}}
\int_{\U(N)} |Z_A'(1)|^{2M} |Z_A(1)|^{2k-2M} dA_N \nonumber \\
&=& (-1)^{M+ \frac{k(k-1)}{2}} 
\left( \frac{d}{dx}\right)^{2M} \Big( e^{-x/2 } x^{-k^2/2} \tau_k(x) \Big) \Bigg|_{x=0}.
\label{F1Mk}
\eea
So for fixed $M$ with $0\leq M\leq k$, $F_1(M,k)$ can be computed from the linear combination of the first $2M$ coefficients of the Taylor expansion of $x^{-k^2/2}\tau_k(x)$ at $x=0$. Substituting Theorem \ref{theorem2 in note 4} into Proposition \ref{intro:prop1}, one can see that $F_2(M,k)$ may be computed by the linear combination of the first $4M$ coefficients of the Taylor expansion of $x^{-k^2/2}\tau_k(x)$ at $x=0$. The Taylor coefficients of $x^{-k^2/2}\tau_k(x)$ are determined by the Taylor expansion of $\sigma_{\text{III}'}(s)$, and so can be deduced recursively from the differential equation it satisfies.

Our second main result establishes a recursive relation for $f_l(x)$ and so provides a recursive relation for the coefficients $c_{j,m}^{(l)}(k)$ of $P_m(x)$ in Theorem \ref{theorem2 in note 4}.
Before stating the result, we introduce some matrices that will be used throughout this paper.
Let $l\geq 3,k\geq 1$ be integers. 
Let $B^{(l)}=(b_{i,j}^{(l)})_{i,j=1,\ldots,l}$ be an $l\times l$-matrix satisfying 
\bea\label{definition of B}
b_{ij} =\begin{cases}
(-1)^{i+j-1}/j(j+1) & j\geq i; \\
-1/i & j=i-1; \\
0  & j<i-1; \\
(-1)^{i-1}/l & j=l.
\end{cases}
\eea
Let 
$C_1^{(l)}= (c_{i,j}^{(1)})_{\substack{i=1,\ldots,l
\\j=1,\ldots,l-1}}$ be an $l \times (l-1)$-matrix satisfying
\bea\label{definitionofC1}
c_{i,j}^{(1)}=
\begin{cases}\label{definitionofC1}
(-1)^{i+j}\big(\frac{l-1-j}{j+1}+\frac{2k-j+1}{j(j+1)}\big) & \text{if } i\leq j\leq l-1;\\
 -\frac{1}{j+1}\big(j(2k-j)+l-1\big)& \text{if } j=i-1; \\
  0 & \text{if } j<i-1.
\end{cases}
\eea
Let 
$C_2^{(l)}= (c_{i,j}^{(2)})_{\substack{i=1,\ldots,l \\j=1,\ldots,l-2}}$ be an $l\times (l-2)$-matrix satisfying
\bea\label{definitionofC2}
c_{i,j}^{(2)} = 
\begin{cases}
 (-1)^{i+j}  \frac{k+2}{(j+1)(j+2)} & i-1\leq j \leq l-2; \\
\frac{i-k-2}{i} & j=i-2; \\
 0 & j<i-2.
\end{cases}
\eea
Let $C_3^{(l)}= (c_{i,j}^{(3)})_{\substack{i=1,\ldots,l \\j=1,\ldots,l+1}}$ be an $l\times (l+1)$-matrix satisfying
\be\label{definition of C3}
c_{i,j}^{(3)}
=\begin{cases}
(-1)^{j} & i=1,j=1,2; \\
\frac{(-1)^{i+j}}{(j-1)(j-2)} & j>i+1;\\
\frac{j}{j-1} & j=i+1,i\neq 1; \\
0 & j\leq i, i\neq 1.
\end{cases}
\ee

\begin{thm}\label{theorem 1 in note 4}
Let $k\geq 1, l\geq 3$ be integers. Let $f_{l}(x)$ be as given in (\ref{410definition of f_{l}}). Let $B^{(l)}$, $C_1^{(l)}, C_2^{(l)}, C_3^{(l)}$ be as given in (\ref{definition of B}), (\ref{definitionofC1}),  (\ref{definitionofC2}) and (\ref{definition of C3}), respectively. 
Denote
\[
\f_{l}^{(i)} = \begin{pmatrix}
f_{l,1}^{(i)}(x) &
\cdots &
f_{l,l}^{(i)}(x) 
\end{pmatrix}^T, \quad
\hat{\f}_{l}^{(i)} = \begin{pmatrix}
f_{l,2}^{(i)}(x) &
\cdots &
f_{l,l}^{(i)}(x) 
\end{pmatrix}^T.
\]
Then, for $i\geq 0$,
\be\label{expression of fi}
f_{i+1}(x) = f^{(i)}_{2,1}(x) -  f^{(i)}_{2,2}(x) ,
\ee
where $f^{(i)}_{2,1}(x), f^{(i)}_{2,2}(x)$ satisfy the following recursive relation
\bea\label{recursive formula for fjqi}
\f_{l}^{(i)}
&=&
-\, B^{(l)} \Big(\sqrt{x}\frac{d}{d x}-\frac{k^2+l-1-2i}{2\sqrt{x}} \Big)
\begin{pmatrix}
\f_{l-1}^{(i)} \\
0
\end{pmatrix} 
- \frac{1}{\sqrt{x}} C_1^{(l)}
\f_{l-1}^{(i)}
+ C_2^{(l)}
\f_{l-2}^{(i)} +\frac{2i}{\sqrt{x}} C_3^{(l)}
\f_{l+1}^{(i-1)}  \nonumber \\
&&
+ \, B^{(l)} \Big(2i\frac{d}{dx}-\frac{i(k^2+l)+2i(i-1)}{x}\Big)
\begin{pmatrix}
\hat{\f}_{l}^{(i-1)} \\
0
\end{pmatrix}.
\eea
The initial conditions for recursive formula (\ref{recursive formula for fjqi})
are given as follows.
\bea\label{initial values for the recursive formulas}
f_{0}(x)&=&\tau_{k}(x), \nonumber \\
f_{1,1}^{(i)}(x)&=&
\sqrt{x} \frac{d}{dx} f_i - \frac{1}{2\sqrt{x}} k^2 f_i - \frac{i}{\sqrt{x}} f_i, \nonumber \\
f_{2,1}^{(i)}(x)&=&\frac{1}{2}
kf_i   - \frac{i}{\sqrt{x}} (f^{(i-1)}_{3,1}-f^{(i-1)}_{3,2} +f^{(i-1)}_{3,3}) \nonumber  \\
&& +\, 
\frac{1}{2}\Big(x \frac{d^2}{dx^2} f_i - (k^2+2k+2i) \frac{d}{dx} f_i + \frac{(k^2+2i)(k^2+4k+2i+2)}{4x} f_i\Big), \nonumber \\
f_{2,2}^{(i)}(x)&=& -\frac{1}{2}
kf_i   + \frac{i}{\sqrt{x}} (f^{(i-1)}_{3,1}-f^{(i-1)}_{3,2} +f^{(i-1)}_{3,3}) \nonumber \\
&& +\,
\frac{1}{2}\Big(x \frac{d^2}{dx^2} f_i - (k^2-2k+2i) \frac{d}{dx} f_i + \frac{(k^2+2i)(k^2-4k+2i+2)}{4x} f_i\Big). 
\eea
\end{thm}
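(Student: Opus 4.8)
The plan is to derive the entire recursion from two elementary differential--recurrence identities for the modified Bessel function evaluated at $2\sqrt{x}$. Starting from the standard contiguous relations $I_n'=\tfrac12(I_{n-1}+I_{n+1})$ and $I_{n-1}-I_{n+1}=\tfrac{2n}{z}I_n$, a one-line computation (using $\tfrac{d}{dx}I_n(2\sqrt{x})=\tfrac{1}{\sqrt{x}}I_n'(2\sqrt{x})$) gives
\[
I_{n+1}(2\sqrt{x}) = \Big(\sqrt{x}\,\tfrac{d}{dx} - \tfrac{n}{2\sqrt{x}}\Big) I_n(2\sqrt{x}), \qquad I_{n-1}(2\sqrt{x}) = \Big(\sqrt{x}\,\tfrac{d}{dx} + \tfrac{n}{2\sqrt{x}}\Big) I_n(2\sqrt{x}),
\]
so that raising or lowering a single Bessel index by one unit is implemented by a first-order operator in $x$. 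Since each $f_l$, and each refined function $f^{(i)}_{l,q}$, is a composition sum of Hankel determinants whose $j$-th column carries the shift $2l_{j+1}$, the first step is a \emph{fundamental differentiation lemma}: applying $\frac{d}{dx}$ to such a determinant and expanding by multilinearity in the columns produces, via the two identities above, a sum of determinants in which exactly one column's Bessel index has been moved by one. Because a single derivative toggles the parity of a column shift, iterating these elementary index moves is precisely what connects the (even-shift) functions $f_l$ to the refined (mixed-shift) functions $f^{(i)}_{l,q}$ and moves the level between $l$, $l\pm1$ after the $1/\sqrt{x}$ factors are collected.

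The combinatorially central step is to track the multinomial weights $\binom{l}{l_1,\dots,l_k}$ under such an index move. Raising or lowering a single part $l_{j+1}$ and re-summing over compositions of the new total reorganizes these weights through Pascal-type identities; carrying this out column by column is what converts the differentiation lemma into a linear relation among the vectors $\f^{(i)}_{l},\f^{(i)}_{l-1},\f^{(i)}_{l-2}$. I expect the rational coefficients $(-1)^{i+j-1}/j(j+1)$, $-1/i$, and the boundary entry $(-1)^{i-1}/l$ defining $B^{(l)}$ to emerge from this reweighting acting together with the geometric operator $\sqrt{x}\tfrac{d}{dx}-\tfrac{k^2+l-1-2i}{2\sqrt{x}}$, while $C_1^{(l)}$ and $C_2^{(l)}$ should collect the lower-order $1/\sqrt{x}$ corrections generated by the three-term recurrence; the triangular vanishing patterns ($c^{(1)}_{i,j}=0$ for $j<i-1$, $c^{(2)}_{i,j}=0$ for $j<i-2$) reflect that a single index move can only couple neighbouring columns. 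The genuinely second-order nature of the $Z_A''$ problem (as opposed to $Z_A'$) is what forces the terms that \emph{decrease} the superscript from $i$ to $i-1$: differentiating twice reintroduces determinants carrying an extra shifted column, i.e.\ contributions at level $l+1$, which I would organize into the $\tfrac{2i}{\sqrt{x}}C_3^{(l)}\f^{(i-1)}_{l+1}$ term and the operator term $B^{(l)}\big(2i\tfrac{d}{dx}-\tfrac{i(k^2+l)+2i(i-1)}{x}\big)\hat{\f}^{(i-1)}_l$.

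Once the matrix recursion is in place, the initial conditions are obtained by direct evaluation at small $l$, where the composition sums are short. For $l=1$ there is a single part, and the differentiation lemma gives $f^{(i)}_{1,1}=\big(\sqrt{x}\tfrac{d}{dx}-\tfrac{k^2+2i}{2\sqrt{x}}\big)f_i$, matching \eqref{initial values for the recursive formulas}; for $l=2$ one expands the two-column sum explicitly and applies the lemma once more to reach the stated second-order expressions for $f^{(i)}_{2,1}$ and $f^{(i)}_{2,2}$, the only place where the $f^{(i-1)}_{3,\cdot}$ terms enter. The recursion is then solved by an outer induction on $i$ (with base $f_0=\tau_k$) and an inner, upward induction on $l$ based at $l=1,2$. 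Finally, the identity $f_{i+1}=f^{(i)}_{2,1}-f^{(i)}_{2,2}$ follows by checking that this antisymmetric combination cancels the two $x\,f_i''$ terms and reassembles an even-shift determinant sum of total shift $i+1$; one verifies it either by matching the resulting combination ($k f_i - 2k f_i' + \tfrac{k(k^2+2i)}{x}f_i - \tfrac{2i}{\sqrt{x}}(f^{(i-1)}_{3,1}-f^{(i-1)}_{3,2}+f^{(i-1)}_{3,3})$) against the $\tau_k$-representation of $f_{i+1}$ supplied by Theorem~\ref{theorem2 in note 4}, or directly against the definition \eqref{410definition of f_{l}}.

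The main obstacle is not any single conceptual point but the exact bookkeeping: verifying that the coefficients produced by the reweighting and the operator expansions assemble \emph{precisely} into the piecewise entries of $B^{(l)},C_1^{(l)},C_2^{(l)},C_3^{(l)}$, with the correct signs and boundary cases, and that the two interlocking recursions---downward in $i$ and upward in $l$---remain consistent at the edges (the last column $j=l$ of $B^{(l)}$, the $(l+1)$-st column of $C_3^{(l)}$, and the truncation passing from $\f$ to $\hat{\f}$). A convenient device for controlling this is the multilinear generating function
\[
\sum_{l\ge 0}\frac{t^l}{l!}\,f_l(x) = \det\Big(\sum_{m\ge 0}\frac{t^m}{m!}\,I_{i+j+1+2m}(2\sqrt{x})\Big)_{i,j=0,\dots,k-1},
\]
which follows from multilinearity of the determinant in its columns; it packages all the composition sums at once, lets the two operator identities act columnwise, and reduces the verification of the matrix entries to extracting the coefficient of $t^{l}$ at the end.
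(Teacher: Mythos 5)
Your proposal correctly isolates two genuine ingredients of the argument --- the contiguous relations for $I_\beta(2\sqrt{x})$ realized as first-order operators in $x$, and the generating function $G_k(x,t)=\det(g_{i+j+1}(x,t))$ with $g_\beta(x,t)=\sum_n \frac{t^n}{n!}I_{2n+\beta}(2\sqrt{x})$, which you mention only as a ``convenient device'' at the end but which is in fact the engine of the paper's proof: one sets $f_l=\partial_t^l G_k|_{t=0}$ and, crucially, \emph{defines} $f^{(i)}_{l,q}:=\partial_t^i G_{k,Y_{l,q}}|_{t=0}$ for hook diagrams $Y_{l,q}$, and the whole theorem is obtained by differentiating the $t$-dependent recursion of Proposition \ref{prop:iteraive formula 2} $i$ times at $t=0$. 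The genuine gap is in the central step. You never say what $f^{(i)}_{l,q}$ is, and you propose to obtain the matrix recursion by tracking the multinomial weights $\binom{l}{l_1,\ldots,l_k}$ under single index moves via ``Pascal-type identities.'' That is not where $B^{(l)},C_1^{(l)},C_2^{(l)},C_3^{(l)}$ come from, and a direct composition-sum bookkeeping does not close up into a system of size $l$: after an index move the determinants that appear are shifted by \emph{general} Young diagrams, whose number at fixed length grows exponentially --- precisely the obstruction the paper flags before introducing its workaround. The missing ideas are (i) the restriction to hook diagrams and the invertible lower-Hessenberg linear system between $\{H_{k,\{X;Y_{l,j}\}}\}_j$ and $\{\sum_h(-1)^hT_hH_{k,\{X;Y_{l-h,j-h}\}}\}_j$ (Theorems \ref{thm1onHankel deteminants} and \ref{gerenel linear systems}), whose explicit inverse is the source of $B^{(l)}$; (ii) the weighted translation operator $S_h$, split into column- and row-weighted parts (Lemmas \ref{lemmaonS1} and \ref{lemmaonS2}), which produces $C_1^{(l)}$; and (iii) the identity $\partial_t G_{k,Y}=T_2G_{k,Y}$ together with the recurrence $g_{\beta+2}=g_\beta-\frac{\beta+1}{\sqrt{x}}g_{\beta+1}-\frac{2t}{\sqrt{x}}g_{\beta+3}$, whose $t$-dependent tail is what generates the $\frac{2i}{\sqrt{x}}C_3^{(l)}\f^{(i-1)}_{l+1}$ term and the $\hat{\f}^{(i-1)}_{l}$ term upon taking $i$ derivatives in $t$. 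Without these the ``exact bookkeeping'' you defer to cannot be completed; the obstacle is conceptual, not clerical.

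Two smaller points. Verifying $f_{i+1}=f^{(i)}_{2,1}-f^{(i)}_{2,2}$ by matching against Theorem \ref{theorem2 in note 4} would be circular, since that theorem is deduced from the present recursion; in the paper this identity is immediate from $\partial_t G_k=T_2G_k=G_{k,Y_{2,1}}-G_{k,Y_{2,2}}$ (your fallback of checking directly against (\ref{410definition of f_{l}}) is legitimate but is not carried out). Your $l=1$ initial condition is correct, but your explanation of the $i\mapsto i-1$ terms as reflecting the second-order derivative being taken is off the mark: they come from the $t$-dependent term in the recurrence for $g_\beta(x,t)$, not from second derivatives in $x$.
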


In (\ref{initial values for the recursive formulas}), when $i=0$, $f^{(i-1)}_{3,1}, f^{(i-1)}_{3,2}, f^{(i-1)}_{3,3}$ are viewed as 0.
{We will explain the meaning of $f_{j,k}^{(i)}$ in Section \ref{proof of main results}}.
From the above recursive formula (\ref{recursive formula for fjqi}), for a fixed $i$ to compute $\f_{l}^{(i)}$ for any $l$ we need the information about 
$\f_{l-1}^{(i)}, \f_{l-2}^{(i)}, \f_{l+1}^{(i-1)}, \f_{l}^{(i-1)}$. 
So, when using the recursive formula (\ref{recursive formula for fjqi}), we first iterate over $i$ (namely, we compute $\f_{l'}^{(i-1)}$ for all $l'\leq l+1$ and store them), 
then we recursively use (\ref{recursive formula for fjqi}) to compute $\f_{l}^{(i)}$ from $\f_{l-1}^{(i)}, \f_{l-2}^{(i)}$ and the stored information of $\f_{l+1}^{(i-1)}, \f_{l}^{(i-1)}$. 
There are only finitely many $l,i$ that need to be considered, so the recursive approach is a linear process with polynomial complexity.

In practice, we need explicit formulae for $f_{1}(x),\ldots,f_{2k}(x)$ as form of (\ref{expression of fl}). 
We now briefly explain how to use (\ref{recursive formula for fjqi}) to obtain these.
Suppose we are in the $i$-th step for some $0\leq i\leq 2k-1$ and already have explicit formulae for $f_{0}(x),\ldots,f_{i}(x)$. Firstly, we use (\ref{initial values for the recursive formulas}) to update the initial values $f_{1,1}^{(i)}(x)$, $f_{2,1}^{(i)}(x)$, $f_{2,2}^{(i)}(x)$. Secondly, we use (\ref{expression of fi}) to calculate $f_{i+1}(x)$.
Thirdly, we use  (\ref{recursive formula for fjqi}) to calculate $\f_{l,1}^{(i)}(x)$ for $3\leq l\leq 2k+1-i$, from which we can compute $f_{1,1}^{(i+1)}(x)$, $f_{2,1}^{(i+1)}(x)$, $f_{2,2}^{(i+1)}(x)$ based on (\ref{initial values for the recursive formulas}). Continuing the above process, we can compute $f_i(x)$ for all $1 \leq i \leq 2k$.

Our third main result is about the structure of $F_2(M,k)$.

\begin{prop}
\label{intro:prop}
For any given integers $k\geq 1$ and any integer $M$ with $0\leq M\leq k$, we have
\be
F_2(M,k) = \frac{G^2(k+1)}{G(2k+1)} R_M(k),
\label{intro:decomposition}
\ee
where $G$ is the Barnes G-function, $R_M(k)$ is a rational function which is analytic when ${\rm Re}(k)>M-1/2$.
\end{prop}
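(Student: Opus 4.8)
The plan is to first peel off the Barnes factor to establish rationality, and then to locate and tame the poles of the resulting rational function. The engine for the factorization is the relation (\ref{tauk-solution-relation}). Writing $\psi_k(x) = e^x\exp\!\big(-\int_0^{4x}\tfrac{\sigma_{\text{III}'}(s)+k^2}{s}\,ds\big)$, the boundary condition $\sigma_{\text{III}'}(s)\sim -k^2+\tfrac{s}{8}+O(s^2)$ makes the integrand regular at $s=0$, so $\psi_k$ is analytic at $x=0$ with $\psi_k(0)=1$, and $x^{-k^2/2}\tau_k(x) = (-1)^{k(k-1)/2}\tfrac{G^2(k+1)}{G(2k+1)}\psi_k(x)$. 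First I would substitute Theorem \ref{theorem2 in note 4} into Proposition \ref{intro:prop1}. Since the prefactor $(-1)^{k(k-1)/2}\tfrac{G^2(k+1)}{G(2k+1)}$ is constant in $x$, it commutes with every operator $\left(\tfrac{d}{dx}\right)^{4M-2l}$ and with multiplication by $x^{-l-k^2/2}$, so it pulls out of all $2M+1$ summands; the two sign factors multiply to $(-1)^{k(k-1)}=1$. This produces (\ref{intro:decomposition}) with $R_M(k)$ equal to the ``$G$-free'' version of the bracket, i.e. the same expression with $\tau_k(x)$ everywhere replaced by $x^{k^2/2}\psi_k(x)$.

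Next I would show $R_M(k)$ is rational. After the substitution it is a finite linear combination of $x$-derivatives at $0$ of $e^{-x/2}$ times products of three kinds of factor: the polynomials $P_m$, whose coefficients $c^{(l)}_{j,m}(k)$ lie in $\Q[k]$ by Theorem \ref{theorem2 in note 4}; the derivatives $\tfrac{d^p}{dx^p}x^{k^2/2}=\big[\prod_{q=0}^{p-1}(\tfrac{k^2}{2}-q)\big]x^{k^2/2-p}$, which contribute elements of $\Q[k^2]$; and the Taylor coefficients of $\psi_k$. The last are rational in $k$: writing $\sigma_{\text{III}'}(s)=-k^2+\sum_{n\geq1}a_n(k)s^n$ and inserting into (\ref{painleve equation}), the order-$s^n$ relation is linear in $a_n$ with coefficient $\mathcal{L}_n(k)=\tfrac{(n-1)^2-4k^2}{16(4k^2-1)}$ (nonzero for generic $k$) and an inhomogeneity polynomial in $k^2$ and $a_1,\dots,a_{n-1}$; solving recursively from $a_1=\tfrac18$, $a_2=\tfrac{1}{64(4k^2-1)}$ gives each $a_n(k)\in\Q(k)$. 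Extracting finitely many Taylor coefficients then shows $R_M(k)\in\Q(k)$.

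The analyticity claim is the substantive part. Poles of $R_M(k)$ can come only from the $a_n(k)$, as the remaining factors are polynomial. From the recursion, the denominator of $a_n$ accumulates the factors $(n'-1)^2-4k^2$ for $n'\leq n$, so the poles of $a_n(k)$ lie in $\{\pm\tfrac12,\pm1,\dots,\pm\tfrac{n-1}{2}\}$, all of real part at most $\tfrac{n-1}{2}$. For the first-derivative analogue (\ref{F1Mk}) this is already enough, since only $a_1,\dots,a_{2M}$ enter and their poles have real part $\leq \tfrac{2M-1}{2}=M-\tfrac12$; no cancellation is needed there. For $F_2(M,k)$, however, the $l=0$ summand alone requires $[x^{4M}]\psi_k$, hence $a_{4M}$, which carries a pole at $k=\tfrac{4M-1}{2}>M-\tfrac12$, and every $a_n$ with $2M<n\leq 4M$ places candidate poles in the forbidden strip $\operatorname{Re}(k)>M-\tfrac12$. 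Thus the heart of the proof is to show these poles cancel in the sum over $l$.

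I expect this cancellation to be the main obstacle, and I would attack it from two complementary directions. Conceptually, $F_2(M,k)$ is the leading large-$N$ coefficient of $\E_{\U(N)}[\,|Z_A''(1)|^{2M}\,|Z_A(1)|^{2(k-M)}\,]$, whose only source of non-analyticity in $k$ is the factor $|Z_A(1)|^{2(k-M)}$ near $Z_A(1)=0$; the Keating--Snaith-type moments $\E_{\U(N)}[|Z_A(1)|^{2\alpha}]$ are analytic for $\operatorname{Re}(\alpha)>-\tfrac12$, the polynomial weight $|Z_A''(1)|^{2M}$ does not narrow this range, and with $\alpha=k-M$ this yields analyticity for $\operatorname{Re}(k)>M-\tfrac12$, which should persist in the leading coefficient; dividing by the Barnes ratio $G(2k+1)/G^2(k+1)$ (analytic and nonvanishing for $\operatorname{Re}(k)>-\tfrac12$) then gives the claim for $R_M(k)$. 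Concretely, since the rightmost pole of the family, at $k=\tfrac{4M-1}{2}$, is carried only by $a_{4M}$, which enters $R_M$ linearly, cancellation there reduces to the vanishing at that point of the total coefficient of $a_{4M}$, after which one descends through the remaining candidate poles; this can be checked from the explicit $P_m$ of Theorem \ref{theorem2 in note 4} together with the recursion of Theorem \ref{theorem 1 in note 4}. The delicate point, already visible in collapses such as $x^{-1-k^2/2}f_1 = k\,x^{-1}(\psi_k-2\psi_k')$ that cancel the naive $x^{-2}$ singularities, is to organize this bookkeeping uniformly in $n$ and $M$ rather than case by case, and to make the transfer of analyticity from the convergent averages to the limiting coefficient rigorous; that is where the argument must do its real work.
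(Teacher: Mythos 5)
Your outline matches the paper's strategy at a high level (peel off the Barnes factor, show the relevant Taylor coefficients are rational in $k$, then improve the analyticity range from roughly $\Re(k)>2M-1/2$ down to $\Re(k)>M-1/2$), but both load-bearing steps are left with genuine gaps. First, the rationality argument via the $\sigma$-Painlev\'e III$'$ recursion does not go through as stated: as the paper itself emphasizes in Section \ref{section:Truncated case}, for a fixed integer $k$ the linear coefficient of $a_n$ in the order-$s^n$ relation vanishes at $n=2k+1$, so equation (\ref{painleve equation}) with the stated boundary data determines only $a_1,\dots,a_{2k}$ and admits a one-parameter family of solutions; the ``generic $k$'' recursion does produce rational functions, but there is no justification that their values (or limits) at integer $k$ agree with the actual Taylor coefficients of the determinant $\tau_k$ --- indeed for $k=1$ the true coefficient is $c_3=-1/3072$, whereas the Forrester--Witte normalization takes $c_3=0$. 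The paper circumvents this with a different input: Winn's extension of (\ref{F1Mk}) to half-integer $M$ together with Dehaye's combinatorial formula $F_1(M,k)=\frac{G^2(k+1)}{G(2k+1)}X_M(k)/Y_M(k)$, which pins down every coefficient $d_m(k)$ as a rational function of $k$ analytic for $\Re(k)>(m-1)/2$. Since $F_2(M,k)$ needs coefficients up to index $4M$, you cannot obtain rationality from the Painlev\'e ODE alone.

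Second, the analyticity improvement --- which you correctly identify as the heart of the matter --- is only gestured at. Your ``conceptual'' route is essentially what the paper makes rigorous in Appendix \ref{app}: Lemma \ref{lem:Rational functions} rewrites the joint moment via the Heine identity as a Toeplitz determinant whose symbol has singularities at the three points $0,\beta,2\beta$, evaluates it by an explicit determinantal formula, and reads off from the entries containing $\frac{(2k-n+i-1)!}{(2k-n+i-1+m)!}$ with $n\le 2M$ and $i\ge 1$ that the resulting expression for $F_2(M,k)$ is analytic for $\Re(k)>M-1/2$; combined with the already-established rationality, this forces the cancellation of all poles in that half-plane without any term-by-term bookkeeping. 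Your alternative of directly verifying that the total coefficient of $a_{4M}$ (and then of each $a_n$ with $2M<n\le 4M$) vanishes at the offending points is not carried out, and your appeal to analyticity of the finite-$N$ averages ``persisting in the leading coefficient'' is exactly the step that requires the explicit finite formula. As it stands, the proposal identifies the right obstacles but overcomes neither of them.
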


It was demonstrated in \cite{hughes2001characteristic,dehaye2008joint,basor2019representation} that $F_1(M,k)$ also equals $\frac{G^2(k+1)}{G(2k+1)}$ multiplying a rational function. The factor $\frac{G^2(k+1)}{G(2k+1)}$ first appeared in the $2k$-th moment of CUE characteristic polynomials in \cite{keating2000random1}. Namely, it equals $F_1(0,k)$ and $F_2(0,k)$. Moreover, from our results in Section \ref{section:Generalization}, the joint moments
\[
\lim_{N\rightarrow \infty} \frac{1}{N^{k^2+2Mn}} \int_{\U(N)} |Z_A^{(n)}(1)|^{2M} |Z_A(1)|^{2k-2M} dA_N,
\]
for any $n\geq 1$, all have a similar structure to (\ref{intro:decomposition}).

We list some examples as an illustration of Proposition \ref{intro:prop}:
\beas
R_1(k) &=& \frac{1}{16(2k-1)(2k+3)}, \\
R_2(k) &=& \frac{16k^4+64k^3+40k^2-32k-99}{256(2k-3)(2k-1)^2(2k+1)^2(2k+3)(2k+5)(2k+7)}, \\
R_3(k) &=& (512k^9+5376k^8+14336k^7-13824k^6-102080k^5-66912k^4+188608k^3 \\
&& -239232k^2-225318k+463545)
\Big/ 4096(2k-5)(2k-3)^2(2k-1)^3(2k+1)^3 \\
&& (2k+3)^2(2k+5)(2k+7)(2k+9)(2k+11) , \\
R_4(k) &=& (4096k^{12} + 81920k^{11} +509952 k^{10} +233472 k^9 -8833280k^8-25065472k^7 \\
&& +30041856k^6+155091456k^5-18354704k^4+2414144k^3-800470200k^2 \\
&& -1962813360k+6148319625) \Big/ 
65536 (2k-7)(2k-5)^2(2k-3)^2(2k-1)^3 \\
&& (2k+1)^3 (2k+3)^2(2k+5)^2 (2k+7)(2k+9)(2k+11)(2k+13)(2k+15).
\eeas

As an application of our method to prove Theorem \ref{theorem 1 in note 4}, we provide a recursive relation for the coefficients of the Taylor expansion of $x^{-\frac{k^2}{2}}\tau_k(x)$ at $x=0$.

\begin{thm}
\label{intro:thm5}
For any given $k\geq 1$, assume that 
\be
\label{expression of tauk}
\tau_k(x) = (-1)^{\frac{k(k-1)}{2}} \frac{G^2(k+1)}{G(2k+1)} x^{\frac{k^2}{2}} \sum_{j=0}^\infty a_j x^j,
\ee
then
for any $i\geq 1$,
\be
\label{thm:recursive formula for ai-eq1}
a_i D_{k+1,0}(i) = -\sum_{q=1}^{\min(i, \lfloor \frac{k+1}{2}\rfloor)} a_{i-q} D_{k+1,q}(i),
\ee
where $a_0=1$, $D_{k+1,0}(i)\neq 0$,
and $D_{k+1,q}(i)$ can be computed via the following recursive formulae: for
$l\geq 3$
\beas
D_{l,q}(n) &=& \frac{n+(l-1)(2k-l+1)}{l} D_{l-1,q}(n) + \frac{l-k-2}{l} D_{l-2,q-1}(n-1), 
\\
D_{2,0}(n) &=& \frac{n(2k-1+n)}{2}, \quad 
D_{2,1}(n) = -\frac{k}{2}, \quad D_{1,0}(n) = n.
\eeas
Moreover, $D_{l,q}(n)=0$ when $q> \lfloor l/2\rfloor$.
\end{thm}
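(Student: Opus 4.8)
The plan is to specialize the column-reduction procedure used to prove Theorem~\ref{theorem 1 in note 4} to the single unshifted Hankel determinant $\tau_k$ and to read off the induced recurrence on the coefficients $a_i$. First I would remove the half-integer powers from the Bessel entries: writing $I_n(2\sqrt{x})=x^{n/2}\beta_n(x)$ with $\beta_n(x)=\sum_{j\ge 0}x^j/\big((n+j)!\,j!\big)$ and factoring $x^{(i+1)/2}$ out of row $i$ and $x^{j/2}$ out of column $j$ gives $\tau_k(x)=x^{k^2/2}\det\big(\beta_{i+j+1}(x)\big)_{i,j=0}^{k-1}$. Hence $x^{-k^2/2}\tau_k(x)=\det\big(\beta_{i+j+1}(x)\big)$ is entire, and by \eqref{expression of tauk} its coefficients are $(-1)^{k(k-1)/2}\tfrac{G^2(k+1)}{G(2k+1)}\,a_j$. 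The argument then rests on two elementary identities for $\beta_n$, which are the $\tau_k$-specialisations of the Bessel recurrences underlying Theorem~\ref{theorem 1 in note 4}: the shift relation $\beta_n'=\beta_{n+1}$ and the three-term relation $\beta_{n-1}=n\beta_n+x\beta_{n+1}$.

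Next I would run the column operations of Theorem~\ref{theorem 1 in note 4} on $\det(\beta_{i+j+1})$ itself, iterated over a level parameter $l$ counting how many columns have been reduced. At each level the three-term relation lowers a Bessel index in two branches. The ``diagonal'' branch $n\beta_n$ multiplies the running coefficient by $\alpha_l(n):=\frac{n+(l-1)(2k-l+1)}{l}$, which packages the subdiagonal contributions of $B^{(l)}$ (whose derivative operator is responsible for the linear-in-$n$ part) and of $C_1^{(l)}$; the ``$x\beta_{n+1}$'' branch multiplies by $\gamma_l:=\frac{l-k-2}{l}$, which is exactly the bottom entry $c^{(2)}_{l,l-2}$ of $C_2^{(l)}$. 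The factor $x$ created by the second branch is the key bookkeeping device: it shifts the Taylor index $n\mapsto n-1$ and increments a counter $q$. Propagating the accumulated coefficients through this two-branch process yields precisely the double recurrence $D_{l,q}(n)=\alpha_l(n)D_{l-1,q}(n)+\gamma_l D_{l-2,q-1}(n-1)$, together with the stated initial data, which are themselves generated automatically from the conventions $D_{0,0}\equiv 1$ and $D_{1,0}(n)=n$.

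Since the determinant is $k\times k$, I would then carry the reduction to level $l=k+1$, one step past its size, where it closes into a linear relation among $x^{-k^2/2}\tau_k$ and its derivatives (for $k=1$ this closure is literally the second-order ODE $x\beta_1''+2\beta_1'-\beta_1=0$ satisfied by $\beta_1$). Extracting the coefficient of $x^i$ from this relation gives $\sum_q D_{k+1,q}(i)\,a_{i-q}=0$, i.e.\ \eqref{thm:recursive formula for ai-eq1}. Two structural facts then finish the proof. The support statement $D_{l,q}=0$ for $q>\lfloor l/2\rfloor$ follows by induction on $l$, since only the $\gamma$-branch raises $q$ and each use consumes two levels. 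The non-vanishing $D_{k+1,0}(i)\neq 0$ follows from the $q=0$ specialisation $D_{k+1,0}(i)=i\prod_{l=2}^{k+1}\alpha_l(i)$: for $i\ge 1$ and $2\le l\le k+1$ one has $(l-1)(2k-l+1)\ge 0$, so every factor is strictly positive. This positivity is also what underlies the analyticity of $R_M(k)$ in Proposition~\ref{intro:prop}.

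The main obstacle is the bookkeeping in the second step: one must track two coupled recursions, in the level $l$ and in the $x$-counter $q$, together with the attendant $n\mapsto n-1$ shifts, and verify that after the half-integer powers of $x$ are absorbed and the coefficient of $x^i$ is extracted, the diagonal and subdiagonal entries of $B^{(l)}$, $C_1^{(l)}$ and $C_2^{(l)}$ collapse to exactly the two scalars $\alpha_l(n)$ and $\gamma_l$. Pinning down the constant part of $\alpha_l(n)$ — the precise combination of the $\tfrac{k^2+l-1}{2\sqrt{x}}$ term of $B^{(l)}$ with the subdiagonal of $C_1^{(l)}$ — and confirming that the level-$(k+1)$ closure does not leak into higher $q$ are where the care is needed; the underlying determinantal identities are otherwise routine, and the low-order cases $k=1,2$ (where the recurrence reduces to the ODE of $\beta_1$ and to a hypergeometric two-term recursion, respectively) provide a direct check.
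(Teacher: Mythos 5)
Your proposal is correct and follows essentially the same route as the paper: the closure at level $l=k+1$ is the paper's identity $\tau_{k,Y_{k+1,k+1}}\equiv 0$ (Lemma \ref{lem:Yk,k-1} combined with Proposition \ref{recursive formula for tau0}), and your two-branch bookkeeping in $(l,q)$ with the $n\mapsto n-1$ shift is exactly Proposition \ref{recursive for Taylor series of tau}, the only differences being the cosmetic normalization $I_n(2\sqrt{x})=x^{n/2}\beta_n(x)$ and your positivity argument for $D_{k+1,0}(i)\neq 0$ (each factor $\frac{i+(l-1)(2k-l+1)}{l}>0$), which is a touch cleaner than the paper's root-location argument. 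The one step you should still write out is the proof that the level-$(k+1)$ expression genuinely annihilates $\tau_k$ (the paper's one-line cofactor computation in Lemma \ref{lem:Yk,k-1}); your $k=1$ check of the ODE $x\beta_1''+2\beta_1'-\beta_1=0$ is consistent with it.
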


From the connection (\ref{tauk-solution-relation}) between $\tau_k(x)$ and the solution $\sigma_{\text{III}'}(s)$ of the $\sigma$-Painlev\'{e} III$'$equation, it is standard to use the Taylor series of $\sigma_{\text{III}'}(s)$ to obtain the coefficients of the Taylor expansion of $\tau_k(x)$ at $x=0$, e.g., see \cite{forrester2006boundary, basor2019representation}. However, the differential equation (\ref{painleve equation}) does not uniquely determine $\sigma_{\text{III}'}(s)$, even when one is provided with boundary conditions like  $\sigma_{\text{III}'}(0)=-k^2, \sigma_{\text{III}'}'(0)=1/8$. For any given $k\geq 1$, the recursive relations obtained from the differential equation (\ref{painleve equation}) can determine the first $2k$ Taylor coefficients $a_1,\ldots,a_{2k}$, but not the $(2k+1)$-th coefficient (see (\ref{eta}) for more details). Different values of this coefficient determine different solutions, i.e., the differential equation (\ref{painleve equation}) has a one-parameter solution. For the moments of the first-order derivative of characteristic polynomials from CUE (i.e., (\ref{F1Mk}) with $M=k$), the first $2k$ Taylor coefficients are enough. 
According to our result, the first $4k$ Taylor coefficients are required for moments of second order derivative (i.e., (\ref{expression of bkm}) with $M=k$) (see Section \ref{section:Truncated case}). Theorem \ref{intro:thm5} provides a recursive relation for finding all the Taylor coefficients of $\tau_k(x)$. In this process, we used a different differential equation that $\tau_k(x)$ satisfies, rather than its connection to the $\sigma$-Painlev\'{e} III$'$ equation. This new differential equation is $f_{k+1,k+1}^{(0)} = 0$, where $f_{k+1,k+1}^{(0)}$ is given in Theorem \ref{theorem 1 in note 4}. Generally, for any $l \geq q \geq k+1$, using the recursive relation in Theorem \ref{theorem 1 in note 4} with $i=0$, we can prove that $f_{l,q}^{(0)}$ has the form 
$ x^{-l/2} \sum_{s=0}^{l} \frac{d^s \tau_k}{dx^s} x^s \sum_{j=0}^{\lfloor\frac{l-s}{2}\rfloor} a_{j,s,l,q} (k) x^j,$. This can be proved to be 0 (see Remark \ref{natural extension}). So this gives a differential equation of order $l$ that $\tau_k(x)$ satisfies.

\subsection{Summary of ideas and methods}
In this section, we briefly explain our main ideas and methods to prove Theorems \ref{theorem2 in note 4} and \ref{theorem 1 in note 4}.

From the expression (\ref{410definition of f_{l}}) for $f_l(x)$, our starting point is to examine the connection between $\det \Big(I_{i+j+1+2l_{j+1}}(2\sqrt{x})\Big)_{i,j=0,\ldots,k-1}$ and $\tau_k(x)=\det \Big(I_{i+j+1}(2\sqrt{x})\Big)_{i,j=0,\ldots,k-1}$. The latter is known to be closely related to a solution of the $\sigma$-Painlev\'{e} III$'$ equation. The former has a similar structure to $\tau_k(x)$, except that the columns are shifted by integers. 
Through some column permutations, we can rewrite $\det \Big(I_{i+j+1+2l_{j+1}}(2\sqrt{x})\Big)_{i,j=0,\ldots,k-1}$ as $\det \Big(I_{i+j+1+t_{k-j}}(2\sqrt{x})\Big)_{i,j=0,\ldots,k-1}$ such that $t_1\geq \cdots\geq t_s > t_{s+1}=\cdots =t_k= 0$. We will view this sequence of integers as a Young diagram $Y=(t_1,\ldots,t_s)$, denote the determinant as $\tau_{k,Y}$, and refer to it as a determinant shifted by the Young diagram $Y$. The length of $Y$ is defined as $|Y|:=t_1+\cdots+t_s$. Our goal is to explore connections between $\tau_{k,Y}$ and $\tau_k$. 

To better illustrate the main idea underpinning the connection, we first consider some simple special cases. If $Y=(1)$, $\tau_{2,Y} = \begin{pmatrix}
I_1(2\sqrt{x}) & I_3(2\sqrt{x})  \\
I_2(2\sqrt{x}) & I_4(2\sqrt{x})
\end{pmatrix}$. Then from the relation $I_{\beta+1}(2\sqrt{x}) = \sqrt{x} I'_\beta(2\sqrt{x}) - \frac{\beta}{2\sqrt{x}} I_\beta(2\sqrt{x})$, we obtain the following relation $\tau_{2,Y} = \sqrt{x} \tau_2' - \frac{2}{\sqrt{x}}  \tau_2$. For general $k$, one can prove similarly that $\tau_{k,Y} = \sqrt{x} \tau_k' - \frac{k^2}{2\sqrt{x}}  \tau_k$.

From the above example, it is possible to express $\tau_{k,Y}$ in terms of derivatives of $\tau_k$.
To this end, we consider determinants of general $k\times k$ Hankel matrices whose columns are shifted by Young diagrams $Y=(t_1,\ldots,t_s)$, and denote it as $H_{k,Y}$. The corresponding matrix is denoted as $M_{k,Y}=(a_{i+j+t_{k-j}+1})_{i,j=0,\ldots,k-1}$. When $Y=\emptyset$ is an empty Young diagram, we denote $H_{k,Y}$ as $H_k$ for simplicity.
We introduce a translation operator $T_h$ to study properties of $H_{k,Y}$. $T_h H_{k,Y}$ is the inner product between $M_{k,T_hY}$ and the cofactor matrix of $M_{k,Y}$, where $T_hY=(t_1+h,\ldots,t_k+h)$. We want to explore the connections between $H_{k,Y_1}$ and $T_hH_{k,Y_2}$. From the definition, $T_hH_{k,Y_2}$ is a linear combination of $H_{k,Y_1}$ for some Young diagrams $Y_1$ of length $|Y_2|+h$. Conversely, given a $H_{k,Y_1}$, we want to know if it can be written as a linear combination of $T_hH_{k,Y_2}$ for some $Y_2$.

An obvious approach towards this goal is to seek an invertible linear system of equations from the linear expression of $T_hH_{k,Y_2}$ in terms of $H_{k,Y_1}$. The main difficulty in achieving this is that the number of Young diagrams of a fixed length is exponentially large, so it is hard in practice to find such a linear system.

Therefore, we develop an alternative approach, which starts from hook Young diagrams, i.e., diagrams of the form $Y_{l,j}=(l-j+1,1,\ldots,1)$ with $(j-1)$ 1's. For a hook diagram $Y_{i,j}$, $T_hH_{k,Y_{l,j}}$ is a linear combination of some $H_{k,Y}$, where some $Y$ are hooked and others are not. We establish a method to eliminate non-hook Young diagrams so that a linear combination of $T_hH_{k,Y_{l,j}}$ is a linear combination of some $H_{k,Y_{l',j'}}$ (see Theorem \ref{thm1onHankel deteminants}). Consequently, we obtain a linear system among hook diagrams. Fortunately, this linear system has a nice structure whose coefficient matrix is lower Hessenberg with an explicit inversion (see Theorem \ref{gerenel linear systems}). This means that, somewhat unexpectedly, hook diagrams are sufficient for our purposes. In addition, for general Young diagrams, we also show how to write $H_{k,Y_1}$ as a linear combination of $T_hH_{k,Y_2}$ by generalizing the results and methods for hook diagrams (see the end of Section \ref{rfHsy}).

When the matrix elements $a_n$ of $M_{k,Y}$ satisfy some recursive relations, we establish recursive relations for $H_{k,Y_{l,j}}$. For example, consider when $a_n$ are Bessel functions, which have a recursive relation
$I_{\beta+2}(2\sqrt{x})=I_{\beta}(2\sqrt{x})-\frac{\beta +1}{\sqrt{x}}I_{\beta+1}(2\sqrt{x})$. Let $Y_{l-h,j-h}=(t_1,\ldots,t_{j-h})$ be a hook diagram. We set $t_{j-h+1}=\cdots=t_k=0$.
Substituting the relation for Bessel functions into $T_h H_{k,Y_{l-h,j-h}}$, we then obtain from the definition that
\[
T_h H_{k,Y_{l-h,j-h}}
=(I_{i+j+h-2+t_{k-j}}) \cdot (F_{ij})
-\frac{1}{\sqrt{x}} ((i+j+h-1+t_{k-j})I_{i+j+h-1+t_{k-j}}) \cdot (F_{ij}),
\]
where $F_{ij}$ is the $(i,j)$-th cofactor of $M_{k,Y_{l-h,j-h}}$.
The first term is $T_{h-2} H_{k,Y_{l-h,j-h}}$, which can be handled similarly using the approach described above. Regarding the second term, we introduce a weighted translation operator $S_{h-1}$ so that it is $\frac{1}{\sqrt{x}} S_{h-1} H_{k,Y_{l-h,j-h}}$. We further decompose this term into two parts
$\frac{1}{\sqrt{x}} ((j+h-1+t_{k-j})I_{i+j+h-1+t_{k-j}}) \cdot (F_{ij})
+ \frac{1}{\sqrt{x}} (i I_{i+j+h-1+t_{k-j}}) \cdot (F_{ij})$, where the weights are put on the columns and rows respectively. The first part is handled using a similar approach to the one described above, by a Laplace expansion on columns, see Lemma \ref{lemmaonS1}. The handling process for the second part is much more complicated. Roughly speaking, we extend $H_{k,Y}$ to $H_{k,\{X,Y\}}$ with two Young diagrams $X,Y$, where rows are shifted by $X$ and columns are shifted by $Y$. We then show that there is an invertible linear system between $\{H_{k,\{X,Y_{l,j}\}}:j=1,2,\ldots,l \}$ and $\{\sum_{h=1}^{j-1} (-1)^h T_hH_{k,\{X,Y_{l-h,j-h}\}}:j=2,\ldots,l \} \cup \{T_lH_{k,X}\}$, see Theorem \ref{thm1onHankel deteminants}. 
Repeatedly using this linear property with some different and appropriate choices of $X$, we can express the second term as a linear combination of determinants of Hankel matrices shifted by hook diagrams, see Lemma \ref{lemmaonS2}. In the end, we obtain a recursive relation for $H_{k,Y_{l,j}}$ involving $H_{k,Y_{l-1,j_1}}$ and $H_{k,Y_{l-2,j_2}}$, see Proposition \ref{recursive formula for tau0}.

From the results described above, we obtain a general description of the basic structure of $\tau_{k,Y}(x)$:
\be
\label{intro:tauk}
\tau_{k,Y} = x^{-m/2} \sum_{s=0}^{m} \frac{d^s \tau_k}{dx^s} x^s \sum_{j=0}^{\lfloor\frac{m-s}{2}\rfloor} a_{j,s, Y} (k) x^j,
\ee
where $m=|Y|$, and $a_{j,s,Y} (k)$ are certain polynomials of $k$ with degree at most $2(m-s-j)$ and with coefficients depending on $j, s, Y$.

Recall that our overarching goal is to build recursive relations for $f_l(x)$, defined in (\ref{410definition of f_{l}}), which is a combinatorial sum of $\tau_{k,Y}$. For example, $f_1(x) = \tau_{k,(2)} - \tau_{k,(1,1)}$ and $f_2(x) = \tau_{k,(4)} - \tau_{k,(1,3)}-\tau_{k,(1,1,2)} + 2 \tau_{k,(2,2)} +\tau_{k,(1,1,1,1)}$. From (\ref{intro:tauk}), we can see that the highest differential order of $\tau_{k}$ appearing in the expression of $\tau_{k,Y}$ is $m$, the length of $Y$. However, one sees that the highest differential order of $\tau_k$ in $f_1$ is 1 and in $f_2$ is $2$; that is, they are actually half the lengths of Young diagrams in their decomposition (see Remark \ref{newremark} for an explanation for the general case). This implies that there are some cancellations among the $\tau_{k,Y}$ appearing in the expression for $f_l(x)$. So, in order to obtain a more effective recursive relation for $f_l$, we cannot use the recursive relations satisfied by $\tau_{k,Y}$ without further consideration of this point. 
Our way around this is firstly to express $f_l$ as a partial derivative of a new Hankel matrix, 
\[
f_l(x) = \frac{\partial^l}{\partial t^l} G_k(x,t) \Big|_{t=0},
\]
where $G_k(x,t)=\det(g_{i+j+1}(x,t))_{i,j=0,\ldots,k-1}$ and $g_{\beta}(x,t) = \sum_{n=0}^\infty \frac{t^n}{n!} I_{2n+\beta}(2\sqrt{x})$. Notice that $G_k(x,0)=\tau_k(x)$. We next show that there is a connection between the partial derivative of $G_k$ with respect to $t$ and the translation operator $T_2$ acting on $G_k$, see Proposition \ref{derivative and translations}. This is further connected to the Hankel determinants shifted by Young diagrams. Namely, 
\[
\frac{\partial}{\partial t} G_k(x,t) = T_2 G_k(x,t) = G_{k,Y_{2,1}} - G_{k,Y_{2,2}}.
\]
From this, we can see that to compute $f_l(x)$, it suffices to build recursive relations for $ (\frac{\partial^{l-1}}{\partial t^{l-1}}G_{k,Y_{2,1}}|_{t=0}$,  $\frac{\partial^{l-1}}{\partial t^{l-1}}G_{k,Y_{2,2}}|_{t=0}) $.
Substituting the recursive relation $g_{\beta+2}=g_{\beta} - \frac{\beta+1}{\sqrt{x}} g_{\beta+1} - \frac{2t}{\sqrt{x}}g_{\beta+3}$ into $T_2 G_k(x,t)$, we obtain 
\beas
G_{k,Y_{2,1}} - G_{k,Y_{2,2}} = \frac{\partial}{\partial t} G_k(x,t) = k G_k(x,t) - \frac{2k}{\sqrt{x}} G_{k,Y_{1,1}} - \frac{2t}{\sqrt{x}} (G_{k,Y_{3,1}} - G_{k,Y_{3,2}} + G_{k,Y_{3,3}}).
\eeas
Taking $(l-1)$-th partial derivative with respect to $t$ on both sides at $t=0$ yields
\beas
&& \frac{\partial^{l-1}}{\partial t^{l-1}} (G_{k,Y_{2,1}} - G_{k,Y_{2,2}}) \Big|_{t=0} \\
&=& k  \frac{\partial^{l-1}}{\partial t^{l-1}} 
G_k\Big|_{t=0} - \frac{2k}{\sqrt{x}} \frac{\partial^{l-1}}{\partial t^{l-1}} 
G_{k,Y_{1,1}}\Big|_{t=0} - \frac{2(l-1)}{\sqrt{x}} \frac{\partial^{l-2}}{\partial t^{l-2}} (G_{k,Y_{3,1}} - G_{k,Y_{3,2}} + G_{k,Y_{3,3}}) \Big|_{t=0}.
\eeas
The above is our general method for building a recursive formula for $f_l$. 
For the first two terms, the lengths of hook diagrams are decreasing which shows the existence of a recursive relation. For the third term, the length is increasing, but fortunately the order of partial derivative with respect to $t$ is decreasing. About this term, using the methods we established previously for Hankel determinants 
shifted by hook diagrams (where we choose $a_n=g_n$ in $M_{k,Y}$), we can establish similar recursive relations with the property that apart from some ``good" terms like the first two described above, the length of Young diagrams is increasing, while the order of partial derivative with respect to $t$ is decreasing. We end up with a zero-th order derivative of some determinants shifted by hook diagrams, which can be handled using the recursive relations we established for Hankel determinants of Bessel functions shifted by hook diagrams mentioned previously.

The above is an overview of the key ideas involved in our proof of Theorem \ref{theorem 1 in note 4}. To conclude, we explain our main idea for proving Theorem \ref{theorem2 in note 4}.
From the recursive relations in Theorem \ref{theorem 1 in note 4}, we obtain basic descriptions of $f_{j,q}^{(i)}(x) :=\frac{\partial^{i}}{\partial t^{i}}G_{k,Y_{j,q}}|_{t=0}$. These have similar structures to (\ref{intro:tauk}), which are polynomial combinations of derivatives of $\tau_k$. For $\frac{\partial^{i}}{\partial t^{i}}G_{k,Y_{j,q}}|_{t=0}$, the highest order of the derivative of $\tau_k$ is $i+j$. In particular, for $\frac{\partial^{l-1}}{\partial t^{l-1}}G_{k,Y_{2,1}}|_{t=0}, \frac{\partial^{l-1}}{\partial t^{l-1}}G_{k,Y_{2,2}}|_{t=0}$, the highest differential order is $l+1$.
From the analysis below (\ref{intro:tauk}), the highest differential order $\tau_k$ in $f_l$ should be $l$. So it is not straightforward to obtain a precise description of the structure of $f_l$ from the expression
$f_l(x) = (\frac{\partial^{l-1}}{\partial t^{l-1}}G_{k,Y_{2,1}}- \frac{\partial^{l-1}}{\partial t^{l-1}}G_{k,Y_{2,2}})|_{t=0}$, which required lots of accurate computations about the polynomial coefficients in $\frac{\partial^{l-1}}{\partial t^{l-1}}G_{k,Y_{2,1}}|_{t=0}, \frac{\partial^{l-1}}{\partial t^{l-1}}G_{k,Y_{2,2}}|_{t=0}$. Our idea is to expand $\frac{\partial}{\partial t} G_k(x,t)$ in powers of $t$ up to degree $l$ repeatedly using the recursive formula of $g_\beta$. This yields
\[
\frac{\partial G_{k}}{\partial t}(x,t)=\sum_{i=0}^{l-1}(-1)^{i+1}
(\frac{1}{\sqrt{x}}S_{i+1}G_{k} -T_{i}G_{k})
(\frac{2t}{\sqrt{x}})^{i}+(-1)^{l}T_{l+2}G_{k}(\frac{2t}{\sqrt{x}})^{l}.
\]
Taking $(l-1)$-th partial derivative with respect to $t$ at $t=0$, we obtain
\bea
f_{l}(x) &=& kf_{l-1}(x)-\frac{2k}{\sqrt{x}}f_{1,1}^{(l-1)}(x)-\sum_{j=1}^{l-1}(-1)^{j+1}j!\binom{l-1}{j}(\frac{2}{\sqrt{x}})^{j}\sum_{q=1}^{j}(-1)^{q-1} f_{j,q}^{(l-1-j)}(x) \nonumber\\
&& + \, \frac{1}{2}\sum_{j=1}^{l-1}(-1)^{j+1}j!\binom{l-1}{j}(\frac{2}{\sqrt{x}})^{j+1}\Big(\sum_{q=1}^{j+1}(-1)^{q-1}(2k-2q+j+2)f_{j+1,q}^{(l-1-j)}(x)\Big).
\label{intro:fl}
\eea
As discussed above, the highest differential order of $\tau_k(x)$ in the structure expression of $f_{j,q}^{(i)}(x)$ is $i+j$, so one can see that the highest differential order of $\tau_k(x)$ in the right-hand side of (\ref{intro:fl}) is $l$. As a result, we can deduce the correct highest differential order of $\tau_k(x)$ in $f_l$ without any complicated calculations requiring the intricate cancellations mentioned previously.  We mainly use (\ref{intro:fl}) and suitable modifications for the initial conditions of the recursive relation obtained in Theorem \ref{theorem 1 in note 4} to prove Theorem \ref{theorem2 in note 4}.

\subsection{Notation}

For a matrix $M$, we use $M^T$ to denote its transpose. 
The {\em standard inner product} of two matrices $A=(a_{ij})_{k\times k}, B=(b_{ij})_{k\times k}$ is defined as $A \cdot B = \sum_{i,j=1}^k a_{ij} b_{ij} = {\rm Tr}(A^T B)$.

The {\em Barnes G-function} is formally defined as 
\[
G(1+z)=(2\pi)^{z/2} \exp\left(- \frac{z+z^2(1+\gamma)}{2} \right) \, \prod_{k=1}^\infty \left\{ \left(1+\frac{z}{k}\right)^k \exp\left(\frac{z^2}{2k}-z\right) \right\}.
\]
In particular, $G(0)=0$, $G(1)=1$. For $n\geq 2$ a positive integer, we have
$G(n) = \prod_{j=0}^{n-2} j!$.

\subsection{Acknowledgements}

This research was carried while F.W. was visiting the University of Oxford. F.W. would like to thank the Mathematical Institute, University of Oxford for generous hospitality. F.W. was supported by the fellowship of China Postdoctoral Science Foundation 2020M670273 and Jinxin Xue's grant NSFC (project No.20221300605).  JPK is pleased to acknowledge support from ERC Advanced Grant 740900 (LogCorRM).
The authors would like to thank Alexander Its for his valuable comments.

\section{Determinants of Hankel matrices shifted by Young diagrams}
\label{section:Determinants of Hankel matrices}

In this section, we introduce some notion and prove some preliminary results that will be used later in the paper.

A Young diagram is a finite collection of boxes arranged in left-justified rows, with the row lengths in non-increasing order. The total number of boxes is called the {\em length}. Listing the number of boxes in each row gives a {\em partition} of the length. 
A Young diagram uniquely corresponds to a partition. In this paper, we will write Young diagrams in the form of partitions. Namely, let $t_1,\ldots,t_s$ be  positive integers such that $t_{1}\geq t_{2}\geq \cdots \geq t_{s}\geq 1$, then $Y=(t_1,t_2,\ldots,t_s)$ defines a Young diagram. An empty Young diagram is denoted as $\emptyset$. A Young diagram of the form $(i-j+1,1,\ldots,1)$, where the number of 1's is $j-1$, is called a {\em hook diagram}. Hook diagrams will play an important role in our calculations.  They will be denoted $Y_{i,j}$.

Let $\{a_{n}\}_{n\in \mathbb{C}}$ be a sequence of complex numbers, and $\beta_0,\ldots,\beta_{k-1}$ be $k$ distinct real numbers. Let $Y=(t_1,t_2,\ldots,t_s)$ be a Young diagram with $s\leq k$. We define
\be
\label{def:M-k}
M_k(\beta_0,\ldots,\beta_{k-1};Y)=
\begin{pmatrix}
a_{\beta_0} & a_{\beta_0+1} & \cdots & a_{\beta_0+k-s+t_s} & \cdots & a_{\beta_0+k-1+t_1} \\
a_{\beta_1} & a_{\beta_1+1} & \cdots & a_{\beta_1+k-s+t_s} & \cdots & a_{\beta_1+k-1+t_1} \\
\vdots & \vdots &  \vdots & \vdots & \vdots & \vdots \\
a_{\beta_{k-1}} & a_{\beta_{k-1}+1} & \cdots & a_{\beta_{k-1}+k-s+t_s} & \cdots & a_{\beta_{k-1}+k-1+t_1}\\
\end{pmatrix}.
\ee
In the above, the first $k-s$ columns are $(a_{\beta_0+i}, a_{\beta_1+i}, \cdots, a_{\beta_{k-1}+i})^T,i=0,\ldots,k-s-1$ and the last $s$ columns are $(a_{\beta_0+k-j+t_j}, a_{\beta_1+k-j+t_j}, \cdots, a_{\beta_{k-1}+k-j+t_j})^T,j=s,\ldots,1$.
A related matrix is the following, where the entries are endowed with weights equaling the sub-indices
\bea
\label{def:tilde-M-k}
&& \widetilde{M}_k(\beta_0,\ldots,\beta_{k-1};Y) \nonumber \\
&=&
\begin{pmatrix}
\beta_0a_{\beta_0}  & \cdots &(\beta_0+k-s+t_s) a_{\beta_0+k-s+t_s} & \cdots & (\beta_0+k-1+t_1)a_{\beta_0+k-1+t_1} \\
\beta_1g_{\beta_1}  & \cdots & (\beta_1+k-s+t_s)a_{\beta_1+k-s+t_s} & \cdots & (\beta_1+k-1+t_1)a_{\beta_1+k-1+t_1} \\
\vdots  &  \vdots & \vdots & \vdots & \vdots \\
\beta_{k-1}a_{\beta_{k-1}}  & \cdots & (\beta_{k-1}+k-s+t_s) a_{\beta_{k-1}+k-s+t_s} & \cdots & (\beta_{k-1}+k-1+t_1) a_{\beta_{k-1}+k-1+t_1}  \\
\end{pmatrix}. \nonumber \\
&& 
\eea
We also define
\be\label{definition of Dk}
D_k(\beta_0,\ldots,\beta_{k-1};Y)=
\det M_k(\beta_0,\ldots,\beta_{k-1};Y).
\ee
In this paper, as a convention we set 
\be \label{previous convention}
M_k,\widetilde{M}_k, D_k=0, \quad \text{if } s>k.
\ee
The reason for this will be explained in Remark \ref{natural extension}.

For $\alpha \in \mathbb{C}$,
we define Hankel determinant 
\bea\label{Hk}
H_{k}=\det(a_{\alpha+i+j})_{i,j=0,\ldots,k-1}.
\eea
This coincides with $D_{k}(\alpha,\alpha+1,\ldots,\alpha+k-1;\emptyset)$. 

\begin{dfn}\label{definition of shifted Hankel}
Let $\alpha \in \mathbb{C}$. Let $X=(l_{1},\ldots,l_{h})$ and $Y$ be Young diagrams. Set $(\beta_{0},\ldots,\beta_{k-1})=(\alpha,\alpha+1,\ldots,\alpha+k-h-1,\alpha+k-h+l_{h},\ldots,\alpha+k-1+l_{1})$.
We define the determinant of the Hankel matrix whose rows are shifted by Young diagram $X$ and columns are shifted by Young diagram $Y$ by
\be
\label{def of H-k-X-Y}
H_{k,\{X,Y\}}:=D_{k}(\beta_{0},\ldots,\beta_{k-1};Y).
\ee
When $X=Y=\emptyset$, $H_{k,\{\emptyset;\emptyset\}}$ is $H_{k}$ as defined in (\ref{Hk}). When $X=\emptyset$, we simply write $H_{k,\{\emptyset;Y\}}$ as $H_{k,Y}$ for short.
\end{dfn}

\begin{rem}{\rm
The definitions of $M_{k}$, $\widetilde{M}_{k}$, $D_{k}$, $H_{k}$ and $H_{k,\{X,Y\}}$ depend on the set $\{a_{\alpha}:\alpha \in \mathbb{C}\}$. In the following, without specific indication, our results hold for any general $\{a_{\alpha}:\alpha \in \mathbb{C}\}$.
}\end{rem}

\begin{rem}{\rm 
The quantities $D_k(\beta_{0},\beta_{1},\ldots,\beta_{k-1}; Y), H_{k,\{X;Y\}}$ are also well-defined when $X,Y$ are not non-decreasing sequences: when there are no two common columns, then via some column permutations, up to a $\pm1$ sign they can be changed into equivalent quantities $D_k(\beta_{0},\beta_{1},\ldots,\beta_{k-1}; Y')$, $H_{k,\{X';Y'\}}$ such that $X', Y'$ are non-decreasing (see Proposition \ref{prop2} below).
}\end{rem}

\begin{defn}[Operators $T_h$ and $S_h$]
\label{defn of translation operator}
Let $k\geq 1$ and $h\geq 0$ be integers. Let $Y=(t_{1},\ldots,t_{s})$ be a Young diagram with $s\leq k$. Define $T_{h}Y$ as the following Young diagram 
\[
T_{h}Y := (t_{1}+h,\ldots,t_{s}+h, h,\ldots,h) \in \mathbb{N}^k.
\]
Define $T_{h}D_k(\beta_0,\ldots,\beta_{k-1};Y)$ to be the standard inner product between $M_k(\beta_0,\ldots,\beta_{k-1};T_{h}Y)$ and the cofactor matrix of $M_k(\beta_0,\ldots,\beta_{k-1};Y)$, and define $S_{h}D_k(\beta_0,\ldots,\beta_{k-1};Y)$ to be the standard inner product between $\widetilde{M}_k(\beta_0,\ldots,\beta_{k-1};T_{h}Y)$ and the cofactor matrix of $M_k(\beta_0,\ldots,\beta_{k-1};Y)$.
\end{defn}

We remark that the operators $T_h, S_h$ are essentially defined on Young diagrams. Namely, if $Y_1=Y_2$ are two equal Young diagrams, then $T_hD_k(\beta_0,\ldots,\beta_{k-1};Y_1) = T_h D_k(\beta_0,\ldots,\beta_{k-1};Y_2)$.

It is easy to see that
$H_{k,\{X;T_{l}Y\}}=H_{k,\{T_{l}X;Y\}}$.
We can also linearly extend the operators $T_{h}$ and $S_{h}$ by
\beas
T_{h}(H_{k,\{X_{1};Y_{1}\}}+H_{k,\{X_{2};Y_{2}\}}) &:=& T_{h}H_{k,\{X_{1};Y_{1}\}}+T_{h}H_{k,\{X_{2};Y_{2}\}}, \\
S_{h}(H_{k,\{X_{1};Y_{1}\}}+H_{k,\{X_{2};Y_{2}\}}) &:=& S_{h}H_{k,\{X_{1};Y_{1}\}}+S_{h}H_{k,\{X_{2};Y_{2}\}}.
\eeas
In the following, we list more basic properties of $D_{k}$ and $H_{k,\{X,Y\}}$.
By definition, we have the following proposition.

\begin{prop}\label{prop1}
For integer $l\geq 0$, 
\[T_{l}D_k(\beta_0,\ldots,\beta_{k-1};Y)=\sum_{i=0}^{k-1}D_k(\beta_0,\ldots,\beta_{i-1},\beta_{i}+l,\beta_{i+1},\ldots,\beta_{k-1};Y).\]
\end{prop}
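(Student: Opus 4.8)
The plan is to derive the identity directly from the definitions by a row-by-row Laplace (cofactor) expansion, once the action of $T_l$ on the columns of the underlying matrix has been made explicit. First I would rewrite the $(i,j)$ entry of $M_k(\beta_0,\ldots,\beta_{k-1};Y)$ in the uniform form $a_{\beta_i+\gamma_j}$, where $\gamma_1<\cdots<\gamma_k$ are the column shifts read left to right; from (\ref{def:M-k}), padding $Y=(t_1,\ldots,t_s)$ to length $k$ by zeros, these shifts are exactly the increasing arrangement of $\{\,k-m+t_m:1\le m\le k\,\}$. The crucial structural point is that $T_lY=(t_1+l,\ldots,t_s+l,l,\ldots,l)$ adds $l$ to every part, including the padded zeros, so the column shifts of $M_k(\beta_0,\ldots,\beta_{k-1};T_lY)$ are $\gamma_1+l<\cdots<\gamma_k+l$. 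Hence, column by column, $M_k(\beta_0,\ldots,\beta_{k-1};T_lY)$ is obtained from $M_k(\beta_0,\ldots,\beta_{k-1};Y)$ by replacing each entry $a_{\beta_i+\gamma_j}$ with $a_{\beta_i+\gamma_j+l}$, leaving the column order intact.

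Next I would combine this with Definition \ref{defn of translation operator}. Writing $C_{ij}$ for the $(i,j)$ cofactor of $M_k(\beta_0,\ldots,\beta_{k-1};Y)$, the operator is $T_l D_k(\beta_0,\ldots,\beta_{k-1};Y)=\sum_{i,j}[M_k(\beta_0,\ldots,\beta_{k-1};T_lY)]_{ij}\,C_{ij}=\sum_{i,j}a_{\beta_i+\gamma_j+l}\,C_{ij}$, using the previous paragraph. On the other hand, for each fixed $i$ the determinant $D_k(\beta_0,\ldots,\beta_i+l,\ldots,\beta_{k-1};Y)$ differs from $D_k(\beta_0,\ldots,\beta_{k-1};Y)$ only in its $i$-th row, whose entries become $a_{\beta_i+l+\gamma_j}$; since the cofactors $C_{ij}$ involve only the rows other than $i$, they are unchanged, and a Laplace expansion of this determinant along row $i$ gives $\sum_j a_{\beta_i+l+\gamma_j}\,C_{ij}$. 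Summing over $i=0,\ldots,k-1$ reproduces exactly the double sum for $T_l D_k$, which is the assertion.

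There is no genuine obstacle here beyond bookkeeping; the one step deserving care is the verification that $T_l$ shifts every column index by the same amount $l$, including those coming from the padded zero parts, so that the matrix entering the inner product is a faithful column-by-column $l$-shift of $M_k(\beta_0,\ldots,\beta_{k-1};Y)$ while its cofactor matrix is left untouched. Granting this, the identity is immediate from the multilinearity of the determinant in its rows. As a consistency check, when $l=0$ one has $T_0Y=Y$, so $T_0 D_k=\sum_{i,j}a_{\beta_i+\gamma_j}\,C_{ij}=k\,D_k$, which matches the right-hand side since each of its $k$ summands equals $D_k$.
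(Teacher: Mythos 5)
Your argument is correct and is exactly the intended one: the paper states Proposition \ref{prop1} "by definition" with no written proof, and your row-by-row Laplace expansion of the inner product with the cofactor matrix (after noting that $T_lY$ shifts every column index, including the padded zero parts, by the same $l$) is precisely the reasoning the paper leaves implicit, matching the style of its proof of Proposition \ref{prop4}.
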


\begin{prop}
\label{prop2}
Let $\alpha \in \mathbb C$, $h\geq 0$ be an integer, and let $l_{1},\ldots,l_{k}$ be integers with $l_{1}\geq l_{2}\geq \dots \geq l_{k}\geq 0$.
Suppose $\beta_i = \alpha+i+l_{k-i}$ for $i=0,1,\ldots,k-1$. 
If any two of $\{\beta_{0}, \cdots, \beta_{i-1}, \beta_{i}+h, \cdots, \beta_{k-1}\}$ are distinct, we then can reformulate it to be a new set $\{\beta_0, \cdots,\beta_{i-1}, \tilde{\beta}_i, \cdots, \tilde{\beta}_{k-1}\}$ with $\tilde{\beta}_{j} = \alpha + j + \tilde{l}_{k-j}$ for $j=i,i+1,\ldots,k-1$, where
\be
\tilde{l}_{1} \geq \tilde{l}_{2} \geq \cdots \geq \tilde{l}_{k-i}\geq 1,
\quad \sum_{j=1}^{k-i} \tilde{l}_{j} = h+ \sum_{j=1}^{k-i} l_{j}.
\ee
\end{prop}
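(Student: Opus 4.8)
The plan is to strip the statement down to a purely combinatorial claim about sorting integer indices, since $\alpha$ enters only as a common additive shift and is irrelevant to the reordering. Setting $m_j := j + l_{k-j}$ so that $\beta_j = \alpha + m_j$, the hypothesis $l_1 \geq \cdots \geq l_k \geq 0$ gives $m_{j+1} - m_j = 1 + (l_{k-j-1} - l_{k-j}) \geq 1$, so $m_0 < m_1 < \cdots < m_{k-1}$ is a strictly increasing sequence of integers. The operation in the statement replaces $m_i$ by $m_i + h$. First I would observe that this leaves the first $i$ entries in place: since $m_{i-1} < m_i \leq m_i + h$ and $m_{i-1} < m_{i+1} < \cdots < m_{k-1}$, each of $m_0, \ldots, m_{i-1}$ stays strictly below every element of the block $\{m_i + h, m_{i+1}, \ldots, m_{k-1}\}$. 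Hence only this trailing block of $k-i$ entries has to be reordered, and the distinctness hypothesis guarantees these $k-i$ integers are pairwise distinct, hence sortable into a strictly increasing list.

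Second, I would sort the block into $\tilde m_i < \tilde m_{i+1} < \cdots < \tilde m_{k-1}$ and set $\tilde\beta_j := \alpha + \tilde m_j$, $\tilde l_{k-j} := \tilde m_j - j$ for $j = i, \ldots, k-1$. Two of the three required properties are then immediate. The monotonicity $\tilde l_1 \geq \cdots \geq \tilde l_{k-i}$ follows because the $\tilde m_j$ are strictly increasing integers, so $\tilde m_{j+1} \geq \tilde m_j + 1$, giving $\tilde l_{k-j-1} = \tilde m_{j+1} - (j+1) \geq \tilde m_j - j = \tilde l_{k-j}$. The sum identity follows because sorting preserves the sum of a finite set, so $\sum_{j=i}^{k-1} \tilde m_j = h + \sum_{j=i}^{k-1} m_j$; subtracting $j$ from each side and cancelling the $\sum_j j$ terms yields $\sum_{r=1}^{k-i} \tilde l_r = h + \sum_{r=1}^{k-i} l_r$.

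The one genuinely substantive point is the positivity $\tilde l_{k-i} \geq 1$, equivalently $\tilde m_i = \min\{m_i + h, m_{i+1}, \ldots, m_{k-1}\} \geq i+1$. Here I would note that $m_j = j + l_{k-j} \geq j \geq i+1$ for every $j \geq i+1$, while $m_i + h = i + l_{k-i} + h \geq i+1$ since the bump is genuine (the relevant range in every application is $h \geq 1$; when $h=0$ the set is unchanged and there is nothing to prove). Thus every element of the block is at least $i+1$, so its minimum $\tilde m_i$ is as well. This is precisely the step requiring care: one must rule out the possibility that a reordered entry lands at or below the height $i$ of the fixed prefix, and it is exactly where the constraints $l_k \geq 0$ and $h \geq 1$ are used.

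Finally, I would record that the reordering is realized by a permutation of the corresponding columns of $M_k$, which is all the proposition asserts; the attendant sign $(\pm 1)$ is not part of this statement and is handled separately in the accompanying remark. I do not anticipate a serious obstacle here: the argument is elementary bookkeeping about sorting a finite set of integers, and the only delicate verification is the positivity $\tilde l_{k-i} \geq 1$ discussed above.
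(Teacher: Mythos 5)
Your proof is correct and takes essentially the same approach as the paper: both arguments amount to re-sorting the tail block $\{\beta_i+h,\beta_{i+1},\ldots,\beta_{k-1}\}$ of distinct integers into increasing order, the paper realizing the sort as a cascade of adjacent swaps ($\tilde{\beta}_i=\beta_{i+1}$, $\tilde{\beta}_{i+1}=\beta_i+h$, and so on) while you sort the block in one step. Your version is in fact slightly more complete, since you explicitly verify the monotonicity, the sum identity, and the positivity $\tilde{l}_{k-i}\geq 1$ (correctly noting that the last of these uses $l_k\geq 0$ together with $h\geq 1$), whereas the paper's proof only describes the reordering procedure.
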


\begin{proof}
Note that $\beta_i+h= \alpha+i+l_{k-i}+h$. So if $l_{k-i}+h\leq l_{k-i-1}$, then we do not need to rewrite $\beta_i$. We now assume that $l_{k-i}+h\geq l_{k-i-1} + 1$. Note that $\beta_{i}+h\neq\beta_{i+1}$, so $l_{k-i}+h>l_{k-i-1} + 1$, we set $\tilde{\beta}_i=\alpha+i+l_{k-i-1} + 1$ and $\tilde{\beta}_{i+1}=\alpha+i + 1+l_{k-i}-1+h$. 
Namely, $\tilde{l}_{i}=l_{k-i-1}+1$ and $\tilde{l}_{i+1}=l_{k-i}-1+h$. 
We continue this process by considering if $l_{k-i} - 1+h \leq l_{k-i-2}$ or not, and set appropriate $\tilde{\beta}_{i+1}, \tilde{\beta}_{i+2}$.  This process will terminate after a finite number of steps, since $k$ is a finite number. 
\end{proof}

\begin{prop}
\label{prop4}
Let $Y=(1)$ be a Young diagram with 1 box, then
\be
\label{eq2}
D_k(\beta_0,\ldots,\beta_{k-1};Y) = \sum_{i=0}^{k-1}
D_k(\beta_0,\ldots,\beta_{i-1},\beta_i+1, \beta_{i+1}, \ldots,\beta_{k-1};\emptyset).
\ee
\end{prop}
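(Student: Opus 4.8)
The plan is to recognize the right-hand side as a single application of the translation operator $T_1$ and then evaluate that operator by a Laplace expansion along columns. First I would invoke Proposition \ref{prop1} with $l=1$ and $Y=\emptyset$, which rewrites the right-hand side of (\ref{eq2}) as $T_1 D_k(\beta_0,\ldots,\beta_{k-1};\emptyset)$. The whole statement therefore reduces to the single identity
\[
T_1 D_k(\beta_0,\ldots,\beta_{k-1};\emptyset) = D_k(\beta_0,\ldots,\beta_{k-1};(1)).
\]
To set notation, write $M := M_k(\beta_0,\ldots,\beta_{k-1};\emptyset)$, let its columns be $u_0,\ldots,u_{k-1}$ with $u_c=(a_{\beta_0+c},\ldots,a_{\beta_{k-1}+c})^T$, and let $u_k:=(a_{\beta_0+k},\ldots,a_{\beta_{k-1}+k})^T$ denote the one further shifted column. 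By (\ref{def:M-k}), the matrix $M_k(\beta_0,\ldots,\beta_{k-1};(1))$ has columns $u_0,\ldots,u_{k-2},u_k$, so its determinant is the left-hand side.

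Next I would unpack the definition of $T_1$ from Definition \ref{defn of translation operator}. Since $T_1\emptyset=(1,\ldots,1)\in\mathbb{N}^k$, a direct read of (\ref{def:M-k}) shows that $M_k(\beta_0,\ldots,\beta_{k-1};T_1\emptyset)$ has columns $u_1,u_2,\ldots,u_k$; that is, its $c$-th column is $u_{c+1}$. By definition $T_1 D_k(\beta_0,\ldots,\beta_{k-1};\emptyset)$ is the standard inner product of this matrix with the cofactor matrix $\operatorname{cof}(M)$, whose $(r,c)$ entry is $(-1)^{r+c}$ times the minor of $M$ obtained by deleting row $r$ and column $c$. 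Grouping the inner product by column index $c$ gives
\[
T_1 D_k(\beta_0,\ldots,\beta_{k-1};\emptyset) = \sum_{c=0}^{k-1} \sum_{r=0}^{k-1} (u_{c+1})_r\, \operatorname{cof}(M)_{r,c},
\]
and the inner sum is, by Laplace expansion along column $c$, exactly the determinant of the matrix obtained from $M$ by replacing its $c$-th column with $u_{c+1}$ (this is legitimate because the cofactors along column $c$ do not involve the entries of column $c$).

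Finally I would exploit the vanishing of determinants with a repeated column. For each $c\le k-2$, replacing column $c$ of $M$ by $u_{c+1}$ produces a matrix whose columns $c$ and $c+1$ both equal $u_{c+1}$, so that term is zero. Only the term $c=k-1$ survives, and there the replacement turns $M$ into the matrix with columns $u_0,\ldots,u_{k-2},u_k$, whose determinant is $D_k(\beta_0,\ldots,\beta_{k-1};(1))$. This establishes the reduced identity and hence (\ref{eq2}). I do not expect a genuine obstacle here, as the argument is an elementary linear-algebra computation; the one point requiring care is the column bookkeeping, namely that $T_1$ shifts \emph{every} column of $M$ whereas the shape $(1)$ shifts only the last column, and it is precisely the repeated-column cancellation that reconciles the two. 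A fully self-contained variant avoids Proposition \ref{prop1} altogether by expanding each summand $D_k(\ldots,\beta_i+1,\ldots;\emptyset)$ on the right along its shifted $i$-th row, regrouping the resulting double sum $\sum_{i,c} a_{\beta_i+c+1}\operatorname{cof}(M)_{i,c}$, and then re-expanding by columns as above; I would mention this as the more direct route but present the $T_1$ formulation as the cleaner one.
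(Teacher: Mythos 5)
Your proof is correct and follows essentially the same route as the paper: both arguments identify the two sides of (\ref{eq2}) as the row-wise and column-wise Laplace expansions of the inner product of $M_k(\beta_0,\ldots,\beta_{k-1};T_1\emptyset)$ with the cofactor matrix of $M_k(\beta_0,\ldots,\beta_{k-1};\emptyset)$, with the repeated-column cancellation collapsing the column expansion to the single shape-$(1)$ determinant. Your write-up merely makes explicit the cancellation step and the appeal to Proposition \ref{prop1} that the paper leaves implicit.
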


\begin{proof}
Let $F_{i,j}$ be the $(i,j)$-cofactor of $M_k(\beta_0,\ldots,\beta_{k-1};\emptyset)$. We consider the inner product between $M_{k}(\beta_{0},\ldots,\beta_{k-1};T_{1}\emptyset)$ and $(F_{i,j})_{i,j=1,\ldots,k}$. By the Laplace expansion along the $j$-th column, we obtain the left-hand side of (\ref{eq2}). Again, by the Laplace expansion along the $i$-th row, we obtain the right-hand side of (\ref{eq2}).
\end{proof}

Recall that a {\em hook diagram} of length $l$ is a Young diagram of the form $Y_{l,j}:=(l-j+1,{1,1,\ldots,1})$ with $(j-1)$ 1's. The following result follows from Propositions \ref{prop1} and \ref{prop2}.

\begin{prop}
\label{513prop5}
Let $s\geq 0$ and $Y=(t_1,\ldots,t_s)$ be a Young diagram. Let $l,k\geq 1, 1\leq j\leq l$. Let $Y_{l,j}$ be a hook diagram. Then
\[T_{l}H_{k,Y}=\sum_{j=1}^{l}(-1)^{j-1}H_{k,\{Y_{l,j};Y\}}.\]
Here when $s>k$ or $j>k$, following our previous convention $H_{k,Y}, H_{k,\{Y_{l,j};Y\}}$ are zero.
\end{prop}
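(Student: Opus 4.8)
The plan is to reduce the statement to Proposition \ref{prop1}, which already rewrites $T_l$ acting on a determinant as a sum of determinants each having a single advanced row index, and then to recognize every surviving summand as a hook-shifted Hankel determinant $H_{k,\{Y_{l,j};Y\}}$. First I would write $H_{k,Y}=D_k(\alpha,\alpha+1,\ldots,\alpha+k-1;Y)$ as in Definition \ref{definition of shifted Hankel}, and apply Proposition \ref{prop1} with $\beta_i=\alpha+i$ to obtain
\[
T_l H_{k,Y}=\sum_{i=0}^{k-1} D_k(\alpha,\ldots,\alpha+i-1,\alpha+i+l,\alpha+i+1,\ldots,\alpha+k-1;Y).
\]
Denoting the $i$-th summand by $D_k^{(i)}$, the whole problem reduces to identifying each $D_k^{(i)}$, in which only the $i$-th row index has been pushed forward by $l$.

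Next I would examine the multiset of row indices of $D_k^{(i)}$, namely $\{\alpha,\ldots,\alpha+k-1\}\setminus\{\alpha+i\}\cup\{\alpha+i+l\}$, and split into two cases. If $i+l\leq k-1$, then $\alpha+i+l$ already appears as the $(i+l)$-th row index, so $D_k^{(i)}$ has two identical rows and vanishes; these are exactly the terms with $k-i>l$. If instead $i+l\geq k$, then $\alpha+i+l$ exceeds every other index, and sorting the rows into increasing order amounts to moving this single entry from position $i$ to the last position, a cyclic permutation of the $k-i$ rows indexed $i,i+1,\ldots,k-1$, which contributes the sign $(-1)^{k-i-1}$.

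I would then verify that the reordered index sequence $\alpha,\ldots,\alpha+i-1,\alpha+i+1,\ldots,\alpha+k-1,\alpha+i+l$ coincides exactly with the row data defining $H_{k,\{Y_{l,j};Y\}}$ for $j=k-i$; this is precisely the point at which Proposition \ref{prop2}, specialized to $X=\emptyset$ and $h=l$, produces the hook partition $(l-j+1,1,\ldots,1)$ with $j-1$ ones. Consequently $D_k^{(i)}=(-1)^{j-1}H_{k,\{Y_{l,j};Y\}}$ with $j=k-i$. Finally I would re-index by $j=k-i$, so that $i=k-1,\ldots,0$ corresponds to $j=1,\ldots,k$, collect the nonzero contributions, and reconcile the range with the claimed $\sum_{j=1}^{l}$ using the stated conventions: when $l\leq k$ the vanishing terms are exactly those with $j>l$, while when $l>k$ the extra terms $j=k+1,\ldots,l$ are zero because $Y_{l,j}$ then has more than $k$ rows.

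The main obstacle I anticipate is not conceptual but careful bookkeeping: fixing the permutation sign $(-1)^{j-1}$ unambiguously, confirming that the sorted index sequence matches the definition of $H_{k,\{Y_{l,j};Y\}}$ on the nose, and organizing a single clean case split that simultaneously discards the degenerate repeated-row terms and aligns the summation limits with the conventions for $j>k$ (and the trivial case $s>k$, where both sides vanish).
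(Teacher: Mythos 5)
Your proof is correct and follows exactly the route the paper intends: the paper states only that the proposition ``follows from Propositions \ref{prop1} and \ref{prop2},'' and your argument is the detailed working-out of that one-line proof, with the repeated-row vanishing, the cyclic-permutation sign $(-1)^{j-1}$, and the identification of the sorted row indices with the hook diagram $Y_{l,j}$ all supplied correctly. The bookkeeping you flag as the main risk is handled consistently, including the extension of the sum to $j\le l$ via the convention that $H_{k,\{Y_{l,j};Y\}}=0$ for $j>k$.
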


\begin{prop}\label{513prop6}
Let $l,k\geq 1, 1\leq j\leq l$, and let $Y_{l,j}$ be a hook diagram. Then
\[S_{l} H_k=\sum_{j=1}^{l} (-1)^{j-1} (2k-2j+l+\alpha) H_{k,Y_{l,j}}.\]
\end{prop}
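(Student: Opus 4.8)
The plan is to unfold the definition of $S_l$ and split the weight that appears into a ``column'' part and a ``row'' part, each of which is matched to the hook determinants $H_{k,Y_{l,j}}$ by the Laplace-expansion-and-reordering mechanism already used for $T_l$. Write $M_k=M_k(\alpha,\alpha+1,\ldots,\alpha+k-1;\emptyset)=(a_{\alpha+i+c})_{i,c=0}^{k-1}$ and let $F_{i,c}$ denote its $(i,c)$-cofactor. By Definition~\ref{defn of translation operator} we have $T_l\emptyset=(l,\ldots,l)\in\mathbb N^k$, so the $(i,c)$-entry of $\widetilde{M}_k(\alpha,\ldots,\alpha+k-1;T_l\emptyset)$ is $(\alpha+i+l+c)\,a_{\alpha+i+l+c}$, and therefore
\[
S_l H_k=\sum_{i,c=0}^{k-1}(\alpha+i+l+c)\,a_{\alpha+i+l+c}\,F_{i,c}.
\]
I would then write $(\alpha+i+l+c)=(\alpha+l+c)+i$, separating a weight that depends only on the column index from one that depends only on the row index.

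For the column part $\sum_c(\alpha+l+c)\big(\sum_i a_{\alpha+i+l+c}F_{i,c}\big)$, I would use that, by Laplace expansion along column $c$ (as in the proof of Proposition~\ref{prop4}), $\sum_i a_{\alpha+i+l+c}F_{i,c}$ is the determinant of $M_k$ with its $c$-th column replaced by the column of shift $l+c$. When $l+c\le k-1$ this determinant has two equal columns and vanishes; when $l+c\ge k$ the inserted shift is the unique largest one, and reordering the columns into increasing order (Proposition~\ref{prop2}) yields the sign $(-1)^{k-1-c}$, so the determinant equals $(-1)^{j-1}H_{k,Y_{l,j}}$ with $j=k-c$. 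This is exactly the per-column refinement that sums to Proposition~\ref{513prop5}. Since $\alpha+l+c=\alpha+l+k-j$, the column part contributes $\sum_{j=1}^l(-1)^{j-1}(\alpha+l+k-j)H_{k,Y_{l,j}}$.

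The row part $\sum_i i\big(\sum_c a_{\alpha+i+l+c}F_{i,c}\big)$ is where I expect the main difficulty, since the row weight $i$ does not pass through a column expansion. My way around this is to exploit the symmetry of the Hankel matrix: $M_k$ is symmetric, hence $F_{i,c}=F_{c,i}$, and so $\sum_c a_{\alpha+i+l+c}F_{i,c}=\sum_{c}a_{\alpha+c+l+i}F_{c,i}$ is the determinant of $M_k$ with its $i$-th \emph{column} replaced by the column of shift $i+l$. The very same vanishing-and-reordering argument then gives $(-1)^{k-1-i}H_{k,Y_{l,k-i}}$, so, setting $j=k-i$, the row part equals $\sum_{j=1}^l(-1)^{j-1}(k-j)H_{k,Y_{l,j}}$.

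Adding the two contributions produces the coefficient $(\alpha+l+k-j)+(k-j)=2k-2j+l+\alpha$ in front of each $(-1)^{j-1}H_{k,Y_{l,j}}$, which is the claimed identity. Throughout, terms with $j>k$ (equivalently a negative column index $c=k-j$) are simply absent on the left and vanish on the right by the convention~(\ref{previous convention}), so no separate treatment of the range is needed. The only delicate points are the sign bookkeeping under reordering and the symmetry identity $F_{i,c}=F_{c,i}$; to fix the conventions I would first check the cases $k=1$ and $k=2$ by hand before writing the general argument.
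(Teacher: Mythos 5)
Your proposal is correct and follows essentially the same route as the paper: split the weight $\alpha+i+j+l$ into a column part $\alpha+j+l$ and a row part $i$, evaluate the column part by Laplace expansion (vanishing when the shifted column duplicates an existing one, sign $(-1)^{k-1-c}$ from reordering, yielding $H_{k,Y_{l,k-c}}$), and reduce the row part to the same computation via the symmetry of the Hankel matrix. The only cosmetic difference is that you invoke $F_{i,c}=F_{c,i}$ explicitly where the paper expands directly along rows, and your handling of the range $j>k$ via the convention~(\ref{previous convention}) matches Remark~\ref{rem:627}.
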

\begin{proof}
Let $F_{i,j}$ be the $(i,j)$-cofactor of $(a_{\alpha+i+j})_{i,j=0,\ldots,k-1}$.
By the definition of $S_{l}$, for the case $l\leq k$,
\beas
S_{l}H_{k}&=&\Big((\alpha+i+j+l)a_{\alpha+i+j+l}\Big)_{i,j=0,1,\ldots,k-1}\cdot \Big(F_{i,j}\Big)_{i,j=0,1,\ldots,k-1}\\
&=&\Big((\alpha+j+l)a_{\alpha+i+j+l}\Big)_{i,j=0,1,\ldots,k-1}\cdot \Big(F_{i,j}\Big)_{i,j=0,1,\ldots,k-1}\\
&&+\,
\Big(ia_{\alpha+i+j+l}\Big)_{i,j=0,1,\ldots,k-1}\cdot \Big(F_{i,j}\Big)_{i,j=0,1,\ldots,k-1}\\
&=&\sum_{j=k-l}^{k-1}(\alpha+j+l)(-1)^{k-1-j}H_{k,Y_{l,k-j}}+\sum_{i=k-l}^{k-1}i(-1)^{k-1-i}H_{k,Y_{l,k-i}}\\
&=&\sum_{j=1}^{l} (-1)^{j-1} (\alpha+2k-2j+l) H_{k,Y_{l,j}}.
\eeas
Using a similar argument to the above, we can prove the case for $l>k$.
\end{proof}

\begin{rem}
\label{rem:627}
{\rm 
In the above two propositions, without considering our convention that $H_{k,Y}, H_{k,\{Y_{l,j};Y\}}$ are zero, then the equalities also hold and the summation over $j$ is from 1 to $\min(l,k)$.
}
\end{rem}

\section{Theorems and Propositions on Hankel determinants shifted by Young diagrams}

In this section, we prove some results for determinants of Hankel matrices whose entries are shifted by Young diagrams that we will make extensive use of later on in the paper.

\begin{thm} \label{thm1onHankel deteminants}
Let $l,k\geq 1, i\geq s\geq 1$. Let $Y_{i,s}$ be a hook diagram. Then for $j=2,\ldots,l$,
\bea
&& \sum_{h=1}^{j-1} (-1)^h T_{h} D_{k}(\beta_0,\ldots,\beta_{k-1};Y_{l-h,j-h}) \nonumber \\
&=& \sum_{h=1}^{j-1} (-1)^h D_{k}(\beta_0,\ldots,\beta_{k-1};Y_{l,j-h})
-(j-1) D_{k}(\beta_0,\ldots,\beta_{k-1};Y_{l,j})\label{513formula1}.
\label{eq:thm2}
\eea
\end{thm}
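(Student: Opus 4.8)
The plan is to expand each $T_h D_k(\beta_0,\ldots,\beta_{k-1};Y_{l-h,j-h})$ as a signed sum of shifted Hankel determinants, isolate the two hook families that make up the right‑hand side, and then show that all remaining (non‑hook) determinants cancel in the alternating sum over $h$.

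First I would record a column interpretation of $T_h$. By Definition \ref{defn of translation operator}, $T_h D_k(\beta_0,\ldots,\beta_{k-1};Y)$ is the standard inner product of $M_k(\beta_0,\ldots,\beta_{k-1};T_hY)$ with the cofactor matrix of $M_k(\beta_0,\ldots,\beta_{k-1};Y)$. Writing $Y$ as the length‑$k$ sequence $\lambda=(t_1,\ldots,t_s,0,\ldots,0)$, one checks from (\ref{def:M-k}) that the $c$‑th column of $M_k(\beta_0,\ldots,\beta_{k-1};Y)$ carries shift $k-c+\lambda_c$; since $T_h$ adds $h$ to every part, the $c$‑th column of $M_k(\beta_0,\ldots,\beta_{k-1};T_hY)$ is precisely that same column with its shift raised by $h$. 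Using the Laplace expansion from the proof of Proposition \ref{prop4}, the inner product becomes
\[
T_h D_k(\beta_0,\ldots,\beta_{k-1};Y)=\sum_{c=1}^{k}\det\big(M^{(c)}\big),
\]
where $M^{(c)}$ is $M_k(\beta_0,\ldots,\beta_{k-1};Y)$ with the shift of column $c$ increased by $h$. Each such determinant is either $0$ (if the raised shift collides with another column) or, after sorting columns into increasing order by Proposition \ref{prop2}, equals $\pm D_k(\beta_0,\ldots,\beta_{k-1};Y')$ for a diagram $Y'$ of length $|Y|+h$.

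Next I would apply this with $Y=Y_{l-h,j-h}$, whose column shifts form the set $\{0,1,\ldots,k-1\}\setminus\{k-j+h\}\cup\{k+l-j\}$. Classifying which single column may be raised by $h$ without collision splits the $k$ terms into three types: (i) raising the large shift $k+l-j$ gives $+D_k(\beta_0,\ldots,\beta_{k-1};Y_{l,j-h})$ (no column is crossed, so the sign is $+1$); (ii) raising the column of shift $k-j$ into the vacant slot $k-j+h$ gives $(-1)^{h-1}D_k(\beta_0,\ldots,\beta_{k-1};Y_{l,j})$ (it crosses the $h-1$ occupied shifts $k-j+1,\ldots,k-j+h-1$); and (iii) raising a column of shift $\sigma\in\{k-h,\ldots,k-1\}\setminus\{k-j+h\}$ with $\sigma+h\neq k+l-j$ above the level $k$ produces a determinant with two distinct large shifts $\{\sigma+h,\,k+l-j\}$, i.e.\ a genuine non‑hook diagram. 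Multiplying by $(-1)^h$ and summing over $h=1,\ldots,j-1$, type (i) assembles into $\sum_h(-1)^hD_k(\beta_0,\ldots,\beta_{k-1};Y_{l,j-h})$ and type (ii) into $\sum_{h=1}^{j-1}(-1)^h(-1)^{h-1}D_k(\beta_0,\ldots,\beta_{k-1};Y_{l,j})=-(j-1)D_k(\beta_0,\ldots,\beta_{k-1};Y_{l,j})$, which are exactly the two right‑hand terms.

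The heart of the argument, and the step I expect to be the main obstacle, is showing the type (iii) non‑hook terms cancel. I would record each such term by the pair $(h,\sigma)$; its output diagram has exactly the two missing small shifts $\{k-j+h,\sigma\}$ and the extra large shift $L=\sigma+h$. I would then define the involution that swaps the roles of the structural gap $k-j+h$ and the moved column $\sigma$, sending $(h,\sigma)$ to $(h',\sigma')$ with $\sigma'=k-j+h$ and $h'=\sigma-(k-j)$. One verifies $1\le h'\le j-1$, that $\sigma'$ is an admissible raised column with $\sigma'+h'=L$, and that the gap sets $\{k-j+h',\sigma'\}=\{\sigma,k-j+h\}$ agree, so both terms feed the same diagram $Y'$; a fixed point would force $\sigma=k-j+h$, which is excluded since that slot is vacant, so the involution is free. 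Finally I would compare total signs. Letting $N(\sigma)$ be the number of crossings, the common final diagram has its only missing small shifts at $\sigma$ and $\sigma'$ and no large shift below $k$, so (taking $\sigma'<\sigma$) one gets $N(\sigma)-N(\sigma')=\sigma'-\sigma+1$ while $h-h'=\sigma'-\sigma$; hence the parities $h+N(\sigma)$ and $h'+N(\sigma')$ differ by $2(\sigma'-\sigma)+1$, an odd number. The paired terms therefore carry opposite signs and cancel, leaving precisely the right‑hand side. Throughout, whenever a resulting diagram has more than $k$ rows its determinant vanishes by the convention (\ref{previous convention}), so no boundary case needs separate treatment.
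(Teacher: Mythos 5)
Your proof is correct and follows essentially the same route as the paper's: both expand each $T_h D_k(\beta_0,\ldots,\beta_{k-1};Y_{l-h,j-h})$ column by column via the Laplace/cofactor interpretation, read off the $D_k(\cdots;Y_{l,j-h})$ and $(-1)^{h-1}D_k(\cdots;Y_{l,j})$ contributions, and cancel the remaining non-hook determinants in the alternating sum over $h$. Your sign-reversing involution $(h,\sigma)\mapsto(h',\sigma')$ with $\sigma'=k-j+h$, $h'=\sigma-(k-j)$ is exactly the paper's relation (\ref{513formula6}) between the auxiliary quantities $x_{h,q}$ and $y_{h,q}$, merely repackaged so that the even/odd-$j$ split and the separate treatment of $j\geq k+1$ in the paper's argument are handled uniformly.
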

\begin{proof}
We first show that (\ref{513formula1}) holds when $j=2,\ldots,\min(l,k)$.
Let $Z_{h,q}$ and $W_{h,q}$ be two $k$-tuples satisfying that for any $0\leq i\leq (k-1)$,
\[
W_{h,q}(i)=\begin{cases}
   h+1& \text{if } i=q;\\
 1 & \text{if } k-j+h\leq i\leq k-2 \text{~and~} i\neq q;\\
  l-j+1& \text{if } i=k-1;\\
  0 & \text{otherwise}.
\end{cases}
\]
and
\[Z_{h,q}(i)=
\begin{cases}
  h & \text{if } i=q;\\
  1 & \text{if } k-j+h\leq i\leq k-2 \text{~and~} i\neq q;\\
  l-j+1& \text{if } i=k-1;\\
  0 & \text{otherwise},
\end{cases}
\]
For convenience, we use $x_{h,q}$ and $y_{h,q}$ to denote 
$D_{k}(\beta_{0},\ldots,\beta_{k-1};W_{h,q})$ and $D_{k}(\beta_{0},\ldots,\beta_{k-1};Z_{h,q})$, respectively. It is not hard to check that, for any $q$ with $k-j+h \leq q \leq k-2$, 
\bea\label{513formula6}
x_{h,q}=(-1)^{q-k+j-h}y_{j+q-k+1,k-(j-h)}.
\eea
By the definition of $T_{h}$, we have 
\beas
T_{h}D_{k}(\beta_{0},\ldots,\beta_{k-1};Y_{l-h,j-h})
&=& (-1)^{h-1} D_{k}(\beta_0,\ldots,\beta_{k-1};Y_{l,j}) + \sum_{q=k-h}^{k-1-(j-h)}y_{h,q}\\
&&+\sum_{q=\max(k-j+h,k-h-1)}^{k-2}x_{h,q}+D_{k}(\beta_0,\ldots,\beta_{k-1};Y_{l,j-h}).
\eeas
If $j$ is an even integer, then
\bea
&& \sum_{h=1}^{j-1} (-1)^h T_{h}D_{k}(\beta_0,\ldots,\beta_{k-1};Y_{l-h,j-h})\nonumber\\
&&- \sum_{h=1}^{j-1} (-1)^h D_{k}(\beta_0,\ldots,\beta_{k-1};Y_{l,j-h}) + (j-1) D_{k}(\beta_0,\ldots,\beta_{k-1};Y_{l,j}) \nonumber\\
&=& \sum_{h=1}^{j/2-1} (-1)^h \sum_{q=k-h-1}^{k-2} x_{h,q} + \sum_{h=j/2+1}^{j-1} (-1)^h \sum_{q=k-h}^{k-1-(j-h)} y_{h,q}
+\sum_{h=j/2}^{j-1} (-1)^h \sum_{q=k-(j-h)}^{k-2} x_{h,q} \label{513formula3}.
\eea
By (\ref{513formula6}),  the above formula
\bea
&=& \sum_{h=1}^{j/2-1} (-1)^{j} \sum_{q=k-h-1}^{k-2} (-1)^{q-k} y_{j+q-k+1, k-j+h} + \sum_{h=j/2+1}^{j-1} (-1)^h \sum_{q=k-h}^{k-1-(j-h)} y_{h,q} \nonumber \\
&& +\sum_{h=j/2}^{j-1} (-1)^{j} \sum_{q=k-(j-h)}^{k-2} (-1)^{q-k} y_{j+q-k+1,k-j+h} .
\label{513afternoon1}
\eea
By changing the variables and exchanging the order of summation, 
\beas
(\ref{513afternoon1})&=&
\sum_{h=1}^{j/2-1} \sum_{w=j-h}^{j-1} (-1)^{w-1} y_{w, k-j+h} + \sum_{h=j/2+1}^{j-1} (-1)^h \sum_{q=k-h}^{k-1-(j-h)} y_{h,q}\\
&&+\sum_{h=j/2}^{j-1} \sum_{w=h+1}^{j-1} (-1)^{w-1} y_{w,k-j+h}\\
&=&\sum_{w=j/2+1}^{j-1} \sum_{s=k-w}^{k-j/2-1} (-1)^{w-1} y_{w, s}+\sum_{h=j/2+1}^{j-1} (-1)^h \sum_{q=k-h}^{k-1-(j-h)} y_{h,q}\\
&&+\sum_{w=j/2+1}^{j-1}\sum_{h=k-j/2}^{k-j+w-1} (-1)^{w-1} y_{w,s}=0.
\eeas
By (\ref{513formula3}), we have that (\ref{eq:thm2}) holds for even integers $j=2,\ldots,\min(l,k)$. A similar argument leads to the verification of (\ref{eq:thm2}) for odd integers $j=2,\ldots,\min(l,k)$.

In the following, we shall establish (\ref{eq:thm2}) when $l\geq k+1$ and $j=k+1,\ldots,l$. The argument is indeed very similar. According to our previous convention (\ref{previous convention}), (\ref{eq:thm2}) is equivalent to the following equality:
\bea
\sum_{h=j-k}^{j-1} (-1)^h T_{h} D_{k}(\beta_0,\ldots,\beta_{k-1};Y_{l-h,j-h}) 
= \sum_{h=j-k}^{j-1} (-1)^h D_{k}(\beta_0,\ldots,\beta_{k-1};Y_{l,j-h})
\label{514formula4}.
\eea
Let $\widetilde{W}_{h,q}$ and $\widetilde{Z}_{h,q}$ be two $k$-tuples satisfying that for any $i$ with $0\leq i\leq (k-1)$,
\[
\widetilde{W}_{h,q}(i)=\begin{cases}
   h_{0}+h+1& \text{if } i=q;\\
 1 & \text{if } h-1\leq i\leq k-2 \text{~and~} i\neq q\\
  l-h_{0}-k& \text{if } i=k-1;\\
  0 & \text{otherwise}.
\end{cases}
\]
and
\[\widetilde{Z}_{h,q}(i)=
\begin{cases}
  h_{0}+h & \text{if } i=q;\\
  1 & \text{if } h-1\leq i\leq k-2 \text{ and } i\neq q;\\
  l-h_{0}-k& \text{if } i=k-1;\\
  0 & \text{otherwise}.
\end{cases}
\]
Denote $\widetilde{x}_{h,q}$ and $\widetilde{y}_{h,q}$ as 
$D_{k}(\beta_{0},\ldots,\beta_{k-1};\widetilde{W}_{h,q})$ and $D_{k}(\beta_{0},\ldots,\beta_{k-1};\widetilde{Z}_{h,q})$, respectively. 
It is not hard to check that for any $q$ with $h-1\leq q \leq k-2$, 
\bea\label{514formula67}
\widetilde{x}_{h,q}=(-1)^{q-h+1}\widetilde{y}_{q+2,h-1}.
\eea
Denote $h_{0}=j-k-1$. Then
\beas
&&T_{h_{0}+h}D_{k}(\beta_{0},\ldots,\beta_{k-1};Y_{l-h_{0}-h,k+1-h})\\
&=& \sum_{q=\max(k-h-h_{0},0)}^{h-2}\widetilde{y}_{h,q}+\sum_{q=\max(k-1-h_{0}-h,h-1)}^{k-2}\widetilde{x}_{h,q}+D_{k}(\beta_0,\ldots,\beta_{k-1};Y_{l,k+1-h}).
\eeas
Using the above equality, 
\beas
&&\sum_{h=1}^{k} (-1)^h T_{h_{0}+h}D_{k}(\beta_0,\ldots,\beta_{k-1};Y_{l-h_{0}-h,k+1-h})-\sum_{h=1}^{k} (-1)^hD_{k}(\beta_0,\ldots,\beta_{k-1};Y_{l,k+1-h})\\
&=&\sum_{h=1}^{k-h_{0}}(-1)^{h}\sum_{q=k-h-h_{0}}^{h-2}\widetilde{y}_{h,q}+\sum_{h=1}^{k-h_{0}}(-1)^{h}\sum_{q=\max(k-1-h_{0}-h,h-1)}^{k-2}\widetilde{x}_{h,q}\\
&& +\, \sum_{h=k-h_{0}+1}^{k}(-1)^{h}\sum_{q=0}^{h-2}\widetilde{y}_{h,q}+\sum_{h=k-h_{0}+1}^{k-1}(-1)^{h}\sum_{q=h-1}^{k-2}\widetilde{x}_{h,q}.
\eeas
If $k-h_{0}=2w$, then by the above and (\ref{514formula67}) with noting that the first term is nontrivial only when $h\geq w+1$,
we have
\beas
&&\sum_{h=1}^{k} (-1)^h T_{h_{0}+h}D_{k}(\beta_0,\ldots,\beta_{k-1};Y_{l-h_{0}-h,k+1-h})-\sum_{h=1}^{k} (-1)^hD_{k}(\beta_0,\ldots,\beta_{k-1};Y_{l,k+1-h})\\
=&&\sum_{h=w+1}^{2w}(-1)^{h}\sum_{q=2w-h}^{h-2}\widetilde{y}_{h,q}+\sum_{h=1}^{w}(-1)^{h}\sum_{q=2w-h-1}^{k-2}\widetilde{x}_{h,q}+\sum_{h=w+1}^{k-1}(-1)^{h}\sum_{q=h-1}^{k-2}\widetilde{x}_{h,q}\\
&&+\sum_{h=2w+1}^{k}(-1)^{h}\sum_{q=0}^{h-2}\widetilde{y}_{h,q}\\
=&&\sum_{h=1}^{w}\sum_{q=2w-h-1}^{k-2} (-1)^{q+1} \widetilde{y}_{q+2,h-1}+\sum_{h=w+1}^{k-1}\sum_{q=h-1}^{k-2} (-1)^{q+1} \widetilde{y}_{q+2,h-1}\\
&&+\sum_{h=w+1}^{2w}(-1)^{h}\sum_{q=2w-h}^{h-2}\widetilde{y}_{h,q}
+\sum_{h=2w+1}^{k}(-1)^{h}\sum_{q=0}^{h-2}\widetilde{y}_{h,q}.
\eeas
By changing the variables and exchanging the order of summation, we obtain that the above is $0$. This also holds if $k-h_{0}$ is an odd integer by a similar argument.
Hence, we have 
\beas
\sum_{h=1}^{k} (-1)^h T_{h_{0}+h}D_{k}(\beta_0,\ldots,\beta_{k-1};Y_{l-h_{0}-h,k+1-h})=\sum_{h=1}^{k} (-1)^hD_{k}(\beta_0,\ldots,\beta_{k-1};Y_{l,k+1-h}).
\eeas
This implies
\beas
\sum_{h=j-k}^{j-1} (-1)^h T_{h}D_{k}(\beta_0,\ldots,\beta_{k-1};Y_{l-h,j-h})&=&(-1)^{j-k-1}\sum_{h=1}^{k} (-1)^hD_{k}(\beta_0,\ldots,\beta_{k-1};Y_{l,k+1-h})\\
&=&\sum_{h=j-k}^{j-1} (-1)^h D_{k}(\beta_0,\ldots,\beta_{k-1};Y_{l,j-h}).
\eeas
The above gives the desired result (\ref{514formula4}).
This completes the proof.
\end{proof}

The proof of the following lemma is straightforward. 

\begin{lem}\label{matrix}
Let $l\geq 1$ be an integer, and let $B^{(l)}$ be as given in
(\ref{definition of B}), then $A^{(l)} := (B^{(l)})^{-1} =(a_{ij}^{(l)})_{i,j=1,..,l}$ is a lower Hessenberg matrix satisfying 
\be
a_{ij}^{(l)} =\begin{cases}
    (-1)^{i-j+1} & j\leq i\leq l-1; \\
    -i & j=i+1; \\
    (-1)^{j-1} &i=l;\\
    0  & j>i+1
\end{cases}
\label{definition of A}
\ee
and  $\det(A)=l!$. 
\end{lem}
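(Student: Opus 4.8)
The plan is to establish two things separately: that $A^{(l)}$ is a genuine two-sided inverse of $B^{(l)}$, and that $\det A^{(l)} = l!$. For the first, since both matrices are square it suffices to verify $A^{(l)}B^{(l)} = I_l$ entrywise, i.e. to show $\sum_{m=1}^{l} a_{im}^{(l)} b_{mj}^{(l)} = \delta_{ij}$. The definitions make both factors supported on short index ranges: $a_{im}^{(l)}$ vanishes for $m > i+1$ (when $i \le l-1$), while for $j < l$ the entry $b_{mj}^{(l)}$ vanishes for $m > j+1$; the last row of $A^{(l)}$ and the last column of $B^{(l)}$ are the two exceptional strips. So each inner product is a short sum, and in every case the sign factor $(-1)^{\pm(i+j)}$ or $(-1)^{\pm i}$ pulls out and the remaining sum over $m$ collapses to a count.

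First I would split into the regime $i,j \le l-1$ and treat $i<j$, $i=j$, $i>j$ in turn; e.g. for $i<j$ all contributing $m$ satisfy $m \le j$, so $b_{mj}^{(l)} = (-1)^{m+j-1}/(j(j+1))$ throughout, and the superdiagonal term $a_{i,i+1}^{(l)} = -i$ exactly cancels the sum $\sum_{m=1}^{i} 1 = i$ of the lower-triangular contributions, giving $0$; the case $i>j$ uses $(-1)^{i+j} = (-1)^{i-j}$ to cancel the $m=j+1$ term against the triangular sum, and $i=j$ produces $\frac{1}{i+1} + \frac{i}{i+1} = 1$. Then I would handle the exceptional strips: $j=l$ with $i\le l-1$, $i=l$ with $j\le l-1$ (both vanish by the same sign-cancellation), and the corner $i=j=l$, where $\sum_{m=1}^{l}(-1)^{2(m-1)}/l = 1$. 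This part is the bulk of the work, but it is entirely elementary bookkeeping of signs and index ranges, which is why the lemma is ``straightforward''; the only real obstacle is organizing the cases so the cancellations are transparent.

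For the determinant I would exploit the lower-Hessenberg shape directly. The last column of $A^{(l)}$ has only two nonzero entries, $a_{l-1,l}^{(l)} = -(l-1)$ and $a_{l,l}^{(l)} = (-1)^{l-1}$, so a cofactor expansion along it gives $\det A^{(l)} = (l-1)\,M_{l-1,l} + (-1)^{l-1} M_{l,l}$, where $M_{l,l}$ is the top-left $(l-1)\times(l-1)$ block and $M_{l-1,l}$ is obtained by deleting row $l-1$ and column $l$. Deleting row $l-1$ leaves the exceptional last row $(-1)^{j-1}$ sitting directly below rows $1,\dots,l-2$, which is exactly $A^{(l-1)}$, so $M_{l-1,l} = \det A^{(l-1)}$; the block $M_{l,l}$ is instead the ``pure'' Hessenberg matrix $C^{(l-1)}$ in which every row (including the last) follows the generic rule $a_{ij} = (-1)^{i-j+1}$, $a_{i,i+1} = -i$. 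Writing $P_n := \det C^{(n)}$, the same last-column expansion, together with the observation that deleting row $n-1$ negates the bottom row relative to $C^{(n-1)}$, yields $P_n = -n\,P_{n-1}$ with $P_1 = -1$, hence $P_n = (-1)^n n!$. Substituting $P_{l-1} = (-1)^{l-1}(l-1)!$ gives the clean recursion $\det A^{(l)} = (l-1)\det A^{(l-1)} + (l-1)!$, and induction from $\det A^{(1)} = 1$ delivers $\det A^{(l)} = (l-1)(l-1)! + (l-1)! = l!$. An alternative, once $A^{(l)} = (B^{(l)})^{-1}$ is in hand, is to compute $\det B^{(l)} = 1/l!$ directly, but the expansion above avoids a second sign-heavy computation.
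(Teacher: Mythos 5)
Your proof is correct. Note that the paper itself offers no argument here --- it simply declares the lemma ``straightforward'' --- so there is nothing to compare against; your write-up supplies the verification the authors omitted, and it is presumably the intended one (direct entrywise check of $A^{(l)}B^{(l)}=I_l$ plus a cofactor expansion for the determinant). The case analysis in the product check is organized correctly: you implicitly (and correctly) read the overlapping clauses in (\ref{definition of B}) so that the last column $b_{i,l}=(-1)^{i-1}/l$ overrides the generic rule, and the cancellations you identify ($i$ lower-triangular terms against the superdiagonal term $-i$, the factor $(-1)^{i+j}=(-1)^{i-j}$ in the $i>j$ case, and $\tfrac{1}{i+1}+\tfrac{i}{i+1}=1$ on the diagonal) all check out, as do the exceptional strips. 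The determinant computation is also sound: the minors $M_{l-1,l}=\det A^{(l-1)}$ and $M_{l,l}=\det C^{(l-1)}$ are identified correctly, the auxiliary recursion $P_n=-nP_{n-1}$, $P_1=-1$ holds because deleting row $n-1$ of $C^{(n)}$ leaves a bottom row equal to the negative of the bottom row of $C^{(n-1)}$, and the resulting recursion $\det A^{(l)}=(l-1)\det A^{(l-1)}+(l-1)!$ with $\det A^{(1)}=1$ gives $l!$ by induction. I verified the claims numerically for $l=2,3$ as well.
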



\begin{thm}
\label{gerenel linear systems}
Let $l,k\geq 1, i\geq j\geq 1$. Let $Y_{i,j}$ be a hook diagram. Then for any Young diagram $X$,
\be
\label{linear system 1}
\begin{pmatrix}
H_{k,\{X;Y_{l,1}\}} \\
\vdots \\
\vdots \\
H_{k,\{X;Y_{l,l}\}} \\
\end{pmatrix}
=B^{(l)}
\begin{pmatrix}
\vdots \\
\sum_{h=1}^{j-1} (-1)^h T_{h} H_{k,\{X;Y_{l-h,j-h}\}} \\
\vdots \\
T_{l}H_{k,X} \\
\end{pmatrix}_{j=2,\ldots,l},
\ee
where $B^{(l)}$ is the $(l\times l)$-matrix given in (\ref{definition of B}).
\end{thm}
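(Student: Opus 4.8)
The plan is to recast the identity \eqref{linear system 1} as a statement about the \emph{inverse} matrix $A^{(l)}=(B^{(l)})^{-1}$, whose entries are given explicitly and sparsely in Lemma \ref{matrix}; this is far more convenient than manipulating the (essentially dense) matrix $B^{(l)}$ directly. Write $v=(v_1,\ldots,v_l)^T$ for the left-hand column vector, so that $v_j=H_{k,\{X;Y_{l,j}\}}$, and write $w=(w_1,\ldots,w_l)^T$ for the right-hand column vector, so that for $1\le i\le l-1$ its $i$-th entry (the $j=i+1$ entry in the display) is $w_i=\sum_{h=1}^{i}(-1)^h T_h H_{k,\{X;Y_{l-h,i+1-h}\}}$, and its last entry is $w_l=T_l H_{k,X}$. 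Since Lemma \ref{matrix} shows that $B^{(l)}$ is invertible with inverse $A^{(l)}$, the claimed equality $v=B^{(l)}w$ is equivalent to $A^{(l)}v=w$, and it is this latter identity that I would verify row by row.

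For the interior rows $1\le i\le l-1$ I would use the explicit entries in (\ref{definition of A}): row $i$ of $A^{(l)}$ is supported on columns $1,\ldots,i+1$, with $a^{(l)}_{ij}=(-1)^{i-j+1}$ for $j\le i$ and $a^{(l)}_{i,i+1}=-i$, whence
\[
(A^{(l)}v)_i=\sum_{j=1}^{i}(-1)^{i-j+1}H_{k,\{X;Y_{l,j}\}}-i\,H_{k,\{X;Y_{l,i+1}\}}.
\]
Recalling from Definition \ref{definition of shifted Hankel} that $H_{k,\{X;Y\}}=D_k(\beta_0,\ldots,\beta_{k-1};Y)$ with $\beta$ the shift determined by $X$, I would apply Theorem \ref{thm1onHankel deteminants} with $j=i+1$. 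Reindexing the sum $\sum_{h=1}^{i}(-1)^h D_k(\beta;Y_{l,i+1-h})$ on the right of (\ref{eq:thm2}) by $j'=i+1-h$ converts it, with the correct sign $(-1)^{i-j'+1}$, into precisely $\sum_{j'=1}^{i}(-1)^{i-j'+1}H_{k,\{X;Y_{l,j'}\}}$, while the remaining term $-(i)\,D_k(\beta;Y_{l,i+1})$ matches the superdiagonal contribution. Thus $(A^{(l)}v)_i$ equals the left-hand side of (\ref{eq:thm2}), namely $w_i$.

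For the last row $i=l$, (\ref{definition of A}) gives $a^{(l)}_{lj}=(-1)^{j-1}$, so $(A^{(l)}v)_l=\sum_{j=1}^{l}(-1)^{j-1}H_{k,\{X;Y_{l,j}\}}$. Here I would invoke the transpose symmetry $H_{k,\{X;Y_{l,j}\}}=H_{k,\{Y_{l,j};X\}}$, which holds because the underlying entry is $a_{\alpha+\delta_i+\gamma_j}$ with $\delta,\gamma$ encoding the row- and column-shifts, so transposing the matrix simply exchanges $X$ and $Y_{l,j}$ without changing the determinant. The sum then becomes $\sum_{j=1}^{l}(-1)^{j-1}H_{k,\{Y_{l,j};X\}}=T_l H_{k,X}=w_l$ by Proposition \ref{513prop5} (applied with column-shift $X$). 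This settles all $l$ rows, giving $A^{(l)}v=w$, which is equivalent to (\ref{linear system 1}).

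The main obstacle is bookkeeping rather than conceptual content: one must confirm that the lower-Hessenberg sparsity of $A^{(l)}$ (the diagonal band $j\le i$, the single superdiagonal entry $-i$, and the full bottom row) reproduces exactly the alternating combination on the right of (\ref{eq:thm2}), and that the substitution $h\leftrightarrow i+1-h$ carries the intended signs. One should also treat explicitly the boundary cases in which the hooks $Y_{l-h,i+1-h}$ degenerate or in which convention (\ref{previous convention}) forces some $H_{k,\{X;Y\}}$ to vanish (when the number of shifted rows exceeds $k$); since these conventions are already built into Theorem \ref{thm1onHankel deteminants} and Proposition \ref{513prop5}, no new argument is needed, but the cases should be noted so that the row-by-row matching is complete.
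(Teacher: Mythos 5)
Your proposal is correct and is essentially the paper's own argument: the paper proves the theorem in one line by combining Proposition \ref{513prop5}, Theorem \ref{thm1onHankel deteminants} (with the $\beta_i$ determined by $X$), and Lemma \ref{matrix}, which is exactly your row-by-row verification of $A^{(l)}v=w$ followed by inversion. Your explicit reindexing $j'=i+1-h$ and your remark on the transpose symmetry $H_{k,\{X;Y_{l,j}\}}=H_{k,\{Y_{l,j};X\}}$ for the last row simply spell out details the paper leaves implicit.
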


\begin{proof}
Suppose that $X=(l_{1},\ldots,l_{h})$, then (\ref{linear system 1}) follows from Proposition \ref{513prop5}, Theorem \ref{thm1onHankel deteminants} with $(\beta_{0},\ldots, \beta_{k-1})=(\alpha,\alpha+1,\ldots, \alpha+k-h+1,\alpha+k-h+l_{h},\ldots,\alpha+k-1+l_{1})$ and Lemma \ref{matrix}.
\end{proof}

As a consequence, we show that any $H_{k,\{X,Y_{l,j}\}}$, which we recall is the determinant of a Hankel matrix whose rows and columns are both shifted by Young diagrams, can be expressed by a linear combination of determinants of Hankel matrices with only rows or columns shifted by Young diagrams.

\begin{prop}
\label{320eveprop6}
Let $q\geq 1$, and $X$ be a Young diagram $(l_1,\ldots,l_{q})$ of length $m$. Let $Y_{l,j}$ be a hook diagram. Then for any $j=1,2,\ldots,l$, $H_{k,\{X;Y_{l,j}\}}$ is a linear combination of $H_{k,X_n}, n=1,\ldots,C_{X,l,j}$ for some constant $C_{X,l,j}$ depending on $X,l,j$, where $X_n$ is a Young diagram of length $m+l$. 
\end{prop}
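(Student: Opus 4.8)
The plan is to argue by strong induction on the length $l$ of the hook $Y_{l,j}$, carried out uniformly over all row diagrams $X$. Two facts drive the argument. The first is the linear system of Theorem \ref{gerenel linear systems}, which expresses $H_{k,\{X;Y_{l,j}\}}$ as a $B^{(l)}$-combination of the entries of the vector appearing there, namely $T_lH_{k,X}$ together with the sums $\sum_{h=1}^{j'-1}(-1)^hT_hH_{k,\{X;Y_{l-h,j'-h}\}}$ for $j'=2,\ldots,l$. The second is the transpose symmetry $H_{k,\{X;Y\}}=H_{k,\{Y;X\}}$: since the array $(a_{p+q})$ depends only on $p+q$, the matrix defining $H_{k,\{X;Y\}}$ is exactly the transpose of the one defining $H_{k,\{Y;X\}}$, so the two determinants agree. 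In particular every single-shift determinant $H_{k,\{X';\emptyset\}}$ equals $H_{k,X'}$, so it will suffice to reduce $H_{k,\{X;Y_{l,j}\}}$ to single-shift determinants in either slot.

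First I would record the effect of a translation on a single-shift determinant. By Proposition \ref{prop1}, $T_hH_{k,\{X;\emptyset\}}=\sum_{i=0}^{k-1}D_k(\beta_0,\ldots,\beta_i+h,\ldots,\beta_{k-1};\emptyset)$, and by Proposition \ref{prop2} each summand, after reordering its row-base indices, is either $0$ (when two indices coincide) or a single-shift determinant $H_{k,\{X';\emptyset\}}=H_{k,X'}$ with $|X'|=|X|+h$. This disposes of the base case $l=1$: there $Y_{1,1}=(1)$, and by Proposition \ref{prop4} (equivalently Proposition \ref{prop1}) $H_{k,\{X;(1)\}}=T_1H_{k,\{X;\emptyset\}}$, hence a linear combination of $H_{k,X_n}$ with $|X_n|=m+1$. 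The very same computation handles the last entry $T_lH_{k,X}=T_lH_{k,\{X;\emptyset\}}$ of the vector in Theorem \ref{gerenel linear systems}: it is a linear combination of single-shift determinants of length $m+l$.

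For the inductive step, fix $l\ge 2$ and assume the proposition for every hook of length strictly less than $l$ and every row diagram. In each remaining term $T_hH_{k,\{X;Y_{l-h,j'-h}\}}$ with $1\le h\le j'-1$, Propositions \ref{prop1} and \ref{prop2} show that $T_h$ acts on the row-base indices while leaving the column hook intact, so the term is a linear combination of determinants $H_{k,\{X^{(r)};Y_{l-h,j'-h}\}}$ in which the column hook still has length $l-h$ while the row diagram has grown to length $m+h$. Since $l-h<l$, the induction hypothesis applies to each of these and expresses it as a linear combination of single-shift determinants $H_{k,X''}$ of length $(m+h)+(l-h)=m+l$. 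Assembling all entries through $B^{(l)}$ then writes $H_{k,\{X;Y_{l,j}\}}$ as a linear combination of $H_{k,X_n}$ with every $X_n$ of length $m+l$, completing the induction.

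The step requiring the most care—and the main obstacle—is the bookkeeping of the two shift slots under $T_h$. The crucial point is the asymmetry in Proposition \ref{prop1}: $T_h$ lengthens the \emph{row} diagram by exactly $h$ while keeping the \emph{column} hook fixed, and it is precisely this that lets the induction close, because the hook to which the hypothesis is then applied has length $l-h$, so the two contributions always recombine to the single total length $m+l$ demanded by the statement. One must also verify that the reordering supplied by Proposition \ref{prop2} never changes the total length and sends coincident-index determinants to $0$, so that no term of the wrong length can appear; and the transpose symmetry is invoked once more to rewrite the row-shifted outputs of $T_lH_{k,X}$ in the column-shifted normal form $H_{k,X_n}$ required by the conclusion.
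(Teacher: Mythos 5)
Your proof is correct and follows essentially the same route as the paper's: induction on the hook length $l$, inverting the linear system of Theorem \ref{gerenel linear systems} via $B^{(l)}$, and using Propositions \ref{prop1} and \ref{prop2} to see that $T_h$ grows the row diagram by exactly $h$ while the surviving column hook has length $l-h$, so every term recombines to total length $m+l$. The only cosmetic difference is that you make the Hankel transpose symmetry $H_{k,\{X;Y\}}=H_{k,\{Y;X\}}$ explicit, which the paper uses tacitly.
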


\begin{proof}
We prove the claim by induction. For $l=1$, the claim follows from Proposition \ref{prop1} because
$H_{k,\{\widetilde{X},Y_{1,1}\}} = T_1 H_{k,\widetilde{X}}$ for any $\widetilde{X}$. By induction, for any hook diagram $Y_{i,j}$ of length $i\leq l-1$ and any Young diagram $\widetilde{X}$ of length $\tilde{m}$, we assume that $H_{k, \{\widetilde{X}; Y_{i,j}\}}, j=1,2,\ldots,i$ are linear combinations of $H_{k,\widetilde{X}_n}$ for some Young diagrams $\widetilde{X}_n$ having length $\tilde{m}+i$. For convenience, we below set $l_{q+1}=\cdots=l_k=0$ when $q\leq k-1$.
By Proposition \ref{prop1}, 
\be
\label{new label}
T_{h}H_{k,\{X;Y_{l-h,j-h}\}} = \sum_{s=1}^{k}H_{k,\{(l_{1},\ldots,l_{s}+h,\ldots,l_{k}); Y_{l-h,j-h}\}},
\quad 
T_{l}H_{k,X}=\sum_{s=1}^{k} H_{k,(l_1,\ldots,l_{s}+l,\ldots,l_{k})}.
\ee
By Theorem \ref{gerenel linear systems}, there is an invertible matrix $B^{(l)}$ such that 
\[
\begin{pmatrix}
H_{k,\{X;Y_{l,1}\}} \\
\vdots \\
\vdots \\
H_{k,\{X;Y_{l,l}\}} \\
\end{pmatrix}
=B^{(l)}
\begin{pmatrix}
\vdots \\
\sum_{h=1}^{j-1} (-1)^h T_{h} H_{k,\{X;Y_{l-h,j-h}\}} \\
\vdots \\
T_{l}H_{k,X} \\
\end{pmatrix}_{j=2,..,l}.
\]
By (\ref{new label}) and the inductive assumption, we obtain the claimed result.
\end{proof}

\begin{rem}
\label{natural extension}
{\rm
We explain now why we set $M_{k},\widetilde{M}_{k}, D_{k}$ to be $0$ when $s>k$ in Section \ref{section:Determinants of Hankel matrices}. We initially considered only the  truncated case 
\be\label{truncated case}
\begin{pmatrix}
H_{k,\{X;Y_{l,1}\}} \\
\vdots \\
\vdots \\
H_{k,\{X;Y_{l,l_{0}}\}} \\
\end{pmatrix}
=B^{(l_{0})}
\begin{pmatrix}
\vdots \\
\sum_{h=1}^{j-1} (-1)^h T_{h} H_{k,\{X;Y_{l-h,j-h}\}} \\
\vdots \\
T_{l}H_{k,X} \\
\end{pmatrix}_{j=2,\ldots,l_{0}}
\ee
where $l_{0}=\min(l,k)$, because from (\ref{def of H-k-X-Y}), $H_{k,\{X;Y_{l,j}\}}$ is a determinant of a matrix of $k$ columns, which is well-defined when $j\leq k$. However, $l_{0}=\min(l,k)$ removes the information relating to $k$ when $k>l$. Obviously, when $k>l$, $l_0=l$ and so is independent of $k$.
To obtain Theorem \ref{theorem2 in note 4}, which is a combination of certain orders of derivatives of $H_k$ whose coefficients are polynomial in $k$, we have to remove this restriction to recover the dependence on $k$. To this end, we need to extend (\ref{truncated case}). One natural way is replacing $B^{(l_{0})}$ with $B^{(l)}$ in (\ref{truncated case}). This gives (\ref{linear system 1}).
Interestingly, when $l>k$, we have $H_{k,\{X;Y_{l,k+1}\}}=\dots=H_{k,\{X;Y_{l,l}\}}=0$ in (\ref{linear system 1}). 
Indeed, by (\ref{linear system 1}), $H_{k,\{X;Y_{k+1,k+1}\}} = -\frac{1}{k+1} \sum_{h=1}^k (-1)^h T_h H_{k,\{X;Y_{k+1-h,k+1-h}\}} + \frac{(-1)^k}{k+1} T_{k+1} H_{k,X}$. 
Then by (\ref{514formula4}) with $j=k+1$, we have $\sum_{h=1}^k (-1)^h T_h H_{k,\{X;Y_{k+1-h,k+1-h}\}} = (-1)^k \sum_{h=1}^k (-1)^{h-1} H_{k,\{X;Y_{k+1,h}\}}$. By Remark \ref{rem:627}, this further equals $(-1)^k T_{k+1} H_{k,X}$. So $H_{k,\{X;Y_{k+1,k+1}\}}=0$. 
For $k+1 \leq j \leq l$,
by (\ref{linear system 1}), we have $H_{k,\{X;Y_{l,j}\}} = \sum_{q=1}^{l-1} (B^{(l)})_{j,q} \sum_{h=1}^q (-1)^h T_h H_{k,\{X;Y_{l-h,q+1-h}\}} + (B^{(l)})_{j,l} T_l H_{k,X}$. By the inductive assumption that for any $k+1\leq l'\leq l-1$, $H_{k,\{X;Y_{l',j}\}} = 0 $ for $k+1 \leq j \leq l'$. Using  (\ref{514formula4}) and Remark \ref{rem:627}, when $q \geq j-1\geq k$, we have $\sum_{h=1}^q (-1)^h T_h H_{k,\{X;Y_{l-h,q+1-h}\}} = (-1)^{q} \sum_{h=1}^k (-1)^{h-1} H_{k,\{X;Y_{l,h}\}} = (-1)^q T_l H_{k,X}$, so $H_{k,\{X;Y_{l,j}\}} = 0$.

Moreover, when $l>k$, the first $k$ entries in the left hand side of (\ref{linear system 1}) are the same as those from (\ref{truncated case}). This is the reason why we set $M_{k},\widetilde{M}_{k}, D_{k}$ to be $0$ when $s>k$.
In the latter part of this paper (see Sections \ref{rfHsy}, \ref{LiftrfHsY} and \ref{section:Generalization}), we will expand the right-hand side of (\ref{linear system 1}) by making use of recursive formulae for some specific sequences (such as the modified Bessel function of the first kind). This will represent $H_{k,Y_{l,j}}, (j=1,\ldots,l)$ as certain polynomials of derivatives of $H_{k}$. So $H_{k,\{X;Y_{l,j}\}}$ does not appear as 0 formally even if $l>k$. But it is an expression that is ultimately zero. This will lead to some differential equations, e.g., see (\ref{diff equation}) below.

}\end{rem}

\begin{prop}\label{411proponS}
Let $l,k\geq 1, i\geq j\geq 1$. Let $Y_{i,j}$ be a hook diagram. Then for any $j=3,\ldots,l$, we have 
\[\sum_{h=2}^{j-1}(-1)^{h}S_{h-1}H_{k,Y_{l-h,j-h}}=\sum_{q=1}^{l-1}d_{q}H_{k,Y_{l-1,q}},\]
where
\[d_{q}=
\begin{cases}
(-1)^{j-q}(\alpha+2k-1-2q+l) & \text{if } 1\leq q\leq j-2,\\
 (j-2)(\alpha+2k-j+1)& \text{if } q=j-1, \\
  0 & \text{if } j\leq q \leq l-1.
\end{cases}
\]
\end{prop}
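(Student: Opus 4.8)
The plan is to view this as the weighted ($S$-operator) analogue of Theorem~\ref{thm1onHankel deteminants} and to rerun that theorem's combinatorial proof with index-weights attached. The starting observation is that, since every part of a partition is increased by $h-1$ under $T_{h-1}$, the operator $T_{h-1}$ shifts \emph{every} column of $M_k(\ldots;Y_{l-h,j-h})$ uniformly by $h-1$; consequently the entry of $\widetilde M_k(\ldots;T_{h-1}Y_{l-h,j-h})$ in row $i$ and a column of shift $c'=c+(h-1)$ carries the weight $\beta_i+c'=(\alpha+c')+i$, with $\beta_i=\alpha+i$ because $X=\emptyset$ here. First I would split this weight as $(\alpha+c')+i$, so that
\[
S_{h-1}H_{k,Y_{l-h,j-h}} \;=\; \sum_{j'}(\alpha+c'_{j'})\,D^{(j')} \;+\; \sum_{i=0}^{k-1} i\, D_k(\beta_0,\ldots,\beta_i+(h-1),\ldots,\beta_{k-1};Y_{l-h,j-h}),
\]
where $D^{(j')}$ is $D_k(\ldots;Y_{l-h,j-h})$ with its $j'$-th column shifted by the extra $h-1$. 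This is precisely the column/row split used for $S_lH_k$ in the proof of Proposition~\ref{513prop6}, and it is what lets the (column- or row-)constant weight be pulled out of each Laplace term.

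The two pieces are weighted analogues of the single-shift expansions underlying $T_{h-1}$ (Propositions~\ref{prop1} and~\ref{513prop5}). I would therefore carry out the \emph{same} bookkeeping as in the proof of Theorem~\ref{thm1onHankel deteminants}---the auxiliary tuples in the roles of $W_{h,q},Z_{h,q}$ and the reflection identity~(\ref{513formula6})---but now recording, for each determinant produced, the constant weight ($\alpha+c'$ for the column piece, $i$ for the row piece) that it inherits. Summing over $h=2,\ldots,j-1$ with the signs $(-1)^h$, the underlying \emph{unweighted} determinants recombine into exactly the alternating families shown to cancel in Theorem~\ref{thm1onHankel deteminants} (and, in the over-long regime, via the vanishing relation~(\ref{514formula4}) together with Remark~\ref{rem:627}). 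What is left is a combination of hook determinants; since $S_{h-1}$ raises the length of $Y_{l-h,j-h}$ from $l-h$ to $(l-h)+(h-1)=l-1$, only hooks $H_{k,Y_{l-1,q}}$ of length $l-1$ survive.

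Collecting the weights on each surviving $H_{k,Y_{l-1,q}}$ then gives $d_q$: for $1\le q\le j-2$ the column weight $\alpha+c'$ and the row weight $i$ add to reproduce $\alpha+2k-1-2q+l$, exactly as $(\alpha+k-q+l)+(k-q)=\alpha+2k-2q+l$ arises for $S_lH_k$ in Proposition~\ref{513prop6}, now with $l$ replaced by $l-1$; the alternating signs accumulate to $(-1)^{j-q}$. The exceptional coefficient $(j-2)(\alpha+2k-j+1)$ at $q=j-1$ and the vanishing $d_q=0$ for $q\ge j$ are boundary effects of this same accounting.

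The main obstacle is exactly this weighted bookkeeping. In Theorem~\ref{thm1onHankel deteminants} two determinants that cancel are genuinely equal, but once weights are attached the two copies may carry \emph{different} constants, so I must check that the weighted alternating sums still telescope to the stated coefficients rather than leaving residual terms. The delicate cases are the two boundaries: at $q=j-1$ the column collisions that must be resolved by Proposition~\ref{prop2} (each reordering into non-decreasing shift order contributing a sign) combine the surviving weights into the degenerate form $(j-2)(\alpha+2k-j+1)$, and for $q\ge j$ the over-long cancellation in~(\ref{514formula4}) forces the coefficient to vanish. Tracking the shift $c'$ attached to every column as the tuples are permuted is the only genuinely intricate part; the remainder is a direct transcription of the proof of Theorem~\ref{thm1onHankel deteminants}.
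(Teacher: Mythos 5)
Your decomposition of $S_{h-1}$ into a column-weighted piece (weight $\alpha+c'$, constant along columns) and a row-weighted piece (weight $i$, constant along rows) is exactly the paper's starting point: the paper writes $S_{h-1}=S_{h-1}^{(1)}+S_{h-1}^{(2)}$ and treats the two pieces in Lemmas \ref{lemmaonS1} and \ref{lemmaonS2} respectively. For the column piece your plan is sound: Lemma \ref{lemmaonS1} does carry out the $W_{h,q},Z_{h,q}$ bookkeeping of Theorem \ref{thm1onHankel deteminants} with the column weights attached, and the weighted reflection identity still telescopes.

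The genuine gap is in your treatment of the row-weighted piece. You assert it can be handled by ``a direct transcription of the proof of Theorem \ref{thm1onHankel deteminants}'', but that theorem manipulates only column-shifted determinants $H_{k,Y}$. Expanding $\sum_{i,j} i\,a_{\cdots}F_{ij}$ along rows shifts a \emph{row} of a matrix whose columns are \emph{already} shifted by the hook $Y_{l-h,j-h}$; unlike the case $Y=\emptyset$ in Proposition \ref{513prop6}, the row--column symmetry of the Hankel structure is broken, so what you get is the doubly-shifted family
$S_{h-1}^{(2)}H_{k,Y_{l-h,j-h}}=\sum_{s=1}^{h-1}(-1)^{s-1}(k-s)H_{k,\{Y_{h-1,s};Y_{l-h,j-h}\}}$ (equation (\ref{410formula2}) in the paper). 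These objects lie outside the cancellation scheme of Theorem \ref{thm1onHankel deteminants}, and collapsing them back to single-column-shifted hooks is the real content of the proposition: the paper needs Theorem \ref{gerenel linear systems} applied with nonempty $X$, the specific weighted row-combination identified in Lemma \ref{410lemma1} (namely $\sum_i(-1)^{i-1}(k-i)b^{(m)}_{i,j}$), the explicit inverse Hessenberg matrix $A^{(l-1)}$ of (\ref{definition of A}), and a further pass through Theorem \ref{thm1onHankel deteminants} and Proposition \ref{513prop5} to resum. None of this is a transcription of the unweighted argument, and without it your ``collect the weights on each surviving hook'' step has nothing to collect: the surviving terms are not yet hooks. (Your heuristic for $d_q$, $1\le q\le j-2$, and the observation that $d_{j-1}$ and the vanishing for $q\ge j$ are boundary effects are consistent with what the two lemmas ultimately give, but they are conclusions of that missing computation, not substitutes for it.)
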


Before proving the above proposition, we need some further preparations. 

\begin{lem}\label{lemmaonS1}
Making the same assumptions as in Proposition \ref{411proponS}, let 
\[
\widetilde{M}_k^{(1)}=\Big((\alpha+j+h-1+t_{k-j})a_{\alpha+i+j+h-1+t_{k-j}}\Big)_{i,j=0,\ldots,k-1},
\]
and let $S_{h-1}^{(1)}H_{k,Y_{l-h,j-h}}$ be the standard inner product between $\widetilde{M}_k^{(1)}$ and the cofactor matrix of the matrix in defining $H_{k,Y_{l-h,j-h}}$. Then
\[
\sum_{h=2}^{j-1}(-1)^{h}S_{h-1}^{(1)}H_{k,Y_{l-h,j-h}}=\sum_{h=2}^{j-1}(-1)^{h}(\alpha+k-1+l-j+h) H_{k,Y_{l-1,j-h}}+H_{k,Y_{l-1,j-1}} \sum_{h=2}^{j-1}(\alpha+k-j+h).
\]
\end{lem}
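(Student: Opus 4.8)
The plan is to exploit the fact that the weight appearing in $\widetilde{M}_k^{(1)}$ is constant along each column, so that $S_{h-1}^{(1)}$ collapses to a weighted single-column translation, and then to separate the contributions that produce hook diagrams (which give the two stated terms) from those that produce non-hook diagrams (which must cancel). First I would record the reduction. Writing $Y=Y_{l-h,j-h}$, letting $F_{i,c}$ be the $(i,c)$-cofactor of the matrix $M_k(\alpha,\ldots,\alpha+k-1;Y)$ defining $H_{k,Y}$, and letting $\gamma_c$ denote the column shift of the $c$-th column in $Y$, the $(i,c)$-entry of $\widetilde{M}_k^{(1)}$ is $(\alpha+\gamma_c+h-1)\,a_{\alpha+i+\gamma_c+h-1}$, whose prefactor is independent of the row index $i$. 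Hence
\[
S_{h-1}^{(1)}H_{k,Y_{l-h,j-h}}=\sum_{c}(\alpha+\gamma_c+h-1)\Big(\sum_i a_{\alpha+i+\gamma_c+h-1}F_{i,c}\Big)=\sum_c(\alpha+\gamma_c')\,D_c,
\]
where $\gamma_c'=\gamma_c+h-1$ is the label the $c$-th column carries after its index is shifted by $h-1$, and $D_c$ is the Laplace expansion along column $c$ of the determinant obtained from $M_k(\ldots;Y)$ by shifting only that column's index by $h-1$. Each $D_c$ equals $\pm H_{k,Y'}$ for a Young diagram $Y'$ of length $(l-h)+(h-1)=l-1$, or $0$ if two columns coincide, and the weight attached to $D_c$ is exactly $\alpha+\gamma_c'$, i.e.\ the position at which the moved column lands.

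Next I would classify the columns of $Y_{l-h,j-h}$, whose shift set is $\{0,\ldots,k-j+h-1\}\cup\{k-j+h+1,\ldots,k-1\}\cup\{k+l-j\}$ with a single gap at $k-j+h$. A direct check shows that only two moves produce hooks and survive: (a) shifting the big column $k+l-j$ to $k+l-j+h-1$ yields $H_{k,Y_{l-1,j-h}}$ with no reordering sign and weight $\alpha+k-1+l-j+h$; (b) shifting the column at $k-j+1$ into the gap $k-j+h$ yields $H_{k,Y_{l-1,j-1}}$ with reordering sign $(-1)^{h-2}$ and weight $\alpha+k-j+h$. Every other column either lands on an occupied position (so $D_c=0$) or lands in the arm-gap region $\{k,\ldots,k+l-j-1\}$, producing a non-hook diagram with two parts $\ge 2$. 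Multiplying by the outer sign $(-1)^h$ and summing over $h$, (a) assembles into $\sum_{h=2}^{j-1}(-1)^h(\alpha+k-1+l-j+h)H_{k,Y_{l-1,j-h}}$, and (b), since $(-1)^h(-1)^{h-2}=1$, assembles into $H_{k,Y_{l-1,j-1}}\sum_{h=2}^{j-1}(\alpha+k-j+h)$ — precisely the right-hand side.

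It remains to show the non-hook contributions cancel, and this is the crux. The key observation is that for a fixed non-hook target $Y'$ of length $l-1$, the shift that must be \emph{added} to reach $Y'$ from any base $Y_{l-h,j-h}$ is the unique shift of $Y'$ lying in $\{k,\ldots,k+l-j-1\}$, because this region is disjoint from the shift set of \emph{every} $Y_{l-h,j-h}$; hence the landing position $\gamma_c'$, and therefore the weight $\alpha+\gamma_c'$, is independent of $h$ on each non-hook target (in contrast to $Y_{l-1,j-1}$, where the added shift $k-j+h$ sits in the moving gap and so varies with $h$). Consequently the total non-hook contribution factors as $(\alpha+\gamma_{Y'}')$ times the \emph{unweighted} alternating sum $\sum_{h}(-1)^h c_{h,Y'}$, where $c_{h,Y'}$ is the coefficient of $H_{k,Y'}$ in the plain column expansion $\sum_c D_c=T_{h-1}H_{k,Y_{l-h,j-h}}$ (which follows from Definition \ref{defn of translation operator} and a column Laplace expansion). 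Reindexing by $\tilde h=h-1$ and applying Theorem \ref{thm1onHankel deteminants} with $(\tilde l,\tilde j)=(l-1,j-1)$ identifies $\sum_{h=2}^{j-1}(-1)^h T_{h-1}H_{k,Y_{l-h,j-h}}$, up to an overall sign, with a combination of \emph{hook} diagrams $Y_{l-1,\cdot}$ only; comparing coefficients of any non-hook $Y'$ on both sides forces $\sum_h(-1)^h c_{h,Y'}=0$, and multiplying by the constant weight kills the non-hook part. Combining with the hook contributions of the previous paragraph yields the lemma.

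The main obstacle I anticipate is the bookkeeping in the classification step, above all pinning down why the weight is $h$-independent on non-hook targets but $h$-dependent on $Y_{l-1,j-1}$: the former lands in the permanently empty arm-gap $\{k,\ldots,k+l-j-1\}$, while the latter lands in the moving gap $k-j+h$. Getting the reordering sign right in case (b) and verifying that no move other than (a) and (b) yields a hook (so that the right-hand side involves only $Y_{l-1,\cdot}$) are where care is needed; once the constancy of the weight on non-hooks is in hand, the cancellation is inherited directly from Theorem \ref{thm1onHankel deteminants} rather than re-derived by the reindexing used there. Degenerate small cases (e.g.\ $l=j$, where the arm-gap is empty and no non-hooks arise) are consistent with the conventions and can be checked directly.
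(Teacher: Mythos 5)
Your proof is correct in substance, and the way you dispose of the non-hook terms is genuinely different from the paper's. The paper parametrizes all the non-hook contributions explicitly (its $x_{h,q}$, $y_{h,q}$ with the weights $\alpha+q+h$ and $\alpha+q+h-1$), and then verifies the cancellation of the alternating sum by hand — via the sign relation $x_{h,q}=(-1)^{q-k+j-h}y_{j+q-k+1,k-(j-h)}$ and a change of variables — separately for $j$ even and odd, and again separately in the regime $j\geq k+2$ where the convention $H_{k,Y}=0$ for diagrams with more than $k$ rows intervenes. You instead isolate the structural reason the cancellation happens: the weight attached to a given non-hook target is $\alpha$ plus the landing position of the moved column, and that position is determined by the target diagram alone (it is the unique position of the target lying in $\{k,k+1,\dots\}\setminus\{k+l-j\}$), hence is independent of $h$; so the weighted alternating sum is a constant times the unweighted one, which vanishes because Theorem \ref{thm1onHankel deteminants} (applied with $(l-1,j-1)$ after reindexing $h\mapsto h-1$) expresses $\sum_h(-1)^hT_{h-1}H_{k,Y_{l-h,j-h}}$ purely in hook diagrams. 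This reuses Theorem \ref{thm1onHankel deteminants} as a black box rather than re-running its reindexing argument, and it explains rather than merely verifies the cancellation; the paper's computation, on the other hand, is self-contained at this point and produces the explicit weights needed nowhere else.

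Two small inaccuracies in your classification, neither fatal. First, the landing position of a non-hook move need not lie in the arm-gap $\{k,\dots,k+l-j-1\}$: when $h$ is large a column at position $\gamma\leq k-1$ can land at $\gamma+h-1>k+l-j$, beyond the big arm. In particular your parenthetical claim that no non-hooks arise when $l=j$ is false (take $l=j=5$, $h=4$: the column at position $k-2$ of $Y_{1,1}$ lands at $k+1$ and produces the diagram $(2,2)$). What actually matters — and what remains true — is that the two positions of the target lying in $\{k,k+1,\dots\}$ are $k+l-j$ (always inherited from the source) and the landing position, so the landing position, and hence the weight, is still read off from the target independently of $h$; your cancellation argument survives unchanged. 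Second, you do not address the regime $j\geq k+2$, where the terms with $h<j-k$ vanish by the paper's convention and $H_{k,Y_{l-1,j-1}}=0$; your argument does extend there since Theorem \ref{thm1onHankel deteminants} is stated with the same convention, but a sentence acknowledging this case is needed to match the scope of the lemma.
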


\begin{proof}
We use a similar trick to that in the proof of Theorem \ref{thm1onHankel deteminants}. The difference is that now we are considering weighted sums.
We first assume that $j\leq k+1$. For fixed $j\geq 3, 2 \leq h \leq j-1, 0 \leq q \leq k-2$,
let $W_{h,q}$ and $Z_{h,q}$ be two $k$-tuples satisfying that for any $i$ with $0\leq i\leq k-1$,
\[
W_{h,q}(i)=\begin{cases}
   h & \text{if } i=q;\\
 1 & \text{if } k-j+h\leq i\leq k-2 \text{~and~} i\neq q\\
  l-j+1& \text{if } i=k-1;\\
  0 & \text{otherwise}.
\end{cases}
\]
and 
\[
Z_{h,q}(i)=
\begin{cases}
  h-1 & \text{if } i=q;\\
  1 & \text{if } k-j+h\leq i\leq k-2 \text{~and~} i \neq q;\\
  l-j+1& \text{if } i=k-1;\\
  0 & \text{otherwise},
\end{cases}
\]
We use $x_{h,q}$ and $y_{h,q}$ to denote 
$H_{k,W_{h,q}}$ and $H_{k,Z_{h,q}}$, respectively. It is not hard to check that, for any $q$ with $k-j+h\leq q \leq k-2$, 
\bea\label{411formula6}
x_{h,q}=(-1)^{q-k+j-h}y_{j+q-k+1,k-(j-h)}.
\eea
Using the notation introduced above and the definition of $S_{h-1}^{(1)} H_{k, Y_{l-h,j-h}}$, we have
\bea
 S_{h-1}^{(1)} H_{k, Y_{l-h,j-h}}
&=& (\alpha+k-1+l-j+h)H_{k,Y_{l-1,j-h}} 
+ (\alpha+k-j+h) (-1)^{h-2} H_{k,Y_{l-1,j-1}} \nonumber\\
&&+ \sum_{q=k-h+1}^{k-1-(j-h)}(\alpha+q+h-1)y_{h,q}+\sum_{q=\max(k-h,k-(j-h))}^{k-2} (\alpha+q+h)x_{h,q}.\label{411formula7} 
\eea
If $j$ is an even integer, then
\bea
&&  \sum_{h=2}^{j-1} (-1)^{h} S_{h-1}^{(1)} H_{k, Y_{l-h,j-h}} \nonumber \\
&&-\sum_{h=2}^{j-1} (-1)^h (\alpha+k-1+l-j+h) H_{k,Y_{l-1,j-h}} - \sum_{h=2}^{j-1} (\alpha+k-j+h) H_{k,Y_{l-1,j-1}}\nonumber \\
&=& \sum_{h=2}^{j/2} (-1)^h \sum_{q=k-h}^{k-2} (\alpha+q+h) x_{h,q}
+ \sum_{h=j/2+1}^{j-1} (-1)^h \sum_{q=k-h+1}^{k-1-(j-h)} (\alpha+q+h-1) y_{h,q} \nonumber \\ 
&& + \sum_{h=j/2+1}^{j-1} (-1)^h \sum_{q=k-j+h}^{k-2} (\alpha+q+h) x_{h,q}
\label{411eq11}
\eea
By relation (\ref{411formula6}),
\bea\label{411eq12}
(\ref{411eq11})&=&  \sum_{h=2}^{j/2} (-1)^h \sum_{q=k-h}^{k-2} (\alpha+q+h) (-1)^{q-k+j-h} y_{j+q-k+1,k-j+h}\nonumber \\
&& + \sum_{h=j/2+1}^{j-1} (-1)^h \sum_{q=k-h+1}^{k-1-(j-h)} (\alpha+q+h-1) y_{h,q} \nonumber \\ 
&& + \sum_{h=j/2+1}^{j-1} (-1)^h \sum_{q=k-j+h}^{k-2} (\alpha+q+h) (-1)^{q-k+j-h} y_{j+q-k+1,k-(j-h)}.
\eea
Changing variables and exchanging the order of the summation, we have
\bea\label{411eq13}
(\ref{411eq12})&=&  - \sum_{h_1=j/2+1}^{j-1}(-1)^{h_1} \sum_{q=k-h_1+1}^{k-j/2} (\alpha+h_1-1+q) y_{h_1,q}\nonumber \\
&& + \sum_{h=j/2+1}^{j-1} (-1)^h \sum_{q=k-h+1}^{k-1-(j-h)} (\alpha+q+h-1) y_{h,q} \nonumber \\ 
&& - \sum_{h_1=j/2+2}^{j-1}  (-1)^{h_1} \sum_{q=k-j/2+1}^{k-j+h_1-1} (\alpha+h_1-1+q) y_{h_1,q}=0.
\eea
Substituting (\ref{411eq12}), (\ref{411eq13}) into  (\ref{411eq11}), we obtain
\bea\label{411eq14}
\sum_{h=2}^{j-1} (-1)^{h} S_{h-1}^{(1)} H_{k, Y_{l-h,j-h}}&=&\sum_{h=2}^{j-1} (-1)^h (\alpha+k-1+l-j+h) H_{k,Y_{l-1,j-h}}\nonumber\\ 
&& + \sum_{h=2}^{j-1} (\alpha+k-j+h)H_{k,Y_{l-1,j-1}}.
\eea
By a similar argument, we conclude that (\ref{411eq14}) holds for odd integer $j$. 

When $j\geq k+2$,  by a similar argument to that relating to (\ref{514formula4}) and the case $j\leq k+1$ above, we have
\[
\sum_{h=2}^{j-1}(-1)^{h}S_{h-1}^{(1)}H_{k,Y_{l-h,j-h}}=(-1)^{j}\sum_{h=1}^{k}(-1)^{h}(\alpha+k-1+l-h)H_{k,Y_{l-1,h}}.
\]
This is as claimed in the lemma, therefore completing the proof. 
\end{proof}

The following lemma follows directly from the definition of $B^{(m)}$.

\begin{lem}\label{410lemma1}
Let $m\geq 1$. Let $B^{(m)}=(b_{i,j})_{i,j=1,..,m}$ be given in (\ref{definition of B}) with $l$ replaced with $m$.
Then 
$\sum_{i=1}^m (-1)^{i-1} (k-i) b_{i,j}=(-1)^j/2$ 
for any $j$ with $1 \leq j \leq m-1$, and
$\sum_{i=1}^m (-1)^{i-1} (k-i) b_{i,j} = k-\frac{m+1}{2}$ 
for $j=m$.
\end{lem}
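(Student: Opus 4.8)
The plan is a direct computation from the definition (\ref{definition of B}), organized by the two regimes of the column index, $1 \le j \le m-1$ and $j=m$. The whole argument rests on reading off, for each fixed column $j$, exactly which rows contribute to the weighted sum $\sum_{i=1}^m (-1)^{i-1}(k-i)\,b_{ij}$, and then performing an elementary sign collapse.

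First I would fix $j$ with $1 \le j \le m-1$. For such $j<m$ the last-column override in (\ref{definition of B}) is inactive, so (with $l$ replaced by $m$) we have $b_{ij}=(-1)^{i+j-1}/(j(j+1))$ for $i\le j$, the subdiagonal value $b_{j+1,j}=-1/(j+1)$, and $b_{ij}=0$ for $i\ge j+2$. Hence only the rows $i=1,\dots,j$ and the single row $i=j+1$ survive. In the main block the crucial simplification is the sign collapse $(-1)^{i-1}(-1)^{i+j-1}=(-1)^{j}$, which frees the alternating sign from the $i$-sum and leaves $\frac{(-1)^j}{j(j+1)}\sum_{i=1}^{j}(k-i)$; using $\sum_{i=1}^{j}(k-i)=jk-\tfrac12 j(j+1)$ this block equals $\frac{(-1)^j k}{j+1}-\frac{(-1)^j}{2}$. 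Adding the subdiagonal term $(-1)^{j}(k-j-1)\cdot\bigl(-\tfrac{1}{j+1}\bigr)=-\frac{(-1)^j(k-j-1)}{j+1}$, the two $1/(j+1)$ contributions combine through $k-(k-j-1)=j+1$, cancelling the denominator and yielding $(-1)^j$. Therefore the total is $(-1)^j-\frac{(-1)^j}{2}=\frac{(-1)^j}{2}$, the claimed value.

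Next I would treat the last column $j=m$, where the override forces $b_{im}=(-1)^{i-1}/m$ for every $i$, so all rows contribute and $(-1)^{i-1}b_{im}=1/m$. The weighted sum is then simply $\frac{1}{m}\sum_{i=1}^{m}(k-i)=k-\frac{m+1}{2}$, as asserted; this branch also covers the degenerate case $m=1$, in which $B^{(1)}=(1)$ and only the last-column rule applies.

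I do not expect a genuine obstacle here: the entire proof is a two-line evaluation once the active entries are identified. The only point deserving a moment's care is confirming that for $j\le m-1$ the last-column rule in (\ref{definition of B}) is \emph{not} triggered, so that the correct subdiagonal entry is $b_{j+1,j}=-1/(j+1)$ rather than the override value, and that the sign identity $(-1)^{i-1}(-1)^{i+j-1}=(-1)^{j}$ is applied uniformly. With those checks in place the two displayed sums close the lemma.
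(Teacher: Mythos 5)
Your computation is correct and is exactly the direct verification the paper has in mind: the paper states only that the lemma ``follows directly from the definition of $B^{(m)}$'' and omits the calculation, and your case split ($1\le j\le m-1$ with the sign collapse $(-1)^{i-1}(-1)^{i+j-1}=(-1)^j$ plus the subdiagonal term, versus the last-column override for $j=m$) supplies precisely that omitted computation.
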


\begin{lem}\label{lemmaonS2}
Making the same assumptions as in Proposition \ref{411proponS}. 
Let 
\[\widetilde{M}_k^{(2)}=\Big(ia_{\alpha+i+j+h-1+t_{k-j}}\Big)_{i,j=0,\ldots,k-1},\]
and $S_{h-1}^{(2)}H_{k,Y_{l-h,j-h}}$ be the standard inner product between $\widetilde{M}_k^{(2)}$ and the cofactor matrix of the matrix in defining $H_{k,Y_{l-h,j-h}}$.
Let $(b_{1},\ldots,b_{l-1})$ be an $(l-1)$-tuple with $b_{q}=(-1)^{j-q}(k-q)$ when $1\leq q\leq (j-2)$, $b_{j-1}=(k-\frac{j-1}{2})(j-2)$, and $b_q=0$ when $j\leq q \leq l-1$. Then
\[\sum_{h=2}^{j-1}(-1)^{h}S_{h-1}^{(2)}H_{k,Y_{l-h,j-h}}=\sum_{q=1}^{l-1}b_{q}H_{k,Y_{l-1,q}}.\]

\end{lem}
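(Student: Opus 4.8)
The goal is to evaluate the weighted inner product $\sum_{h=2}^{j-1}(-1)^h S_{h-1}^{(2)}H_{k,Y_{l-h,j-h}}$, where the weights $i$ in $\widetilde{M}_k^{(2)}$ are attached to the \emph{rows} rather than the \emph{columns}. This is exactly the ``second part'' of the decomposition of $S_{h-1}$ described in the overview, and it is expected to be the hardest piece because row-weights do not interact with column (Young-diagram) shifts in a clean Laplace-expansion way, unlike the column-weighted operator $S_{h-1}^{(1)}$ handled in Lemma \ref{lemmaonS1}. The plan is to reduce everything to the column-weighted quantities that we already control, and then invoke the linear-system machinery of Theorem \ref{gerenel linear systems} together with the combinatorial identity in Lemma \ref{410lemma1}.

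\textbf{Step 1: Re-express the row-weight as a translation acting on extended diagrams.} The factor $i$ in $\widetilde{M}_k^{(2)}$ is a weight on row $i$ of the matrix defining $H_{k,Y_{l-h,j-h}}$. I would interpret this weighted inner product as arising from translating the rows of the Hankel matrix, i.e.\ as a $T$- or $S$-type operator acting on the \emph{row} Young diagram $X$ in the two-diagram object $H_{k,\{X;Y\}}$. Concretely, following the idea sketched after (\ref{intro:tauk}) in the overview, I would extend $H_{k,Y_{l-h,j-h}}=H_{k,\{\emptyset;Y_{l-h,j-h}\}}$ to $H_{k,\{X;Y_{l-h,j-h}\}}$ for suitable hook diagrams $X$, so that the row-weight $i$ becomes a derivative (in the shift parameter) of the row index. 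This converts the awkward row-weighted sum into a combination of ordinary $H_{k,\{X;Y_{l-h,j-h}\}}$ objects.

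\textbf{Step 2: Apply the linear system of Theorem \ref{gerenel linear systems} to invert.} With the row-diagram shift in place, the quantities $\sum_{h=1}^{j-1}(-1)^h T_h H_{k,\{X;Y_{l-h,j-h}\}}$ appear on the right-hand side of the linear system (\ref{linear system 1}), whose coefficient matrix $B^{(l)}$ has the explicit inverse $A^{(l)}$ from Lemma \ref{matrix}. I would use this to express the target sum as a linear combination of the $H_{k,\{X;Y_{l,q}\}}$, and then, by Proposition \ref{320eveprop6}, collapse the two-diagram objects back down to single-diagram $H_{k,Y_{l-1,q}}$ terms of the correct length. The coefficients $b_q$ claimed in the statement should emerge precisely from the entries of $B^{(l)}$ contracted against the row-weight vector $(k-i)$, which is exactly the contraction computed in Lemma \ref{410lemma1}: the two cases there—giving $(-1)^j/2$ for $1\le q\le m-1$ and $k-\frac{m+1}{2}$ for $q=m$—are what produce the stated values $b_q=(-1)^{j-q}(k-q)$ for $1\le q\le j-2$ and $b_{j-1}=(k-\tfrac{j-1}{2})(j-2)$.

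\textbf{Main obstacle.} The hard part will be bookkeeping the index ranges so that the row-weight $i$ transforms correctly into the translation/derivative on the row diagram, and checking that the non-hook Young diagrams generated along the way genuinely cancel (this is the role of Theorem \ref{thm1onHankel deteminants}, which guarantees that combinations of $T_h$ on hook diagrams stay within the hook-diagram span). I expect the core computation to parallel the proof of Lemma \ref{lemmaonS1} very closely—splitting into even and odd $j$, introducing auxiliary tuples $W_{h,q}, Z_{h,q}$, using a sign-reversal identity of the form (\ref{411formula6}) to pair terms, and verifying vanishing by changing variables and exchanging summation order. The genuinely new ingredient is that the weight $k-i$ (rather than a column weight) must be fed through $B^{(l)}$, so the final simplification rests on Lemma \ref{410lemma1}; the $j\ge k+2$ overflow case will then follow by the same convention-and-reflection argument used for (\ref{514formula4}).
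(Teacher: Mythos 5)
Your plan follows essentially the same route as the paper's proof: the row weight $i$ is first converted into weighted row-shifts, $S_{h-1}^{(2)}H_{k,Y_{l-h,j-h}}=\sum_{s=1}^{h-1}(-1)^{s-1}(k-s)H_{k,\{Y_{h-1,s};Y_{l-h,j-h}\}}$, then Lemma \ref{410lemma1} contracts these weights through the matrix $B$, Theorem \ref{thm1onHankel deteminants} and Proposition \ref{513prop5} handle the cancellation and reassembly of the resulting alternating $T$-sums, and a second pass through Theorem \ref{gerenel linear systems} with the explicit inverse $A^{(l-1)}$ of Lemma \ref{matrix} delivers the stated coefficients $b_q$. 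The one imprecision is your appeal to Proposition \ref{320eveprop6} for the final collapse to single-diagram terms: that result is purely existential and cannot by itself produce the explicit $b_q$; the constructive tool actually needed there is again the inverse matrix of Lemma \ref{matrix}, which you do also cite, so this is a matter of emphasis rather than a gap.
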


\begin{proof}

It is not hard to check that
\bea\label{410formula2}
S_{h-1}^{(2)}H_{k,Y_{l-h,j-h}}=\sum_{s=1}^{h-1}(-1)^{s-1}(k-s)H_{k,\{Y_{h-1,s};Y_{l-h,j-h}\}}.
\eea
By Theorem \ref{gerenel linear systems}, Lemma \ref{410lemma1} and (\ref{410formula2}), we have
\bea\label{410formula3}
&&S_{h-1}^{(2)}H_{k,Y_{l-h,j-h}}-(k-\frac{h}{2}) T_{h-1} H_{k,Y_{l-h,j-h}}\nonumber\\
&=& \frac{1}{2} \sum_{j_1=2}^{h-1} (-1)^{j_1-1} \sum_{h_1=1}^{j_1-1} (-1)^{h_1} T_{h_1} H_{k,\{Y_{l-h,j-h}; Y_{h-1-h_1,j_1-h_1}\}}
\eea
So for $j=3,\ldots,l$,
\bea\label{410formula5}
&& \sum_{h=2}^{j-1} (-1)^h S_{h-1}^{(2)} H_{k,Y_{l-h,j-h}}-\sum_{h=2}^{j-1} (-1)^h (k-\frac{h}{2}) T_{h-1} H_{k,Y_{l-h,j-h}}\nonumber  \\
&=& \frac{1}{2} \sum_{h=2}^{j-1} (-1)^h \sum_{j_1=2}^{h-1} (-1)^{j_1-1} \sum_{h_1=1}^{j_1-1} (-1)^{h_1} T_{h_1} H_{k,\{Y_{l-h,j-h}; Y_{h-1-h_1,j_1-h_1}\}} \nonumber\\
&=& \frac{1}{2} \sum_{h=2}^{j-2} (-1)^h \sum_{j_1=2}^{h-1} (-1)^{j_1-1} \sum_{h_1=1}^{j_1-1} (-1)^{h_1} T_{h_1} H_{k,\{Y_{l-h,j-h}; Y_{h-1-h_1,j_1-h_1}\}} \nonumber\\
&& +\, \frac{1}{2} (-1)^{j-1} \sum_{j_1=2}^{j-2} (-1)^{j_1-1} \sum_{h_1=2}^{j_1-1} (-1)^{h_1} T_{h_1} H_{k,\{Y_{l-j+1,1}; Y_{j-2-h_1,j_1-h_1}\}} \nonumber\\
&& +\, \frac{1}{2} (-1)^{j-1} \sum_{j_1=2}^{j-2} (-1)^{j_1}T_{1} H_{k,\{Y_{l-j+1,1}; Y_{j-3,j_1-1}\}}.
\eea
From the second equality, we can see that for (\ref{410formula5}) the summation over $j_{1}$ constrains the summation of $h$, starting from 3 in the first term, and the summation over $h_{1}$ constrains the summation of $j_{1}$, starting from 3 in the second term. 
So by exchanging the order of the summation, we have the following expression.
\bea\label{410formula6}
(\ref{410formula5})&=& \frac{1}{2} \sum_{h_1'=1}^{j-4}\sum_{h=h_1'+2}^{j-2} \sum_{j_1'=h_1'+1}^{h-1} (-1)^{j_1'+h+h_1'-1} T_{h_1'} H_{k,\{Y_{l-h,j-h}; Y_{h-1-h_1',j_1'-h_1'}\}} \nonumber\\
&& +\, \frac{1}{2} \sum_{h_1=2}^{j-3 } \sum_{j_1=h_1+1}^{j-2} (-1)^{j+h_1+j_1} T_{h_1}H_{k,\{Y_{l-j+1,1}; Y_{j-2-h_{1},j_1-h_1}\}}\nonumber \\
&& +\, \frac{1}{2} (-1)^{j-1} \sum_{j_1=2}^{j-2} (-1)^{j_1} T_{1} H_{k,\{Y_{l-j+1,1}; Y_{j-3,j_1-1}\}}.
\eea
By changing variables $h-1-h_{1}'=j-2-h_{1}$ and $j_{1}'-h_{1}'=j_{1}-h_{1}$ in the first term of the above, we have 
\bea\label{410formula7}
(\ref{410formula6})&=& \frac{1}{2} \sum_{h_{1}=2}^{j-3}\sum_{j_{1}=h_{1}+1}^{j-2}\Big(\sum_{h_{1}'=1}^{h_{1}-1}(-1)^{j+j_{1}+h_{1}'}T_{h_{1}'}
H_{k,\{Y_{l-j+1+h_{1}-h_{1}',h_{1}-h_{1}'+1}; Y_{j-2-h_{1},j_{1}-h_{1}}\}} \nonumber\\
&&+\,
(-1)^{h_{1}+j_{1}+j}T_{h_{1}}H_{k,\{Y_{l-j+1,1};Y_{j-2-h_{1},j_{1}-h_{1}}\}}
\Big)\nonumber\\
&&+\, \frac{1}{2} (-1)^{j-1} \sum_{j_1=2}^{j-2} (-1)^{j_1} T_{1} H_{k,\{Y_{l-j+1,1}; Y_{j-3,j_1-1}\}}\nonumber\\
&=&\frac{(-1)^{j}}{2}\sum_{h_{1}=2}^{j-3}\sum_{j_{1}=h_{1}+1}^{j-2}(-1)^{j_{1}}\sum_{h_{1}'=1}^{h_{1}}(-1)^{h_{1}'}T_{h_{1}'}
H_{k,\{Y_{l-j+1+h_{1}-h_{1}',h_{1}-h_{1}'+1}; Y_{j-2-h_{1},j_{1}-h_{1}}\}}\nonumber\\
&&+\, \frac{1}{2} (-1)^{j-1} \sum_{j_1=2}^{j-2} (-1)^{j_1} T_{1} H_{k,\{Y_{l-j+1,1}; Y_{j-3,j_1-1}\}}.
\eea
By Theorem \ref{thm1onHankel deteminants}, we have
\bea\label{410formula8}
&&\sum_{h_{1}'=1}^{h_{1}}(-1)^{h_{1}'}T_{h_{1}'}
H_{k,\{Y_{l-j+1+h_{1}-h_{1}',h_{1}-h_{1}'+1}; Y_{j-2-h_{1},j_{1}-h_{1}}\}}\nonumber\\
&=&
\sum_{s=1}^{h_{1}}(-1)^{h_{1}-s+1}H_{k,\{Y_{j-2-h_1,j_1-h_1} ; Y_{l-j+1+h_1,s}\}}-h_{1}H_{k,\{Y_{j-2-h_1,j_1-h_1} ; Y_{l-j+1+h_1,h_1+1}\}}.
\eea
By Proposition \ref{513prop5} and (\ref{410formula8}),
\bea\label{410formula10}
&&\sum_{j_{1}=h_{1}+1}^{j-2}(-1)^{j_{1}}\sum_{h_{1}'=1}^{h_{1}}(-1)^{h_{1}'}T_{h_{1}'}
H_{k,\{Y_{l-j+1+h_{1}-h_{1}',h_{1}-h_{1}'+1}; Y_{j-2-h_{1},j_{1}-h_{1}}\}}\nonumber\\
&=& \sum_{s=1}^{h_1} (-1)^{s+1} \sum_{j_1=1}^{j-2-h_1} (-1)^{j_1}H_{k, \{Y_{j-2-h_1,j_1} ; Y_{l-j+1+h_1,s}\}} \nonumber\\
&& - \,(-1)^{h_1} h_1  \sum_{j_1=1}^{j-2-h_1} (-1)^{j_1} H_{k,\{Y_{j-2-h_1,j_1} ; Y_{l-j+1+h_1,h_1+1}\}}\nonumber \\
&=& \sum_{s=1}^{h_1} (-1)^{s} T_{j-2-h_1}H_{k, Y_{l-j+1+h_1,s}} + h_1 (-1)^{h_1} T_{j-2-h_1} H_{k,Y_{l-j+1+h_1,h_1+1}}.
\eea
Now by formulae (\ref{410formula5})-(\ref{410formula7}) and (\ref{410formula10}),
\bea
\sum_{h=2}^{j-1} (-1)^{h} S_{h-1}^{(2)} H_{k, Y_{l-h,j-h}} 
&=& \frac{(-1)^j}{2} \sum_{h_1=2}^{j-3} \sum_{s=1}^{h_1} (-1)^s T_{j-2-h_1}H_{k,Y_{l-j+1+h_1,s}} \nonumber \\
&& +\, \frac{(-1)^j}{2} \sum_{h_1=2}^{j-3} h_1 (-1)^{h_1} T_{j-2-h_1} H_{k,Y_{l-j+1+h_1,h_1+1}} \nonumber  \\
&& +\, \frac{(-1)^{j-1}}{2} \sum_{j_1=2}^{j-2} (-1)^{j_1} T_{1}H_{k,\{Y_{l-j+1,1} ; Y_{j-3,j_1-1}\}} \nonumber \\
&& +\, \sum_{h=2}^{j-1}(-1)^{h} (k-\frac{h}{2}) T_{h-1}H_{k,Y_{l-h,j-h}}. \label{eq31}
\eea
By Propositions \ref{prop1} and \ref{513prop5},
\beas
&&\frac{(-1)^{j-1}}{2} \sum_{j_1=2}^{j-2} (-1)^{j_1} T_{1}H_{k,\{Y_{l-j+1,1} ; Y_{j-3,j_1-1}\}}\\
&=&
\frac{(-1)^{j-1}}{2} \sum_{j_1=2}^{j-2} (-1)^{j_1} H_{k,\{Y_{l-j+2,1} ; Y_{j-3,j_1-1}\}}+ \, \frac{(-1)^{j-1}}{2} \sum_{j_1=2}^{j-2} (-1)^{j_1} H_{k,\{Y_{l-j+2,2} ; Y_{j-3,j_1-1}\}} \\
&=& \frac{(-1)^{j-1}}{2} T_{j-3} (H_{k,Y_{l-j+2,1}} + H_{k,Y_{l-j+2,2}}).
\eeas
So by the above and (\ref{eq31}),
\bea\label{0410eq12}
&& \sum_{h=2}^{j-1} (-1)^{h} S_{h-1}^{(2)} H_{k, Y_{l-h,j-h}}\nonumber  \\
&=& \frac{(-1)^j}{2} \sum_{h_1=2}^{j-3} \sum_{s=1}^{h_1} (-1)^s T_{j-2-h_1} H_{k,Y_{l-j+1+h_1,s}}+\, \sum_{h=2}^{j-1} (-1)^h (k-\frac{h}{2}) T_{h-1}H_{k,Y_{l-h,j-h}}\nonumber\\
&& -\, \frac{1}{2} \sum_{h=2}^{j-2} (j-1-h) (-1)^{h} T_{h-1} H_{k,Y_{l-h,j-h}}+\frac{(-1)^{j-1}}{2} T_{j-3} H_{k,Y_{l-j+2,1}}\nonumber \\
&=&\frac{(-1)^j}{2} \sum_{h_1=2}^{j-3} \sum_{s=1}^{h_1} (-1)^s T_{j-2-h_1} H_{k,Y_{l-j+1+h_1,s}}\nonumber\\
&& +(k-\frac{j-1}{2}) \sum_{h=2}^{j-1} (-1)^{h} T_{h-1} H_{k,Y_{l-h,j-h}}+\frac{(-1)^{j-1}}{2} T_{j-3} H_{k,Y_{l-j+2,1}} \label{411formula1}
\eea
Observe that
\bea
&&\frac{(-1)^j}{2} \sum_{h_1=2}^{j-3} \sum_{s=1}^{h_1} (-1)^s T_{j-2-h_1} H_{k,Y_{l-j+1+h_1,s}}\nonumber\\
&=& \frac{(-1)^j}{2} \sum_{h_1=2}^{j-3} \sum_{s=1}^{j-2-h_{1}} (-1)^s T_{h_{1}-1}H_{k,Y_{l-h_{1},s}}- \frac{1}{2}\sum_{h=2}^{j-3}(-1)^{h}T_{h-1}H_{k,Y_{l-h,j-1-h}}\nonumber\\
&=& \frac{(-1)^j}{2} \sum_{h=1}^{j-4} (-1)^h \sum_{s_1=h+1}^{j-3} (-1)^{s_1} T_{h}H_{k,Y_{l-1-h,s_1-h}}
-\frac{1}{2}\sum_{h=2}^{j-3}(-1)^{h}T_{h-1}H_{k,Y_{l-h,j-1-h}}\nonumber  \\
&=& \frac{(-1)^j}{2} \sum_{s_1=2}^{j-3} (-1)^{s_1} \sum_{h=1}^{s_1-1} (-1)^h  T_{h}H_{k, Y_{l-1-h,s_1-h}}
-\frac{1}{2}\sum_{h=2}^{j-3}(-1)^{h}T_{h-1}H_{k,Y_{l-h,j-1-h}}\label{411formula2}.
\eea
By formulae (\ref{411formula1}) and (\ref{411formula2}), we have
\bea
&&\sum_{h=2}^{j-1} (-1)^{h} S_{h-1}^{(2)} H_{k, Y_{l-h,j-h}}\nonumber\\
&=&\frac{(-1)^j}{2} \sum_{s_1=2}^{j-2} (-1)^{s_1} \sum_{h=1}^{s_1-1} (-1)^h  T_{h} H_{k,Y_{l-1-h,s_1-h}}
 -(k-\frac{j-1}{2})\sum_{h=1}^{j-2} (-1)^h  T_{h}H_{k,Y_{l-1-h,j-1-h}}\label{411formula3} .
\eea
Let $A^{(l-1)} = (a_{i,j})_{i,j=1,\ldots,k}$ be the matrix given in (\ref{definition of A}) with $l$ replaced by $l-1$. By Theorem \ref{gerenel linear systems} with $X=\emptyset$ and (\ref{411formula3}),
\bea
\sum_{h=2}^{j-1} (-1)^{h} S_{h-1}^{(2)} H_{k, Y_{l-h,j-h}} &=&
\frac{(-1)^j}{2} \sum_{s_1=2}^{j-2} (-1)^{s_1} \sum_{q=1}^{l-1} a_{s_{1}-1,q}  H_{k,Y_{l-1,q}} \nonumber \\
&& -\, (k-\frac{j-1}{2}) \sum_{q=1}^{l-1} a_{j-2,q} H_{k,Y_{l-1,q}} . \label{411formula4}
\eea
Then by (\ref{411formula4}) and the definition of $A^{(l-1)}$, we obtain the claim in the lemma.
\end{proof}

\begin{proof}[Proof of Proposition \ref{411proponS}]
This is an immediate consequence of Lemmas \ref{lemmaonS1} and \ref{lemmaonS2} and the fact that 
$
S_{h-1}H_{k,Y_{l-h,j-h}}=S_{h-1}^{(1)}H_{k,Y_{l-h,j-h}}+S_{h-1}^{(2)}H_{k,Y_{l-h,j-h}}.$
\end{proof}

\section{Recursive formulae for $\tau_{k,Y}(x)$}
\label{rfHsy}

In this section, we establish general recursive formulae for Hankel determinants of I-Bessel functions shifted by Young diagrams.

Let $I_{\beta}(x)$ be the Bessel function of the first kind with power series expansion
\[I_{\beta}(x) =(x/2)^{\beta}\sum_{j=0}^{\infty}\frac{x^{2j}}{2^{2j}j! \Gamma(\beta+j+1)},\]
where $\beta$ is a complex number and $\Gamma(z)$ is the Gamma function. This satisfies the following recursive relations:
\bea
\frac{d}{dx} I_{\beta}(2\sqrt{x}) &=& 
\frac{I_{\beta+1}(2\sqrt{x})}{\sqrt{x}}  + \frac{\beta}{2x}I_{\beta}(2\sqrt{x}) , \nonumber \\
\frac{d}{dx} I_{\beta}(2\sqrt{x}) &=& 
\frac{I_{\beta-1}(2\sqrt{x})}{\sqrt{x}}  - \frac{\beta}{2x}I_{\beta}(2\sqrt{x}). \label{recursiveformulasforbessel}
\eea
By the above,
we have 
\bea\label{recursiveformula2}
I_{\beta+2}(2\sqrt{x})=I_{\beta}(2\sqrt{x})-\frac{\beta +1}{\sqrt{x}}I_{\beta+1}(2\sqrt{x}).
\eea
Let $h\geq 1$ and $Y=(l_{1},\ldots,l_{h})$ be a Young diagram with $l_{1}\geq \dots \geq l_{h}\geq 1$. For convenience of writing,  when $h<k$ we set $l_{j}=0$ for $h+1\leq j\leq k$.
Define $\tau_{k,Y}(x)$ as $D_{k}(1,2,\ldots,k;Y)$ in (\ref{definition of Dk}) with $a_{\beta}$  replaced by $I_{\beta}(2\sqrt{x})$. That is 
\bea\label{definition of tau}
\tau_{k,Y}(x): = \det(I_{i+j+l_{k-j}+1}(2\sqrt{x}))_{i,j=0,\ldots,k-1}.
\eea
This is a special case of $H_{k,\{X,Y\}}$ obtained by by setting $\alpha=1, X=\emptyset, a_\beta = I_\beta(2\sqrt{x})$ in Definition \ref{definition of shifted Hankel}. When $Y=\emptyset$ is an empty Young diagram, we denote $\tau_{k,\emptyset}$ as $\tau_k$ for simplicity.

\begin{prop}
\label{lemmaontranslationby1}
Let $k\geq 1,l\geq 0$ and $Y$ be a Young diagram of length $l$. Let $\tau_{k,Y}$ be given in (\ref{definition of tau}). Then
\bea
T_{1}\tau_{k,Y}(x) = \sqrt{x}\frac{d}{dx} \tau_{k,Y}(x) - \frac{k^2+l}{2\sqrt{x}}\tau_{k,Y},
\eea
In particular, if $Y=\emptyset$, then  
\bea
\tau_{k,(1)}(x) = \sqrt{x}\frac{d}{dx} \tau_{k}(x) - \frac{k^2}{2\sqrt{x}}\tau_{k} .
\eea
\end{prop}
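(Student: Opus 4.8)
The plan is to expand $T_1\tau_{k,Y}$ directly from its definition as an inner product against a cofactor matrix, convert the shifted Bessel index back to an unshifted one using the recursion in (\ref{recursiveformulasforbessel}), and then recognize the two resulting pieces as a determinant derivative and a scalar multiple of $\tau_{k,Y}$. Write $M=(I_{i+j+l_{k-j}+1}(2\sqrt{x}))_{i,j=0,\ldots,k-1}$ for the matrix defining $\tau_{k,Y}$ in (\ref{definition of tau}), and let $F_{ij}$ denote its $(i,j)$-cofactor. By Definition \ref{defn of translation operator}, $T_1\tau_{k,Y}$ is the standard inner product of the cofactor matrix $(F_{ij})$ with $M_k(1,\ldots,k;T_1Y)$, whose $(i,j)$-entry is $I_{i+j+l_{k-j}+2}(2\sqrt{x})$ since $T_1$ raises every column shift by one. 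Hence $T_1\tau_{k,Y}=\sum_{i,j}I_{i+j+l_{k-j}+2}(2\sqrt{x})\,F_{ij}$.

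First I would apply the recursion from (\ref{recursiveformulasforbessel}), rewritten as $I_{\beta+1}(2\sqrt{x})=\sqrt{x}\,\frac{d}{dx}I_{\beta}(2\sqrt{x})-\frac{\beta}{2\sqrt{x}}I_{\beta}(2\sqrt{x})$, with $\beta=i+j+l_{k-j}+1$. This splits $T_1\tau_{k,Y}$ into a derivative term $\sqrt{x}\sum_{i,j}\frac{d}{dx}I_{i+j+l_{k-j}+1}(2\sqrt{x})\,F_{ij}$ and a weight term $-\frac{1}{2\sqrt{x}}\sum_{i,j}(i+j+l_{k-j}+1)\,I_{i+j+l_{k-j}+1}(2\sqrt{x})\,F_{ij}$. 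For the derivative term I would invoke Jacobi's formula: the derivative of $\det M$ equals $\sum_{i,j}\frac{dM_{ij}}{dx}F_{ij}$ (the trace of the adjugate times $dM/dx$), so this term is exactly $\sqrt{x}\,\frac{d}{dx}\tau_{k,Y}$.

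The weight term is the crux. The key observation is that the index weight decomposes additively as $i+j+l_{k-j}+1=i+(j+l_{k-j}+1)$ into a part depending only on the row index $i$ and a part depending only on the column index $j$. For a row-only weight $c_i$, Laplace expansion of $\det M$ along row $i$ gives $\sum_j M_{ij}F_{ij}=\det M$, so $\sum_{i,j}c_i M_{ij}F_{ij}=(\sum_i c_i)\det M$; the analogous column expansion handles a column-only weight. Applying this, the weight sum equals $\big(\sum_{i=0}^{k-1} i+\sum_{j=0}^{k-1}(j+l_{k-j}+1)\big)\tau_{k,Y}$. Then $\sum_{i=0}^{k-1} i=\frac{k(k-1)}{2}$ and $\sum_{j=0}^{k-1}(j+1)=\frac{k(k+1)}{2}$ sum to $k^2$, while $\sum_{j=0}^{k-1}l_{k-j}=\sum_{m=1}^{k}l_m=l$ is the length of $Y$; the total is $k^2+l$. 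Combining the two terms gives $T_1\tau_{k,Y}=\sqrt{x}\,\frac{d}{dx}\tau_{k,Y}-\frac{k^2+l}{2\sqrt{x}}\tau_{k,Y}$.

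Finally, the ``in particular'' case follows by setting $Y=\emptyset$, so $l=0$, and identifying $T_1\tau_{k,\emptyset}=\tau_{k,(1)}$ via Proposition \ref{prop4}. The main obstacle is purely bookkeeping rather than conceptual: one must confirm that the index weight genuinely separates into row and column contributions, so that no cross terms survive the cofactor contraction, and track the shift $l_{k-j}$ correctly through the reindexing $m=k-j$ so that its total contribution is precisely $|Y|=l$.
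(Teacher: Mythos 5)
Your proposal is correct and follows essentially the same route as the paper's proof: expand $T_1\tau_{k,Y}$ as the cofactor inner product, apply the Bessel recursion to split off $\sqrt{x}\,\frac{d}{dx}\tau_{k,Y}$ via the determinant-derivative (Jacobi) identity, and evaluate the weight term by separating $i+j+l_{k-j}+1$ into row and column parts whose Laplace expansions contribute $\frac{k^2-k}{2}$ and $\frac{k^2+k}{2}+l$ respectively. The bookkeeping you flag (the separation into row/column weights and $\sum_j l_{k-j}=l$) is exactly how the paper handles it.
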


\begin{proof}
Let $(F_{ij})_{i,j=0,\ldots,k-1}$ be the cofactor matrix of $(I_{i+j+1+l_{k-j}}(2\sqrt{x}))_{i,j=0,\ldots,k-1}$. Write $Y$ as $(l_{1},\ldots,l_{h})$ with $l_{1}\geq \ldots \geq l_{h}\geq 1$, and set $l_{j}=0$ for $h+1\leq j\leq k$ when $h<k$. 
Recall that for any two matrices $A,B$, we denote the inner product as $A \cdot B = {\rm Tr}(A^tB)$.
By the definition of $T_{1}$ (see Definition \ref{defn of translation operator}) and the recursive relation $(\ref{recursiveformulasforbessel})$,
\beas
T_{1}\tau_{k,Y}&=&(I_{i+j+2+l_{k-j}})_{i,j} \cdot (F_{ij})_{i,j} \\
&=& \sqrt{x}\Big(\frac{dI_{i+j+1+l_{k-j}}}{dx}\Big)_{i,j}
\cdot (F_{ij})_{i,j}
-\frac{1}{2\sqrt{x}} \Big((i+j+1+l_{k-j})I_{i+j+1+l_{k-j}}\Big)_{i,j}
\cdot (F_{ij})_{i,j}.
\eeas
Note that
\beas
&&\Big((i+j+1+l_{k-j})I_{i+j+1+l_{k-j}}\Big)_{i,j}
\cdot (F_{ij})_{i,j}\\
&=&\Big((j+1+l_{k-j})I_{i+j+1+l_{k-j}}\Big)_{i,j}
\cdot (F_{ij})_{i,j}+\Big(iI_{i+j+1+l_{k-j}}\Big)_{i,j}
\cdot (F_{ij})_{i,j}\\
&=&\Big(\frac{k^2+k}{2}+l+\frac{k^2-k}{2}\Big)\tau_{k,Y}=(k^2+l)\tau_{k,Y}.
\eeas
By the above, we have the claim in the proposition.
\end{proof}

We remark that the above technique of using the cofactor matrix to handle problems related to determinants is inspired by \cite[proof of Lemma 2.5]{kajiwara2001determinant} on giving determinantal formulae for general solutions of Toda equations, which are related to the theory of the Painlev\'{e} equations.

\begin{prop}
\label{initial values of tau}
Let $Y_{2,1}=(2)$ and $Y_{2,2}=(1,1)$ be Young diagrams. 
Then
\beas
\tau_{k,Y_{2,1}} &=& \frac{x}{2} \frac{d^2}{dx^2} \tau_k - \frac{k(k+2)}{2} \frac{d}{dx} \tau_k + \frac{k (k^3+4k^2+2k+4x)}{8x} \tau_k, \\
\tau_{k,Y_{2,2}} &=& \frac{x}{2} \frac{d^2}{dx^2} \tau_k - \frac{k(k-2)}{2} \frac{d}{dx} \tau_k + \frac{k (k^3-4k^2+2k-4x)}{8x} \tau_k.
\eeas
\end{prop}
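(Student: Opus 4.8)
The plan is to pin down the two quantities $\tau_{k,Y_{2,1}}=\tau_{k,(2)}$ and $\tau_{k,Y_{2,2}}=\tau_{k,(1,1)}$ by producing two independent linear relations---one for their difference and one for their sum---each expressed through $\tau_k,\tau_k',\tau_k''$, and then solving the resulting $2\times2$ system. For the difference, Proposition~\ref{513prop5} with $l=2$ gives at once
\[ T_2\tau_k=\tau_{k,(2)}-\tau_{k,(1,1)} . \]
I would then evaluate $T_2\tau_k$ from its definition: write $T_2\tau_k=\sum_{i,m}I_{i+m+3}(2\sqrt x)F_{im}$, where $F_{im}$ are the cofactors of the matrix defining $\tau_k$, and apply the Bessel recursion (\ref{recursiveformula2}) to split $I_{i+m+3}=I_{i+m+1}-\tfrac{i+m+2}{\sqrt x}I_{i+m+2}$. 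The first piece is $\sum_{i,m}I_{i+m+1}F_{im}=k\tau_k$ by Laplace expansion, and the second is exactly $\tfrac{1}{\sqrt x}S_1\tau_k$. Proposition~\ref{513prop6} with $l=1,\alpha=1$ gives $S_1\tau_k=2k\,\tau_{k,(1)}$, and Proposition~\ref{lemmaontranslationby1} gives $\tau_{k,(1)}=\sqrt x\,\tau_k'-\tfrac{k^2}{2\sqrt x}\tau_k$. Combining these yields $T_2\tau_k=k\tau_k-2k\tau_k'+\tfrac{k^3}{x}\tau_k$.

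For the second relation I would compute $T_1\tau_{k,(1)}$ (equivalently $T_1$ applied to $\tau_{k,(1)}=T_1\tau_k$) in two different ways. On the analytic side, Proposition~\ref{lemmaontranslationby1} applied to $Y=(1)$, whose length is $1$, gives $T_1\tau_{k,(1)}=\sqrt x\,\tfrac{d}{dx}\tau_{k,(1)}-\tfrac{k^2+1}{2\sqrt x}\tau_{k,(1)}$; substituting the formula for $\tau_{k,(1)}$ and simplifying produces the differential expression $x\tau_k''-k^2\tau_k'+\tfrac{k^2(k^2+2)}{4x}\tau_k$. On the combinatorial side, I unwind $T_1$ acting on $\tau_{k,(1)}=D_k(1,\dots,k;(1))$: by Proposition~\ref{prop1} this shifts one row-index at a time, and every term except the last creates two coincident rows and so vanishes, leaving $D_k(1,\dots,k-1,k+1;(1))$. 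Expanding the remaining one-box column shift by Proposition~\ref{prop4} and again discarding coincident-row terms, exactly two survive, namely $D_k(1,\dots,k-2,k,k+1;\emptyset)$ and $D_k(1,\dots,k-1,k+2;\emptyset)$, which are $\tau_{k,(1,1)}$ and $\tau_{k,(2)}$ by the Hankel transpose symmetry. Hence $\tau_{k,(2)}+\tau_{k,(1,1)}=x\tau_k''-k^2\tau_k'+\tfrac{k^2(k^2+2)}{4x}\tau_k$. Half the sum plus (minus) half the difference then gives the two asserted formulas, with the common $\tfrac{x}{2}\tau_k''$ term and the stated splitting of the $\tau_k'$ and $\tau_k$ coefficients.

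The main obstacle---and the reason for routing the second relation through $T_1\tau_{k,(1)}$---is that $T_2$ alone controls only the difference $\tau_{k,(2)}-\tau_{k,(1,1)}$, so a genuinely independent identity is needed. The seemingly natural alternative is to read the sum off from $S_2\tau_k=(2k+1)\tau_{k,(2)}-(2k-1)\tau_{k,(1,1)}$ via Proposition~\ref{513prop6}, but applying (\ref{recursiveformula2}) to $S_2\tau_k$ generates the quadratically weighted sum $\sum_{i,m}(i+m+3)(i+m+2)I_{i+m+2}F_{im}$, whose genuine row--column cross term $\sum_{i,m}im\,I_{i+m+2}F_{im}$ does \emph{not} collapse to the boundary ($i=k-1$ or $m=k-1$) contributions that make the linear sums tractable, and resists direct evaluation. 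Using the iterate $T_1\tau_{k,(1)}$ instead sidesteps this entirely: the row-coincidence mechanism in Propositions~\ref{prop1} and~\ref{prop4} collapses it cleanly onto precisely the two hook determinants $\tau_{k,(2)}$ and $\tau_{k,(1,1)}$, so the care in the proof lies only in the collision bookkeeping rather than in evaluating an intractable weighted moment.
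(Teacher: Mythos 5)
Your proof is correct and takes essentially the same route as the paper: the two relations you derive, $\tau_{k,(2)}-\tau_{k,(1,1)}=T_2\tau_k=k\tau_k-\frac{2k}{\sqrt{x}}T_1\tau_k$ and $\tau_{k,(2)}+\tau_{k,(1,1)}=T_1\tau_{k,(1)}$, are precisely the $2\times 2$ system with matrix $B^{(2)}=\bigl(\begin{smallmatrix}1/2&1/2\\1/2&-1/2\end{smallmatrix}\bigr)$ that the paper reads off from Theorem \ref{gerenel linear systems}, and the evaluations via the Bessel recursion, Proposition \ref{513prop6} and Proposition \ref{lemmaontranslationby1} are identical. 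The only difference is presentational: you rederive the $l=2$ instance of the linear system by direct coincidence bookkeeping (Propositions \ref{prop1} and \ref{prop4} plus Hankel transpose symmetry) rather than citing the general theorem.
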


\begin{proof}
Let $(F_{ij})_{i,j=0,\ldots,k-1}$ be the cofactor matrix of $(I_{i+j+1}(2\sqrt{x}))_{i,j=0,\ldots,k-1}$. Then
\beas
T_2 \tau_{k}(x) &=& (I_{i+j+3}(2\sqrt{x}))_{i,j} \cdot (F_{ij})_{i,j} \\
&=& (I_{i+j+1}(2\sqrt{x}))_{i,j} \cdot (F_{ij})_{i,j} - \frac{1}{\sqrt{x}} ((i+j+2)I_{i+j+2}(2\sqrt{x}))_{i,j} \cdot (F_{ij})_{i,j} \\
&=& k \tau_k - \frac{1}{\sqrt{x}} ((j+2)I_{i+j+2}(2\sqrt{x}))_{i,j} \cdot (F_{ij})_{i,j} - \frac{1}{\sqrt{x}} (i I_{i+j+2}(2\sqrt{x}))_{i,j} \cdot (F_{ij})_{i,j} \\
&=&  k\tau_k - \frac{2k}{\sqrt{x}} T_1 \tau_k.
\eeas
By Theorem \ref{gerenel linear systems} with $H_{k, \{X;Y\}} = \tau_{k,\{X;Y\}} $ and $X=\emptyset$, we obtain
\[
\begin{pmatrix}
\tau_{k,Y_{2,1}} \\
\tau_{k,Y_{2,2}}
\end{pmatrix} = 
\begin{pmatrix}
1/2 & 1/2 \\
1/2 & -1/2
\end{pmatrix}
\begin{pmatrix}
T_1 \tau_{k,Y_{1,1}} \\
T_2 \tau_{k}
\end{pmatrix},
\]
where $Y_{1,1}=(1)$.
By Proposition \ref{lemmaontranslationby1}, we obtain the claimed results.
\end{proof}

\begin{prop}
\label{recursive formula for tau0}
Let $k\geq 1$ and $l\geq 3$. Let $i,j$ with $1\leq i\leq l$ and $1\leq j\leq i$ be integers. Let $Y_{i,j}$ be a hook diagram. Let $B^{(l)}$, $C_1^{(l)}$ and $C_2^{(l)}$ be given in (\ref{definition of B}), (\ref{definitionofC1}) and (\ref{definitionofC2}), respectively.
Then
\beas
\begin{pmatrix}
\tau_{k,Y_{l,1}} \\
\vdots \\
\tau_{k,Y_{l,l}} 
\end{pmatrix}
&=&
-\, \sqrt{x} B^{(l)}
\begin{pmatrix}
\frac{d}{dx} \tau_{k,Y_{l-1,1}} \\
\vdots \\
\frac{d}{dx}\tau_{k,Y_{l-1,l-1}} \\
0
\end{pmatrix}
+\frac{k^2+l-1}{2\sqrt{x}} B^{(l)}\begin{pmatrix}
 \tau_{k,Y_{l-1,1}} \\
\vdots \\
\tau_{k,Y_{l-1,l-1}} \\
0
\end{pmatrix} \\
&& - \, \frac{1}{\sqrt{x}} C_1^{(l)}
\begin{pmatrix}
 \tau_{k,Y_{l-1,1}} \\
\vdots \\
\tau_{k,Y_{l-1,l-1}} 
\end{pmatrix}
+ C_2^{(l)}
\begin{pmatrix}
 \tau_{k,Y_{l-2,1}} \\
\vdots \\
\tau_{k,Y_{l-2,l-2}}
\end{pmatrix}.
\eeas
\end{prop}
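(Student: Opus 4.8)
The plan is to start from the linear system of Theorem \ref{gerenel linear systems}, specialised to the Bessel setting, and then to evaluate each entry of its right-hand vector by feeding in the three-term recursion (\ref{recursiveformula2}) for $I_\beta(2\sqrt{x})$. Concretely, applying Theorem \ref{gerenel linear systems} with $X=\emptyset$, $\alpha=1$ and $a_\beta=I_\beta(2\sqrt{x})$ gives $(\tau_{k,Y_{l,1}},\ldots,\tau_{k,Y_{l,l}})^T=B^{(l)}\,\mathbf{w}$, where $\mathbf{w}$ is the column vector whose entry in position $j-1$ is $w_j:=\sum_{h=1}^{j-1}(-1)^h T_h\tau_{k,Y_{l-h,j-h}}$ for $j=2,\ldots,l$, and whose last entry is $T_l\tau_k$. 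Thus it suffices to rewrite each $w_j$, as well as $T_l\tau_k$, as a linear combination of $\frac{d}{dx}\tau_{k,Y_{l-1,q}}$, $\tau_{k,Y_{l-1,q}}$ and $\tau_{k,Y_{l-2,q}}$, and then to multiply through by $B^{(l)}$.

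The first step I would carry out is the operator identity underlying the whole computation. Substituting (\ref{recursiveformula2}) into the entries $I_{i+j+1+t_{k-j}+h}(2\sqrt{x})$ of the matrix defining $T_h\tau_{k,Y}$ splits the translation as
\[
T_h\tau_{k,Y}=T_{h-2}\tau_{k,Y}-\frac{1}{\sqrt{x}}\,S_{h-1}\tau_{k,Y},\qquad h\ge 2 ,
\]
since in $I_{\gamma+2}=I_\gamma-\frac{\gamma+1}{\sqrt{x}}I_{\gamma+1}$ the unweighted summand reproduces the translate by $h-2$, while the weighted summand is precisely the entry of $\widetilde{M}_k(\,\cdot\,;T_{h-1}Y)$. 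I would isolate the $h=1$ term of each $w_j$ and evaluate it by Proposition \ref{lemmaontranslationby1}, giving $-T_1\tau_{k,Y_{l-1,j-1}}=-\sqrt{x}\frac{d}{dx}\tau_{k,Y_{l-1,j-1}}+\frac{k^2+l-1}{2\sqrt{x}}\tau_{k,Y_{l-1,j-1}}$. For the $h\ge 2$ terms, the $T_{h-2}$ part reindexes (shift $h\mapsto h+2$) to $\tau_{k,Y_{l-2,j-2}}+\sum_{h=1}^{j-3}(-1)^h T_h\tau_{k,Y_{(l-2)-h,(j-2)-h}}$, whose sum is an entry of the level-$(l-2)$ right-hand vector; inverting the level-$(l-2)$ instance of Theorem \ref{gerenel linear systems} through $A^{(l-2)}=(B^{(l-2)})^{-1}$ of Lemma \ref{matrix} expresses it via the $\tau_{k,Y_{l-2,q}}$. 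The $S_{h-1}$ part, summed against $(-1)^h$, is evaluated in closed form by Proposition \ref{411proponS} (with $\alpha=1$) as $\sum_{q=1}^{l-1}d_q\,\tau_{k,Y_{l-1,q}}$.

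Assembling these contributions into $\mathbf{w}$ and left-multiplying by $B^{(l)}$ produces exactly the four terms of the statement: the derivative pieces yield $-\sqrt{x}\,B^{(l)}(\frac{d}{dx}\tau_{k,Y_{l-1,1}},\ldots,\frac{d}{dx}\tau_{k,Y_{l-1,l-1}},0)^T$; the undifferentiated $T_1$ pieces yield $\frac{k^2+l-1}{2\sqrt{x}}B^{(l)}(\tau_{k,Y_{l-1,1}},\ldots,\tau_{k,Y_{l-1,l-1}},0)^T$; the $S$-pieces yield the $-\frac{1}{\sqrt{x}}C_1^{(l)}$ term; and the reindexed $T_{h-2}$ pieces yield the $C_2^{(l)}$ term. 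The trailing zeros in the first two vectors are forced by the last entry $T_l\tau_k$ of $\mathbf{w}$: the same identity gives $T_l\tau_k=T_{l-2}\tau_k-\frac{1}{\sqrt{x}}S_{l-1}\tau_k$, which carries no $T_1$ term (hence no derivative contribution), its $T_{l-2}\tau_k$ part folding into $C_2^{(l)}$ and its $S_{l-1}\tau_k$ part, evaluated by Proposition \ref{513prop6}, into the bottom row of $C_1^{(l)}$.

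The main obstacle is the final identification of the explicit matrices. One must verify that $B^{(l)}$ times the coefficient matrix of the $d_q$'s equals the $C_1^{(l)}$ of (\ref{definitionofC1}), and that $B^{(l)}$ times the combination arising from the $T_{h-2}$ reindexing (the identity embedding together with $A^{(l-2)}$) equals the $C_2^{(l)}$ of (\ref{definitionofC2}). This is where all the arithmetic lives: the lower-Hessenberg shape of $B^{(l)}$ and the explicit inverse $A^{(l)}$ from Lemma \ref{matrix}, combined with the three-case formula for $d_q$, make the products computable column-by-column, but checking that the resulting entries collapse to the stated closed forms — including the degenerate small-$j$ terms, the $q=l-1$ boundary value $d_{j-1}=(j-2)(2k-j+2)$, and the vanishing $d_q$ for $q\ge j$ — requires careful index bookkeeping and is the delicate part of the argument.
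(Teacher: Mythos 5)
Your proposal is correct and follows essentially the same route as the paper's proof: both start from Theorem \ref{gerenel linear systems} with $X=\emptyset$, split each right-hand entry via the three-term Bessel recursion into a $T_1$ piece (evaluated by Proposition \ref{lemmaontranslationby1}), a reindexed $T_{h-2}$ piece (resolved through the level-$(l-2)$ system and $A^{(l-2)}$ from Lemma \ref{matrix}), and an alternating $S_{h-1}$ sum (evaluated by Proposition \ref{411proponS}), with Propositions \ref{513prop5} and \ref{513prop6} handling the last entry. The only work you defer — matching the assembled coefficients against the explicit $C_1^{(l)}$ and $C_2^{(l)}$ — is exactly the bookkeeping the paper also leaves implicit.
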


\begin{proof}
Applying Theorem \ref{gerenel linear systems} with $H_{k, \{X;Y\}} = \tau_{k,(X;Y)}, X=\emptyset$ and recursive formula (\ref{recursiveformula2}),  we obtain
\bea
\begin{pmatrix}
\tau_{k,Y_{l,1}} \\
\vdots \\
\tau_{k,Y_{l,l}} 
\end{pmatrix}&=&
B^{(l)}\begin{pmatrix}
-T_1 \tau_{k,Y_{l-1,1}} \\
\vdots \\
-T_1 \tau_{k,Y_{l-1,l-1}} \\
0
\end{pmatrix}
-
\frac{1}{\sqrt{x}}B^{(l)}
\begin{pmatrix}
0 \\
\vdots\\
\sum_{h=2}^{j-1}(-1)^{h} S_{h-1} \tau_{k,Y_{l-h,j-h}} \\
\vdots \\
S_{l-1} \tau_k
\end{pmatrix}_{j=3,\ldots,l}\nonumber\\
&&+B^{(l)}
\begin{pmatrix}
0 \\
\vdots\\
\sum_{h=2}^{j-1}(-1)^{h} T_{h-2} \tau_{k,Y_{l-h,j-h}} \\
\vdots \\
T_{l-2} \tau_k
\end{pmatrix}_{j=3,\ldots,l}.
\label{sl-0}
\eea
Let $A^{(l-2)}=(a_{i,j})_{i,j=1\ldots l-2}$ be given in Lemma \ref{matrix} with $l$ replaced by $l-2$. By Theorem \ref{gerenel linear systems}, for $j=3,\ldots,l$,
\bea\label{Th-2}
\sum_{h=2}^{j-1} (-1)^h T_{h-2} \tau_{k,Y_{l-h,j-h}} &=& \sum_{h=0}^{j-3} (-1)^h T_{h} \tau_{k,Y_{l-2-h,j-2-h}} \nonumber\\
&=& k \tau_{k,Y_{l-2,j-2}} +  \sum_{h=1}^{j-3} (-1)^h T_{h} \tau_{k,Y_{l-2-h,j-2-h}}\nonumber \\
&=& k \tau_{k,Y_{l-2,j-2}} +  \sum_{q=1}^{l-2} a_{j-3,q} \tau_{k,Y_{l-2,q}}\nonumber \\
&=& \sum_{q=1}^{j-3} (-1)^{j-q} \tau_{k,Y_{l-2,q}} + (k-j+3) \tau_{k,Y_{l-2,j-2}}.
\eea
By Propositions \ref{513prop5} and \ref{513prop6}, we have 
\bea\label{sl-1}
S_{l-1} \tau_k =  \sum_{i=1}^{l-1} (-1)^{i-1} (2k-2i+l) \tau_{k,Y_{l-1,i}}, \quad 
T_{l-2}\tau_k = \sum_{i=1}^{l-2} (-1)^{i-1}\tau_{k,Y_{l-2,i}}.
\eea
Combining formulae (\ref{sl-0})-(\ref{sl-1}) and  Propositions \ref{411proponS} and \ref{lemmaontranslationby1}, we obtain the claimed result.
\end{proof}

The recursive formula in Proposition \ref{recursive formula for tau0} is for hook diagrams. This includes the special Young diagram that only has 1 row. We then can use induction on the number of rows to deduce $\tau_{k,Y}$ recursively for a general Young diagram $Y$. We analyze this briefly below, because, while it is not the main focus of this paper, it may be of independent interest in itself. Suppose we already know $\tau_{k,Y}$ for any $Y=(l_1,\ldots,l_s)$ with $s\geq 1$ rows. First, we show how to deduce $\tau_{k,(l_1,\ldots,l_s,1)}(x)$.
From the definition of $T_1$ (see Definition \ref{defn of translation operator}), we have
\[
T_1 \tau_{k,(l_1,\ldots,l_s)}(x)
=
\tau_{k,(l_1,\ldots,l_s,1)}(x) + \sum_{i=1}^s \tau_{k,(l_1,\ldots,l_i+1,\ldots,l_s)}(x) .
\]
By Proposition \ref{lemmaontranslationby1}, the left hand side is
\[
\sqrt{x} \frac{d \tau_{k,(l_1,\ldots,l_s)}(x)}{dx} - \frac{k^2+l_1+\cdots+l_s}{2\sqrt{x} } \tau_{k,(l_1,\ldots,l_s)}(x).
\]
From the above, we obtain $\tau_{k,(l_1,\ldots,l_s,1)}(x)$. By a similar argument to the proof of Theorem \ref{gerenel linear systems}, we have
\beas
\begin{pmatrix}
\tau_{k,(l_1,\ldots,l_s,Y_{l,1})} \\
\vdots \\
\vdots \\
\tau_{k,(l_1,\ldots,l_s,Y_{l,l})} \\
\end{pmatrix}
&=& B^{(l)}
\begin{pmatrix}
\vdots \\
\sum_{h=1}^{j-1} (-1)^h T_{h} \tau_{k,(l_1,\ldots,l_s,Y_{l-h,j-h})} \\
\vdots \\
T_{l}\tau_{k,(l_1,\ldots,l_s)} \\
\end{pmatrix}_{j=2,\ldots,l} \\
&& -\, B^{(l)}
\begin{pmatrix}
\vdots \\
\sum_{h=1}^{j-1} (-1)^h 
\sum_{i=1}^s \tau_{k,(l_1,\ldots,l_i+h,\ldots,l_s,Y_{l-h,j-h})} \\
\vdots  \\
\sum_{i=1}^s \tau_{k,(l_1,\ldots,l_i+l,\ldots,l_s)}
\end{pmatrix}_{j=2,\ldots,l}.
\eeas
By the inductive assumption on $\tau_{k,(l_1',\ldots,l_s',Y_{i,j})}$ with $i\leq l-1$, we can obtain the second part above. Regarding the first part, we can use a similar method to that of Proposition \ref{recursive formula for tau0}.

\section{Recursive formulae for $G_{k,Y (x,t)}$}
\label{LiftrfHsY}

In this section, we establish recursive formulae for determinants of Hankel matrices whose entries involve the generating function of I-Bessel function shifted by Young diagrams.

Let $\beta$ be a complex number, and
define
\bea\label{definition of g}
g_\beta(x,t) = \sum_{n=0}^\infty \frac{t^n}{n!} I_{2n+\beta}(2\sqrt{x}).
\eea
Let $h\geq 1$ and $Y=(l_{1},\ldots,l_{h})$ be a Young diagram with $l_{1}\geq \dots \geq l_{h}\geq 1$. As stated previously, when $h<k$, we set $l_{j}=0$ for $h+1\leq j\leq k$.
Define $G_{k,Y}(x,t)$ as
\bea\label{definition of G}
G_{k,Y}(x,t)=\det(g_{i+j+1+l_{k-j}}(x,t))_{i,j=0,\ldots,k-1}.
\eea
This is a special case of $H_{k,\{X,Y\}}$ obtained by setting $\alpha=1, X=\emptyset, a_\beta = g_\beta(x,t)$ in Definition \ref{definition of shifted Hankel}. In particular, when $t=0$, $G_{k,Y}=\tau_{k,Y}$ is defined in (\ref{definition of tau}).
When $Y=\emptyset$ is an empty Young diagram, we denote it as $G_k(x,t)$ for simplicity.
Namely,
\bea\label{definition of G1}
G_{k}(x,t)=\det(g_{i+j+1}(x,t))_{i,j=0,\ldots,k-1}.
\eea
For $i\geq 0$, define
\bea\label{definition of F}
F_{i}(x,t)=\frac{\partial^i}{\partial t^i}{G_{k}}(x,t).
\eea
By the recursive relations (\ref{recursiveformulasforbessel}) and (\ref{recursiveformula2}) satisfied by $I_{\beta}(x)$, 
we have recursive relations for $g_{\beta}(x,t)$ as follows.
\bea
\frac{\partial {g_{\beta}}}{\partial{x}} &=& \frac{1}{\sqrt{x}}g_{\beta+1}+\frac{\beta}{2x}g_{\beta}+\frac{t}{x}g_{\beta+2},
\label{412recursive formula for gbeta1} \\
\frac{\partial {g_{\beta}}}{\partial{x}} &=& \frac{1}{\sqrt{x}}g_{\beta-1}-\frac{\beta}{2x}g_{\beta}-\frac{t}{x}g_{\beta+2}, \label{412recursive formula for gbeta2} \\
g_{\beta+2} &=&g_{\beta}-\frac{\beta+1}{\sqrt{x}}g_{\beta+1}-\frac{2t}{\sqrt{x}}g_{\beta+3}.
\label{412recursive formula for gbeta}
\eea

The following is a fact obout derivatives of determinants.

\begin{lem}\label{NOTE4lem2}
Let $s\geq 0$, $k\geq 1$ be integers and $p_{i,j}(t)$ be $s$-times differential functions of $t$. Then
\[
\Big(\frac{d}{dt}\Big)^{s}\det(p_{i,j}(t))_{i,j=1,\ldots,k}=\sum_{\substack{l_1+\cdots+l_k=s \\ l_1\geq 0,\ldots,l_k\geq 0}}\binom{s}{l_1,\ldots,l_k}\det \Big(p_{i,j}^{(l_{i})}(t)\Big)_{i,j=1,\ldots,k},
\]
where $p_{i,j}^{(l_{i})}(t)$ means that we take the $l_{i}$-th derivative of $p_{i,j}(t)$.
\end{lem}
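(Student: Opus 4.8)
The plan is to reduce the statement to the ordinary multinomial Leibniz rule by first expanding the determinant over permutations. Writing
\[
\det(p_{i,j}(t))_{i,j=1,\ldots,k} = \sum_{\sigma \in S_k} \sgn(\sigma) \prod_{i=1}^k p_{i,\sigma(i)}(t),
\]
the left-hand side becomes a finite $\C$-linear combination of products of exactly $k$ functions of $t$, and differentiation commutes with this finite sum.

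Next I would invoke the generalized Leibniz rule: for any $s$-times differentiable functions $f_1,\ldots,f_k$,
\[
\Big(\frac{d}{dt}\Big)^s \prod_{i=1}^k f_i(t) = \sum_{\substack{l_1+\cdots+l_k=s \\ l_1\geq 0,\ldots,l_k\geq 0}} \binom{s}{l_1,\ldots,l_k} \prod_{i=1}^k f_i^{(l_i)}(t),
\]
which follows by a routine induction on $k$ starting from the two-factor Leibniz rule. Applying this with $f_i = p_{i,\sigma(i)}$ for each fixed $\sigma$, then multiplying by $\sgn(\sigma)$ and summing over $\sigma$, I interchange the two finite summations (over $\sigma$ and over the compositions $(l_1,\ldots,l_k)$ of $s$). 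The multinomial coefficient does not depend on $\sigma$, so it factors out, leaving the inner sum
\[
\sum_{\sigma \in S_k} \sgn(\sigma) \prod_{i=1}^k p_{i,\sigma(i)}^{(l_i)}(t) = \det\big(p_{i,j}^{(l_i)}(t)\big)_{i,j=1,\ldots,k},
\]
where the last equality is simply the Leibniz expansion of the determinant of the matrix whose $(i,j)$-entry is $p_{i,j}^{(l_i)}$. Assembling these gives precisely the claimed identity.

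The only point requiring care—and the step I would emphasize—is the indexing: the derivative order $l_i$ is attached to the \emph{row} index $i$, not the column. This is forced by the structure of the product $\prod_{i=1}^k p_{i,\sigma(i)}$, in which the $i$-th factor always lies in the $i$-th row as $\sigma$ varies over $S_k$; applying $l_i$ derivatives to that factor therefore differentiates every entry of row $i$ the same number of times, consistently across all $\sigma$, which is exactly what permits the interchange of sums and the recognition of the resulting determinant. An alternative route is induction on $s$ using the single-derivative rule $\frac{d}{dt}\det(p_{i,j}) = \sum_{r=1}^k \det(p_{i,j} \text{ with row } r \text{ differentiated})$, tracking the accumulation of derivatives on each row; the multinomial coefficients then arise from counting the orderings in which a given distribution $(l_1,\ldots,l_k)$ of derivatives can be produced, but the permutation-expansion argument above is cleaner and I would present it as the main proof.
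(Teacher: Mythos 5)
Your proof is correct, and it is the standard argument one would expect: expand the determinant over permutations, apply the multinomial Leibniz rule to each product $\prod_i p_{i,\sigma(i)}$, and interchange the two finite sums, with the derivative order $l_i$ correctly tied to the row index. The paper itself states this lemma as a known fact without proof, so your write-up simply supplies the omitted (and entirely routine) verification; there is no divergence of method to report.
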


\begin{prop}
\label{derivative and translations}
Let $G_{k,Y}(x,t)$ be given in 
(\ref{definition of G}), then we have $\frac{\partial G_{k,Y}}{\partial t}=T_{2}G_{k,Y}.$
\end{prop}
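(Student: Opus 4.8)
The plan is to reduce the identity to two elementary facts: a shift identity for $g_\beta$ in the $t$-variable, and the cofactor expansion of the $t$-derivative of a determinant. First I would record the key identity
\[
\frac{\partial}{\partial t} g_\beta(x,t) = g_{\beta+2}(x,t).
\]
This follows directly from the series definition (\ref{definition of g}): differentiating term by term gives $\partial_t g_\beta = \sum_{n\geq 1} \frac{t^{n-1}}{(n-1)!} I_{2n+\beta}(2\sqrt{x})$, and reindexing $m=n-1$ turns the right-hand side into $\sum_{m\geq 0} \frac{t^m}{m!} I_{2m+(\beta+2)}(2\sqrt{x}) = g_{\beta+2}(x,t)$. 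Thus differentiation in $t$ acts on the order of $g$ exactly as an additive shift by $2$, which is the heart of why the translation operator $T_2$ will appear.

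Next I would apply the derivative-of-determinant formula. Writing $G_{k,Y}(x,t) = \det\big(g_{i+j+1+l_{k-j}}(x,t)\big)_{i,j=0,\ldots,k-1}$ and letting $(F_{ij})_{i,j=0,\ldots,k-1}$ denote the cofactor matrix of this same Hankel matrix, the case $s=1$ of Lemma \ref{NOTE4lem2} (equivalently, Jacobi's formula for the $t$-derivative of a determinant) gives
\[
\frac{\partial}{\partial t} G_{k,Y} = \sum_{i,j=0}^{k-1} \Big(\frac{\partial}{\partial t} g_{i+j+1+l_{k-j}}\Big) F_{ij} = \sum_{i,j=0}^{k-1} g_{i+j+3+l_{k-j}}\, F_{ij},
\]
where the last equality substitutes the shift identity from the first step.

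Finally I would identify the right-hand side with $T_2 G_{k,Y}$. By Definition \ref{defn of translation operator}, $T_2 G_{k,Y}$ is the standard inner product of the matrix $M_k(1,\ldots,k;T_2Y)$ with the cofactor matrix of $M_k(1,\ldots,k;Y)$. Since $T_2 Y = (l_1+2,\ldots,l_k+2)$ adds $2$ to every column-shift parameter $l_{k-j}$ (using the padding $l_{s+1}=\cdots=l_k=0$), the $(i,j)$-entry of $M_k(1,\ldots,k;T_2Y)$ is precisely $g_{i+j+1+(l_{k-j}+2)} = g_{i+j+3+l_{k-j}}$, and its cofactor-matrix partner is exactly the $(F_{ij})$ above. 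Hence this inner product equals $\sum_{i,j} g_{i+j+3+l_{k-j}} F_{ij}$, which is identically $\partial_t G_{k,Y}$, establishing the claim.

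There is no genuine analytic obstacle here; the only point requiring care is the bookkeeping. One must verify that the column-shift convention in (\ref{definition of G}) together with the padding convention makes the entries of $M_k(1,\ldots,k;T_2Y)$ agree term-for-term with the $t$-differentiated entries of $G_{k,Y}$, so that the cofactor expansion of the determinant derivative coincides exactly with the inner-product definition of $T_2$. Once this index matching is checked, the proof is immediate.
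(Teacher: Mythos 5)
Your proof is correct. The first half coincides with the paper's: both rest on the shift identity $\partial_t g_\beta = g_{\beta+2}$ and on differentiating the determinant (your Jacobi cofactor expansion is exactly the $s=1$ case of Lemma \ref{NOTE4lem2} after Laplace-expanding each row-differentiated determinant). Where you diverge is in the final identification with $T_2G_{k,Y}$. The paper keeps the row-by-row form, observes that differentiating row $i$ reproduces row $i+2$ so that only the last two rows survive, reads the two survivors as $G_{k,\{Y_{2,1};Y\}}-G_{k,\{Y_{2,2};Y\}}$, and then invokes Proposition \ref{513prop5} to recognize this alternating sum as $T_2G_{k,Y}$. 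You instead note that the cofactor form $\sum_{i,j} g_{i+j+3+l_{k-j}}F_{ij}$ is, verbatim, the standard inner product of $M_k(1,\ldots,k;T_2Y)$ with the cofactor matrix of $M_k(1,\ldots,k;Y)$, i.e.\ the definition of $T_2$ in Definition \ref{defn of translation operator}; this is a legitimate shortcut that avoids both the vanishing-determinant argument and Proposition \ref{513prop5}. The only thing your route forgoes is the intermediate identity $\partial_t G_{k,Y}=G_{k,\{Y_{2,1};Y\}}-G_{k,\{Y_{2,2};Y\}}$, which the paper reuses elsewhere (e.g.\ in the displayed computation of $f_l$ in the introduction and in Proposition \ref{initial values of G}), so the authors' longer detour is not wasted; but as a proof of the stated proposition alone, yours is complete and slightly cleaner.
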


\begin{proof}
Note that for any $s_j \geq 0$,
\beas
\frac{\partial^{s_j} g_\beta(x,t)}{\partial t^{s_j}}
&=& \sum_{n=s_j}^\infty \frac{t^{n-s_j}}{(n-s_j)!} I_{2n+\beta}(2\sqrt{x}) \\
&=& \sum_{n=0}^\infty \frac{t^{n}}{n!} I_{2n+\beta+2s_j}(2\sqrt{x}) \\
&=& g_{\beta+2s_j}(2\sqrt{x}).
\eeas
So by Lemma \ref{NOTE4lem2}, we have
\bea
\frac{\partial G_{k,Y}}{\partial t}
&=& \sum_{\substack{s_1+\cdots+s_k=1 \\s_1,\ldots,s_k \geq 0}} \det\left(\frac{\partial ^{s_i}}{\partial t^{s_i}} g_{i+j+1+l_{k-j}}(x,t)\right)_{i,j=0,\ldots,k-1} \nonumber \\
&=& \sum_{\substack{s_1+\cdots+s_k=1 \\s_1,\ldots,s_k \geq 0}} \det\left(g_{i+2s_i+j+1+l_{k-j}}(x,t)\right)_{i,j=0,\ldots,k-1}.
\label{eq1}
\eea
The summation above is over the set $\{(s_{1},\ldots,s_{k}):s_{i}=1, s_{j}=0~\text{for}~j\neq i,i=1,\ldots,k\}$. Due to the fact that the determinant is zero if there are two common columns or rows, the above determinant is nonzero if and only if $s_{k-1}=1$ or $s_k=1$. So (\ref{eq1}) equals
$G_{k,\{Y_{2,1};Y\}}-G_{k,\{Y_{2,2};Y\}}$, which is $T_{2}G_{k,Y}$ by Proposition \ref{513prop5}.
\end{proof}

\begin{prop}\label{412320prop3}
Let $Y$ be a Young diagram of length $m$, and let $G_{k,Y}$ be defined in (\ref{definition of G}). Then
\be \label{412320prop3:eq}
T_{1}G_{k,Y}=\sqrt{x}\frac{\partial G_{k,Y}}{\partial x}-\frac{k^2+m}{2\sqrt{x}}G_{k,Y}-\frac{t}{\sqrt{x}}\frac{\partial G_{k,Y}}{\partial t}.
\ee
\end{prop}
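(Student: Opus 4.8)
The plan is to mirror the proof of Proposition \ref{lemmaontranslationby1}, replacing the Bessel entries $I_\beta$ by the generating functions $g_\beta$ and carrying along the extra $t$-dependence. First I would unwind the definition of $T_1$ (Definition \ref{defn of translation operator}): since $T_1 Y$ raises every column shift by one, the matrix $M_k(1,\ldots,k;T_1 Y)$ has $(i,j)$-entry $g_{i+j+2+l_{k-j}}(x,t)$, so that $T_1 G_{k,Y} = (g_{i+j+2+l_{k-j}})_{i,j}\cdot(F_{ij})_{i,j}$, where $(F_{ij})$ is the cofactor matrix of $(g_{i+j+1+l_{k-j}})_{i,j}$.

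The key algebraic input is to invert the recursion (\ref{412recursive formula for gbeta1}) into $g_{\beta+1} = \sqrt{x}\,\partial_x g_\beta - \frac{\beta}{2\sqrt{x}} g_\beta - \frac{t}{\sqrt{x}} g_{\beta+2}$, and then to use the identity $\partial_t g_\beta = g_{\beta+2}$ established inside the proof of Proposition \ref{derivative and translations}. With $\beta = i+j+1+l_{k-j}$ this rewrites each entry $g_{i+j+2+l_{k-j}}$ as a combination of $\partial_x g_{i+j+1+l_{k-j}}$, $g_{i+j+1+l_{k-j}}$ and $\partial_t g_{i+j+1+l_{k-j}}$. Substituting splits $T_1 G_{k,Y}$ into three inner products against the same cofactor matrix $(F_{ij})$.

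Next I would identify each piece. For the first and third pieces I would invoke the standard derivative-of-a-determinant identity $\partial_\xi \det M = \sum_{i,j}(\partial_\xi M_{ij})\,\mathrm{cof}_{ij}(M)$ (with $\xi = x$ and $\xi = t$), which turns them into $\sqrt{x}\,\partial_x G_{k,Y}$ and $-\frac{t}{\sqrt{x}}\,\partial_t G_{k,Y}$, respectively. For the middle piece, which carries the weight $\frac{i+j+1+l_{k-j}}{2\sqrt{x}}$, I would split the weight as $(j+1+l_{k-j}) + i$ into a column-dependent and a row-dependent part, exactly as in Proposition \ref{lemmaontranslationby1}. A Laplace expansion along columns collapses the first part to $\big(\sum_{j=0}^{k-1}(j+1+l_{k-j})\big)G_{k,Y} = \big(\tfrac{k(k+1)}{2}+m\big)G_{k,Y}$, using $\sum_{j=0}^{k-1} l_{k-j} = m$; a Laplace expansion along rows collapses the second part to $\big(\sum_{i=0}^{k-1} i\big)G_{k,Y} = \tfrac{k(k-1)}{2}G_{k,Y}$. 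Adding gives the coefficient $\tfrac{k(k+1)}{2}+m+\tfrac{k(k-1)}{2} = k^2+m$, so the middle piece equals $-\frac{k^2+m}{2\sqrt{x}} G_{k,Y}$, and summing the three pieces yields (\ref{412320prop3:eq}).

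The computation is almost entirely parallel to Proposition \ref{lemmaontranslationby1}; the only genuinely new point, and the step I would be most careful about, is the appearance of the $t$-derivative. It arises solely because the $x$-recursion (\ref{412recursive formula for gbeta1}) for $g_\beta$ carries an extra $\frac{t}{x} g_{\beta+2}$ term that is absent in the pure Bessel case, and it is the clean identity $g_{\beta+2} = \partial_t g_\beta$ that lets this term be repackaged as $-\frac{t}{\sqrt{x}}\,\partial_t G_{k,Y}$ rather than as an unwieldy shifted determinant. I would double-check the sign and the weight bookkeeping in the middle term, since that is precisely where the $m = |Y|$ contribution enters.
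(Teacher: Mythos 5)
Your proposal is correct and follows essentially the same route as the paper: the paper likewise mimics the proof of Proposition \ref{lemmaontranslationby1} using the inverted recursion from (\ref{412recursive formula for gbeta1}), arriving first at $T_{1}G_{k,Y}=\sqrt{x}\,\partial_x G_{k,Y}-\frac{k^2+m}{2\sqrt{x}}G_{k,Y}-\frac{t}{\sqrt{x}}\,T_2 G_{k,Y}$ and then invoking Proposition \ref{derivative and translations} to replace $T_2G_{k,Y}$ by $\partial_t G_{k,Y}$. Your direct use of the entrywise identity $g_{\beta+2}=\partial_t g_\beta$ together with the Jacobi formula is just that last step unpacked, and your weight bookkeeping $\frac{k(k+1)}{2}+m+\frac{k(k-1)}{2}=k^2+m$ matches the paper's.
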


\begin{proof}
By a similar argument to the proof of Proposition \ref{lemmaontranslationby1},
we have
\[
T_{1}G_{k,Y}=\sqrt{x}\frac{\partial G_{k,Y}}{\partial x}-\frac{k^2+m}{2\sqrt{x}}G_{k,Y}-\frac{t}{\sqrt{x}} T_2 G_{k,Y}.
\]
The claim now follows from Proposition \ref{derivative and translations}.
\end{proof}

By recursive relation (\ref{412recursive formula for gbeta}) for $g_{\beta }(x,t)$, we have the following proposition. 

\begin{prop}\label{recursive formula for G about T}
Let $T_{i},S_{j}$ be the operators given in Definition \ref{defn of translation operator}, then
\[
T_{h}G_{k,Y} =T_{h-2} G_{k,Y} - \frac{S_{h-1} G_{k,Y}}{\sqrt{x}} - \frac{2t}{\sqrt{x}} T_{h+1} G_{k,Y}.
\]
\end{prop}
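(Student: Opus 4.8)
The plan is to unwind all four quantities $T_h G_{k,Y}$, $T_{h-2}G_{k,Y}$, $S_{h-1}G_{k,Y}$ and $T_{h+1}G_{k,Y}$ as standard inner products against one fixed cofactor matrix, and then to apply the three-term recursion (\ref{412recursive formula for gbeta}) entrywise. Write $M=M_k(1,2,\dots,k;Y)$ with $(i,j)$-entry $g_{i+j+1+l_{k-j}}(x,t)$ for $i,j=0,\dots,k-1$, and let $(F_{ij})$ be its cofactor matrix. By Definition \ref{defn of translation operator}, each of $T_{h-2}G_{k,Y}$, $T_hG_{k,Y}$, $T_{h+1}G_{k,Y}$ is the inner product of $(F_{ij})$ with the matrix obtained from $M$ by replacing every column shift $l_{k-j}$ with $l_{k-j}+(h-2)$, $l_{k-j}+h$, $l_{k-j}+(h+1)$ respectively, while $S_{h-1}G_{k,Y}$ is the inner product of $(F_{ij})$ with the shifted matrix at level $h-1$ but with each entry additionally weighted by its Bessel subindex. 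Crucially, all four quantities are built from the very same cofactor matrix $(F_{ij})$ of the fixed base diagram $Y$, so the whole identity will follow from a single linear relation among the entries.

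First I would fix indices $i,j$ and set $\beta=i+j+1+l_{k-j}+h-2$, so that $\beta+2=i+j+1+l_{k-j}+h$ is exactly the $(i,j)$-entry index of the matrix defining $T_hG_{k,Y}$. Applying (\ref{412recursive formula for gbeta}) gives
\[
g_{i+j+1+l_{k-j}+h}
= g_{i+j+1+l_{k-j}+h-2}
-\frac{i+j+1+l_{k-j}+h-1}{\sqrt{x}}\,g_{i+j+1+l_{k-j}+h-1}
-\frac{2t}{\sqrt{x}}\,g_{i+j+1+l_{k-j}+h+1}.
\]
Here the three resulting subindex levels $h-2$, $h-1$, $h+1$ are precisely the column-shift levels appearing in $T_{h-2}$, $S_{h-1}$ and $T_{h+1}$, and the coefficient $i+j+1+l_{k-j}+h-1$ is exactly the Bessel subindex of the middle term.

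Next I would multiply this identity by $F_{ij}$ and sum over all $i,j=0,\dots,k-1$, using linearity of the inner product. The left-hand side is $T_hG_{k,Y}$ by definition. The first sum on the right is the inner product of $(F_{ij})$ with the $(h-2)$-shifted matrix, i.e.\ $T_{h-2}G_{k,Y}$; the third sum is $\frac{2t}{\sqrt{x}}$ times the inner product with the $(h+1)$-shifted matrix, i.e.\ $\frac{2t}{\sqrt{x}}T_{h+1}G_{k,Y}$. For the middle sum, the factor $i+j+1+l_{k-j}+h-1$ is exactly the subindex attached to $g_{i+j+1+l_{k-j}+h-1}$, so the weighted matrix $\big((i+j+1+l_{k-j}+h-1)\,g_{i+j+1+l_{k-j}+h-1}\big)$ is precisely $\widetilde M_k(1,\dots,k;T_{h-1}Y)$, and its inner product with $(F_{ij})$ is $S_{h-1}G_{k,Y}$ by Definition \ref{defn of translation operator}. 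Collecting the three terms yields the claimed identity.

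The only point requiring care is matching the weight $\beta+1=i+j+1+l_{k-j}+h-1$ produced by the recursion with the subindex that defines the weighted matrix $\widetilde M_k$ at $T_{h-1}Y$; this is exactly the content of the middle term and is what forces the appearance of $S_{h-1}$ rather than $T_{h-1}$. A secondary bookkeeping issue is the boundary convention (\ref{previous convention}) that sets determinants with more than $k$ rows to zero: because every operator here is expanded against the one cofactor matrix of the fixed base diagram $Y$, this convention is respected automatically by linearity, so no separate case analysis is needed.
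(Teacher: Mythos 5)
Your proof is correct and follows exactly the route the paper intends: the paper's proof of this proposition is precisely the one-line observation that substituting the three-term recursion (\ref{412recursive formula for gbeta}) with $\beta=i+j+1+l_{k-j}+h-2$ into each entry of the $T_h$-shifted matrix and pairing against the single fixed cofactor matrix of $M_k(1,\ldots,k;Y)$ produces $T_{h-2}$, $\frac{1}{\sqrt{x}}S_{h-1}$ (the weight $\beta+1$ matching the subindex in $\widetilde M_k(1,\ldots,k;T_{h-1}Y)$) and $\frac{2t}{\sqrt{x}}T_{h+1}$. Your index bookkeeping and the remark that no case analysis on the convention (\ref{previous convention}) is needed are both accurate.
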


\begin{prop}
\label{initial values of G}
Let $Y_{1,1}=(1)$, $Y_{2,1}=(2)$ and $Y_{2,2}=(1,1)$ be Young diagrams. Let $G_{k,Y}, G_{k}$ and $F_{i}$ be defined as in (\ref{definition of G}), (\ref{definition of G1}) and (\ref{definition of F}), respectively. Then
\[
G_{k,Y_{1,1}}=\sqrt{x} \frac{\partial G_{k}}{\partial x}- \frac{1}{2\sqrt{x}} k^2G_{k} - \frac{t}{\sqrt{x}}F_{1},
\]
and
\beas
G_{k,Y_{2,1}} &=& \frac{kG_k}{2}- \frac{t}{\sqrt{x}} (G_{k,Y_{3,1}} - G_{k,Y_{3,2}} +G_{k,Y_{3,3}}  )  \\
&& 
\hspace{-.3cm}
+\, \frac{1}{2}
\Bigg(x \frac{\partial^2G_{k}}{\partial x^2}-(k^2+2k)\frac{\partial G_{k}}{\partial x}+\frac{k^4+4k^3+2k^2}{4x}G_{k}-2t\frac{\partial F_{1}}{\partial x}+\frac{k^2+2k+2}{x}tF_{1}+\frac{t^2}{x}F_{2}\Bigg),
\\
G_{k,Y_{2,2}} &=& -\frac{kG_k}{2} + \frac{t}{\sqrt{x}} (G_{k,Y_{3,1}} - G_{k,Y_{3,2}} +G_{k,Y_{3,3}}  )  \\
&& 
\hspace{-.3cm}
+\, \frac{1}{2}
\Bigg(x \frac{\partial^2G_{k}}{\partial x^2}-(k^2-2k)\frac{\partial G_{k}}{\partial x}+\frac{k^4-4k^3+2k^2}{4x}G_{k}-2t\frac{\partial F_{1}}{\partial x}+\frac{k^2-2k+2}{x}tF_{1}+\frac{t^2}{x}F_{2}\Bigg),
\eeas
where $Y_{3,1}=(3), Y_{3,2}=(2,1), Y_{3,3}=(1,1,1)$.
\end{prop}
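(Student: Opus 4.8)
The plan is to express each of the three quantities through the translation operators $T_h$ and $S_h$, whose action on $G_{k,Y}$ is governed by Propositions \ref{derivative and translations}, \ref{412320prop3}, \ref{recursive formula for G about T} and \ref{513prop6}, and then to unwind these into $x$- and $t$-partial derivatives of $G_k$. Throughout I write $\partial_x,\partial_t$ for partial derivatives and recall $F_1=\partial_t G_k$, $F_2=\partial_t^2 G_k$. These three identities are the $g_\beta$-analogues of Propositions \ref{lemmaontranslationby1} and \ref{initial values of tau}, and the two proofs run in parallel, the difference being the extra $t$-dependent terms carried by the $g_\beta$ recursion.

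For $G_{k,Y_{1,1}}$ I would first note, via Proposition \ref{513prop5} with $l=1$ and $Y=\emptyset$, that $G_{k,Y_{1,1}}=T_1 G_k$. Proposition \ref{412320prop3} with $Y=\emptyset$ (hence $m=0$) then gives $T_1 G_k=\sqrt{x}\,\partial_x G_k-\frac{k^2}{2\sqrt{x}}G_k-\frac{t}{\sqrt{x}}\partial_t G_k$, and substituting $\partial_t G_k=F_1$ yields the first claimed formula directly.

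For $G_{k,Y_{2,1}}$ and $G_{k,Y_{2,2}}$ I would apply Theorem \ref{gerenel linear systems} with $l=2$, $X=\emptyset$ and $H_{k,\{X;Y\}}=G_{k,Y}$; evaluating $B^{(2)}=\sm{-1/2 & 1/2 \\ -1/2 & -1/2}$ from (\ref{definition of B}) gives $G_{k,Y_{2,1}}=\tfrac12\big(T_1 G_{k,Y_{1,1}}+T_2 G_k\big)$ and $G_{k,Y_{2,2}}=\tfrac12\big(T_1 G_{k,Y_{1,1}}-T_2 G_k\big)$. It then remains to evaluate the two translations. For $T_2 G_k$ I would use Proposition \ref{recursive formula for G about T} with $h=2$, i.e.\ $T_2 G_k=T_0 G_k-\frac{1}{\sqrt{x}}S_1 G_k-\frac{2t}{\sqrt{x}}T_3 G_k$, together with $T_0 G_k=kG_k$ (Proposition \ref{prop1} with $l=0$), $S_1 G_k=2k\,G_{k,Y_{1,1}}$ (Proposition \ref{513prop6} with $\alpha=1$, $l=1$), and $T_3 G_k=G_{k,Y_{3,1}}-G_{k,Y_{3,2}}+G_{k,Y_{3,3}}$ (Proposition \ref{513prop5}); substituting the already-proven expression for $G_{k,Y_{1,1}}$ puts $T_2 G_k$ into the required shape while keeping the $T_3 G_k$ block symbolic. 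For $T_1 G_{k,Y_{1,1}}$ I would invoke Proposition \ref{412320prop3} with $Y=Y_{1,1}$ (so $m=1$) and substitute the explicit $G_{k,Y_{1,1}}$ together with its partials $\partial_x G_{k,Y_{1,1}}$ and $\partial_t G_{k,Y_{1,1}}$.

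The only real work — and the main obstacle — is the bookkeeping in $T_1 G_{k,Y_{1,1}}$: differentiating $G_{k,Y_{1,1}}=\sqrt{x}\,\partial_x G_k-\frac{k^2}{2\sqrt{x}}G_k-\frac{t}{\sqrt{x}}F_1$ in $x$ and in $t$ generates the second-order term $x\,\partial_x^2 G_k$, the $\partial_x G_k$ coefficient $-k^2$, the $G_k/x$ coefficient $k^2(k^2+2)/4$, and the genuinely new $t$-dependent pieces $-2t\,\partial_x F_1$, $\frac{t^2}{x}F_2$ and a $\frac{(k^2+2)t}{x}F_1$ contribution. These last terms are exactly the ones absent from the pure-Bessel computation of Proposition \ref{initial values of tau}; they arise from the extra $-\frac{2t}{\sqrt{x}}g_{\beta+3}$ summand in the recursion (\ref{412recursive formula for gbeta}) and from the $-\frac{t}{\sqrt{x}}\partial_t$ term in Proposition \ref{412320prop3}. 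Once $T_1 G_{k,Y_{1,1}}$ and $T_2 G_k$ are collected, forming $\tfrac12(T_1 G_{k,Y_{1,1}}\pm T_2 G_k)$ produces the $\pm\frac{k}{2}G_k$ and $\mp\frac{t}{\sqrt{x}}(G_{k,Y_{3,1}}-G_{k,Y_{3,2}}+G_{k,Y_{3,3}})$ terms from $\pm T_2 G_k$, while the $F_1$-coefficients $\frac{(k^2\pm 2k+2)t}{x}$ and the $G_k/x$ coefficients $\frac{k^4\pm 4k^3+2k^2}{4x}$ assemble into the stated forms; care is only needed to track signs consistently across the two cases.
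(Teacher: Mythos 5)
Your proposal is correct and follows essentially the same route as the paper: $G_{k,Y_{1,1}}=T_1G_k$ evaluated via Proposition \ref{412320prop3}, then Theorem \ref{gerenel linear systems} with $l=2$ to reduce $G_{k,Y_{2,1}},G_{k,Y_{2,2}}$ to $\tfrac12(T_1G_{k,Y_{1,1}}\pm T_2G_k)$, with $T_2G_k$ expanded by Proposition \ref{recursive formula for G about T} together with Propositions \ref{513prop5} and \ref{513prop6}, and $T_1G_{k,Y_{1,1}}$ by a second application of Proposition \ref{412320prop3}. The computational details you flag (the coefficients $-k^2$, $k^2(k^2+2)/4$, $-2t\,\partial_xF_1$, $(k^2+2)t F_1/x$, $t^2F_2/x$) all check out against the stated formulae.
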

\begin{proof}
The first claim comes from the fact that $G_{k,Y_{1,1}}=T_{1}G_{k}$ and Proposition \ref{412320prop3} with $Y=\emptyset$. For the second and third claims, using Theorem \ref{gerenel linear systems}, we have
\beas
\begin{pmatrix} \vspace{.12cm}
G_{k,Y_{2,1}} \\
G_{k,Y_{2,2}} \\
\end{pmatrix}
&=& \begin{pmatrix}
1/2 & 1/2 \\
1/2 & -1/2
\end{pmatrix}
\begin{pmatrix} \vspace{.12cm}
\displaystyle T_{1}G_{k,Y_{1,1}} \\
\displaystyle T_{2}G_{k} \\
\end{pmatrix}
\eeas
For $T_{1}G_{k,Y_{1,1}}$, we apply Proposition \ref{412320prop3} and the first claim of this proposition. For $T_{2}G_{k}$, we use Proposition \ref{recursive formula for G about T} to obtain
\beas
T_{2}G_{k} &=& k G_{k} - \frac{S_1 G_{k}}{\sqrt{x}} - \frac{2t}{\sqrt{x}} T_{3} G_{k} \\
&=& k G_{k} - \frac{2k G_{k,Y_{1,1}}}{\sqrt{x}} - \frac{2t}{\sqrt{x}} (G_{k,Y_{3,1}} - G_{k,Y_{3,2}} +G_{k,Y_{3,3}}) \\
&=& kG_k - 2k \frac{\partial G_k}{\partial x} + \frac{k^3}{x} G_k + \frac{2kt}{x} \frac{\partial G_k}{\partial t}  - \frac{2t}{\sqrt{x}} (G_{k,Y_{3,1}} - G_{k,Y_{3,2}} +G_{k,Y_{3,3}}) .
\eeas
The second equality comes from Propositions \ref{513prop5} and \ref{513prop6}. The last equality comes from the first claim of this proposition. Putting all of this together, we obtain the last two claims in the proposition.
\end{proof}

In the above proof, if we replace $T_2G_k$ by $\frac{\partial G_k}{\partial t}$ by Proposition \ref{derivative and translations}, we obtain
\bea
G_{k,Y_{2,1}}&=&\frac{F_{1}}{2}+\frac{1}{2}
\Bigg(x \frac{d^2G_{k}}{dx^2}-k^2\frac{dG_{k}}{dx}+\frac{2k^2+k^4}{4x}G_{k}-2t\frac{dF_{1}}{dx}+\frac{k^2+2}{x}tF_{1}+\frac{t^2}{x}F_{2}\Bigg), \nonumber \\
G_{k,Y_{2,2}}&=&-\frac{F_{1}}{2}+\frac{1}{2}
\Bigg(x \frac{d^2G_{k}}{dx^2}-k^2\frac{dG_{k}}{dx}+\frac{2k^2+k^4}{4x}G_{k}-2t\frac{dF_{1}}{dx}+\frac{k^2+2}{x}tF_{1}+\frac{t^2}{x}F_{2}\Bigg),
\label{recursive formula for Gk}
\eea
which plays an important role in the proof of Theorem \ref{theorem2 in note 4}.

\begin{lem}\label{F10}
Let $F_{1}(x,t)$ be given as in (\ref{definition of F}) and $\tau_{k,Y}(x)$ be given as in (\ref{definition of tau}). Then 
\[F_{1}(x,0)=-2k\frac{d\tau_{k}(x)}{dx}+(k+\frac{k^3}{x})\tau_{k}(x).\]
\end{lem}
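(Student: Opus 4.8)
The plan is to evaluate $F_1(x,0)$ by specializing the translation-operator identity for $T_2G_k$ that was already derived in the proof of Proposition \ref{initial values of G}. First I would invoke Proposition \ref{derivative and translations}, which gives $F_1 = \frac{\partial G_k}{\partial t} = T_2 G_k$, so that $F_1(x,0) = (T_2 G_k)\big|_{t=0}$. Since $g_\beta(x,0) = I_\beta(2\sqrt{x})$, setting $t=0$ turns every $G_{k,Y}$ into the corresponding $\tau_{k,Y}$, and in particular $T_2 G_k\big|_{t=0} = T_2\tau_k$. Thus the lemma reduces to computing $T_2\tau_k(x)$ and identifying it with the right-hand side.

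Next I would compute $T_2\tau_k$ directly, reusing the calculation in the proof of Proposition \ref{initial values of tau}. Applying the Bessel recursion (\ref{recursiveformula2}) inside the standard inner product defining $T_2$, and then splitting the weight $(i+j+2)$ on the entries $I_{i+j+2}(2\sqrt{x})$ into its column part $(j+2)$ and its row part $i$ (each of which contributes a factor $k$ by the cofactor structure), one obtains
\[
T_2\tau_k = k\,\tau_k - \frac{2k}{\sqrt{x}}\,T_1\tau_k.
\]
I would then apply Proposition \ref{lemmaontranslationby1} with $Y=\emptyset$, which yields $T_1\tau_k = \tau_{k,(1)} = \sqrt{x}\,\frac{d}{dx}\tau_k - \frac{k^2}{2\sqrt{x}}\tau_k$. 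Substituting this in gives
\[
F_1(x,0) = k\,\tau_k - \frac{2k}{\sqrt{x}}\Big(\sqrt{x}\,\frac{d\tau_k}{dx} - \frac{k^2}{2\sqrt{x}}\tau_k\Big)
= -2k\,\frac{d\tau_k}{dx} + \Big(k + \frac{k^3}{x}\Big)\tau_k,
\]
which is exactly the asserted formula.

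There is essentially no serious obstacle here, since the identity is a clean specialization of machinery already in place. The only point meriting care is the bookkeeping at $t=0$: in the full expansion of $T_2 G_k$ used in Proposition \ref{initial values of G}, the hook-diagram determinants $G_{k,Y_{3,1}}, G_{k,Y_{3,2}}, G_{k,Y_{3,3}}$ enter only through the term $\frac{2t}{\sqrt{x}}(G_{k,Y_{3,1}} - G_{k,Y_{3,2}} + G_{k,Y_{3,3}})$, which carries an explicit factor of $t$ and therefore drops out upon setting $t=0$. This is precisely what collapses the computation to $T_2\tau_k = k\tau_k - \frac{2k}{\sqrt{x}}T_1\tau_k$ and allows the final simplification to go through without any higher-order determinants appearing.
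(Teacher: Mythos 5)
Your proposal is correct and follows essentially the same route as the paper: the paper also writes $F_1=T_2G_k$ via Proposition \ref{derivative and translations}, expands it (using Proposition \ref{recursive formula for G about T} with $h=2$, whose $t$-dependent term vanishes at $t=0$) to get $F_1(x,0)=k\tau_k-\frac{2k}{\sqrt{x}}T_1\tau_k$, and concludes with Proposition \ref{lemmaontranslationby1}. Your only cosmetic difference is rederiving the intermediate identity $T_2\tau_k=k\tau_k-\frac{2k}{\sqrt{x}}T_1\tau_k$ by the cofactor/weight-splitting computation from the proof of Proposition \ref{initial values of tau} rather than citing Proposition \ref{recursive formula for G about T} together with Propositions \ref{513prop5} and \ref{513prop6}.
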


\begin{proof}
By Propositions \ref{derivative and translations} and \ref{recursive formula for G about T},
\[F_{1}(x,t)=T_{2}G_{k}(x,t)=T_{0}G_{k}-\frac{S_{1}G_{k}}{\sqrt{x}}-\frac{2t}{\sqrt{x}}T_{3}G_{k}.\]
So
\[F_{1}(x,0)=k\tau_{k}(x)-\frac{2kT_{1}\tau_{k}(x)}{\sqrt{x}}.\]
By Proposition \ref{lemmaontranslationby1},
\[T_{1}\tau_{k}=\sqrt{x}\frac{d\tau_{k}}{dx}-\frac{k^2}{2\sqrt{x}}\tau_{k},\]
so we have the claimed result for $F_{1}(x,0)$.
\end{proof}



Next we deduce a recursive formula for $G_{k,Y_{l,i}}(x,t)$ for $l\geq 3$ and $i=1,\ldots,l$.

\begin{prop}
\label{prop:iteraive formula 2}
Let $k\geq 1$ and $l\geq 3$.  Let $i,j$ with $1\leq i\leq l$ and $1\leq j\leq i$ be integers. Let $Y_{i,j}$ be a hook diagram. Let $B^{(l)}$, $C_1^{(l)}$, $C_2^{(l)}$ and $C_3^{(l)}$ be given in (\ref{definition of B}), (\ref{definitionofC1}),  (\ref{definitionofC2}) and (\ref{definition of C3}), respectively. 
Then
\bea\label{recursive formula for G_{k}}
\begin{pmatrix}
G_{k,Y_{l,1}} \\
\vdots \\
G_{k,Y_{l,l}} 
\end{pmatrix}
&=&
-B^{(l)}(\sqrt{x}\frac{\partial}{\partial x}-\frac{k^2+l-1}{2\sqrt{x}}+\frac{t}{\sqrt{x}}\frac{\partial}{\partial t})
\begin{pmatrix}
G_{k,Y_{l-1,1}} \\
\vdots \\
G_{k,Y_{l-1,l-1}} \\
0
\end{pmatrix} \nonumber\\
&&
-\frac{1}{\sqrt{x}}C_1^{(l)}
\begin{pmatrix}
 G_{k,Y_{l-1,1}} \\
\vdots \\
G_{k,Y_{l-1,l-1}} 
\end{pmatrix} 
+ C_2^{(l)}
\begin{pmatrix}
 G_{k,Y_{l-2,1}} \\
\vdots \\
G_{k,Y_{l-2,l-2}}
\end{pmatrix}
+\frac{2t}{\sqrt{x}}C_3^{(l)}
\begin{pmatrix}
 G_{k,Y_{l+1,1}} \\
\vdots \\
G_{k,Y_{l+1,l+1}}
\end{pmatrix}\nonumber \\
&&+B^{(l)}(2t\frac{\partial}{\partial x}-\frac{t}{x}(k^2+l)-\frac{2t^2}{x}\frac{\partial}{\partial t})
\begin{pmatrix}
G_{k,Y_{l,2}} \\
\vdots \\
G_{k,Y_{l,l}} \\
0
\end{pmatrix}.
\eea
\end{prop}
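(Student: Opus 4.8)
The plan is to follow the strategy used for the $\tau_{k,Y}$-recursion in Proposition \ref{recursive formula for tau0}, but to carry along the two extra features created by the generating function $g_\beta(x,t)$: the additional $t$-derivative in the translation identity of Proposition \ref{412320prop3}, and the extra summand $-\frac{2t}{\sqrt x}T_{h+1}G_{k,Y}$ in the three-term relation of Proposition \ref{recursive formula for G about T}. First I would invoke Theorem \ref{gerenel linear systems} with $H_{k,\{X;Y\}}=G_{k,\{X;Y\}}$ and $X=\emptyset$, writing $(G_{k,Y_{l,1}},\dots,G_{k,Y_{l,l}})^T=B^{(l)}V$, where the $p$-th entry of $V$ (with $j=p+1$) is $\sum_{h=1}^{j-1}(-1)^h T_hG_{k,Y_{l-h,j-h}}$ for $p\le l-1$ and the last entry is $T_lG_k$. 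I would peel off the $h=1$ term $-T_1G_{k,Y_{l-1,p}}$ and substitute Proposition \ref{recursive formula for G about T} into the tail $\sum_{h\ge 2}$, splitting it into a $T_{h-2}$-family, an $S_{h-1}$-family, and the new $T_{h+1}$-family.

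The $T_{h-2}$- and $S_{h-1}$-families are identical in form to those in the proof of Proposition \ref{recursive formula for tau0}: they are controlled by the telescoping identity (\ref{Th-2}) and Propositions \ref{513prop5}, \ref{513prop6}, \ref{411proponS}, all of which hold for an arbitrary sequence $a_\beta$, so after multiplication by $B^{(l)}$ they reproduce exactly the matrices $C_2^{(l)}$ and $C_1^{(l)}$ of the statement. The only change is that each $T_1$ is now expanded via Proposition \ref{412320prop3}: writing $T_1G_{k,Y_{l-1,p}}=\sqrt x\,\partial_xG_{k,Y_{l-1,p}}-\frac{k^2+l-1}{2\sqrt x}G_{k,Y_{l-1,p}}-\frac{t}{\sqrt x}\partial_tG_{k,Y_{l-1,p}}$, the first two terms build the operator of the first $B^{(l)}$-term in the statement, whereas the sign of the $t\partial_t$-term differs from the one appearing there, leaving a correction $+\frac{2t}{\sqrt x}\partial_tG_{k,Y_{l-1,p}}=+\frac{2t}{\sqrt x}T_2G_{k,Y_{l-1,p}}$ (using $\partial_t=T_2$, Proposition \ref{derivative and translations}) to be tracked.

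The heart of the argument is the new $T_{h+1}$-family. Re-indexing $h\mapsto h'=h+1$ turns $\sum_{h=2}^{j-1}(-1)^hT_{h+1}G_{k,Y_{l-h,j-h}}$ into $-\sum_{h'=3}^{j}(-1)^{h'}T_{h'}G_{k,Y_{(l+1)-h',(j+1)-h'}}$, a partial sum at level $l+1$. I would complete it to the full range $h'=1,\dots,j$ and apply Theorem \ref{thm1onHankel deteminants} (at level $l+1$, index $j+1$) to rewrite the completed sum as a pure combination of $G_{k,Y_{l+1,\cdot}}$; the two terms $h'=1,2$ added back are precisely $-T_1G_{k,Y_{l,j}}$ and $+T_2G_{k,Y_{l-1,j-1}}$. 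By Proposition \ref{412320prop3} the $\frac{2t}{\sqrt x}T_1G_{k,Y_{l,j}}$ pieces assemble (again through $B^{(l)}$) into the final term $B^{(l)}(2t\partial_x-\frac{t}{x}(k^2+l)-\frac{2t^2}{x}\partial_t)(G_{k,Y_{l,2}},\dots,G_{k,Y_{l,l}},0)^T$, while the $-\frac{2t}{\sqrt x}T_2G_{k,Y_{l-1,j-1}}$ pieces cancel the $+\frac{2t}{\sqrt x}T_2$ correction from the previous paragraph for every $p\ge 2$, leaving only the pure $G_{k,Y_{l+1,\cdot}}$-part to be matched against $\frac{2t}{\sqrt x}C_3^{(l)}(G_{k,Y_{l+1,\cdot}})$. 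A subtlety occurs at the edge $p=1$ (i.e. $j=2$), where the tail is empty: there the residual entry $T_2G_{k,Y_{l-1,1}}-T_1G_{k,Y_{l,2}}$ does not obviously lie among the $G_{k,Y_{l+1,\cdot}}$, but applying Theorem \ref{thm1onHankel deteminants} once more at level $l+1$ and index $3$ shows it equals $G_{k,Y_{l+1,1}}-G_{k,Y_{l+1,2}}-2G_{k,Y_{l+1,3}}$, so it too is absorbed into the $C_3^{(l)}$-term.

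The main obstacle is the coefficient bookkeeping in this last stage: verifying that, after all the completed sums and cancellations, the vector fed into $B^{(l)}$ really does produce the explicit matrix $C_3^{(l)}$ of (\ref{definition of C3}), and that the $T_2$/$\partial_t$ corrections cancel exactly with no residue beyond the two $B^{(l)}$-differential terms — including the edge case $p=1$ and the bottom entry $T_lG_k$, whose $T_{l+1}$-part contributes $-T_{l+1}G_k=-\sum_{s}(-1)^{s-1}G_{k,Y_{l+1,s}}$ to the pre-$C_3^{(l)}$ vector. None of these steps requires new structural input beyond Propositions \ref{412320prop3}, \ref{derivative and translations}, \ref{recursive formula for G about T} and Theorem \ref{thm1onHankel deteminants}; the work is the lengthy but mechanical matching of scalar coefficients, carried out entry-by-entry in the index $p$ exactly as in Proposition \ref{recursive formula for tau0}.
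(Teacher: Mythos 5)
Your proposal is correct and follows essentially the same route as the paper's proof: set up the linear system via Theorem \ref{gerenel linear systems}, substitute the three-term relation of Proposition \ref{recursive formula for G about T}, treat the $T_{h-2}$- and $S_{h-1}$-families as in Proposition \ref{recursive formula for tau0}, and for the new $T_{h+1}$-family complete the partial sum to a full level-$(l+1)$ sum (the paper's ``adding and subtracting a common term'' in (\ref{J4})), converting it to the $C_3^{(l)}$-term while the added-back $T_1$ and $T_2$ pieces produce, respectively, the last $B^{(l)}$-differential term and the cancellation that fixes the sign of $\frac{t}{\sqrt{x}}\partial_t$ in the first term. Your separate treatment of the edge case $j=2$ and your use of Theorem \ref{thm1onHankel deteminants} in place of the (equivalent, inverted) rows of $A^{(l+1)}$ are only cosmetic deviations.
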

\begin{proof}
By Theorem \ref{gerenel linear systems} and Proposition \ref{recursive formula for G about T},
\beas
A^{(l)} \begin{pmatrix}
G_{k,Y_{l,1}} \\
G_{k,Y_{l,2}} \\
\vdots \\
G_{k,Y_{l,l}} \\
\end{pmatrix} 
&=&
\begin{pmatrix}
-T_1 G_{k,Y_{l-1,1}} \\
-T_1 G_{k,Y_{l-1,2}} \\
\vdots  \\
-T_1 G_{k,Y_{l-1,l-1}} \\
0 
\end{pmatrix}
+\begin{pmatrix}
0 \\
\vdots\\
\sum_{h=2}^{j-1} (-1)^h T_{h-2} G_{k,Y_{l-h,j-h}} \\
\vdots  \\
T_{l-2} G_{k}
\end{pmatrix}_{j=3,\ldots,l}  \\ 
&& - \frac{1}{\sqrt{x}} 
\begin{pmatrix}
0 \\
\vdots\\
\sum_{h=2}^{j-1} (-1)^h S_{h-1} G_{k,Y_{l-h,j-h}} \\
\vdots  \\
S_{l-1} G_{k}
\end{pmatrix}_{j=3,\ldots,l} \\
&&- \frac{2t}{\sqrt{x}} \begin{pmatrix}
0 \\
\vdots\\
\sum_{h=2}^{j-1} (-1)^h T_{h+1} G_{k,Y_{l-h,j-h}} \\
\vdots  \\
T_{l+1} G_{k}
\end{pmatrix}_{j=3,\ldots,l}\\
&=:&J_{1}+J_{2}+J_{3}+J_{4}
\eeas
For the terms $J_{1},J_{2},J_{3}$, we use a similar argument to that of 
Proposition \ref{recursive formula for tau0}. Now we handle $J_{4}$. By adding and subtracting a common term,
\bea\label{J4}
J_{4}&=& \frac{2t}{\sqrt{x}} 
\begin{pmatrix}
\vdots\\
\sum_{h=1}^{j} (-1)^h T_{h} G_{k,Y_{l+1-h,j+1-h}} \\
\vdots  \\
-T_{l+1} G_{k}
\end{pmatrix}_{j=2,\ldots,l} \nonumber\\
&& 
+ \, \frac{2t}{\sqrt{x}} \begin{pmatrix}
T_1 G_{k, Y_{l,2}} \\
T_{1} G_{k,Y_{l,3}} \\
\vdots  \\
T_{1} G_{k,Y_{l,l}} \\
0
\end{pmatrix} 
-  \frac{2t}{\sqrt{x}} \begin{pmatrix}
T_2 G_{k, Y_{l-1,1}} \\
T_{2} G_{k,Y_{l-1,2}} \\
\vdots  \\
T_{2} G_{k,Y_{l-1,l-1}} \\
0
\end{pmatrix} 
\eea
For the first term in $J_{4}$, by Theorem \ref{gerenel linear systems}, for $j=2,\ldots,l$,
\bea\label{412ev1}
\sum_{h=1}^{j} (-1)^{h} T_h G_{k, Y_{l+1-h,j+1-h}}=\sum_{q=1}^{l+1} (A^{(l+1)})_{j,q} G_{k,Y_{l+1,q}},
\eea
where matrix $A^{(l+1)}$ is defined by (\ref{definition of A}) with $l$ replaced by $l+1$.
By Proposition \ref{513prop5},
\bea\label{412ev4}
T_{l+1}G_{k} = \sum_{q=1}^{l+1} (-1)^{q-1}G_{k,Y_{l+1,q}}.
\eea
By (\ref{412ev1})-(\ref{412ev4}), we have
\beas
B^{(l)}\begin{pmatrix}
\vdots\\
\sum_{h=1}^{j} (-1)^h T_{h} G_{k,Y_{l+1-h,j+1-h}} \\
\vdots  \\
-T_{l+1} G_{k}
\end{pmatrix}_{j=2,\ldots,l}
=
C_3^{(l)}
\begin{pmatrix}
 G_{k,Y_{l+1,1}} \\
\vdots \\
G_{k,Y_{l+1,l+1}}
\end{pmatrix}.
\eeas
We apply Propositions \ref{derivative and translations} and \ref{412320prop3}  to the second and the third terms in  (\ref{J4}), respectively. Putting everything together, we obtain the claim in the proposition.
\end{proof}

\begin{lem}\label{413lemma1}
Let $G_{k}(x,t)$ be given in  (\ref{definition of G1}). Then for any integer $l\geq 1$, we have
\bea
\frac{\partial G_{k}}{\partial t}(x,t)=\sum_{i=0}^{l-1}(-1)^{i+1}d_{i}(\frac{2t}{\sqrt{x}})^{i}+d_{l}(\frac{2t}{\sqrt{x}})^{l},
\label{413lemma1:eq}
\eea
where $d_{i}=\frac{1}{\sqrt{x}}S_{i+1}G_{k} -T_{i}G_{k}$ for $i=0,\ldots,l-1$ and $d_{l}=(-1)^{l}T_{l+2}G_{k}$.
\end{lem}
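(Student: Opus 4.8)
The plan is to prove the identity by a straightforward induction on $l$, using only the two structural propositions already established: Proposition \ref{derivative and translations}, which gives $\frac{\partial G_k}{\partial t} = T_2 G_k$ (take $Y = \emptyset$ there), and Proposition \ref{recursive formula for G about T}, which supplies the one-step recursion $T_h G_k = T_{h-2} G_k - \frac{1}{\sqrt{x}} S_{h-1} G_k - \frac{2t}{\sqrt{x}} T_{h+1} G_k$. The whole argument amounts to repeatedly applying this recursion to the ``tail'' translation operator $T_{m+2} G_k$, peeling off one power of $\tfrac{2t}{\sqrt{x}}$ each time, and tracking the accumulating signs. To keep the bookkeeping clean I would introduce the intrinsic quantity $D_i := \frac{1}{\sqrt{x}} S_{i+1} G_k - T_i G_k$ for every $i \ge 0$; then the claim is exactly
\[
T_2 G_k = \sum_{i=0}^{l-1} (-1)^{i+1} D_i \Big(\frac{2t}{\sqrt{x}}\Big)^i + (-1)^l \Big(\frac{2t}{\sqrt{x}}\Big)^l T_{l+2} G_k,
\]
where the final remainder term is the $d_l = (-1)^l T_{l+2} G_k$ of the statement and $D_i = d_i$ for $0 \le i \le l-1$.

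First I would verify the base case $l=1$. Proposition \ref{recursive formula for G about T} with $h = 2$ and $Y = \emptyset$ gives $T_2 G_k = T_0 G_k - \frac{1}{\sqrt{x}} S_1 G_k - \frac{2t}{\sqrt{x}} T_3 G_k = -D_0 - \frac{2t}{\sqrt{x}} T_3 G_k$, which is precisely the displayed formula for $l=1$. For the inductive step, assuming the formula holds for some $l$, I would apply Proposition \ref{recursive formula for G about T} with $h = l+2$ to the remainder term, obtaining $T_{l+2} G_k = -D_l - \frac{2t}{\sqrt{x}} T_{l+3} G_k$. Substituting this into the last summand and distributing the factor $(-1)^l \big(\frac{2t}{\sqrt{x}}\big)^l$ yields
\[
(-1)^l \Big(\frac{2t}{\sqrt{x}}\Big)^l T_{l+2} G_k = (-1)^{l+1} D_l \Big(\frac{2t}{\sqrt{x}}\Big)^l + (-1)^{l+1} \Big(\frac{2t}{\sqrt{x}}\Big)^{l+1} T_{l+3} G_k,
\]
and folding the first term into the sum advances the index exactly to $l+1$, completing the induction. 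Finally I would invoke Proposition \ref{derivative and translations} to replace $T_2 G_k$ by $\frac{\partial G_k}{\partial t}$, giving (\ref{413lemma1:eq}).

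This proof is essentially mechanical, so there is no deep obstacle; the only place demanding care is the sign and index bookkeeping, together with the mild notational abuse that $d_l$ is a remainder term of a different form than $d_0,\dots,d_{l-1}$. I would emphasize the intrinsic quantities $D_i$ precisely to avoid confusion between the ``genuine'' coefficients (which do not depend on $l$) and the $l$-dependent tail, and I would double-check that the parity factors $(-1)^{i+1}$ inside the sum and $(-1)^l$ on the tail remain consistent across the inductive step, since a single sign slip there would propagate through the whole expansion.
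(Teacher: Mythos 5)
Your proof is correct and follows essentially the same route as the paper: the paper's own argument simply "repeatedly uses" Proposition \ref{recursive formula for G about T} starting from $T_2G_k$, which is exactly your induction on $l$ written as an unrolled iteration (with $T_0G_k$ evaluated as $kG_k$). The base case, the sign bookkeeping, and the use of Proposition \ref{derivative and translations} all match.
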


\begin{proof}
By Proposition \ref{derivative and translations}, and repeatedly using Proposition \ref{recursive formula for G about T},
\beas
\frac{\partial G_{k}}{\partial t}(x,t)&=&T_{2}G_{k}=k G_{k} - \frac{S_1 G_{k}}{\sqrt{x}} - \frac{2t}{\sqrt{x}} (T_{1}G_{k} - \frac{1}{\sqrt{x}} S_2 G_{k} - \frac{2t}{\sqrt{x}} T_{4}G_{k}) \\
&=& k G_{k} - \frac{S_1 G_{k}}{\sqrt{x}} - \frac{2t}{\sqrt{x}} (T_{1}G_{k} - \frac{1}{\sqrt{x}} S_2 G_{k})\\
&&
+ \left( \frac{2t}{\sqrt{x}} \right)^2 (T_{2}G_{k} - \frac{1}{\sqrt{x}} S_3 G_{k} - \frac{2t}{\sqrt{x}} T_{5}G_{k}) \\
&=& k G_{k} - \frac{S_1 G_{k}}{\sqrt{x}} - \frac{2t}{\sqrt{x}} (T_{1}G_{k} - \frac{1}{\sqrt{x}} S_2 G_{k})\\
&&+ \sum_{i=2}^{l-1} (-1)^{i+1} \left( \frac{2t}{\sqrt{x}} \right)^{i} (\frac{S_{i+1}G_{k}}{\sqrt{x}}-T_{i}G_{k}) + (-1)^l \left( \frac{2t}{\sqrt{x}} \right)^l T_{l+2} G_{k}.
\eeas
This completes the proof.
\end{proof}

\begin{rem}
\label{newremark}
{\rm 
Lemma \ref{413lemma1} provides an expansion of $T_2G_k$ in powers of $t$ up to degree $l$ with coefficients related to $S_{i+1} G_k, T_iG_k$. This also holds for $T_2G_{k,Y}$ with $G_k$ replaced by $G_{k,Y}$ in (\ref{413lemma1:eq}). Moreover, via a similar process to 
further handle $T_iG_{k,Y}$, we can write 
\be
T_2G_{k,Y}= \sum_{i=0}^{l-1} \tilde{d_i} (\frac{2t}{\sqrt{x}})^i + \tilde{d_l} (\frac{2t}{\sqrt{x}})^l,
\label{413lemma1:eq-new}
\ee
where for $0\leq i \leq l-1$, $\tilde{d_i}$ is some linear combination of $S_j G_{k,Y}, j=0,1,\ldots,i+1$.
Taking the derivative with respect to $t$ on both sides of (\ref{413lemma1:eq-new}) produces $\frac{\partial }{\partial t}S_j G_{k,Y}$. By Proposition \ref{320eveprop6}, $S_j G_{k,Y}$ is a linear combination of $G_{k,Y_s}$ for some $Y_s$ with $|Y_s| = |Y|+j$. So by Proposition \ref{derivative and translations}, $\frac{\partial }{\partial t}S_j G_{k,Y} = T_2(S_j G_{k,Y} )$. 
By a similar argument to (\ref{413lemma1:eq-new}), one can further expand $T_2(S_j G_{k,Y} )$ as powers of $t$ up to degree $l-j$. Continuing the above process (i.e., taking derivatives, expanding as powers of $t$), we can show that $f_l$ (which equals $\frac{\partial^{l-1} T_2G_k}{\partial t^{l-1}} |_{t=0}$ by (\ref{aboutfi}) below) is a linear combination of $\tau_{k,Y}$ with $|Y|=l$.
Compared with (\ref{410definition of f_{l}}) where $f_l$ is a linear combination of $\tau_{k,Y}$ with $|Y|=2l$, now $f_l$ can be expressed as a linear combination of $\tau_{k,Y}$ with $|Y|=l$.
}
\end{rem}

\section{Proofs of Theorems \ref{theorem2 in note 4}, \ref{theorem 1 in note 4} and Proposition \ref{intro:prop}}\label{proof of main results}

\begin{proof}[Proof of Theorem \ref{theorem 1 in note 4}]
By Lemma \ref{NOTE4lem2} and Proposition \ref{derivative and translations},
\bea\label{aboutfi}
f_i(x)=F_{i}(x,0)=\frac{\partial^{i-1}T_2G_{k}}{\partial t^{i-1}}\Big|_{t=0}.
\eea
Let $1\leq j\leq i$, and let $Y_{i,j}$ be a hook diagram.
Let
\bea\label{definition of fjqi}
f_{j,q}^{(i)} = \frac{\partial^i}{\partial t^{i}}G_{k,Y_{j,q}}\Big|_{t=0}.
\eea
By Proposition \ref{derivative and translations}, (\ref{aboutfi}) and (\ref{definition of fjqi}), we can write $f_{i+1}(x)$ as in (\ref{expression of fi}).
Taking the $i$-th derivative with respect to $t$ on both sides of equation (\ref{recursive formula for G_{k}}) and letting $t=0$, we have the recursive relation (\ref{recursive formula for fjqi}) for $f_{j,q}^{(i)}$.
By Proposition \ref{initial values of G}, we have the initial conditions for the recursive formula (\ref{recursive formula for fjqi})
as in (\ref{initial values for the recursive formulas}).
\end{proof}

\begin{proof}[Proof of Theorem \ref{theorem2 in note 4}]
Using the recursive relations in Theorem \ref{theorem 1 in note 4}, in the following we apply the induction method to show that for any $i\geq 0, l \geq 1$ and  $1\leq q\leq l$,
\bea
f_{l,q}^{(i)}(x) &=& x^{-i-l/2} \sum_{s=0}^{l} \frac{d^s \tau_k}{dx^s} x^s \sum_{j=0}^{i+\lfloor\frac{l-s}{2}\rfloor} a_{j,l,q,s}^{(i)} (k) x^j 
+ x^{-i-l/2}  \sum_{s=l+1}^{i+l} \frac{d^s \tau_k}{dx^s} x^s \sum_{j=0}^{i+l-s} b_{j,l,q,s}^{(i)}(k) x^j , \label{exforflq} \\
f_{i+1}(x)&=&\frac{1}{x^{i+1}} \sum_{m=0}^{i+1} \frac{d^m \tau_k}{dx^m} x^m\sum_{j=0}^{i+1-m} c_{j,m}^{(i+1)} (k) x^j,
\label{expression for fl}
\eea
where $a_{j,l,q,s}^{(i)}(k)$ and $b_{j,l,q,s}^{(i)}(k)$ are polynomials in $k$ of degree at most $i+2(i+l-s-j)$ with coefficients depending on $j, l, q, s, i$, and $c_{j,m}^{(i+1)}(k)$ are polynomials in $k$ of degree at most $i+1+2(i+1-m-j)$ with coefficients depending on $j, i, m$.

By Lemma \ref{F10}, $f_1(x) = -2k \frac{d \tau_k}{dx} + (k+ \frac{k^3}{x}) \tau_k$. We have that (\ref{expression for fl}) holds for $i=1$. Assume inductively that, for any $i_{0}\leq i$ with some $i\geq 0$,  formulae (\ref{exforflq}) and (\ref{expression for fl}) hold for $f_{i_{0}+1}(x)$ $f_{l,q}^{(i_{0}-1)}(x)$ for any $l \geq 1$, $1\leq q\leq l$. In the above expression, when $i_{0}= 0$, we set $f_{l,q}^{(i_{0}-1)}(x)=0$.
In the following, we first deduce the formula for $f_{l,q}^{(i)}(x)$ for any $l\geq 3$ and $1\leq q\leq l$, and then deduce the formula for $f_{i+2}(x)$.

For a given $i$, one can use induction on $l$ to prove that for any $l\geq 3$ and $1\leq q\leq l$, $f_{l,q}^{(i)}(x)$ can be expressed as (\ref{exforflq}). Using $f^{(i)}_{1,1}$ in (\ref{initial values for the recursive formulas}), we have that (\ref{exforflq}) holds for $l=1$ by induction. In (\ref{recursive formula for Gk}), we take the $i$-th derivative with respect to $t$ at $t=0$, then
\beas
f_{2,1}^{(i)}(x)&=&\frac{1}{2}f_{i+1}+
\frac{1}{2}\Big(x \frac{d^2}{dx^2} f_i - (k^2+2i) \frac{d}{dx} f_i + \frac{2k^2+k^4+4k^2i+4i+4i^2}{4x} f_i\Big) , \nonumber \\
f_{2,2}^{(i)}(x)&=&-\frac{1}{2}f_{i+1}+
\frac{1}{2}\Big(x \frac{d^2}{dx^2} f_i - (k^2+2i) \frac{d}{dx} f_i + \frac{2k^2+k^4+4k^2i+4i+4i^2}{4x} f_i\Big).
\label{old recursive formula}
\eeas
So by induction, (\ref{exforflq}) holds for $l=2$.
Assume inductively it holds for all $l_{0}\leq l-1$ for some $l\geq 3$. Then computing $\frac{d}{dx} f_{l-1,q}^{(i)}$, $\frac{d}{dx} f_{l,q}^{(i-1)}$ in (\ref{recursive formula for fjqi}), we have that (\ref{exforflq}) holds for all $l$. 

Next, we use this information to show that $f_{i+2}(x)$ can be expressed in the form of (\ref{expression for fl}). 
By induction, we now have $f_{i+1}$ in the form of (\ref{expression for fl}). By (\ref{initial values for the recursive formulas}), we obtain $f^{(i+1)}_{1,1}$.
Thus,
\beas\label{intermedian}
kf_{i+1}(x)-\frac{2k}{\sqrt{x}}f_{1,1}^{(i+1)}(x)&=&x^{-i-2}\sum_{s=0}^{i+2} \frac{d^s \tau_k}{dx^s} x^s \sum_{j=0}^{i+2-s} \tilde{a}_{j,i,s}(k) x^j,
\eeas
where $\tilde{a}_{j,i,s}(k)$ are polynomials in $k$ with degree at most $i+2+2(i+2-s-j)$ with coefficients depending on $j,i,s$. Moreover, $\tilde{a}_{i+2,i,0}(k)=kc^{(i+1)}_{i+1,0}(k)$.

By Propositions \ref{513prop5} and \ref{513prop6}, we obtain
\be
\label{5.19}
\frac{S_{j+1}G_{k}}{\sqrt{x}}-T_{j}G_{k}=
\frac{1}{\sqrt{x}}\sum_{q=1}^{ j+1}(-1)^{q-1}(2k-2q+j+2)G_{k,Y_{j+1,q}}-
\sum_{q=1}^{j}(-1)^{q-1}G_{k,Y_{j,q}}.
\ee
Substituting (\ref{5.19}) into (\ref{413lemma1:eq}) and taking the $(i+1)$-th derivative with respect to $t$ at $t=0$ on both sides, recalling (\ref{definition of fjqi}) and (\ref{aboutfi}), we obtain
\beas
f_{i+2}(x) &=& kf_{i+1}(x)-\frac{2k}{\sqrt{x}}f_{1,1}^{(i+1)}(x)-\sum_{j=1}^{i+1}(-1)^{j+1}j!\binom{i+1}{j}(\frac{2}{\sqrt{x}})^{j}\sum_{q=1}^{j}(-1)^{q-1}f_{j,q}^{(i+1-j)}(x) \nonumber\\
&& + \, \frac{1}{2}\sum_{j=1}^{i+1}(-1)^{j+1}j!\binom{i+1}{j}(\frac{2}{\sqrt{x}})^{j+1}\Big(\sum_{q=1}^{j+1}(-1)^{q-1}(2k-2q+j+2)f_{j+1,q}^{(i+1-j)}(x)\Big).
\eeas
By induction on $f_{j,q}^{(i+1-j)}(x)$ and $f_{j+1,q}^{(i+1-j)}(x)$, we conclude that (\ref{expression for fl}) holds for $i+2$. 
This completes the proof.
\end{proof}

\begin{proof}[Proof of Proposition \ref{intro:prop}]
From (\ref{definition of tauk}),  $\tau_k(x)$ is a multiplier of $x^{k^2/2}$. Moreover, when $x$ tends to 0, we have
\be
\lim_{x\rightarrow 0} 
\frac{\tau_k(x)}{x^{k^2/2}}
= \det\left(\frac{1}{(i+j-1)!}\right)_{i,j=1,\ldots,k}
= (-1)^{\frac{k(k-1)}{2}} \prod_{j=0}^{k-1} \frac{j!}{(j+k)!}
=(-1)^{\frac{k(k-1)}{2}} \frac{G^2(k+1)}{G(2k+1)}.
\label{tauk at 0}
\ee
In the above, the second equality follows from (e.g., \cite[(4.39)-(4.41) with $m_i=0$]{conrey2006moments}). The last equality follows from the definition of Barnes G-function. So we can express $\tau_k(x)$ as
\beas
\tau_k(x) = (-1)^{\frac{k(k-1)}{2}} \frac{G^2(k+1)}{G(2k+1)} x^{\frac{k^2}{2}} e^{\frac{x}{2}} \sum_{j=0}^\infty d_j(k) x^j,
\eeas
where $e^{x/2}$ is added intentionally to cancel $e^{-x/2}$ in $F_1(M,k)$ defined in (\ref{F1Mk}), and $d_j(k)$ are some coefficients depending on $k$. 
Substituting $\tau_k(x)$ into $F_1(M,k)$ yields
\be
\label{F1mk2}
F_1(M,k) = (-1)^{M} \frac{G^2(k+1)}{G(2k+1)} (2M)! d_{2M}(k).
\ee
By \cite{winn2012derivative}, (\ref{F1Mk}) holds for any $M\geq 0$ with $2M$ an integer. Furthermore, by \cite{dehaye2008joint}, for such $M$
\[
F_1(M,k) = \frac{G^2(k+1)}{G(2k+1)} \frac{X_{M}(k)}{Y_{M}(k)},
\]
where $X_{M}, Y_{M}$ are polynomials in $k$ with explicit expressions that can be computed in a combinatorial way, and $Y_M(k)$ has no zeros when ${\rm Re}(k)>M-1/2$.  Together with (\ref{F1mk2}), for any $m\geq 0$, $d_{m}(k)$ is a rational function in $k$, and is analytic when ${\rm Re}(k)>(m-1)/2$.

We now substitute $\tau_k(x)$ into the expression (\ref{expression of fl}) for $f_l(x)$, leading to
\beas
f_l(x) &=& (-1)^{\frac{k(k-1)}{2}} \frac{G^2(k+1)}{G(2k+1)} x^{\frac{k^2}{2}-l}  \sum_{j=0}^\infty \Bigg( \sum_{q=0}^{\min(l,j)}
\sum_{i=0}^q \sum_{m=0}^{l-q}
(1/2)^i c_{q-i,m+i}^{(l)} \\
&& \times \,
 \binom{m+i}{i} \Big(\prod_{s=0}^{m-1} (j-q+\frac{k^2}{2}-s)\Big) d_{j-q}(k) \Bigg) x^j.
\eeas
When $m=0$, the product $\prod_{s=0}^{m-1} (j -q+\frac{k^2}{2}-s)$ is viewed as 1. We will make this convention throughout the paper.
According to (\ref{410definition of f_{l}}), $f_l(x)$ is a multiplier of $x^{\frac{k^2}{2}+l}$, so the summation over $j$ in the above formula starts from $2l$. Namely,
\beas
x^{-\frac{k^2}{2}-l} f_l(x) &=& (-1)^{\frac{k(k-1)}{2}} \frac{G^2(k+1)}{G(2k+1)}   \sum_{j=0}^\infty \Bigg( \sum_{q=0}^{l}
\sum_{i=0}^q \sum_{m=0}^{l-q} (1/2)^i c_{q-i,m+i}^{(l)} \\
&& \times \,
\binom{m+i}{i} \Big(\prod_{s=0}^{m-1} (2l+j-q+\frac{k^2}{2}-s)\Big) d_{2l+j-q}(k) \Bigg) x^{j}.
\eeas
Substituting this into (\ref{expression of bkm}) 
by changing variables $l$ to $h:=2M-l$
leads to
\beas
F_2(M,k) &=& \frac{G^2(k+1)}{G(2k+1)} \sum_{h=0}^{2M} \binom{2M}{h}
\sum_{j=0}^{2h} \sum_{q=0}^{2M-h} 
\sum_{i=0}^q \binom{2h}{j} 
(-1/2)^j (2h-j)!  \\
&& \times \,  \sum_{m=0}^{2M-h-q} (1/2)^i \binom{m+i}{i} \Big(\prod_{s=0}^{m-1} (4M-j-q+\frac{k^2}{2}-s) \Big) c_{q-i, m+i}^{(2M-h)}  d_{4M-j-q}(k) .
\eeas
By Theorem \ref{theorem2 in note 4}, all $c_{q-i, m+i}^{(2M-h)}$ are polynomials in $k$. From the above analysis, $ d_{4M-j-q}(k)$ is rational and analytic when ${\rm Re}(k)>2M-\frac{1}{2}$. So $F_2(M,k)$ is $\frac{G^2(k+1)}{G(2k+1)}$ multiplying a rational function, which is analytic at least when ${\rm Re}(k)>2M-\frac{1}{2}$.
Finally, by Lemma \ref{lem:Rational functions} in Appendix \ref{app}, we know that $F_2(M,k)$ can be expressed by a formula that is analytic when ${\rm Re}(k) > M-\frac{1}{2}$. Therefore, $\frac{G^2(k+1)}{G(2k+1)}$ multiplying the rational function is analytic when ${\rm Re}(k)>M-\frac{1}{2}$. 
\end{proof}

Regarding the computation of $F_2(M,k)$, we have several methods. One can either use the method discussed in the above proof or Lemma \ref{lem:Rational functions} in Appendix \ref{app}. In the first method,  $d_{m}(k)$ produced in $F_{1}(m/2,k)$ can be obtained recursively by a connection of it to a solution of the $\sigma$-Painlev\'{e} III$'$ equation (e.g., see \cite[Theorem 2]{basor2019representation} or \cite[Section 5]{forrester2006boundary}) and $c_{q,m}^{(l)}$ involved in the expression of $F_{2}(M,k)$ can be recursively obtained by Theorem \ref{theorem 1 in note 4}.


\section{Truncated case and the proof of Theorem \ref{intro:thm5}}
\label{section:Truncated case}

For any fixed $k$, Theorem \ref{theorem 1 in note 4} provides an iterative approach to computing the $2k$-th moment of $Z_A''(1)$. The matrices involved have size $l$. From the analysis below Theorem \ref{theorem 1 in note 4}, $l$ can be as large as $2k$. Moreover, when the order of the derivative of $Z_A$ increases, $l$ can be much larger than $2k$.
For example, as analyzed in Section \ref{section:Generalization}, $l$ can be as large as $4k$ for the third-order derivative.
We can modify the iterative approach using truncated matrices of size $k$ to obtain a more effective approach from the viewpoint of computation.  

Let $l\geq 3,k\geq 1$ be integers. Denote $l_{0}=\min(l,k), l_{1}=\min(l-1,k),l_{2}=\min(l-2,k), l_{3}=\min(l+1,k).$
Let $\widetilde{C}_1^{(l)}= (\tilde{c}_{i,j}^{(1)})_{\substack{i=1,\ldots,l_{0}\\j=1,\ldots,l_{1}}}$ be the $l_{0}\times l_{1}$ matrix satisfying 
\[
\tilde{c}^{(1)}_{i,j} = 
\begin{cases}
c^{(1)}_{i,j}  & \text{if } j \leq l_0-1; \\
(-1)^{i+j} \frac{2k-2j+l}{l_0} & \text{if } l_0 \leq j \leq l_1.
\end{cases}
\]
Let $\widetilde{C}_2^{(l)}= (\tilde{c}_{i,j}^{(2)})_{\substack{i=1,\ldots,l_{0}\\j=1,\ldots,l_{2}}}$ be the $l_{0}\times l_{2}$ matrix satisfying 
\[
\tilde{c}^{(2)}_{i,j} = 
\begin{cases}
c^{(2)}_{i,j}  & \text{if } j \leq l_0-2; \\
(-1)^{i+j}/l_0 & \text{if } l_0-1 \leq j \leq l_2.
\end{cases}
\]
Let $\widetilde{C}_3^{(l)}= (\tilde{c}_{i,j}^{(3)})_{\substack{i=1,\ldots,l_{0}\\j=1,\ldots,l_{3}}}$ be the truncated $l_{0}\times l_{3}$ matrix by preserving the first $l_0$ rows and $l_3$ columns of ${C}_3^{(l)}$. 
The proof of the following result is similar to that of Theorem \ref{theorem 1 in note 4}. All the determinants shifted by hook diagrams $Y_{l,j}$ involved are restricted to truncated cases $j=1,\ldots,\min(l,k)$. For example, the summations over $j$ in Propositions \ref{513prop5} and \ref{513prop6} are restricted to $j=1,\ldots,\min(l,k)$. The ranges of $j$ in
Theorems \ref{thm1onHankel deteminants}, Proposition \ref{411proponS} are restricted to $j=2,\ldots,\min(l,k)$ and $j=3,\ldots,\min(l,k)$ respectively. The vectors in Theorem \ref{gerenel linear systems} are truncated to dimension $\min(l,k)$.

\begin{prop}
\label{prop33}
Using the same notation as above, let $k\geq 1, l\geq 3$ be integers, and let $f_{l}(x)$ be given as in (\ref{410definition of f_{l}}). 
Let $k\geq 1, l\geq 3$ be integers. 
For any $m,s$, define
\[
\f_{m,s}^{(i)} = \begin{pmatrix}
f_{m,1}^{(i)}(x) &
\cdots &
f_{m,s}^{(i)}(x) 
\end{pmatrix}^T, \quad
\hat{\f}_{m,s}^{(i)} = \begin{pmatrix}
f_{m,2}^{(i)}(x) &
\cdots &
f_{m,s}^{(i)}(x) 
\end{pmatrix}^T.
\]
Then, for $i\geq 0$,
\be
f_{i+1}(x) = 
\sum_{q=1}^{\min(k,2)} (-1)^{q-1} f^{(i)}_{2,q}(x),
\ee
where $f^{(i)}_{2,1}(x), f^{(i)}_{2,2}(x)$ satisfy the following recursive relation
\bea
\f_{l,l_0}^{(i)}
&=&
-\, B^{(l_0)} \Big(\sqrt{x}\frac{d}{d x}-\frac{k^2+l-1-2i}{2\sqrt{x}} \Big)
\begin{pmatrix}
\f_{l-1,l_0-1}^{(i)} \\
0
\end{pmatrix}  
- \frac{1}{\sqrt{x}} \widetilde{C}_1^{(l)}
\f_{l-1,l_1}^{(i)}
+ \widetilde{C}_2^{(l)}
\f_{l-2,l_2}^{(i)} +\frac{2i}{\sqrt{x}}\widetilde{C}_3^{(l)}
\f_{l+1,l_3}^{(i-1)}  \nonumber \\
&&
+ \, B^{(l_0)} \Big(2i\frac{d}{dx}-\frac{i(k^2+l)+2i(i-1)}{x}\Big)
\begin{pmatrix}
\hat{\f}_{l,l_0}^{(i-1)} \\
0
\end{pmatrix}.
\eea
The initial conditions for the above recursive formula are as follows.
\beas
f_{0}(x)&=&\tau_{k}(x), \nonumber \\
f_{1,1}^{(i)}(x)&=&
\sqrt{x} \frac{d}{dx} f_i - \frac{1}{2\sqrt{x}} k^2 f_i - \frac{i}{\sqrt{x}} f_i, \nonumber \\
f_{2,1}^{(i)}(x)&=&\frac{1}{2}
kf_i   - \frac{i}{\sqrt{x}} 
\sum_{q=1}^{\min(k,3)} (-1)^{q-1} f^{(i-1)}_{3,q} \nonumber  \\
&& +\, 
\frac{1}{2}\Big(x \frac{d^2}{dx^2} f_i - (k^2+2k+2i) \frac{d}{dx} f_i + \frac{(k^2+2i)(k^2+4k+2i+2)}{4x} f_i\Big), \nonumber \\
f_{2,2}^{(i)}(x)&=& -\frac{1}{2}
kf_i   + \frac{i}{\sqrt{x}} \sum_{q=1}^{\min(k,3)} (-1)^{q-1} f^{(i-1)}_{3,q}  \nonumber \\
&& +\,
\frac{1}{2}\Big(x \frac{d^2}{dx^2} f_i - (k^2-2k+2i) \frac{d}{dx} f_i + \frac{(k^2+2i)(k^2-4k+2i+2)}{4x} f_i\Big). 
\eeas
\end{prop}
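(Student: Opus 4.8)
The plan is to transcribe the proof of Theorem \ref{theorem 1 in note 4} essentially verbatim, with every auxiliary statement about Hankel determinants shifted by hook diagrams replaced by its truncated form, in which the hook index $j$ runs only over $1\le j\le \min(l,k)$. This restriction is licensed by Remark \ref{rem:627}, which says the summations over $j$ in Propositions \ref{513prop5} and \ref{513prop6} may be capped at $\min(l,k)$, together with Remark \ref{natural extension}, which shows that the hook determinants $H_{k,\{X;Y_{l,j}\}}$ with $j>k$ vanish identically and that the first $\min(l,k)$ components of the linear system of Theorem \ref{gerenel linear systems} are unchanged when $B^{(l)}$ is replaced by $B^{(l_0)}$. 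Consequently the truncated vectors $\f_{l,l_0}^{(i)}$ lose no information, and when $l\le k$ one has $l_0=l,\ l_1=l-1,\ l_2=l-2,\ l_3=l+1$, so that $B^{(l_0)},\widetilde{C}_1^{(l)},\widetilde{C}_2^{(l)},\widetilde{C}_3^{(l)}$ collapse to the full matrices of Theorem \ref{theorem 1 in note 4}. Thus all the genuine content sits in the regime $l>k$, where $l_0=k$.

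First I would establish the truncated analogue of Proposition \ref{prop:iteraive formula 2}. Following that proof, I apply the truncated Theorem \ref{gerenel linear systems} (using $A^{(l_0)}=(B^{(l_0)})^{-1}$ in place of $A^{(l)}$) together with Proposition \ref{recursive formula for G about T} to write $A^{(l_0)}$ times the truncated vector of $G_{k,Y_{l,q}}$ as a sum $J_1+J_2+J_3+J_4$ of translated and weighted blocks. The interior terms $J_1,J_2,J_3$ are treated exactly as before via the truncated Proposition \ref{411proponS} and Proposition \ref{412320prop3}, and $J_4$ via the truncated Proposition \ref{513prop5}, which produces the $\widetilde{C}_3^{(l)}$ contribution. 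The only new bookkeeping arises in the last entry of each block: the boundary sums $S_{l-1}G_k=\sum_{i=1}^{\min(l-1,k)}(-1)^{i-1}(2k-2i+l)G_{k,Y_{l-1,i}}$ and $T_{l-2}G_k=\sum_{i=1}^{\min(l-2,k)}(-1)^{i-1}G_{k,Y_{l-2,i}}$ now terminate at $k$ rather than at $l-1,\,l-2$, and they are multiplied by the final column of $B^{(l_0)}$, whose entries are $(-1)^{i-1}/l_0$. Multiplying these out is what generates the modified columns: the $S_{l-1}$ term contributes $(-1)^{i-1}/l_0\cdot(-1)^{j-1}(2k-2j+l)=(-1)^{i+j}(2k-2j+l)/l_0$ to column $j$ of $\widetilde{C}_1^{(l)}$ for $l_0\le j\le l_1$, and the $T_{l-2}$ term contributes $(-1)^{i-1}/l_0\cdot(-1)^{j-1}=(-1)^{i+j}/l_0$ to column $j$ of $\widetilde{C}_2^{(l)}$ for $l_0-1\le j\le l_2$, which are precisely the extra entries in the definitions of $\widetilde{C}_1^{(l)},\widetilde{C}_2^{(l)}$; in the complementary ranges these entries coincide with $c_{i,j}^{(1)},c_{i,j}^{(2)}$, since the interior of the recursion is untouched by capping. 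The third matrix is a plain truncation: the columns $q>k$ of $C_3^{(l)}$ multiply the vanishing hooks $G_{k,Y_{l+1,q}}=0$, so deleting them (together with the rows beyond $l_0$) leaves the surviving entries unchanged, giving the first $l_0\times l_3$ block of $C_3^{(l)}$. Having obtained the truncated recursive relation for $G_{k,Y_{l,q}}$, I then take the $i$-th partial derivative in $t$ at $t=0$ exactly as in the proof of Theorem \ref{theorem 1 in note 4}, using Proposition \ref{derivative and translations} and the definition (\ref{definition of fjqi}) of $f_{l,q}^{(i)}$, with the initial conditions supplied by Proposition \ref{initial values of G} after dropping the vanishing hook terms $f_{3,q}^{(i-1)}=0$ for $q>k$ — which is exactly why the relevant sums there run only to $\min(k,3)$.

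The main obstacle is the careful verification that feeding the capped boundary sums through the final column of $B^{(l_0)}$ reproduces precisely the claimed entries of $\widetilde{C}_1^{(l)}$ and $\widetilde{C}_2^{(l)}$ in their final columns ($j\ge l_0-1$), and — more delicately — that no analogous correction is forced on $\widetilde{C}_3^{(l)}$, so that it remains a clean restriction of $C_3^{(l)}$. The asymmetry here is the substantive point: $\widetilde{C}_1,\widetilde{C}_2$ inherit nontrivial boundary contributions from the weighted and unweighted translation terms $S_{l-1},T_{l-2}$, whereas the $T_{l+1}$ term underlying $\widetilde{C}_3$ already carries the sign pattern $(-1)^{q-1}$ matching $C_3^{(l)}$, so that truncation there amounts only to discarding vanishing hooks. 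Once these entry identities are confirmed, the remainder of the argument is a routine transcription of the proof of Theorem \ref{theorem 1 in note 4} with all summation ranges capped at $\min(l,k)$.
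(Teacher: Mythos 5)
Your proposal follows exactly the route the paper takes: the paper's own justification of this proposition is a one-paragraph remark stating that the proof is that of Theorem \ref{theorem 1 in note 4} with every hook-diagram index capped at $\min(l,k)$ via Remarks \ref{rem:627} and \ref{natural extension}, which is precisely what you carry out, including the correct identification of where the modified entries of $\widetilde{C}_1^{(l)},\widetilde{C}_2^{(l)}$ come from and why $\widetilde{C}_3^{(l)}$ is a plain restriction. One caution on your localization of the ``main obstacle'': the capped boundary sums $S_{l-1}G_k$ and $T_{l-2}G_k$ enter through the last column of $B^{(l_0)}$, whose entries $(-1)^{i-1}/l_0$ differ from the $(-1)^{i-1}/l$ of $B^{(l)}$, so they perturb \emph{every} column of $\widetilde{C}_1^{(l)},\widetilde{C}_2^{(l)}$ and not only those with $j\ge l_0-1$; the agreement with $c^{(1)}_{i,j},c^{(2)}_{i,j}$ for $j\le l_0-1$ is not because the interior is ``untouched'' but because the contributions of the discarded rows, e.g. $(-1)^{i+q}(2k-2q+l)\sum_{p=l_0}^{l-1}\tfrac{1}{p(p+1)}=(-1)^{i+q}(2k-2q+l)\bigl(\tfrac{1}{l_0}-\tfrac{1}{l}\bigr)$ in the $\widetilde{C}_1^{(l)}$ case, exactly cancel the change $\tfrac{1}{l}\to\tfrac{1}{l_0}$ in that last column — a telescoping check you should include for all interior columns.
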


Recall that our goal is to compute the coefficient of the main term of $\int_{\U(N)} |Z_A''(1)|^{2k} dA_N$, which is
\be
\label{main coefficient}
(-1)^{\frac{k(k-1)}{2}} \sum_{h=0}^{2k}\binom{2k}{h}( \frac{d}{dx})^{2h} \Big(e^{-x/2} x^{-\frac{k^2}{2}+h-2k}f_{2k-h}(x)\Big)\Bigg|_{x=0}
\ee
by Proposition \ref{intro:prop1}.
From Proposition \ref{prop33}, by a similar argument to that of
Theorem \ref{theorem2 in note 4}, for any fixed $k\geq 1$, let $l \geq 1$, then $f_l(x)$ has the following expression
\be
f_{l}(x) = \frac{1}{x^l} \sum_{m=0}^l  x^mP_{m}(x) \frac{d^m \tau_k(x)}{dx^m},
\label{82eq}
\ee
where $P_{m}(x)=\sum_{j=0}^{l-m} c_{j,m}^{(l)} x^{j}$ and $c_{j,m}^{(l)}$ are constants depending on $j, l, m$. So to compute (\ref{main coefficient}), it suffices to find the Taylor series of $\tau_k(x)$ at $x=0$.

According to \cite[(5.7), (5.8), (5.11)-(5.15)]{forrester2006boundary}, we can express $\tau_k(x)$ as
\bea
\tau_k(x) = (-1)^{\frac{k(k-1)}{2}} \frac{G^2(k+1)}{G(2k+1)} x^{\frac{k^2}{2}} e^{\frac{x}{2}} \exp\left(-\sum_{n=1}^kc_{2n} \frac{(4x)^{2n}}{2n} - \sum_{n=2k+1}^\infty c_n \frac{(4x)^{n}}{n} \right ).
\label{tauk-nw formula}
\eea
Let $c_0=-k^2, c_1=0$. Denote $\eta(s) = \sum_{n=0}^\infty c_n s^n$, which satisfies the following differential equation (e.g., see \cite[(5.15)]{forrester2006boundary})
\be
\label{equation for eta}
(s \eta'')^2 + 4 ((\eta')^2 - \frac{1}{64} ) (\eta-s \eta') - \frac{k^2}{16} = 0.
\ee
As a result, we can deduce that $c_2 = \frac{1}{64(4k^2-1)}$, and  for any $q \geq 3$
\bea
\frac{(q-1-2k)(q-1+2k)}{16(4k^2-1)} c_q 
&=& -\sum_{l=1}^{q-3}(l+1)(l+2)(q-l)(q-l-1) c_{l+2} c_{q-l} \nonumber \\ 
&& \hspace{-2cm} + \, 4 \sum_{l=2}^{q-1} (l-1) c_l E_{q-l}
+4k^2 \sum_{l=2}^{q-2} (l+1) (q-l+1) c_{l+1} c_{q-l+1}, \label{eta}
\eea
where $E_q = \sum_{l=0}^q (l+1) (q-l+1) c_{l+1} c_{q-l+1}$. From (\ref{eta}), we have $c_{2m+1}=0$ for $m\leq k-1$. When $q=2k+1$, the left-hand side of (\ref{eta}) vanishes, so we cannot use it to determine $c_{2k+1}$. 
This implies that the differential equation (\ref{equation for eta}) has a one-parameter family of solutions, corresponding to different values of $c_{2k+1}$. Here we remark that for $\tau_k(x)$ we cannot impose $c_{2k+1}=0$, as Forrester and Witte did in \cite[(5.16)]{forrester2006boundary}. 
For example, when $k=1$, $c_{2k+1}=-1/3072 \neq 0$.
Thus we need a new method to determine $c_{2k+1}$.

First, we explain why we need information about $c_{2k+1}$.
To compute the $2k$-th moment of the first order derivative of $Z_A$, according to
\cite[(4.38)-(4.41)]{basor2019representation}, the coefficients $c_1,c_2,\ldots,c_{2k}$ are enough. However, in the situation of computing $2k$-th moment of the second order derivative of $Z_A$, we need the first $4k$ coefficients $c_1,c_2,\ldots,c_{4k}$. We  explain this below in detail.

By Taylor expanding, we may rewrite $\tau_k(x)$ in the form of (\ref{expression of tauk}).  
Then by (\ref{expression of fl})
\beas
f_l(x) = (-1)^{\frac{k(k-1)}{2}} \frac{G^2(k+1)}{G(2k+1)} x^{\frac{k^2}{2}-l}
 \sum_{i=0}^\infty \left( \sum_{q=0}^{\min(i, l)}
 \sum_{m=0}^{l-q} \Big( \prod_{s=0}^{m-1} (i-q+\frac{k^2}{2}-s)\Big) c_{q,m}^{(l)} a_{i-q} 
\right)x^i.
\eeas
According to the expression (\ref{410definition of f_{l}}), $f_l(x)$ is a multiplier of $x^{\frac{k^2}{2}+l}$, so the summation over $i$ in the above formula starts from $2l$. Hence
\[
x^{-\frac{k^2}{2}-l} f_l(x)
=(-1)^{\frac{k(k-1)}{2}} \frac{G^2(k+1)}{G(2k+1)} 
\sum_{i=0}^\infty \left( \sum_{q=0}^{l}
 \sum_{m=0}^{l-q} \Big(\prod_{s=0}^{m-1} (i+2l-q+\frac{k^2}{2}-s)\Big) c_{q,m}^{(l)} a_{i+2l-q} 
\right)x^i.
\]
Substituting this into (\ref{main coefficient}), we obtain that
\beas
(\ref{main coefficient}) &=& \frac{G^2(k+1)}{G(2k+1)} \sum_{i=0}^{2k} \binom{2k}{i} \sum_{j=0}^{2i} \binom{2i}{j} (-1/2)^j (2i-j)! \\
&& \times \, \sum_{q=0}^{2k-i} \sum_{m=0}^{2k-i-q} \Big(\prod_{s=0}^{m-1} (4k-j-q+\frac{k^2}{2}-s) \Big) c_{q,m}^{(2k-i)}  a_{4k-j-q} .
\eeas
As one can see, this depends on the coefficients $a_1, \ldots, a_{4k}$. Note that
\be
\label{relation a and c}
e^{\frac{x}{2}} \exp\left(-\sum_{n=1}^kc_{2n} \frac{(4x)^{2n}}{2n} - \sum_{n=2k+1}^\infty c_n \frac{(4x)^{n}}{n} \right ) = \sum_{j=0}^\infty a_j x^j,
\ee
so (\ref{main coefficient}) depends on coefficients $c_1, \ldots, c_{4k}$. This explains why we need the first $4k$ coefficients $c_1,c_2,\ldots,c_{4k}$ to compute the $2k$-th moment of the second order derivative of $Z_A$.

As explained below (\ref{82eq}), to compute the $2k$-th moment, it suffices to find the Taylor expansion of $\tau_k(x)$ at $x=0$. From (\ref{expression of tauk}), it suffices to compute $a_0,a_1,\ldots,a_{4k}$. We deduce these $a_j$ as an application of Proposition \ref{recursive formula for tau0}. Note that by (\ref{relation a and c}), it is not hard to show that $a_{2k+1} + \frac{4^{2k+1}}{2k+1} c_{2k+1}$ is the coefficient of $x^{2k+1}$ in the polynomial 
\[
\left(\sum_{i=0}^{2k+1} \frac{(x/2)^i}{i!} \right) \left( \sum_{i=0}^{k} \frac{1}{i!}\left(-\sum_{n=1}^k \frac{c_{2n} (4x)^{2n}}{2n}\right)^i  \right).
\]
Consequently, if we can determine $a_{2k+1}$, we can also determine $c_{2k+1}$.

\begin{lem}
\label{lem:Yk,k-1}
Let $Y_{k,k}, Y_{k-1,k-1}$ be hook diagrams, then
\be
\label{lem:Yk,k-1-eq}
\sqrt{x} \frac{d}{dx} \tau_{k,Y_{k,k}}(x) + \frac{k^2+k}{2\sqrt{x}} \tau_{k,Y_{k,k}}(x) 
-\tau_{k,Y_{k-1,k-1}}(x) = 0.
\ee
\end{lem}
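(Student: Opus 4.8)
The plan is to imitate the cofactor-matrix argument used in the proof of Proposition \ref{lemmaontranslationby1}, but driven by the \emph{lowering} Bessel relation instead of the raising one. First I would unwind the notation: since $Y_{k,k}=(1,\ldots,1)$ ($k$ ones) and $Y_{k-1,k-1}=(1,\ldots,1)$ ($k-1$ ones), the definition (\ref{definition of tau}) gives $\tau_{k,Y_{k,k}}=\det(I_{i+j+2}(2\sqrt{x}))_{i,j=0,\ldots,k-1}$, while $\tau_{k,Y_{k-1,k-1}}$ is the determinant of the same matrix except that its $0$-th column $(I_{i+2})_i$ is replaced by $(I_{i+1})_i$ (the remaining columns, indexed $j=1,\ldots,k-1$, stay $(I_{i+j+2})_i$). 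Keeping track of which column carries which shift — recall the shift of column $j$ is $l_{k-j}$ — is the one bookkeeping point to get right.

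Next I would let $(F_{ij})$ denote the cofactor matrix of $M:=(I_{i+j+2}(2\sqrt{x}))_{i,j}$ and write $\frac{d}{dx}\tau_{k,Y_{k,k}}=\big(\frac{d}{dx}I_{i+j+2}\big)_{i,j}\cdot(F_{ij})$. Applying the lowering relation (the second identity in (\ref{recursiveformulasforbessel})), namely $\frac{d}{dx}I_{\beta}(2\sqrt{x})=\frac{1}{\sqrt{x}}I_{\beta-1}(2\sqrt{x})-\frac{\beta}{2x}I_{\beta}(2\sqrt{x})$ with $\beta=i+j+2$, yields
\[
\sqrt{x}\,\frac{d}{dx}\tau_{k,Y_{k,k}}=(I_{i+j+1})_{i,j}\cdot(F_{ij})-\frac{1}{2\sqrt{x}}\big((i+j+2)I_{i+j+2}\big)_{i,j}\cdot(F_{ij}).
\]
For the weighted term I would split $i+j+2=(j+2)+i$ exactly as in Proposition \ref{lemmaontranslationby1}: the column-weight part contributes $\big(\sum_{j=0}^{k-1}(j+2)\big)\tau_{k,Y_{k,k}}$ by Laplace expansion along columns, and the row-weight part contributes $\big(\sum_{i=0}^{k-1}i\big)\tau_{k,Y_{k,k}}$ by expansion along rows, so together the weighted term equals $(k^2+k)\tau_{k,Y_{k,k}}$. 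This produces precisely the $+\frac{k^2+k}{2\sqrt{x}}\tau_{k,Y_{k,k}}$ summand of the claimed identity.

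It then remains to identify $(I_{i+j+1})_{i,j}\cdot(F_{ij})$ with $\tau_{k,Y_{k-1,k-1}}$, which I expect to be the crux. Writing this inner product as $\sum_{j=0}^{k-1}\det\big(M\text{ with column }j\text{ replaced by }(I_{i+j+1})_i\big)$, I observe that for every $j\geq 1$ the replacement column $(I_{i+j+1})_i$ coincides with the untouched column $j-1$ of $M$, so that determinant vanishes by repeated columns; only the $j=0$ term survives, and it is exactly the matrix defining $\tau_{k,Y_{k-1,k-1}}$. Combining the two displays gives $\sqrt{x}\frac{d}{dx}\tau_{k,Y_{k,k}}+\frac{k^2+k}{2\sqrt{x}}\tau_{k,Y_{k,k}}=\tau_{k,Y_{k-1,k-1}}$, which is (\ref{lem:Yk,k-1-eq}). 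The main obstacle is purely the column-shift bookkeeping: one must verify that lowering every column shift by one collapses, after pairing against the cofactor matrix, to the single non-vanishing determinant $\tau_{k,Y_{k-1,k-1}}$ rather than to some other shifted determinant.
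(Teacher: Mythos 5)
Your proof is correct and takes essentially the same route as the paper: the paper's proof reads the same Bessel identity in the opposite direction, starting from $\tau_{k,Y_{k-1,k-1}}=(I_{i+j+1}(2\sqrt{x}))\cdot(F_{ij})$ and substituting $I_{\beta-1}=\sqrt{x}\,\frac{d}{dx}I_{\beta}+\frac{\beta}{2\sqrt{x}}I_{\beta}$, but the cofactor inner product, the split of the weight $i+j+2$ into column and row parts summing to $k^2+k$, and the repeated-column observation identifying the surviving $j=0$ term are all identical to yours.
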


\begin{proof}
Note that $\tau_{k,Y_{k,k}}(x) = \det (I_{i+j+\alpha}(2\sqrt{x}))_{i,j=0,\ldots,k-1}$, where $\alpha=2$.
Let $(F_{i,j})_{i,j=0,\ldots,k-1}$ be the cofactor matrix of $(I_{i+j+\alpha}(2\sqrt{x}))_{i,j=0,\ldots,k-1}$. Then
$\tau_{k,Y_{k-1,k-1}}(x) = (I_{i+j+\alpha-1}(2\sqrt{x})) \cdot (F_{i,j})$.
Since
\[
I_{i+j+\alpha-1}(2\sqrt{x})
=\sqrt{x} \frac{d}{dx} I_{i+j+\alpha}(2\sqrt{x}) + \frac{\alpha+i+j}{2\sqrt{x}} I_{i+j+\alpha}(2\sqrt{x}),
\]
we have
\beas
&& \tau_{k,Y_{k-1,k-1}}(x) \\
&=& \sqrt{x}( \frac{d}{dx} I_{i+j+\alpha}(2\sqrt{x})) \cdot (F_{i,j}) + 
(\frac{\alpha+j}{2\sqrt{x}} I_{i+j+\alpha}(2\sqrt{x})) \cdot (F_{i,j})
+
(\frac{i}{2\sqrt{x}}I_{i+j+\alpha}(2\sqrt{x})) \cdot (F_{i,j}) \\
&=& \sqrt{x} \frac{d}{dx} \tau_{k,Y_{k,k}}(x) + \frac{k^2+k}{2\sqrt{x}} \tau_{k,Y_{k,k}}(x) ,
\eeas
as claimed.
\end{proof}

For a fixed $k$, by (\ref{definition of fjqi}) and (\ref{exforflq}) with $i=0$, we may assume that for any $l \geq 1$,
\be
\label{taukYss}
\tau_{k,Y_{l,l}} (x) = x^{-l/2} \sum_{m=0}^l  x^m \left(\sum_{j=0}^{\lfloor \frac{l-m}{2}\rfloor} b_{j,m}^{(l)} x^j \right) \frac{d^m \tau_k(x)}{dx^m}.
\ee
By Proposition \ref{recursive formula for tau0}, 
\[
\tau_{k,Y_{k+1,k+1}}(x) = \frac{1}{k+1} \left(\sqrt{x} \frac{d\tau_{k,Y_{k,k}}(x)}{dx} - \frac{k^2+k}{2\sqrt{x}} \tau_{k,Y_{k,k}}(x) \right) + \frac{k}{\sqrt{x}} \tau_{k,Y_{k,k}}(x)-\frac{1}{k+1} \tau_{k,Y_{k-1,k-1}}(x)  .
\]
By Lemma \ref{lem:Yk,k-1}, $\tau_{k,Y_{k+1,k+1}}(x)\equiv 0$, which defines a differential equation. For example, if $k=2$, then we have the following differential equation:
\be
\label{diff equation}
\frac{x^{3} \left(\frac{d^{3}\tau_2(x)}{d x^{3}}  \right) +4x^2 \left(\frac{d^{2}\tau_2(x)}{d x^{2}} \right)  -2x \left(2x+1 \right) \left(\frac{d\tau_2(x)}{d x}
\right)-2 \left(x +2\right)\tau_2(x)}{6 x^{\frac{3}{2}}} = 0.
\ee
More generally, for any given $k$, by 
(\ref{taukYss}), we have
\bea\label{an identity}
x^{-\frac{k+1}{2}}\sum_{m=0}^{k+1}x^m \left(\sum_{j=0}^{\lfloor \frac{k+1-m}{2}\rfloor} b_{j,m}^{(k+1)} x^j \right) \frac{d^m\tau_k(x)}{dx^m} \equiv 0.
\eea
Now we substitute the expression (\ref{expression of tauk}) for $\tau_k(x)$ into (\ref{an identity}) to obtain
\[
(-1)^{\frac{k(k-1)}{2}} \frac{G^2(k+1)}{G(2k+1)} x^{\frac{k^2-k-1}{2}} \sum_{i=0}^\infty \left( \sum_{q=0}^{\min(i, \lfloor \frac{k+1}{2} \rfloor)}
 \sum_{m=0}^{k+1-2q} \Big( \prod_{s=0}^{m-1} (i-q+\frac{k^2}{2}-s) \Big) b_{q,m}^{(k+1)} a_{i-q} 
\right)x^i=0.
\]
Hence for any $i\geq 1$,
\bea\label{find ai}
&& a_{i}  \sum_{m=0}^{k+1} \Big(\prod_{s=0}^{m-1} (i+\frac{k^2}{2}-s) \Big) b_{0,m}^{(k+1)} \nonumber \\
&=&
-\sum_{q=1}^{\min(i,\lfloor \frac{k+1}{2} \rfloor)}  a_{i-q} \sum_{m=0}^{k+1-2q} \Big(\prod_{s=0}^{m-1} (i-q+\frac{k^2}{2}-s) \Big) b_{q,m}^{(k+1)}.
\eea

In the following, we aim to give a more concise recursive formula for $a_i$, i.e., we aim to prove Theorem \ref{intro:thm5}. The following result shows how to compute the coefficients of $a_i$ in (\ref{find ai}).

\begin{prop}\label{recursive for Taylor series of tau} Let $k\geq 1$ be a given integer.  Let $b_{q,m}^{(l)}$ be the coefficients in  (\ref{taukYss}). For any $l \geq 1$ and any integer $n$, define
\bea\label{Dlq}
D_{l,q}(n) = \sum_{m=0}^{l-2q} \Big(\prod_{s=0}^{m-1} (n-q+\frac{k^2}{2}-s)\Big) b_{q,m}^{(l)}
\eea
if $0\leq q \leq \lfloor l/2\rfloor $, and define $D_{l,q}(n)=0$ otherwise.
Then $D_{l,q}(n)$ satisfies the following recursive relation when $l\geq 3$
\bea
D_{l,q}(n) &=& \frac{n+(l-1)(2k-l+1)}{l} D_{l-1,q}(n) + \frac{l-k-2}{l} D_{l-2,q-1}(n-1), 
\label{recursive formula of Dlq-1} \\
D_{2,0}(n) &=& \frac{n(2k-1+n)}{2}, \quad 
D_{2,1}(n) = -\frac{k}{2}, \quad D_{1,0}(n) = n. \label{recursive formula of Dlq-2} 
\eea
In particular,
\[
D_{l,0}(n) =  \frac{n(2k-1+n)}{2} \prod_{s=3}^l\frac{(2k-s+1)(s-1)+n}{s}.
\]
\end{prop}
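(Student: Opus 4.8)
The plan is to reduce everything to a single three-term recursion for the ``diagonal'' hook determinants $\tau_{k,Y_{l,l}}$, and then to track how the differential-operator structure on the right-hand side acts on the ansatz (\ref{taukYss}).

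First I would extract from Proposition \ref{recursive formula for tau0} the equation governing only the last coordinate $\tau_{k,Y_{l,l}}$. The point is that the last rows of $B^{(l)}$, $C_1^{(l)}$ and $C_2^{(l)}$ are extremely sparse: from (\ref{definition of B}) the $l$-th row of $B^{(l)}$ has nonzero entries only in columns $l-1$ and $l$ (namely $-1/l$ and $(-1)^{l-1}/l$), from (\ref{definitionofC1}) the $l$-th row of $C_1^{(l)}$ is nonzero only in column $l-1$ (value $-\frac{l-1}{l}(2k-l+2)$), and from (\ref{definitionofC2}) the $l$-th row of $C_2^{(l)}$ is nonzero only in column $l-2$ (value $\frac{l-k-2}{l}$). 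Since the vectors multiplied by $B^{(l)}$ carry a $0$ in the last slot, reading off the last coordinate yields the clean recursion
\be
\tau_{k,Y_{l,l}} = \frac{\sqrt{x}}{l}\frac{d}{dx}\tau_{k,Y_{l-1,l-1}} + \frac{A_l}{l\sqrt{x}}\tau_{k,Y_{l-1,l-1}} + \frac{l-k-2}{l}\tau_{k,Y_{l-2,l-2}}, \qquad A_l = -\frac{k^2+l-1}{2}+(l-1)(2k-l+2),
\ee
valid for $l\ge 3$.

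Next I would feed the ansatz (\ref{taukYss}) into this recursion. Because the only operators appearing are $\frac{d}{dx}$ and multiplication by powers of $\sqrt{x}$, it is convenient to treat $\tau_k$ as a formal monomial $x^\rho$; then each summand $x^{m+j-l/2}\frac{d^m\tau_k}{dx^m}$ becomes $\prod_{s=0}^{m-1}(\rho-s)\,x^{\rho+j-l/2}$, so $x^{-(\rho-l/2)}\tau_{k,Y_{l,l}}$ becomes the polynomial $\sum_q \mathcal{D}_{l,q}(\rho)x^q$ with $\mathcal{D}_{l,q}(\rho):=\sum_m b_{q,m}^{(l)}\prod_{s=0}^{m-1}(\rho-s)$. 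The crucial bookkeeping identity is $D_{l,q}(n)=\mathcal{D}_{l,q}(n-q+\tfrac{k^2}{2})$. Substituting $\tau_k=x^\rho$ into the displayed three-term recursion, dividing by $x^{\rho-l/2}$, and matching the coefficient of $x^q$ gives
\be
\mathcal{D}_{l,q}(\rho) = \frac{1}{l}\Big(\rho-\frac{k^2}{2}+q+(l-1)(2k-l+1)\Big)\mathcal{D}_{l-1,q}(\rho) + \frac{l-k-2}{l}\,\mathcal{D}_{l-2,q-1}(\rho),
\ee
after simplifying $\rho-\frac{l-1}{2}+A_l = \rho-\frac{k^2}{2}+(l-1)(2k-l+1)$ (the $q$ arises from the $x\,\frac{d}{dx}$ acting on the polynomial part $\sum_q \mathcal{D}_{l-1,q}(\rho)x^q$). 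Specializing at $\rho=n-q+\tfrac{k^2}{2}$ collapses $\rho-\frac{k^2}{2}+q$ to $n$, turns $\mathcal{D}_{l-1,q}(\rho)$ into $D_{l-1,q}(n)$, and — since $n-q+\tfrac{k^2}2=(n-1)-(q-1)+\tfrac{k^2}2$ — turns $\mathcal{D}_{l-2,q-1}(\rho)$ into $D_{l-2,q-1}(n-1)$; this is exactly (\ref{recursive formula of Dlq-1}).

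Finally I would verify the base data (\ref{recursive formula of Dlq-2}) directly from Proposition \ref{lemmaontranslationby1} (for $\tau_{k,Y_{1,1}}$, giving $b_{0,1}^{(1)}=1,\ b_{0,0}^{(1)}=-k^2/2$) and Proposition \ref{initial values of tau} (for $\tau_{k,Y_{2,2}}$, giving the $b_{q,m}^{(2)}$), and plug these into (\ref{Dlq}); the telescoping product formula for $D_{l,0}(n)$ then follows by setting $q=0$, which kills the $\mathcal{D}_{l-2,q-1}$ term (as $q-1=-1$ is out of range), and iterating $D_{l,0}(n)=\frac{(2k-l+1)(l-1)+n}{l}D_{l-1,0}(n)$ down to $D_{2,0}(n)=\frac{n(2k-1+n)}{2}$. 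The main obstacle — really the only subtle point — is justifying the coefficient matching: one must know that the functions $\{x^j\frac{d^m\tau_k}{dx^m}\}$ are linearly independent, equivalently that the formal substitution $\tau_k=x^\rho$ faithfully records relations among the $b_{q,m}^{(l)}$; this holds because $\tau_k=(-1)^{k(k-1)/2}\frac{G^2(k+1)}{G(2k+1)}x^{k^2/2}(1+O(x))$ forces the leading powers $x^{k^2/2-m}$ of successive derivatives to be distinct. The rest is bounded algebra: the simplification of $A_l$ and the two index shifts in the specialization step, which is precisely where the shape of (\ref{recursive formula of Dlq-1}) is pinned down.
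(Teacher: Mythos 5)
Your proof is correct and follows essentially the same route as the paper: both extract the same three-term last-coordinate recursion for $\tau_{k,Y_{l,l}}$ from Proposition \ref{recursive formula for tau0}, substitute the ansatz (\ref{taukYss}), and use the same base cases and the same iteration for $D_{l,0}(n)$. Your formal substitution $\tau_k=x^\rho$ is just a compact repackaging of the paper's explicit intermediate recursions for $b_{q,m}^{(l)}$ and the weighted sums $B_{q,m}^{(l)}(n)=b_{q,m}^{(l)}\prod_{s=0}^{m-1}(n+\tfrac{k^2}{2}-s-q)$, so the two arguments coincide step for step.
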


\begin{proof}
By Proposition \ref{recursive formula for tau0}, for any $l \geq 3$, we have 
\beas
\tau_{k,Y_{l,l}} &=&  \frac{1}{l} \left(\sqrt{x} \frac{d\tau_{k,Y_{l-1,l-1}}}{dx} - \frac{k^2+l-1}{2\sqrt{x}} \tau_{k,Y_{l-1,l-1}} \right)  \\
&& +\,  \frac{(l-1) (2k-l+2) }{l\sqrt{x}} \tau_{k,Y_{l-1,l-1}} + \frac{l-k-2}{l} \tau_{k,Y_{l-2,l-2}} .
\eeas
By (\ref{taukYss}), for $q=0,1,\ldots,\lfloor l/2 \rfloor$, 
\[
b_{q,l-2q}^{(l)} = \frac{1}{l} b_{q,l-2q-1}^{(l-1)}+\frac{l-k-2}{l}b_{q-1,l-2q}^{(l-2)},
\]
and
for $0\leq m\leq l-2q-1$,
\[
b_{q,m}^{(l)} =
\frac{2(q+m)-k^2+2(l-1)(2k-l+1)}{2l} b_{q,m}^{(l-1)} + \frac{1}{l} b_{q,m-1}^{(l-1)}+\frac{l-k-2}{l}b_{q-1,m}^{(l-2)}.
\]
In the above, when $m=0$, $b_{q,m-1}^{(l-1)} := 0$ and when $q=0$, $b_{q-1,m}^{(l-2)}:=0$.
Denote 
$B_{q,m}^{(l)}(n)= b_{q,m}^{(l)}  \prod_{s=0}^{m-1} (n+\frac{k^2}{2}-s-q)$, then 
\[B_{q,l-2q}^{(l)} (n)=\frac{n+\frac{k^2}{2}-q-(l-2q-1)}{l} B_{q,l-2q-1}^{(l-1)} (n) +\frac{l-k-2}{l}B_{q-1,l-2q}^{(l-2)}(n-1),\]
and for $0\leq m \leq l-2q-1$,
\beas
B_{q,m}^{(l)}(n)
&=&
\frac{2(q+m)-k^2+2(l-1)(2k-l+1)}{2l} B_{q,m}^{(l-1)}(n) \\
&& +\, \frac{n+\frac{k^2}{2}-q-m+1}{l} B_{q,m-1}^{(l-1)} (n)+\frac{l-k-2}{l}B_{q-1,m}^{(l-2)}(n-1).
\eeas
So
\[
D_{l,q}(n) = \sum_{m=0}^{l-2q}B_{q,m}^{(l)}(n)=\frac{n+(l-1)(2k-l+1)}{l}D_{l-1,q}(n)+\frac{l-k-2}{l}D_{l-2,q-1}(n-1).
\]
In particular,
\[D_{l,0}(n)=\frac{n+(l-1)(2k-l+1)}{l}D_{l-1,0}(n).\]
By Proposition \ref{initial values of tau}, $D_{2,0}(n)= \frac{(2k-1+n)n}{2}, D_{2,1}(n) = -\frac{k}{2}$.

So we have
\[
D_{l,0}(n) =  \frac{n(2k-1+n)}{2} \prod_{s=3}^l \frac{(2k-s+1)(s-1)+n}{s}.
\]
By Proposition \ref{lemmaontranslationby1}, $D_{1,0}(n) = n$.
\end{proof}

\begin{proof}[Proof of Theorem \ref{intro:thm5}]
Equation (\ref{thm:recursive formula for ai-eq1}) follows directly from (\ref{find ai}). Comparing (\ref{expression of tauk}), (\ref{tauk at 0}),  we have $a_0=1$. By Proposition \ref{recursive for Taylor series of tau},
\[
D_{k+1,0}(i) = \frac{i(2k-1+i)}{2} \prod_{s=3}^{k+1}\frac{(2k-s+1)(s-1)+i}{s}.
\]
Note that the solutions of $(2k-s+1)(s-1)+i=0$ are $s=k +1+\sqrt{k^{2}+i}, k +1-\sqrt{k^{2}+i}$, so $D_{k+1,0}(i) \neq 0$.
\end{proof}

\section{Generalizations}
\label{section:Generalization}

In this section, we consider the generalization of the recursive formula in Theorem \ref{theorem 1 in note 4}  to higher-order derivatives. We will discuss the case of the third-order derivative in detail below, then briefly explain how to generalize the approach to the higher-order case. From \cite[Theorem 24]{jon-fei}, 
\bea
F_{3}(M,k)&:=&\lim_{N\rightarrow \infty}\frac{\int_{\U(N)} |Z_A^{(3)} (1)|^{2M} |Z_A(1)|^{2k-2M} dA_N}{N^{k^2+6M}}\nonumber\\
&=&(-1)^{3M+\frac{k(k-1)}{2}}N^{k^2+6M}
\sum_{\substack{n_1+n_2+n_3= 2M}}\binom{2M}{n_1,n_2,n_3}
\frac{6^{2M}}{ 
2^{n_1}
3^{n_2}
6^{n_3}
} \nonumber\\
&& \times \, \left( \frac{d}{dx} \right)^{n_1+3n_3} \left( e^{-\frac{x}{2}} x^{-\frac{k^2}{2} - 3M + \frac{1}{2} (n_1+3n_3)} F(n_1,n_2) \right) \Bigg|_{x=0}
\label{the third term},
\eea
where
\[
F(n_1,n_2) = \sum_{\substack{ \sum_{j=0}^{k-1} h_{1,j}=n_1 \\ \sum_{j=0}^{k-1} h_{2,j}=n_2}} 
\hspace{-.3cm}
\binom{n_1}{h_{1,0},\ldots,h_{1,k-1}} \binom{n_2}{h_{2,0},\ldots,h_{2,k-1}} \det\Big( I_{ 2h_{1,j}+3h_{2,j}+i+j+1 } (2\sqrt{x}) \Big)_{i,j=0,\ldots,k-1}.
\]
Compared with the second order case, we now need to introduce two variables $t_1, t_2$. Below, every result coincides with that in the second order case when $t_2=0$.
Similar to (\ref{definition of g}), we now define
\[
g_\beta(x,t_1,t_2) = \sum_{n=0}^\infty \sum_{m=0}^\infty \frac{t_1^n t_2^m}{n!m!} I_{2n+3m+\beta} (2\sqrt{x}).
\]
This is (\ref{definition of g}) when $t_2=0$. By Lemma \ref{NOTE4lem2},
\[
F(n_1,n_2)= \left( \frac{\partial}{\partial t_1} \right)^{n_1} \left( \frac{\partial}{\partial t_2} \right)^{n_2} \det\left( g_{i+j+1}(x,t_1,t_2) \right)_{i,j=0,\ldots,k-1} \Bigg|_{t_1=t_2=0} .
\]
Our goal is to give a recursive formula for $F(n_1,n_2)$.

Let $Y=(l_1,\ldots,l_s)$ be a Young diagram with $1\leq s \leq k$. Set $l_{s+1}=\cdots=l_k=0$. Similar to (\ref{definition of G}) and (\ref{definition of F}), let
\[
G_{k,Y}(x,t_1,t_2)=\det\left( g_{i+j+1+l_{k-j}}(x,t_1,t_2) \right)_{i,j=0,\ldots,k-1}.
\]
Similar to (\ref{definition of fjqi}), let
\bea\label{definition of fjqin1n2}
f_{j,q}^{(n_{1},n_{2})}:=\left( \frac{\partial}{\partial t_1} \right)^{n_1}  \left( \frac{\partial}{\partial t_2} \right)^{n_2} G_{k,Y_{j,q}}\Big|_{t_{1}=t_{2}=0},
\eea
where $Y_{j,q}$ is a hook diagram.
Then we have the following recursive relations for $g_\beta$, which correspond to (\ref{412recursive formula for gbeta1})-(\ref{412recursive formula for gbeta})
\bea
\frac{\partial}{ \partial x}g_\beta&=& \frac{1}{\sqrt{x}}  g_{\beta+1} + \frac{\beta}{2x} g_\beta + \frac{t_1}{x} g_{\beta+2} + \frac{3t_2}{2x} g_{\beta+3}  \nonumber \\
 &=& \frac{1}{\sqrt{x}}  g_{\beta-1} - \frac{\beta}{2x} g_\beta - \frac{t_1}{x} g_{\beta+2} - \frac{3t_2}{2x} g_{\beta+3}, \nonumber \\
g_{\beta+2} &=&
g_{\beta} - \frac{\beta+1}{\sqrt{x}} g_{\beta+1} - \frac{2t_1}{\sqrt{x}} g_{\beta+3} -\frac{3t_2}{\sqrt{x}} g_{\beta+4} . \label{g beta 2}
\eea
So similar to the proofs of Propositions \ref{derivative and translations} and \ref{412320prop3}, we have the following relations on the partial derivatives and translations of $G_{k,Y}$:
\bea
\frac{\partial}{\partial t_1} G_{k,Y} &=& T_2 G_{k,Y}, 
\label{partial derivative 1} \\
\frac{\partial}{\partial t_2} G_{k,Y} &=& T_3 G_{k,Y},
\label{partial derivative 2} \\
T_1G_{k,Y} &=& \sqrt{x} \frac{\partial G_{k,Y}}{\partial x} - \frac{(k^2+l) G_{k,Y}}{2\sqrt{x}} - \frac{t_1}{\sqrt{x}} \frac{\partial G_{k,Y}}{\partial t_1} - \frac{3t_2}{2\sqrt{x}} \frac{\partial G_{k,Y}}{\partial t_2},
\label{partial derivative 0}
\eea
where $l$ is the length of the Young diagram $Y$.

As usual, when $Y=\emptyset$ is an empty Young diagram, we denote $G_{k,Y}$ by $G_k$.
Similar to the proof of Lemma \ref{413lemma1}, by (\ref{g beta 2}) and (\ref{partial derivative 1}), for any $m\geq 1$ we have
\bea
\frac{\partial}{\partial t_1} G_k &=& k G_k - \frac{S_1G_k }{\sqrt{x}} 
+ \sum_{i=1}^{m-1} (-1)^i \sum_{j=0}^i \binom{i}{j} \left(\frac{3t_2}{\sqrt{x}}\right)^j \left(\frac{2t_1}{\sqrt{x}}\right)^{i-j} \left(T_{i+j} G_k - \frac{S_{i+j+1} G_k}{\sqrt{x}}\right) \nonumber \\
&& + \, (-1)^{m} \sum_{j=0}^{m} \binom{m}{j} \left(\frac{3t_2}{\sqrt{x}}\right)^j \left(\frac{2t_1}{\sqrt{x}}\right)^{m-j} T_{m+2+j} G_k .
\label{partial t1 of G}
\eea
Setting $m=n_1+n_2$,  
\bea
F(n_1,n_2) &=&  k F(n_1-1,n_2) +\sum_{j=0}^{n_2} \left(\frac{3}{\sqrt{x}}\right)^j \binom{n_2}{j} \sum_{i=0}^{n_1-1} (-1)^{i+j}
(i+j)! \left(\frac{2}{\sqrt{x}} \right)^i  \binom{n_1-1}{i} 
\nonumber \\
&& \hspace{-2.6cm} \times\,  
\Bigg(
\sum_{s=1}^{i+2j}(-1)^{s-1}f_{i+2j,s}^{(n_{1}-1-i,n_{2}-j)}
- \frac{1}{\sqrt{x}}\sum_{s=1}^{i+2j+1}(-1)^{s-1}(2k-2s+i+2j+2)f_{i+2j+1,s}^{(n_{1}-1-i,n_{2}-j)} \Bigg).
\label{recursive formula for F}
\eea
For $n_1=0$, the expression for $F(0,n_2)$ is obtained by (\ref{g beta 2}) and (\ref{partial derivative 2}).
More precisely, by (\ref{partial derivative 2}), we can expand $\frac{\partial }{\partial t_2} G_k$ in powers of $t_{2}$, 
\bea
\frac{\partial }{\partial t_2} G_k &=& 
T_1G_k - \frac{S_2G_{k}}{\sqrt{x}}  + \sum_{i=1}^{l_2-1} (-1)^i \left( T_{2i+1} G_k - \frac{S_{2i+2}G_k}{\sqrt{x}}  \right) \left( \frac{3t_2}{\sqrt{x}}\right)^i \nonumber \\
&& +\, (-1)^{l_2} \left( \frac{3t_2}{\sqrt{x}}\right)^{l_2} T_{2l_2+3} G_k + t_1 H(x,t_1,t_2),
\label{partial t2 of G}
\eea
where $H(x,t_{1},t_{2})$
is some function of $x,t_{1},t_{2}$ and has no singularity at  
$t_{1}=0$ or $t_{2}=0$. Note that in (\ref{partial t1 of G}), we expanded $\frac{\partial }{\partial t_1} G_k$ with respect to powers of $t_1, t_2$. But in (\ref{partial t2 of G}), we only expanded it in terms of powers of $t_2$ because our purpose is to express $F(0,n_2)$.
Hence
\bea
F(0,n_2) &=& f_{1,1}^{(0,n_2-1)} - \frac{1}{\sqrt{x}} \Big( (2k+1) f_{2,1}^{(0,n_2-1)} - (2k-1) f_{2,2}^{(0,n_2-1)} \Big)
\nonumber \\
&& + \, \sum_{i=1}^{n_2-1} (-1)^i i! \left( \frac{3}{\sqrt{x}}\right)^i \binom{n_2-1}{i} \Bigg( \sum_{s=1}^{2i+1} (-1)^{s-1} f_{2i+1,s}^{(0,n_2-1-i)} \nonumber \\
&&
-\, \frac{1}{\sqrt{x}} \sum_{s=1}^{2i+2} (-1)^{s-1} (2k-2s+2i+3) f_{2i+2,s}^{(0,n_2-1-i)} \Bigg)\label{F0}.
\eea
When $n_2=0$, we have $F(l,0)=f_l(x)$ defined in (\ref{410definition of f_{l}}).
To compute $F(n_1,n_2)$, we need to compute $f^{(n_1,n_2)}_{j,q}$. Similar to Theorem \ref{theorem 1 in note 4}, we have the following result on the recursive formula for $f^{(n_1,n_2)}_{j,q}$.

\begin{prop}
\label{final prop}
For $1\leq i \leq m$, denote
\[
\f^{(n_1,n_2)}_{m,i} = \begin{pmatrix}
f^{(n_1,n_2)}_{m,i} \\
\vdots \\
f^{(n_1,n_2)}_{m,m} \\
\end{pmatrix},
\]
then we have the following recursive formula
\bea
\f^{(n_1,n_2)}_{m,1} &=& - \, \sqrt{x}B^{(m)}(  \frac{d}{dx} -  \frac{k^2+m-1-2n_1-3n_2}{2x})
\begin{pmatrix}
\f^{(n_1,n_2)}_{m-1,1} \\
0
\end{pmatrix} \nonumber \\
&& -\, \frac{1}{\sqrt{x}} C_1^{(m)} \f^{(n_1,n_2)}_{m-1,1} + C_2^{(m)} \f^{(n_1,n_2)}_{m-2,1}  + \frac{2n_1}{\sqrt{x}} C_3^{(m)} \f^{(n_1-1,n_2)}_{m+1,1} - \frac{3n_2}{\sqrt{x}} C_4^{(m)} \f^{(n_1,n_2-1)}_{m+2,1} \nonumber \\
&& + \, 2n_1B^{(m)}(  \frac{d}{dx}- \frac{k^2+m+2n_1+3n_2-2}{2x})
\begin{pmatrix}
\f^{(n_1-1,n_2)}_{m,2} \\
0
\end{pmatrix} \nonumber \\
&& + \, \frac{3 n_2}{\sqrt{x}} B^{(m)}
\begin{pmatrix}
\f^{(n_1+1,n_2-1)}_{m,2} \\
0
\end{pmatrix}  - 3 n_2B^{(m)} ( \frac{d}{dx} - \frac{k^2+m+1}{2x} )
\begin{pmatrix}
\f^{(n_1,n_2-1)}_{m+1,3} \\
0
\end{pmatrix} \nonumber \\
&& + \, \frac{3n_1n_2}{x} B^{(m)}\begin{pmatrix}
\f^{(n_1-1,n_2-1)}_{m+1,3} \\
0
\end{pmatrix} + \frac{9n_2(n_2-1)}{2x} B^{(m)} \begin{pmatrix}
\f^{(n_1,n_2-2)}_{m+1,3} \\
0
\end{pmatrix},
\label{final prop:eq}
\eea
where $B^{(m)}, C_1^{(m)}, C_2^{(m)}, C_3^{(m)}$ are defined in (\ref{definition of B}), 
 (\ref{definitionofC1}), (\ref{definitionofC2}), (\ref{definition of C3}), and
$C_4^{(m)} = (c_{i,j}^{(4)})_{\substack{i=1,\ldots,m \\j=1,\ldots,m+2}}$ is an $m \times (m+2)$-matrix satisfying
\be\label{definition of C4}
c_{i,j}^{(4)}
=\begin{cases}
(-1)^{j-1} & i=1, j=1,2,3; \\
(-1)^{i-j-1} \frac{2}{(j-2)(j-3)} & 
j>i+2  ;\\
\frac{i+2}{i} & j=i+2, i \neq 1; \\
0 & j \leq i+1, i \neq 1.
\end{cases}
\ee
\end{prop}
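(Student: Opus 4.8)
The plan is to reproduce, for two variables $t_1,t_2$, the argument that proved Theorem \ref{theorem 1 in note 4} through the $G$-level recursion of Proposition \ref{prop:iteraive formula 2}, the only structural change being that the three-term relation (\ref{g beta 2}) for $g_\beta(x,t_1,t_2)$ carries an additional $t_2$-term compared with (\ref{412recursive formula for gbeta}). First I would generalize Proposition \ref{recursive formula for G about T}: inserting (\ref{g beta 2}) into the definition of $T_h G_{k,Y}$ and expanding each shifted entry gives the operator identity
\[
T_h G_{k,Y} = T_{h-2} G_{k,Y} - \frac{1}{\sqrt{x}} S_{h-1} G_{k,Y} - \frac{2t_1}{\sqrt{x}} T_{h+1} G_{k,Y} - \frac{3t_2}{\sqrt{x}} T_{h+2} G_{k,Y},
\]
which reduces to Proposition \ref{recursive formula for G about T} when $t_2=0$. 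Applying Theorem \ref{gerenel linear systems} with $a_\beta = g_\beta(x,t_1,t_2)$ and $X=\emptyset$ writes the vector $(G_{k,Y_{m,1}},\ldots,G_{k,Y_{m,m}})^T$ as $B^{(m)}$ acting on the column of translation operators, and substituting the identity above splits the right-hand side into the four groups governed by $T_{h-2}$, $S_{h-1}$, $T_{h+1}$ and $T_{h+2}$.

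Next I would treat each group exactly as in the proof of Proposition \ref{prop:iteraive formula 2}. The $T_{h-2}$ and $S_{h-1}$ groups are handled by Propositions \ref{411proponS} and \ref{412320prop3}, reproducing the $B^{(m)}\bigl(\sqrt{x}\,\partial_x-\cdots\bigr)$ block and the matrices $C_1^{(m)}$ and $C_2^{(m)}$ verbatim. The $-\tfrac{2t_1}{\sqrt{x}}T_{h+1}$ group is the exact analogue of the term $J_4$ in that proof: after adding and subtracting a common term one invokes Theorem \ref{gerenel linear systems} with the matrix $A^{(m+1)}$ from Lemma \ref{matrix}, and the product $B^{(m)}A^{(m+1)}$ collapses it to $C_3^{(m)}$ acting on $(G_{k,Y_{m+1,1}},\ldots)^T$. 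The genuinely new ingredient is the $-\tfrac{3t_2}{\sqrt{x}}T_{h+2}$ group: here each hook length rises by two, so the same device must use $A^{(m+2)}$, and a direct evaluation of the product $B^{(m)}A^{(m+2)}$ yields exactly the entries of $C_4^{(m)}$ recorded in (\ref{definition of C4}). This is where $C_4^{(m)}$ enters the recursion.

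With the $G$-level recursion established, I would then apply $(\partial/\partial t_1)^{n_1}(\partial/\partial t_2)^{n_2}$ and set $t_1=t_2=0$, paralleling the single $t$-derivative taken in the proof of Theorem \ref{theorem 1 in note 4}. Using (\ref{partial derivative 1}) and (\ref{partial derivative 2}), i.e.\ $\partial_{t_1}=T_2$ and $\partial_{t_2}=T_3$ on $G_{k,Y}$, together with the Leibniz rule, each explicit factor of $t_1$ (resp.\ $t_2$) absorbs one $t_1$- (resp.\ $t_2$-)derivative, contributes a combinatorial factor $n_1$ (resp.\ $n_2$) and lowers the order of differentiation by one; this is what turns each $G_{k,Y_{j,q}}$ into $f_{j,q}^{(n_1,n_2)}$ and produces the shifted superscripts $(n_1-1,n_2)$, $(n_1+1,n_2-1)$, $(n_1,n_2-1)$ and $(n_1,n_2-2)$ appearing in (\ref{final prop:eq}). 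The substitution (\ref{partial derivative 0}) for $T_1G_{k,Y}$, which now carries both a $\tfrac{t_1}{\sqrt{x}}\partial_{t_1}$ and a $\tfrac{3t_2}{2\sqrt{x}}\partial_{t_2}$ correction, is what merges $2n_1$ and $3n_2$ into the single shift $k^2+m-1-2n_1-3n_2$ of the leading block.

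The step I expect to be the main obstacle is the bookkeeping in this last differentiation, specifically the emergence of the two mixed terms. The correction $-\tfrac{3t_2}{2\sqrt{x}}\partial_{t_2}G_{k,Y}$ in (\ref{partial derivative 0}), once the full identity is differentiated $n_1$ times in $t_1$ and $n_2$ times in $t_2$ and evaluated at the origin, contributes through the Leibniz rule a piece linear in $n_2$ but also a piece quadratic in $n_2$, while the interaction of an explicit $t_1$-factor with a surviving $t_2$-derivative generates a piece proportional to $n_1 n_2$. Collecting precisely which of these contributions remain at $t_1=t_2=0$, and matching their binomial prefactors, is what yields the coefficients $\tfrac{3n_1n_2}{x}$ and $\tfrac{9n_2(n_2-1)}{2x}$ of the last two terms of (\ref{final prop:eq}); keeping the powers of $\sqrt{x}$ and the shifts of $k^2+m$ consistent across all groups is delicate, even though each individual manipulation is a direct analogue of one already carried out in the second-order case.
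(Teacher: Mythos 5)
Your proposal follows essentially the same route as the paper's proof: establish the $G$-level recursion by inserting the four-term relation (\ref{g beta 2}) into Theorem \ref{gerenel linear systems}, handle the first four groups as in Proposition \ref{prop:iteraive formula 2} (with $T_1G_{k,Y}$ replaced via (\ref{partial derivative 0})), treat the new $-\tfrac{3t_2}{\sqrt{x}}T_{h+2}$ group by the add-and-subtract device with $A^{(m+2)}$ to produce $C_4^{(m)}$, and then apply $(\partial/\partial t_1)^{n_1}(\partial/\partial t_2)^{n_2}$ at $t_1=t_2=0$ with the Leibniz rule to obtain the shifted superscripts and the mixed $n_1n_2$ and $n_2(n_2-1)$ coefficients. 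This matches the paper's argument in both structure and detail.
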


In particular, when $n_1=0$, we have
\bea
\f^{(0,n_2)}_{m,1} &=& -\, \sqrt{x} B^{(m)}(  \frac{d}{dx} -  \frac{k^2+m-1-3n_2}{2x})
\begin{pmatrix}
\f^{(0,n_2)}_{m-1,1} \\
0
\end{pmatrix} \nonumber \\
&& -\, \frac{1}{\sqrt{x}} C_1^{(m)} \f^{(0,n_2)}_{m-1,1} + C_2^{(m)} \f^{(0,n_2)}_{m-2,1}  - \frac{3n_2}{\sqrt{x}} C_4^{(m)} \f^{(0,n_2-1)}_{m+2,1} \nonumber \\
&& + \, \frac{3 n_2}{\sqrt{x}} B^{(m)}
\begin{pmatrix}
\f^{(1,n_2-1)}_{m,2} \\
0
\end{pmatrix}  - 3 n_2B^{(m)} ( \frac{d}{dx} - \frac{k^2+m+1}{2x} )
\begin{pmatrix}
\f^{(0,n_2-1)}_{m+1,3} \\
0
\end{pmatrix} \nonumber \\
&& + \, \frac{9n_2(n_2-1)}{2x} B^{(m)} \begin{pmatrix}
\f^{(0,n_2-2)}_{m+1,3} \\
0
\end{pmatrix}.
\label{recursive formula i=0}
\eea
To use the above recursive formulae, we need some initial conditions.  These are given as follows
\bea
F(0,0) &=& \tau_k,\label{final cond0} \\
F(1,0)&=&-2k \frac{d \tau_k}{dx} + (k+ \frac{k^3}{x}) \tau_k\\
f_{1,1}^{(n_1,n_2)}
&=& \left(\sqrt{x} \frac{d}{dx} - \frac{k^2}{2\sqrt{x}} - \frac{n_1}{\sqrt{x}} - \frac{3n_2}{2\sqrt{x}} \right) F(n_1,n_2), 
\label{final cond1} \\
f_{2,1}^{(n_1,n_2)} &=& \frac{1}{2} 
F(n_1+1,n_2)+ \frac{1}{2} \Big(x \frac{d^2}{dx^2} - (k^2+2n_1+ 3n_2) \frac{d}{dx} + \frac{(k^2+2)n_1}{x}  + \frac{k^2(k^2+2)}{4x} \nonumber \\
&& +\, \frac{n_1(n_1-1)}{x} + \frac{(6k^2+15)n_2}{4x} + \frac{9n_2(n_2-1)}{4x} + \frac{3n_1n_2}{x} \Big) F(n_1,n_2), \label{final cond2}  \\
f_{2,2}^{(n_1,n_2)} &=& -\,\frac{1}{2} F(n_1+1,n_2) + \frac{1}{2} \Big(x \frac{d^2}{dx^2} - (k^2+2n_1+ 3n_2) \frac{d}{dx} + \frac{(k^2+2)n_1}{x}  + \frac{k^2(k^2+2)}{4x} \nonumber \\
&& +\, \frac{n_1(n_1-1)}{x} + \frac{(6k^2+15)n_2}{4x} + \frac{9n_2(n_2-1)}{4x} + \frac{3n_1n_2}{x} \Big) F(n_1,n_2) .
\label{final cond3} 
\eea
The above initial conditions are obtained in a similar way to that of (\ref{recursive formula for Gk}).

We defer the proof of Proposition \ref{final prop} to the end of this section. We  state next how the above recursive formulae are used when computing $F(n_1,n_2)$.

Firstly, from (\ref{the third term}),
we need to compute $F(n_{1},n_{2})$ for $0\leq n_{1}\leq 2M$ and $0\leq n_{2}\leq 2M-n_{1}$, so from (\ref{recursive formula for F}) and (\ref{F0}), it suffices to compute $\f^{(i',j')}_{m',1}$ for  $0\leq i' \leq n_1-1, 0 \leq j' \leq n_2$ and $m'=n_1+2n_2-i'-2j'$, for all $0\leq n_{2}\leq 2M$, $0\leq n_{1} \leq 2M-n_{2}$.
Secondly, to use the recursive formula (\ref{final prop:eq}) to compute $\f^{(i,j)}_{m,1}$, we start from the case $j=0$. 
When $j=0$, $\f^{(i,0)}_{m,1}$ corresponds to $\f^{(i)}_{m}$ in Theorem \ref{theorem 1 in note 4}, so we can compute $\f^{(i,0)}_{m,1}$.
When $j\geq 1$, we shall use (\ref{final prop:eq}) to compute $\f^{(i,j)}_{m,1}$ recursively with respect to $j$. To be more precise, suppose we already have $\f^{(i,0)}_{m,1}, \ldots, \f^{(i,j-1)}_{m,1}$ for any $i,m$, then we shall use (\ref{final prop:eq}) recursively to compute $\f^{(0,j)}_{m,1}, \cdots, \f^{(i,j)}_{m,1}$ for any $m$. 
Here for $\f^{(0,j)}_{m,1}$ we use the recursive formula (\ref{recursive formula i=0}) with initial conditions (\ref{final cond1})-(\ref{final cond3}). 
Now suppose we already have $\f^{(0,j)}_{m,1}, \cdots, \f^{(i-1,j)}_{m,1}$ for any $m$. 
For $\f^{(i,j)}_{m,1}$, when using (\ref{final prop:eq}) initially, we know all terms except the first three. With the initial conditions (\ref{final cond0})-(\ref{final cond3}), we can use this recursive formula with respect to $m$. In the above process, we indeed only need to compute a finite number of $\f^{(i,j)}_{m,1}$ with $i\leq 2M-j $ and $ m\leq 4M-i-2j+1$.

Theorem \ref{theorem2 in note 4} represents $x^{l}f_l(x)$ as derivatives of $\tau_k$. The highest order of the derivative is $l$. Here, for $x^{n_{1}+\frac{3}{2}n_2}F(n_1,n_2)$ we have a similar structure with the highest order of derivative equals $n_1+2n_2$. Using this result and a similar argument to that of Proposition \ref{intro:prop},
for any given integers $k\geq 1$ and any integer $M$ with $0\leq M\leq k$, we have
\beas
F_3(M,k) = \frac{G^2(k+1)}{G(2k+1)} R_{3,M}(k),
\eeas
where $R_{3,M}(k)$ is a rational function which is analytic when ${\rm Re}(k)>M-1/2$.

\begin{proof}[Proof of Proposition \ref{final prop}]
Firstly, we use a similar argument to that of Proposition \ref{prop:iteraive formula 2},
\beas
A_m \begin{pmatrix}
    G_{k,Y_{m,1}} \\
    \vdots \\
    G_{k,Y_{m,m}} \\
\end{pmatrix}
&=& \begin{pmatrix}
    -T G_{k,Y_{m-1,1}} \\
    \vdots \\
    -T G_{k,Y_{m-1,m-1}} \\
    0
\end{pmatrix}
+\begin{pmatrix}
    0 \\
    \sum_{h=2}^{j-1}(-1)^h T_{h-2} G_{k,Y_{m-h,j-h}} \\
    \vdots \\
    T_{m-2}G_k
\end{pmatrix}_{j=3,\ldots,m} \\
&& \hspace{-3cm} - \frac{1}{\sqrt{x}} \begin{pmatrix}
    0 \\
    \sum_{h=2}^{j-1}(-1)^h S_{h-1} G_{k,Y_{m-h,j-h}} \\
    \vdots \\
    S_{m-1}G_k
\end{pmatrix}_{j=3,\ldots,m} - \frac{2t_1}{\sqrt{x}} \begin{pmatrix}
    0 \\
    \sum_{h=2}^{j-1}(-1)^h T_{h+1} G_{k,Y_{m-h,j-h}} \\
    \vdots \\
    T_{m+1}G_k
\end{pmatrix}_{j=3,\ldots,m} \\
&& \hspace{-3cm} - \frac{3t_2}{\sqrt{x}} \begin{pmatrix}
    0 \\
    \sum_{h=2}^{j-1}(-1)^h T_{h+2} G_{k,Y_{m-h,j-h}} \\
    \vdots \\
    T_{m+2}G_k
\end{pmatrix}_{j=3,\ldots,m}
\eeas
The first four terms can be handled in a similar way to that of Proposition \ref{prop:iteraive formula 2}.
The difference is that in (\ref{J4}), $T_1G_{k,Y_{l,q}}$ should be replaced with (\ref{partial derivative 0}) rather than (\ref{412320prop3:eq}) because of the extra variable $t_2$.
Regarding the last term, we have
\beas
\begin{pmatrix}
    0 \\
    \sum_{h=2}^{j-1}(-1)^h T_{h+2} G_{k,Y_{m-h,j-h}} \\
    \vdots \\
    T_{m+2}G_k
\end{pmatrix}_{j=3,\ldots,m}
&=& \begin{pmatrix}
    \vdots \\
    \sum_{h=1}^{j-1}(-1)^h T_{h} G_{k,Y_{m+2-h,j-h}} \\
    \vdots \\
    T_{m+2}G_k
\end{pmatrix}_{j=4,\ldots,m+2} \\
&& 
\hspace{-2cm}
-\, \begin{pmatrix}
    -T_{3} G_{k,Y_{m-1,1}} \\
    \vdots \\
    -T_{3} G_{k,Y_{m-1,m-1}} \\
    0
\end{pmatrix} 
- \begin{pmatrix}
    T_{2} G_{k,Y_{m,2}} \\
    \vdots \\
    T_{2} G_{k,Y_{m,m}} \\
    0
\end{pmatrix}
- \begin{pmatrix}
    -T_{1} G_{k,Y_{m+1,3}} \\
    \vdots \\
    -T_{1} G_{k,Y_{m+1,m+1}} \\
    0
\end{pmatrix}.
\eeas
To simplify the notation, we denote
\[
\G_{m,i}:=
\begin{pmatrix}
    G_{k,Y_{m,i}} \\
    \vdots \\
    G_{k,Y_{m,m}} \\
\end{pmatrix}.
\]
By (\ref{partial derivative 2})-(\ref{partial derivative 0}), we then obtain
\beas
\G_{m,1} &=& B^{(m)}(- \sqrt{x} \frac{d}{dx} +  \frac{k^2+m-1}{2\sqrt{x}} -  \frac{t_1}{\sqrt{x}}\frac{\partial }{\partial t_1} -  \frac{3t_2}{2\sqrt{x}}\frac{\partial }{\partial t_2})\begin{pmatrix}
    \G_{m-1,1} \\
    0
\end{pmatrix} \\
&& + C_2^{(m)} \G_{m-2,1} - \frac{1}{\sqrt{x}} C_1^{(m)} \G_{m-1,1}+ \frac{2t_1}{\sqrt{x}} C_3^{(m)} \G_{m+1,1} - \frac{3t_2}{\sqrt{x}} C_4^{(m)} \G_{m+2,1} \\
&& + B^{(m)}( 2t_1 \frac{d}{dx}  -  \frac{t_1(k^2+m)}{x} -  (\frac{2t_1^2}{x}-\frac{3t_2}{\sqrt{x}}) \frac{\partial }{\partial t_1} 
 - \frac{3t_1t_2}{x} \frac{\partial }{\partial t_2} )
\begin{pmatrix}
   \G_{m,2} \\
   0
\end{pmatrix} \\
&& - 3t_2B^{(m)} ( \frac{d}{dx} - \frac{k^2+m+1+2t_1+3t_2}{2x} ) \begin{pmatrix}
   \G_{m+1,3} \\
   0
\end{pmatrix}.
\eeas

Secondly, we take the $n_1$-th and $n_2$-th derivatives with respect to $t_1$ and $t_2$ on both sides of the above equation. This leads to the claimed recursive formula in the proposition.
\end{proof}

In the $(d+1)$-th order derivative case, we can use a similar idea to define
\[
g_\beta(x,t_1,t_2,\ldots,t_{d}) = \sum_{n_1=0}^\infty \cdots \sum_{n_d=0}^\infty \frac{t_1^{n_1}\cdots t_d^{n_d}}{n_1!\cdots n_d!} I_{\beta+ \sum_{i=1}^{d} (i+1)n_i} (2\sqrt{x}),
\]
$G_{k,Y}(x,t_1,\ldots,t_d)$, $F(n_1,\ldots,n_d)$,  and $f^{(n_1,\ldots,n_d)}_{j,q}$, etc. We can also obtain recursive formulae for computing $F(n_1,\ldots,n_d)$ and $f^{(n_1,\ldots,n_d)}_{j,q}$.

\section{Conflicts of interest}

All authors certify that there are no conflicts of interest for this work.

\section{Data availability}

There is no data created in this work.

\begin{appendices}

\section{One lemma}
\label{app}

In this appendix, we establish a result for proving Proposition \ref{intro:prop}. The following lemma gives an expression of $F_2(M,k)$ for integers $k, M$. Moreover, the expression is analytic when ${\rm Re}(k) > M-1/2$. This lemma comes from a similar argument to \cite[Chapter 6]{hughes2001characteristic}. 

\begin{lem} 
\label{lem:Rational functions}
For any given integer $k\geq 1$ and any integer $M$ with $0\leq M \leq k$, 
\beas
F_2(M,k) 
&=& \frac{G^2(k+1)}{G(2k+1)} N^{k^2+4M} \sum_{\substack{n_1,n_2 \geq 0 \\ n_1+n_2 \leq 2M}}  (-2)^{n_1} \binom{2M}{n_1} \binom{2M-n_1}{n_2} \\
&& \times \, \sum_{m=0}^{4M} (-1)^m \frac{1}{m!}  (\frac{n_1}{2}+n_2)^m Z_{k,n_1,n_2}^{(4M-m)},
\eeas
which is analytic when ${\rm Re}(k) > M-1/2$.
Here
$Z_{k,n_1,n_2}^{(4M-m)}$ is determined by the following equality
\[
\det(d_{i,j})_{i,j=1,\ldots,n} =  \sum_{j=0}^{4M} Z_{k,n_1,n_2}^{(j)} (\i N\beta)^j + O_N(\beta^{4M+1}),
\]
where $n=n_1+n_2$. In addition, when $i=1,\ldots,n_1$
\[
d_{ij} = \sum_{m=0}^{4M} \frac{(2k-n+i-1)!}{(2k-n+i-1+m)!} \binom{i+k-n-1+m}{m} \binom{i+m-1}{j-1} (\i N\beta)^m,
\]
when $i=n_1+1,\ldots,n$
\beas
d_{ij} &=& \sum_{m=0}^{4M} \frac{(2k-n+i-1)!}{(2k-n+i-1+m)!} \binom{i+k-n-1+m}{m} \binom{i+m-1}{j-1} \\ 
&& \times \, \left(\sum_{l=0}^m \frac{(i-1)! m!}{(i-1+l)! (m-l)!} \binom{i-n_1-1+l}{i-n_1-1} \right) (\i N\beta)^m,
\eeas
where for any $m,n$
\[
\binom{m}{n}:= \frac{m (m-1) \cdots (m-n+1)}{n!}.
\]
\end{lem}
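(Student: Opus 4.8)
The plan is to follow the strategy of \cite[Chapter 6]{hughes2001characteristic}: realize $F_2(M,k)$ as the large-$N$ limit of a finite-$N$ average of a product of $2k$ characteristic polynomials, some of them differentiated; rewrite that average through the known closed form for shifted averages of characteristic polynomials over $\U(N)$; specialize all active shifts to a single parameter $\beta$; and extract the relevant Taylor coefficient. First I would convert the $s$-derivative at $s=1$ into angular derivatives at $\theta=0$. Since $Z_A(e^{\i\theta})$ is real, one has $Z_A''(1) = -\partial_\theta^2 Z_A(e^{\i\theta})|_{\theta=0} + \i\,\partial_\theta Z_A(e^{\i\theta})|_{\theta=0}$, hence $|Z_A''(1)|^2 = (\partial_\theta^2 Z_A)^2 + (\partial_\theta Z_A)^2$ at $\theta=0$. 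Expanding the $M$-th power of this and distributing the resulting $2M$ derivative factors across the $2k$ copies of $Z_A$ present in $|Z_A''(1)|^{2M}|Z_A(1)|^{2k-2M}$ produces a finite sum indexed by the number $n_1$ of single angular derivatives and the number $n_2$ of double angular derivatives; the weights $(-2)^{n_1}\binom{2M}{n_1}\binom{2M-n_1}{n_2}$ are precisely the multinomial and sign factors of this expansion.

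Second, I would write each such average of differentiated characteristic polynomials as a derivative, evaluated at coincident arguments, of the shifted average $\int_{\U(N)}\prod_{\ell=1}^{2k} Z_A(e^{\i\phi_\ell})\,dA$, and then set the active shifts equal to a common $\beta$, so that differentiation in $\beta$ reproduces the angular derivatives. This is where the factor $(\tfrac{n_1}{2}+n_2)^m$ and the extraction of the coefficient $Z_{k,n_1,n_2}^{(4M-m)}$ of $(\i N\beta)^{4M-m}$ arise, through Leibniz's rule applied to the common exponential prefactor and to the power series of the shifted average. Substituting the explicit closed form for the shifted average and letting $N\to\infty$ with $\beta=O(1/N)$ (so that the powers of $N$ recombine into the stated normalization), the $2k-n$ undifferentiated factors collapse to the leading moment constant $F_2(0,k)=G^2(k+1)/G(2k+1)$, while the $n=n_1+n_2$ active factors assemble, after expanding each $Z_A(e^{\i\beta})$ in powers of $\beta$, into the $n\times n$ determinant $\det(d_{ij})$. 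The split of $d_{ij}$ into the first $n_1$ rows and the last $n_2$ rows mirrors the single- versus double-derivative structure, and the inner sum over $l$ in the last $n_2$ rows records the additional angular differentiation.

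Finally, the analyticity is immediate once the explicit form is in hand. Each entry $d_{ij}$ is built from the Gamma ratio $\Gamma(2k-n+i)/\Gamma(2k-n+i+m)$ together with binomial coefficients that are polynomial in $k$; the only poles in $k$ come from $\Gamma(2k-n+i)$, located at $k=\tfrac{n-i}{2},\tfrac{n-i-1}{2},\dots$, all of which satisfy $k\le\tfrac{n-i}{2}\le M-\tfrac12$ because $i\ge1$ and $n\le 2M$. The prefactor $G^2(k+1)/G(2k+1)$ is analytic and nonvanishing for $\mathrm{Re}(k)>-\tfrac12$. Hence the full expression, a polynomial in the $d_{ij}$ times the $G$-factor, is analytic for $\mathrm{Re}(k)>M-\tfrac12$, improving the bound $\mathrm{Re}(k)>2M-\tfrac12$ that the naive rational form gives in the proof of Proposition \ref{intro:prop}.

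I expect the main obstacle to be the bookkeeping of the first two steps: matching the combinatorial weights exactly and, above all, pinning down the precise two-case form of the entries $d_{ij}$ (including the inner summation for the $n_2$ rows) from the large-$N$ asymptotics of the shifted average, while tracking the $\i N\beta$ scaling carefully through the limit. The analyticity claim, by contrast, reduces to the pole count just described and should be routine.
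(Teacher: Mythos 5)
Your overall architecture (reduce to a finite-$N$ average of shifted characteristic polynomials, evaluate it as a determinant via Heine/Toeplitz techniques, extract the relevant Taylor coefficient in $\beta$, and locate the poles in $k$) is the same as the paper's, and your analyticity argument at the end is essentially identical to the one given there: the only poles come from the ratio $(2k-n+i-1)!/(2k-n+i-1+m)!$ and lie at $k=\tfrac{n-i-j}{2}\le M-\tfrac12$ since $i\ge 1$ and $n\le 2M$.

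However, there is a genuine gap in your first step, and it is exactly where you anticipated trouble. The paper does \emph{not} expand $|Z_A''(1)|^{2M}$ into angular derivatives; it replaces the second derivative by the second-order finite difference, writing
\[
\int_{\U(N)} V_A''(0)^{2M}V_A(0)^{2k-2M}\,dA_N
=\lim_{\beta\to 0}\frac{1}{\beta^{4M}}\int_{\U(N)}\bigl(V_A(2\beta)-2V_A(\beta)+V_A(0)\bigr)^{2M}V_A(0)^{2k-2M}\,dA_N ,
\]
and then expanding the trinomial. In that expansion $n_1$ counts factors $V_A(\beta)$ and $n_2$ counts factors $V_A(2\beta)$; the weight $(-2)^{n_1}\binom{2M}{n_1}\binom{2M-n_1}{n_2}$ with the range $n_1+n_2\le 2M$ is precisely the trinomial coefficient of $(c-2b+a)^{2M}$, and the exponential prefactor $e^{-\i N\beta(n_1/2+n_2)}$ whose Taylor series produces $(-1)^m(\tfrac{n_1}{2}+n_2)^m/m!$ comes from converting $V_A(\beta)^{n_1}V_A(2\beta)^{n_2}$ into $\overline{\mathcal Z_A(\beta)}^{\,n_1}\overline{\mathcal Z_A(2\beta)}^{\,n_2}$. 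Your proposed expansion of $\bigl((\partial_\theta^2 Z)^2+(\partial_\theta Z)^2\bigr)^{M}$ and distribution of derivative factors would instead give binomial weights $\binom{M}{j}$ with $n_1+n_2=2M$ exactly; it cannot produce terms with $n_1+n_2<2M$, nor the factor $(-2)^{n_1}$, so it does not match the stated formula. The determinant $\det(d_{ij})$ and the two-case structure of its rows likewise reflect the confluent limit of the three evaluation points $0,\beta,2\beta$ (the inner sum over $l$ in the last $n_2$ rows records the second shift $2\beta$ via $(e^{2\i\beta}-e^{\i\beta})^l$), not a count of single versus double derivatives. To repair the argument you would need to replace your derivative expansion by the finite-difference identity above and then carry out the Heine-identity/Toeplitz computation on the resulting genuine shifted moment $\int V_A(\beta)^{n_1}V_A(2\beta)^{n_2}V_A(0)^{2k-n}\,dA_N$.
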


\begin{proof}
Define $\mathcal{Z}_A(\theta) = \overline{\Lambda_A(e^{\i \theta})}, V_A(\theta) = \overline{Z_A(e^{\i \theta})}$. Note that $V_{A}(\theta)$ is real when $\theta$ is real.
By definition
\beas
&& \int_{\U(N)} V_A''(0)^{2M} V_A(0)^{2k-2M} dA_N \\
&=&
\lim_{\beta \rightarrow 0} \frac{1}{\beta^{4M}}
\int_{\U(N)} (V_A(2\beta) -2V_A(\beta)+V_A(0))^{2M} V_A(0)^{2k-2M} dA_N \\
&=& \sum_{\substack{n_1,n_2 \geq 0 \\ n_1+n_2 \leq 2M}} (-2)^{n_1} \binom{2M}{n_1} \binom{2M-n_1}{n_2} \lim_{\beta \rightarrow 0} \frac{1}{\beta^{4M}} \int_{\U(N)} V_A(\beta)^{n_1} V_A(2\beta)^{n_2}V_A(0)^{2k-n_1-n_2} dA_N .
\eeas
The above integral has the same main term as that of $\int_{\U(N)} |Z_A''(1)|^{2M} |Z_A(1)|^{2k-2M} dA_N$.
It is not hard to show that
\[
V_A(\beta)^{n_1} V_A(2\beta)^{n_2}V_A(0)^{2k-n_1-n_2}
=e^{-\i N \frac{\beta}{2} n_1} e^{-\i N \beta n_2} 
\mathcal{Z}_A(0)^k \overline{\mathcal{Z}_A(0)}^{k-n_1-n_2} \overline{\mathcal{Z}_A(\beta)}^{n_1} \overline{\mathcal{Z}_A(2\beta)}^{n_2} .
\]

We next use a similar method to \cite[(6.19)]{hughes2001characteristic} to estimate 
\be
\int_{\U(N)} 
\mathcal{Z}_A(0)^k \overline{\mathcal{Z}_A(0)}^{k-n_1-n_2} \overline{\mathcal{Z}_A(\beta)}^{n_1} \overline{\mathcal{Z}_A(2\beta)}^{n_2}
d A_N.
\label{integral}
\ee
Similar to \cite[(6.19)]{hughes2001characteristic}, by the Heine identity, we obtain that $(\ref{integral}) = D_N[f]$, where $D_N[f]$ is the Toeplitz determinant with symbol
\[
f(\theta) = (-1)^k e^{-\i k\theta} \prod_{j=1}^{2k} (e^{\i \theta} - e^{\i \alpha_j}).
\]
Here $\alpha_1=\cdots=\alpha_{2k-n}=0, \alpha_{2k-n+1}=\cdots=\alpha_{2k-n+n_1} = \beta, \alpha_{2k-n+n_1+1}=\cdots=\alpha_{2k} = 2 \beta$. We then use the trick of \cite{basor1994formulas} to compute this Toeplitz determinant.
Finally, we obtain
\[
(\ref{integral})
= M_N(2k) \det(s_{i,j})_{i,j=1,\ldots,n},
\]
where 
\[
M_N(2k) = \frac{G^2(1+k) G(N+1) G(N+1+2k)}{G(1+2k) G^2(N+1+k)}.
\]
In addition, if $1\leq i \leq n_1$,
\[
s_{i,j} = \sum_{m=0}^{N+j-i} \frac{(2k-n+i-1)!}{(2k-n+i-1+m)!} \binom{i+k-n-1+m}{m} \binom{i+m-1}{j-1} \frac{N!}{(N+j-i-m)!} (e^{\i \beta}-1)^m,
\]
and if $n_1+1\leq i \leq n$,
\beas
s_{i,j} 
&=& \sum_{m=0}^{N+j-i} \frac{(2k-n+i-1)!}{(2k-n+i-1+m)!} \binom{i+k-n-1+m}{m} \binom{i+m-1}{j-1} \frac{N!}{(N+j-i-m)!} \\
&& \times \, \sum_{l=0}^m \frac{(i-1)! m!}{(i-1+l)! (m-l)!} \binom{i-n_1-1+l}{i-n_1-1}  (e^{\i \beta}-1)^{m-l} (e^{2\i \beta}-e^{\i \beta})^{l} .
\eeas

Note that $M_N(2k)=\frac{G^2(k+1)}{G(2k+1)} N^{k^2} + O(N^{k^2-1})$. Moreover,
if $1\leq i \leq n_1$,
\beas
s_{i,j} &\sim& \sum_{m=0}^{4M} \frac{(2k-n+i-1)!}{(2k-n+i-1+m)!} \binom{i+k-n-1+m}{m} \binom{i+m-1}{j-1} \\ 
&& \times \, N^{m+i-j} ((\i \beta)^m + O(\beta^{m+1})) + O_N(\beta^{4M+1}).
\eeas
If $n_1+1\leq i \leq n$,
\beas
s_{i,j} 
&\sim& \sum_{m=0}^{4M} \frac{(2k-n+i-1)!}{(2k-n+i-1+m)!} \binom{i+k-n-1+m}{m} \binom{i+m-1}{j-1}  \\
&& \left( \sum_{l=0}^m \frac{(i-1)! m!}{(i-1+l)! (m-l)!} \binom{i-n_1-1+l}{i-n_1-1}  \right) N^{m+i-j} ((\i \beta)^m + O(\beta^{m+1})) + O_N(\beta^{4M+1}).
\eeas
For the matrix $(s_{ij})_{i,j=1,\ldots,n}$,
multiplying the $i$-th row by $1/N^i$ and the $j$-th column by $N^j$ for each $i,j$, we then have
$\det(s_{i,j})_{i,j=1,\ldots,n} \sim \det(d_{i,j})_{i,j=1,\ldots,n}$.

In the expression of $d_{ij}$, for $m\geq 1$
\[
\frac{(2k-n+i-1)!}{(2k-n+i-1+m)!}=
\frac{1}{(2k-n+i-1+m) (2k-n+i-1+m-1) \cdots (2k-n+i)}.
\]
Note that $n\leq 2M$ and $i\geq 1$, so it is analytic when ${\rm Re}(k) > M-1/2$.
\end{proof}

\section{Numerical data}

The following are expressions $F_2(k,k)$ for $k=1,\ldots,9$:
\beas
&\displaystyle \frac{1}{2^4\cdot 5}& \\
&\displaystyle \frac{17}{2^{10} \cdot3^{3} \cdot5 \cdot7 \cdot11}& \\
&\displaystyle \frac{11593}{2^{18} \cdot3^{7} \cdot5^{2} \cdot7^{3} \cdot11 \cdot13 \cdot17}& \\
&\displaystyle \frac{103 \cdot413129}{2^{28} \cdot3^{12} \cdot5^{5} \cdot7^{3} \cdot11^{2} \cdot13^{2} \cdot17 \cdot19 \cdot23}& \\
&\displaystyle \frac{2616269 \cdot322433}{2^{40} \cdot3^{17} \cdot5^{8} \cdot7^{5} \cdot11^{4} \cdot13^{3} \cdot17^{2} \cdot19 \cdot23 \cdot29}& \\
&\displaystyle \frac{53 \cdot5830411 \cdot94098709}{2^{54} \cdot3^{24} \cdot5^{13} \cdot7^{8} \cdot11^{4} \cdot13^{4} \cdot17^{3} \cdot19^{2} \cdot23 \cdot29 \cdot31}& \\
&\displaystyle \frac{896318952226585228351}{2^{70} \cdot3^{32} \cdot5^{16} \cdot7^{10} \cdot11^{6} \cdot13^{4} \cdot17^{4} \cdot19^{3} \cdot23^{2} \cdot29 \cdot31 \cdot37 \cdot41}& \\ 
&\displaystyle \frac{103 \cdot167 \cdot64283 \cdot71225030041520923}{2^{88} \cdot3^{42} \cdot5^{20} \cdot7^{13} \cdot11^{6} \cdot13^{6} \cdot17^{5} \cdot19^{4} \cdot23^{2} \cdot29^{2} \cdot31 \cdot37 \cdot41 \cdot43 \cdot47}& \\
&\displaystyle  \frac{ 109 \cdot 9335580613 \cdot 845744949032889042779 }{2^{108}  \cdot 3^{52} \cdot 5^{25}  \cdot  7 ^{17}  \cdot 11^{9}  \cdot 13 ^{8} \cdot 17^{5}  \cdot 19^{5}  \cdot 23^{4}  \cdot 29^{2}  \cdot 31^{2}  \cdot 37  \cdot 41  \cdot 43  \cdot 47  \cdot 53}&
\eeas

\end{appendices}

\bibliographystyle{plain}
\bibliography{main}

\begin{thebibliography}{10}

\bibitem{bailey2019mixed}
E.C. Bailey, S.~Bettin, G.~Blower, J.B. Conrey, A.~Prokhorov, M.O. Rubinstein,
  and N.C. Snaith.
\newblock Mixed moments of characteristic polynomials of random unitary
  matrices.
\newblock {\em Journal of Mathematical Physics}, 60(8):083509, 2019.

\bibitem{barnes1900theory}
E.W. Barnes.
\newblock {The theory of the G-function}.
\newblock {\em Quart. J. Pure Appl. Math.}, 31:264--314, 1900.

\bibitem{basor2019representation}
E.~Basor, P.~Bleher, R.~Buckingham, T.~Grava, A.~Its, E.~Its, and J.P. Keating.
\newblock {A representation of joint moments of CUE characteristic polynomials
  in terms of Painlev{\'e} functions}.
\newblock {\em Nonlinearity}, 32(10):4033, 2019.

\bibitem{basor1994formulas}
E.L. Basor and P.J. Forrester.
\newblock Formulas for the evaluation of toeplitz determinants with rational
  generating functions.
\newblock {\em Mathematische Nachrichten}, 170(1):5--18, 1994.

\bibitem{conrey2006moments}
J.B. Conrey, M.O. Rubinstein, and N.C. Snaith.
\newblock Moments of the derivative of characteristic polynomials with an
  application to the {R}iemann zeta function.
\newblock {\em Communications in Mathematical Physics}, 267(3):611--629, 2006.

\bibitem{dehaye2008joint}
P.O. Dehaye.
\newblock Joint moments of derivatives of characteristic polynomials.
\newblock {\em Algebra \& Number Theory}, 2(1):31--68, 2008.

\bibitem{forrester2001application}
P.J. Forrester and N.S. Witte.
\newblock {Application of the $\tau$-Function Theory of Painlev{\'e} Equations
  to Random Matrices: PIV, PII and the GUE}.
\newblock {\em Communications in Mathematical Physics}, 219(2):357--398, 2001.

\bibitem{forrester2002application}
P.J. Forrester and N.S. Witte.
\newblock {Application of the $\tau$-function theory of Painlev{\'e} equations
  to random matrices: PV, PIII, the LUE, JUE, and CUE}.
\newblock {\em Communications on Pure and Applied Mathematics: A Journal Issued
  by the Courant Institute of Mathematical Sciences}, 55(6):679--727, 2002.

\bibitem{forrester2006boundary}
P.J. Forrester and N.S. Witte.
\newblock {Boundary conditions associated with the Painlev{\'e} III' and V
  evaluations of some random matrix averages}.
\newblock {\em Journal of Physics A: Mathematical and General}, 39(28):8983,
  2006.

\bibitem{forrester2015painleve}
P.J. Forrester and N.S. Witte.
\newblock {Painlev{\'e} II in random matrix theory and related fields}.
\newblock {\em Constructive Approximation}, 41(3):589--613, 2015.

\bibitem{hughes2001characteristic}
C.P. Hughes.
\newblock {\em On the characteristic polynomial of a random unitary matrix and
  the Riemann zeta function.}
\newblock PhD thesis, University of Bristol, 2001.

\bibitem{ItsAlexander}
A.R. Its.
\newblock {Painlevé transcendents}.
\newblock In {\em {The Oxford Handbook of Random Matrix Theory}}. Oxford
  University Press, 2015.

\bibitem{kajiwara2001determinant}
K.~Kajiwara, T.~Masuda, M.~Noumi, Y.~Ohta, and Y.~Yamada.
\newblock {Determinant Formulas for the Toda and Discrete Toda Equations}.
\newblock {\em Funkcialaj Ekvacioj}, 44:291--307, 2001.

\bibitem{keating2000random1}
J.P. Keating and N.C. Snaith.
\newblock Random matrix theory and $\zeta(1/2+ it)$.
\newblock {\em Communications in Mathematical Physics}, 214:57--89, 2000.

\bibitem{jon-fei}
J.P. Keating and F.~Wei.
\newblock {Joint moments of higher order derivatives of CUE characteristic
  polynomials I: Asymptotic formulae}.
\newblock {\em arXiv:2307.01625}.

\bibitem{tracy1994fredholm}
C.A. Tracy and H.~Widom.
\newblock Fredholm determinants, differential equations and matrix models.
\newblock {\em Communications in Mathematical Physics}, 163(1):33--72, 1994.

\bibitem{winn2012derivative}
B.~Winn.
\newblock {Derivative moments for characteristic polynomials from the CUE}.
\newblock {\em Communications in Mathematical Physics}, 315:531--562, 2012.

\end{thebibliography}

\end{document}